\definecolor{MidnightBlack}{rgb}{0.1,0.1,.34}
\definecolor{MidnightBlue}{rgb}{0.1,0.1,0.43}
\definecolor{Black}{rgb}{0,0, 0}
\definecolor{Blue}{rgb}{0, 0 ,1}
\definecolor{Red}{rgb}{1, 0 ,0}
\definecolor{White}{rgb}{1, 1, 1}
\definecolor{grey}{rgb}{.6, .6, .6}
\definecolor{Mygreen}{rgb}{.0, .7, .0}
\definecolor{Yellow}{rgb}{.55,.55,0}
\definecolor{Mustard}{rgb}{1.0, 0.86, 0.35}
\definecolor{applegreen}{rgb}{0.55, 0.71, 0.0}
\definecolor{darkturquoise}{rgb}{0.0, 0.81, 0.82}
\definecolor{celestialblue}{rgb}{0.29, 0.59, 0.82}
\definecolor{green_yellow}{rgb}{0.68, 1.0, 0.18}
\definecolor{crimsonglory}{rgb}{0.75, 0.0, 0.2}
\definecolor{darkmagenta}{rgb}{0.30, 0.0, 0.30}
\definecolor{magenta}{rgb}{0.50, 0.0, 0.50}
\definecolor{internationalorange}{rgb}{1.0, 0.31, 0.0}
\definecolor{darkorange}{rgb}{1.0, 0.55, 0.0}
\definecolor{ao}{rgb}{0.0, 0.5, 0.0}
\definecolor{awesome}{rgb}{1.0, 0.13, 0.32}
\definecolor{darkcyan}{rgb}{0.0, 0.50, 0.50}
\definecolor{violet}{rgb}{0.93, 0.51, 0.93}
\definecolor{brown}{rgb}{0.65, 0.16, 0.16}
\definecolor{orange}{rgb}{1.0, 0.65, 0.0}
\definecolor{cornflowerblue}{rgb}{0.39, 0.58, 0.93}
\definecolor{purpleMadness}{rgb}{0.41, 0.28, 0.69}
\newcommand{\rev}[1]
\definecolor{BrilliantRose}{rgb}{1.0, 0.33, 0.64}
\newcommand{\remove}[1]{}
\newcounter{func}
\newcommand{\funref}[1]{\hyperref[#1]{f_{\ref*{#1}}}} 
\tikzset{black node/.style={draw, circle, fill = black, minimum size = 5pt, inner sep = 0pt}}
\tikzset{white node/.style={draw, circlternary_treese, fill = white, minimum size = 5pt, inner sep = 0pt}}
\tikzset{normal/.style = {draw=none, fill = none}}
\tikzset{lean/.style = {draw=none, rectangle, fill = none, minimum size = 0pt, inner sep = 0pt}}
\tikzset{diam/.style={draw, diamond, fill = black, minimum size = 7pt, inner sep = 0pt}}
\tikzset{
	position/.style args={#1:#2 from #3}{
		at=($(#3)+(#1:#2)$)
	}
}
\tikzset{
  v:main/.style = {draw, circle, scale=0.8, thick,fill=black,inner sep=0.7mm},
  v:ghost/.style = {inner sep=0pt,scale=1},
  v:marked/.style = {circle, scale=1.3, fill=DarkGoldenrod,opacity=0.4},
  >={latex},
  e:main/.style = {line width=1pt}
}
\newcommand{\Ccal}{\mathcal{C}}
\newcommand{\Ecal}{\mathcal{E}}
\newcommand{\Gcal}{\mathcal{G}}
\newcommand{\Hcal}{\mathcal{H}}
\newcommand{\Ocal}{\mathcal{O}}
\newcommand{\Pcal}{\mathcal{P}}
\newcommand\Rcal{\mathcal{R}}
\newcommand{\Tcal}{\mathcal{T}}
\newcommand{\Xcal}{\mathcal{X}}
\newcommand{\Zcal}{\mathcal{Z}}
\newcommand{\Nbbb}{\mathbb{N}}
\newcommand{\Sbbb}{\mathbb{S}}
\newcommand{\eqdef}{\stackrel{{\scriptsize\rm def}}{=}}
\definecolor{Red}{rgb}{1, 0 ,0}
\definecolor{Blue}{rgb}{0, 0 ,1}
\newtheorem{theorem}{Theorem}[section]
\newaliascnt{question}{theorem}
\newaliascnt{lemma}{theorem}
\newtheorem{lemma}[lemma]{Lemma}
\newaliascnt{claim}{theorem}
\newaliascnt{invariant}{theorem}
\newaliascnt{proposition}{theorem}
\newtheorem{proposition}[proposition]{Proposition}
\newaliascnt{observation}{theorem}
\newtheorem{observation}[observation]{Observation}
\newaliascnt{corollary}{theorem}
\newaliascnt{definition}{theorem}
\newtheorem{definition}[definition]{Definition}
\newaliascnt{conjecture}{theorem}
\newaliascnt{counterexample}{theorem}
\newcommand{\hh}{\end{document}}
\newcommand{\p}{{\sf p}}
\newcommand{\etw}{{\sf etw}\xspace}
\newcommand{\rw}{{\sf rw}\xspace}
\newcommand{\cw}{{\sf cw}\xspace}%
\newcommand{\tpw}{{\sf tpw}\xspace}
\newcommand{\cvw}{{\sf cvw}\xspace}
\newcommand{\maxleaf}{{\sf maxleaf}\xspace}
\newcommand{\eadm}{{\sf eadm}\xspace}
\newcommand{\stcw}{{\sf stcw}\xspace}
\newcommand{\obs}{{\sf obs}}
\newcommand{\excl}{{\sf excl}}
\newcommand{\dom}{\textsf{dom}}
\newcommand{\cobs}{\mbox{\rm \textsf{cobs}}}
\newcommand{\pobs}{\mbox{\rm \textsf{pobs}}}
\newcommand{\down}[2]{\mathsf{Down}_{#1}(#2)}
\newcommand{\scrossing}{\Gcal_{{\text{\rm  \textsf{singly-crossing}}}}}
\newcommand{\gall}{\Gcal_{{\text{\rm  \textsf{all}}}}}
\newcommand{\gplanar}{\Gcal_{\text{\rm  \textsf{planar}}}}
\newcommand{\gforest}{\Gcal_{\text{\rm  \textsf{forest}}}}
\newcommand{\gtforest}{\Gcal_{\text{\rm  \textsf{subcubic forest}}}}
\newcommand{\glforest}{\Gcal_{\text{\rm  \textsf{linear forest}}}}
\newcommand{\gouterplanar}{\Gcal_{\text{\rm \textsf{outerplanar}}}}
\newcommand{\gfapex}{\Gcal_{\text{\rm \textsf{apex forest}}}}
\newcommand{\gcircle}{\Gcal_{\text{\rm \textsf{circle}}}}
\newcommand{\minorsoftriangles}{\Gcal_{\text{\rm \textsf{triangles minors}}}}
\newcommand{\caterpillars}{\Gcal_{\text{\rm \textsf{caterpillars}}}}
\newcommand{\ladderminors}{\Gcal_{\text{\rm \textsf{ladder minors}}}}
\newcommand{\vpaths}{\Gcal_{\text{\rm \textsf{path vminors}}}}
\newcommand{\shallowvortexminors}{\Gcal_{\text{\rm \textsf{svminors}}}}
\newcommand{\nton}{\mathbb{N}\to\mathbb{N}\xspace}
\newcommand{\hw}{{\sf hw}\xspace}
\newcommand{\sctw}{{\sf sc\mbox{-}tw}\xspace}
\newcommand{\svtw}{{\sf sv\mbox{-}tw}\xspace}
\newcommand{\tw}{{\sf tw}\xspace}
\newcommand{\size}{{\sf size}\xspace}
\newcommand{\esize}{{\sf esize}\xspace}
\newcommand{\psize}{{\sf psize}\xspace}
\newcommand{\bipw}{{\sf bi\mbox{-}pw}\xspace}
\newcommand{\pw}{{\sf pw}\xspace}
\newcommand{\btd}{{\sf btd}\xspace}
\newcommand{\td}{{\sf td}\xspace}
\newcommand{\fvs}{{\sf fvs}\xspace}
\newcommand{\apex}{{\sf apex}\xspace}
\newcommand{\barrier}{{\sf barrier}\xspace}
\newcommand{\bed}{{\sf bed}\xspace}
\newcommand{\conn}{{\sf conn}\xspace}
\newcommand{\vc}{{\sf vc}\xspace}
\newcommand{\lrw}{{\sf lrw}\xspace}
\newcommand{\rkd}{{\sf rkd}\xspace}
\newcommand{\tcw}{{\sf tcw}\xspace}
\newcommand{\wn}{{\sf wn}\xspace}
\newcommand{\edforest}{{\sf edforest}\xspace}
\newcommand{\edpforest}{{\sf edpforest}\xspace}
\newcommand{\apexouter}{{\sf apexouter}\xspace}
\newcommand{\cupall}{{\pmb{\bigcup}}}
\newcommand{\adh}{\ensuremath{\mathsf{adh}}}
\newcommand{\width}{\ensuremath{\mathsf{width}}}
\newcommand*\samethanks[1][\value{footnote}]{\footnotemark[#1]}
\newcommand{\lin}[1]{\langle #1\rangle}
\newcommand{\numen}[1]{\ifthenelse{\not\equal{#1}{1}}{#1}{}}
\newcommand{\nn}[2]{\Nbbb^{\numen{#1}}\to\Nbbb^{\numen{#2}}}
\definecolor{vagelisColour}{RGB}{0, 65, 130}
\newcommand{\closure}[2]{{\downarrow}_{#1}{#2}}
\newcommand{\UNB}{\mathbb{U}}
\begin{document}

\title{An Overview of Universal Obstructions for Graph Parameters\thanks{All authors are supported by the French-German Collaboration ANR/DFG Project UTMA ANR-20-CE92-0027 and  by the ANR project GODASse ANR-24-CE48-4377. The second author is also supported by the ERC project BUKA (n°\! 101126229). The third author is also  supported  by the MEAE and the MESR, via the Franco-Norwegian project PHC Aurora project n. 51260WL (2024-5) and  the France 2030 grant reference number
ANR-24-RRII-0002 operated by the Inria Quadrant Program.}}

\author{ 
Christophe Paul\thanks{LIRMM, Univ Montpellier, CNRS, Montpellier, France.}
\and 
Evangelos Protopapas\thanks{Faculty of Mathematics, Informatics and Mechanics, University of Warsaw, Poland.}
\and
Dimitrios  M. Thilikos\samethanks[2]
}
\date{}

\maketitle
\thispagestyle{empty}

\begin{abstract}
\noindent
In a recent work, we introduced a parametric framework for obtaining obstruction characterizations of graph parameters with respect to a quasi-ordering $\leqslant$ on graphs.
Towards this, we proposed the concepts of \textsl{class obstruction},  \textsl{parametric obstruction}, and  \textsl{universal obstruction} as combinatorial objects that determine the approximate behaviour of a graph parameter.
In this work, we explore its potential as a unifying framework for classifying graph parameters.
Under this framework, we survey existing graph-theoretic results on many known graph parameters.
Additionally, we provide some unifying results on their classification.
\end{abstract}

\medskip
\noindent{\bf Keywords:} Graph parameters; Parametric Graphs; Universal Obstructions; Well-Quasi-Ordering;

\newpage
\thispagestyle{empty}
\tableofcontents

\newpage

\section{Introduction}

A \emph{graph parameter} is a function $\p$ mapping graphs to integers.
Graph parameters aim at capturing the combinatorial structure of graphs and, as revealed by the successful development of  parameterized complexity theory~\cite{DowneyF99Parameterized}, are central to the design and complexity analysis of algorithms.  
A large variety of graph parameters can be defined, ranging from more natural ones, including the number of vertices or edges, to more elaborate ones such as width parameters~\cite{HlinenyOSG07Width} that are associated to diverse types of graph decompositions.
Some parameters capture local properties of graphs, as for example the maximum degree of a graph, while some others express global properties, as for example the size of a maximum matching of a graph.
Graph parameters may be designed to measure the resemblance or the distance to some combinatorial structure which allows to efficiently tackle problems that are \textsf{NP}-hard in general. 
Among the wide zoo of graph parameters, one may wonder what makes a parameter important or why a parameter is more significant than another one.
To answer this legitimate and natural question, we need tools to compare graph parameters.
In turn, comparing graph parameters heavily relies on the parameter descriptions/definitions, that may vary significantly.
These questions – \textsl{evaluating a parameter}, \textsl{comparing parameters}, and \textsl{characterizing a parameter} –  are central to the understanding of graph parameters and certainly do not admit a simple nor a single answer (see \cite{EibenGHJK22AUnifying} for such an attempt, on  branch decomposition based parameters).

\paragraph{Importance of a parameter.}

Following~\cite{GanianHKMORS16Are}, the importance of a parameter can be appreciated by considering: 1) its algorithmic utility; and 2) its {combinatorial} properties. 
Typically, being algorithmically useful for a given parameter means that a broad set of \textsf{NP}-hard problems are fixed parameter tractable with respect to this parameter. Having nice structural properties can be reflected by the central role a parameter plays in proving structural theorems. 
Under this setting, \textsl{treewidth}~\cite{BerteleB72nonse,RobertsonS84GMIII} can be considered as an important parameter for both criteria (see \autoref{treewidth_sec} for a definition of treewidth).
On one hand, Courcelle's theorem~\cite{Courcelle90them,Courcelle92,Courcelle97} establishes that every graph property that is expressible in monadic second order logic can be checked in linear time on graphs of bounded treewidth.
On the other hand, treewidth has been a cornerstone parameter of the Graph Minors series of Robertson and Seymour towards proving Wagner's conjecture~\cite{Wagner37Uber,RobertsonS04GMXX}: \textsl{the class of all graphs is well-quasi-ordered (well-quasi-ordering) by the minor relation} (now known as the Robertson and Seymour theorem). 

A common trait of many graph parameters that are considered as being ``important'', is to be \textsl{monotone} with respect to some quasi-ordering relation $\leqslant$ on the set of all graphs.
We say that a parameter $\p$ is \emph{$\leqslant$-monotone} if, for every two graphs $H$ and $G$, $H \leqslant G$ implies that  $\p(H) \leqslant \p(G)$.
For instance, treewidth is known to be minor-monotone (see \autoref{rfisondlclsidop}) and \textsl{cutwidth} is known to be immersion-monotone (see \autoref{ommsers}).
Since both the minor and the immersion relation are well-quasi-orderings on the set $\gall$ of all graphs~\cite{RobertsonS04GMXX, RobertsonS10GMXXIII}, this implies that for every minor/immersion-monotone parameter $\p$ and every $k \in \mathbb{N}$, the class 
\begin{eqnarray}
\Gcal_{\p,k} &  \coloneqq & \{G \in \gall \mid \p(G) \leq k \} \label{def_class_k}
\end{eqnarray}
is minor/immersion-closed and therefore it is characterized by the exclusion of a \textsl{finite set} of obstructions (minor-minimal, respectively immersion-minimal, graphs that do not belong to $\Gcal_{\p,k}$) as a minor/immersion.
Moreover, if checking whether a graph $G$ contains some fixed graph as a minor/immersion can be done in polynomial time, then being a well-quasi-ordering automatically implies a polynomial membership decision algorithm for $\Gcal_{\p,k}$.
This is the case for both the minor and the immersion relation~\cite{Wollan16Finding, GroheKMW11Finding, RobertsonS95GMXIII, KawarabayashiKR12Thedisjoint}.
Whether or when such an approach can lead to the proof of the existence of polynomial time decision algorithms for other parameters (by choosing appropriately the relation $\leqslant$) is a running open project in algorithmic graph theory.

 \paragraph{Comparing graph parameters.}
 
 Given two graph parameters $\p$ and $\p'$, we say that $\p$ is \emph{(approximately) smaller} than $\p'$, which we denote $\p \preceq \p'$, if there exists a function $f \colon \mathbb{N} \rightarrow \mathbb{N}$ such that for every graph $G$, $\p(G) \leq f(\p'(G))$.
Observe that if $\p \preceq \p'$, then the graph classes where $\p$ is bounded are ``(approximately) more general'' than those where $\p'$ is bounded, in the sense that $\Gcal_{\p',k} \subseteq \Gcal_{\p,f(k)}$, for some $f \colon \mathbb{N}\to \mathbb{N}$.
This means that, if some problem can be solved efficiently when $\p$ is bounded, then it can also be solved efficiently when $\p'$ is bounded.
In that sense, the algorithmic applicability of $\p$ is wider than the one of $\p'$ (for this particular problem).
On the other side, one may expect (and it is frequently the case) that more problems can be solved efficiently when $\p'$ is bounded: while $\p'$-bounded graphs are more restricted, they may also be ``more structured'' than the class of $\p$-bounded graphs, which may give rise to more algorithmic applications.

We say that $\p$ and $\p'$ are \emph{equivalent} if $\p \preceq \p'$ and $\p' \preceq \p$.
Consider the following example.
The parameter \textsl{cliquewidth}~\cite{CourcelleMR00Linear} is known to be monotone under the induced subgraph relation.
Unfortunately, the induced subgraph relation does not share properties such as those discussed above for the minor or immersion relation (see \autoref{induced_parameters_subsec} for more on induced subgraph-monotone parameters).
In fact, for now, no efficient algorithm is known to decide whether the input graph has cliquewidth at most $k$, for a fixed non-negative integer $k$.
This motivated the introduction of the \textsl{rankwidth} parameter~\cite{OumS06appro}, that is equivalent to cliquewidth and is monotone under the \textsl{vertex minor} relation (see \autoref{vesjrs_misos} for the definitions).
That way, rankwidth offered an alternative to cliquewidth that, in many aspects, enjoys better combinatorial and algorithmic properties \cite{HlinenyO08FindingBranch}.

\medskip
From the discussion above, we conclude that comparing graph parameters is important.
However, this is not always a trivial task.
For example, it is clear from their respective definitions that treewidth is (approximately) smaller than the parameter \textsl{pathwidth} (see \autoref{pathwidth_def} for a definition of pathwidth).
On the other side, the equivalence of  \textsl{rankwidth} and \textsl{cliquewidth}  is not a direct 
consequence of  their definitions.

The question is then to identify conditions under which parameters can be described in a common framework to facilitate their comparison and understand their relative relationships.

\paragraph{Parameter description.}

Again, the case of treewidth is very interesting to discuss.
First of all there are many ways to define it, including graph decompositions, vertex orderings, extremal graphs, and obstructions.
In~\cite{HarveyW17Parameters}, no less than 15 parameters are identified to be either equal or equivalent to treewidth.
Each of the above definitions is putting light on different aspects of the treewidth parameter and its relatives.
Some of them, such as \textsl{branchwidth}~\cite{RobertsonS91GMX} may be preferred to treewidth for the sake of algorithm design~\cite{DornT09Semi}, some others, such as the \textsl{(strict) bramble number}~\cite{LardasPTZ23OnStrict, SeymourT93graph, BellenbaumD02Two, Diestel10grap} or the \textsl{largest grid-minor} in the graph~\cite{ChuzhoyT21Towards, GuT12ImprovedBounds}, are more convenient to certify large treewidth.
A similar situation holds for pathwidth: it enjoys many alternative definitions and there are many parameters that are equivalent to it~\cite{Kinnersley92TheVertex, Moehring90grap, KirousisP85inte}.
This plethora of viewpoints for the same parameter is a strong sign of importance and certainly ease the comparison with other parameters.
However, such a diversity of definitions is not the case for other parameters.
For this reason, it is important to have a \textsl{universal framework}  for dealing with the approximate behaviour of parameters.

In \cite{PaulPT2023GraphParameters} such a framework was proposed.
The first step in this direction is to group parameters together with respect to some quasi-ordering $\leqslant$ on graphs under which they are monotone.
When focussing on $\leqslant$-monotone parameters, one can observe that many results in the literature fulfill the following  pattern.

 \begin{quote}
 \textsl{Let $\p$ be a graph parameter that is $\leqslant$-monotone with respect to a quasi-ordering relation $\leqslant$.
 There exists a finite set $\mathbb{Z}$ of $\leqslant$-closed classes such that for every $\leqslant$-closed class $\Gcal$, there exists a non-negative integer $c_{\Gcal}$ so that $\p(G) \leq c_{\Gcal},$ for every graph $G \in \Gcal$ (i.e. $\p$ is bounded in $\Gcal$) if and only if $\Zcal \not\subseteq \Gcal,$ for every $\leqslant$-closed class $\Zcal \in \mathbb{Z}$.}
\end{quote}

Hereafter, we refer to such a set $\mathbb{Z}$ (when it exists) as the \emph{$\leqslant$-class obstruction} of $\p,$ denoted by $\cobs_{\leqslant}(\p)$ (formally defined in \autoref{diidnidialdedfs}).
Let us review a few such results.
First consider the minor relation, which we denote by $\leqslant_{\mathsf{m}}$.
It follows that for treewidth, $\cobs_{\leqslant_{\mathsf{m}}}(\tw)$ is composed of the class of planar graphs~\cite{RobertsonS86GMV,RobertsonST94Quickly}, while for pathwidth, $\cobs_{\leqslant_{\mathsf{m}}}(\pw)$ contains the class of forests~\cite{RobertsonS83GMI,BienstockRST91Quickly}, and for treedepth~\cite{NesetrilO06Treedepth}, $\cobs_{\leqslant_{\mathsf{m}}}(\td)$ is the set containing the class of linear forests\footnote{A linear forest is the disjoint union of paths.}.
In these first three examples, the class obstruction contains a \textsl{unique} class of graphs.
This is not always the case.
According to~\cite{HuynhJMW20Seymour,dang2018minors}, for the \textsl{biconnected pathwidth}, $\cobs_{\leqslant_{\mathsf{m}}}(\bipw)$ contains two graph classes, namely the apex-forests and the outerplanar graphs  (see~\autoref{bicpw}).
If we now consider the immersion relation, denoted by $\leqslant_{\mathsf{i}}$, and the cutwidth parameter, it holds that $\cobs_{\leqslant_{\mathsf{i}}}(\cw)$ is formed by three classes of graphs (see~\autoref{cutwidth_sec}).
In the context of the vertex-minor relation, denoted by $\leqslant_{\mathsf{vm}}$, it has recently been proved that the class obstruction of rankwidth $\cobs_{\leqslant_{\mathsf{vm}}}(\rw)$ consists of the set of circle graphs~\cite{GeelenKMW23Thegrid}, while is it conjectured that for linear rankwidth, $\cobs_{\leqslant_{\mathsf{vm}}}(\lrw)$ consists of the set of trees~\cite{KanteK18Linear}.
As we shall see in this work, the above statement pattern holds for many other $\leqslant$-monotone parameters.

\paragraph{Capturing the asymptotic behavior of monotone parameters.}

Interestingly, the relationship between the $\leqslant$-class obstructions of $\leqslant$-monotone parameters is telling something about the relationship of these parameters.
The fact that forests are planar graphs clearly indicates that pathwidth is (approximately) bigger than treewidth.
It is worth to observe that in every known result fulfilling the above pattern, the class obstruction contains a finite (typically small) number of graph classes.
But still, these classes contain an infinite number of graphs.
Moreover, class obstructions do not constitute ``universal patterns'' of the approximate behaviour of the relative parameters.
As proposed in~\cite{PaulPT2023GraphParameters}, using the concepts of \emph{$\leqslant$-parametric obstructions} and \emph{$\leqslant$-universal obstructions}, it is possible to resolve these two issues.

Let $\p$ be a $\leqslant$-monotone parameter having a $\leqslant$-class obstruction $\cobs_{\leqslant}(\p)$.
We observe that every class of graph $\Gcal \in \cobs_{\leqslant}(\p)$ is $\leqslant$-closed.
This implies that if $\leqslant$ is a well-quasi-ordering, $\Gcal$ admits a \textsl{finite} number of obstructions, denoted by $\obs_{\leqslant}(\Gcal)$, that are the $\leqslant$-minimal graphs not belonging to $\Gcal$.
We can then define the notion of \emph{$\leqslant$-parametric obstruction} of $\p$, denoted by $\pobs_{\leqslant}(\p)$, as the set $\big\{\obs_{\leqslant}(\Gcal) \mid \Gcal \in \cobs_{\leqslant}(\p) \big\}$.

We proceed with an example: in \cite{BienstockRST91Quickly} it was proven that, for every forest $F$ on $k$ vertices, every graph excluding $F$ as a minor has pathwidth at most $k-2$.
As $\{K_3\}$ is the minor-obstruction of forests, this implies that $\cobs_{\leqslant_{\mathsf{m}}}(\pw)$ consists of the graph class of forests and that $\pobs_{\leqslant_{\mathsf{m}}}(\pw) = \big\{ \{ K_3 \} \big\}$.
Similarly, as we see in the coming sections, all the aforementioned examples of monotone parameters (treewidth, treedepth, biconnected pathwidth, cutwidth, rankwidth) and many others enjoy a \textsl{finite description} by means of parametric obstructions.
However, as noticed in \cite{PaulPT2023GraphParameters}, whether such a characterization always exists for $\leqslant$-monotone parameters is open and is linked to important conjectures of order theory that go further than the well-quasi-ordering of $\leqslant$.

\medskip
As demonstrated in \cite{PaulPT2023GraphParameters} the concept of $\leqslant$-universal obstruction of a $\leqslant$-monotone parameter $\p$ is closely related to the classes in $\cobs_{\leqslant}(\p).$
A \emph{$\leqslant$-parametric graph} is any non $\leqslant$-decreasing sequence of graphs $\mathscr{H} = \langle \mathscr{H}_t \rangle_{t \in \mathbb{N}}$, i.e., such that for every $i \leq j$, $\mathscr{H}_i \leqslant \mathscr{H}_j$.
We say that a $\leqslant$-parametric graph $\mathscr{H}$ is a \emph{$\leqslant$-omnivore} of the graph class $\Gcal$ if every graph in $\mathscr{H}$ belongs to $\Gcal$ and, moreover, for every $G \in \Gcal$, there exists $k \in \mathbb{N}$ such that $G \leqslant \mathscr{H}_k$.
We say that a finite set of $\leqslant$-parametric graphs $\mathfrak{H}$ is a \emph{$\leqslant$-universal obstruction} for $\p$ if and only if it contains a $\leqslant$-omnivore for every class of graphs contained in $\cobs_{\leqslant}(\p)$.
As explained in \autoref{diidnidialdedfs}, to a $\leqslant$-universal obstruction for a parameter $\p$, one can naturally associate a parameter $\p_{\mathfrak{H}}$ that is equivalent to $\p$.
 
Let us again illustrate these concepts with pathwidth.
It is well-known that every forest is a minor of a complete ternary tree\footnote{A \emph{complete ternary tree} is a tree where all internal vertices have degree three and where each leaf belongs to some anti-diametrical path.} (depicted in \autoref{fig_ternary_trees}).
It follows that the sequence $\mathscr{T} = \langle \mathscr{T}_t \rangle_{t \geq 1}$ of ternary trees is a $\leqslant_{\mathsf{m}}$-omnivore of forests.
This implies that $\{\mathscr{T}\}$ is a $\leqslant_{\mathsf{m}}$-universal obstruction for pathwidth.
This, combined with the results of \cite{BienstockRST91Quickly}, implies that the pathwidth of a graph is equivalent to the biggest size of a complete ternary tree minor it contains.
Therefore, $\p_{\{\mathscr{T}\}}$ is equivalent to pathwidth and the set $\{\mathscr{T}\}$ is a $\leqslant_{\mathsf{m}}$-universal obstruction for pathwidth.
Also, note that there might exist many different universal obstructions for the same parameter.
For instance,  one may also consider the set consisting of the sequence of  ``quaternary trees'' that is also a $\leqslant_{\mathsf{m}}$-universal obstruction for pathwidth.

\medskip
The value of universal obstructions is that they provide a ``common for all'' framework yielding ``regular'' alternative definitions of graph parameters (subject to parameter equivalence).
We believe that the notions of class obstruction, parametric obstruction and universal obstruction offer a unifying and concise framework to describe, compare, and evaluate $\leqslant$-monotone parameters.
These concepts were underlying many known results but were only recently formalized in~\cite{PaulPT2023GraphParameters}  (see also \autoref{diidnidialdedfs}).
They offer a global view on the hierarchy of the $\leqslant$-monotone parameters.
Moreover, these concepts form a framework that allows one to define very natural and yet unexplored parameters.
As pointed out in~\cite{PaulPT2023GraphParameters}, under certain circumstances, fixed parameterized approximation algorithms can be automatically derived using this framework.
This work offers a systematic review of known minor-monotone parameters (see \autoref{rfisondlclsidop}), immersion-monotone parameters (see \autoref{ommsers}), vertex-minor-monotone parameters (see \autoref{vesjrs_misos}), and a few examples of parameters that are monotone with respect to other relations that are not well-quasi-orderings (see \autoref{obsjjdjdother_a}), and present them in the unified framework discussed above.
\autoref{diidnidialdedfs} formally defines all the concepts introduced in the discussion above.

\section{Universal obstructions: basic definitions and results}
\label{diidnidialdedfs}

All graphs in this paper are finite and undirected.
Unless the opposite is explicitly mentioned (see \autoref{ommsers} and \autoref{multedge}), graphs are assumed to be simple.
In this section, we introduce the basic concepts that we use in the rest of the paper.

\subsection{General concepts}

Each ordering relation $\leqslant$ on the set of all graphs $\gall$ that we consider is a quasi-ordering, that is, it is reflexive and transitive.
Given a graph class $\Gcal \subseteq \gall$, we say that $\leqslant$ is \emph{well-founded} on $\Gcal$, if for every non-empty $\Hcal \subseteq \Gcal$, $\Hcal$ has a $\leqslant$-minimal element.
All relations on graphs that we consider are well-founded.
A set $\Zcal$ of graphs is a \emph{$\leqslant$-antichain} if its elements are pairwise non $\leqslant$-comparable.

Given some graph class $\Gcal \subseteq \gall$, the quasi-ordering relation $\leqslant$ is a \emph{well-quasi-ordering} on $\Gcal$ if $\leqslant$ is well-founded on $\Gcal$ and all $\leqslant$-antichains on $\Gcal$ are finite.
We say that a graph class $\Gcal \subseteq \gall$ is \emph{$\leqslant$-closed} if for every graph $G \in \Gcal$, if $H \leqslant G$, then $H \in \Gcal$.
We use the notation $\down{\leqslant}{\gall}$ in order to denote the collection of all $\leqslant$-closed classes.

Given a class $\Gcal \in \down{\leqslant}{\gall},$ we define its \emph{$\leqslant$-obstruction set} as the set of $\leqslant$-minimal graphs in $\gall \setminus \Gcal$ and we denote it by $\obs_{\leqslant}(\Gcal)$.
Clearly $\obs_{\leqslant}(\Gcal)$ is always a $\leqslant$-antichain.
Note that if $\leqslant$ is a well-quasi-ordering on $\gall$, then for every $\Gcal \in \down{\leqslant}{\gall}$, $\obs_{\leqslant}(\Gcal)$ is a finite set of graphs.
Given a set of graphs $\Zcal \subseteq \gall,$ we define
\begin{align*}
\excl_{\leqslant }(\Zcal) \ \coloneqq \ \big\{G\in\gall\mid \forall Z\in\Zcal: Z\nleq G\big\}.
\end{align*}

If $\Zcal$ is a $\leqslant $-antichain, the graphs of $\Zcal$ are the \emph{$\leqslant $-obstructions} of $\excl_{\leqslant }(\Zcal)$.
Clearly, for every  $\Gcal \in \down{\leqslant}{\gall}$, it holds that $\excl_{\leqslant }(\obs_{\leqslant }(\Gcal)) = \Gcal$.

\subsection{Graph parameters}
\label{sliubeabaueike}

A \emph{graph parameter} is a function mapping graphs to non-negative integers and infinity, i.e. $\p \colon \gall \to \mathbb{N} \cup \{ \infty \}$.
We insist that $\p$ is an invariant under graph isomorphism.
For a graph parameter $\p$, we write $\dom(\p) = \{ G \in \gall \mid \p(G) \in \mathbb{N}\}$.
We say that $\p$ is \emph{$\leqslant $-monotone} if, for every graph $G$ and every $H \leqslant G$, we have that $\p(H) \leqslant \p(G)$.
 
Let $\p$ and $\p'$ be two graph parameters.
We define $\p + \p'$ as the parameter where $(\p+\p')(G) = \p(G) + \p'(G).$
We also define $\max\{ \p, \p' \}$ as the parameter where $(\max\{\p,\p'\})(G) = \max\{\p(G), \p'(G)\}$.
Similarly, for $c \in \mathbb{N}$, $\p+ c$ is defined so that $(\p+ c)(G) = \p(G)+ c$.
 
We write $\p \preceq \p'$ if $\dom(\p') \subseteq \dom(\p)$ and there exists a function $f \colon \mathbb{N} \to \mathbb{N}$ such that, for every graph $G \in \dom(\p')$ it holds that $\p(G) \leqslant  f(\p'(G))$.
We also say that $\p$ and $\p'$ are \emph{equivalent} which we denote by $\p \sim \p'$, if $\p \preceq \p'$ and $\p' \preceq \p$. 
Note that $\p \sim \p'$ holds if and only if $\dom(\p) = \dom(\p')$ and there exists a function $f \colon \mathbb{N}\to\mathbb{N}$ such that for every graph $G$, $\p(G) \leqslant  f(\p'(G))$ and $\p'(G) \leqslant  f(\p(G))$.
We call the function $f$ \emph{the gap function} (or just \emph{the gap}) of the equivalence between $\p$ and $\p'$.
If the gap function $f$ is polynomial (resp. linear), then we say that $\p$ and $\p'$ are \emph{polynomially (resp. linearly) equivalent} and we denote this by $\p \sim_{\mathsf{P}}\p'$ (resp. $\p \sim_{\mathsf{L}}\p'$).

\paragraph{Class and parametric obstructions.}

We say that a collection of graph classes $\mathbb{W} \subseteq \down{\leqslant}{\gall}$ is \emph{upward-closed}, if for every $\Hcal \in \mathbb{W}$ and $\Gcal \in \down{\leqslant}{\gall}$, $\Hcal \subseteq \Gcal$ implies that $\Gcal \in \mathbb{W}$.
Let $\Gcal \in \down{\leqslant}{\gall}$.
We say that a $\leqslant$-monotone parameter $\p$ is \emph{unbounded} in $\Gcal$ if there is no $c \in \mathbb{N}$ such that $\p(G) \leqslant c$ for every $G \in \Gcal$.
For every $\leqslant$-monotone parameter $\p$, we define
\begin{align*}
\UNB_{\leqslant}(\p) \ := \ \big\{{\Gcal \subseteq \gall} \mid \text{$\Gcal$ is a $\leqslant $-closed class such that $\p$ is unbounded in $\Gcal$}\big\}.
\end{align*}

Observe that, by definition, if $\p$ is unbounded in $\Gcal$ and $\Hcal$ is a $\leqslant$-closed class where 
$\Gcal \subseteq \Hcal$, then $\p$ is also unbounded in $\Hcal$.
This implies that $\UNB_{\leqslant}(\p)$ is upward-closed.

\begin{definition}[Class obstruction]
Let  $\p$ be a $\leqslant$-monotone parameter.
If the set of $\subseteq$-minimal elements of $\UNB_{\leqslant}(\p)$ exists, then we call it the \emph{$\leqslant$-class obstruction} of $\p$ and we denote it by $\cobs_{\leqslant}(\p)$.
\end{definition}

\begin{definition}[Parametric obstruction]
Let $\p$ be a $\leqslant$-monotone parameter having a $\leqslant$-class obstruction.
We define the \emph{$\leqslant$-parametric obstruction} of $\p$ as the set of $\leqslant$-obstruction sets $\{ \obs_{\leqslant}(\Gcal) \mid \Gcal \in \cobs_{\leqslant }(\p) \}$ and we denote it by $\pobs_{\leqslant}(\p)$.
\end{definition}

The next proposition follows easily from the definitions (see \cite{PaulPT2023GraphParameters}).

\begin{proposition}\label{prop_wqo_cobs_pobs}
If a quasi-ordering $\leqslant$ is a well-quasi-ordering on $\gall$, then for every $\leqslant$-monotone parameter, the set $\cobs_\leqslant(\p)$ (and therefore also $\pobs_\leqslant(\p)$) exists.
\end{proposition}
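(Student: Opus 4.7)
The goal is to produce a $\subseteq$-minimal element inside $\UNB_{\leqslant}(\p)$, and to ensure that $\obs_{\leqslant}(\mathcal{G})$ is a legitimate object for every $\mathcal{G} \in \cobs_{\leqslant}(\p)$. The plan is to derive both conclusions from a single intermediate fact: \emph{if $\leqslant$ is a well-quasi-ordering on $\gall$, then the poset $(\down{\leqslant}{\gall}, \subseteq)$ satisfies the descending chain condition.}

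\medskip\noindent
\textbf{Step 1: Finiteness of obstruction sets.} For any $\mathcal{G} \in \down{\leqslant}{\gall}$, the set $\obs_{\leqslant}(\mathcal{G})$ is a $\leqslant$-antichain. Since $\leqslant$ is a wqo on $\gall$, every antichain is finite, so $\obs_{\leqslant}(\mathcal{G})$ is always a well-defined finite set. This already takes care of $\pobs_{\leqslant}(\p)$, once $\cobs_{\leqslant}(\p)$ is shown to exist.

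\medskip\noindent
\textbf{Step 2: The descending chain condition on $\down{\leqslant}{\gall}$.} Suppose, for contradiction, that there is a strictly descending chain
\[
\mathcal{G}_1 \supsetneq \mathcal{G}_2 \supsetneq \mathcal{G}_3 \supsetneq \cdots
\]
of $\leqslant$-closed classes. For each $i$, pick a witness $H_i \in \mathcal{G}_i \setminus \mathcal{G}_{i+1}$. Since $\mathcal{G}_j \subseteq \mathcal{G}_{i+1}$ for all $j \geqslant i+1$, we have $H_i \notin \mathcal{G}_j$ whenever $j > i$. Apply wqo to the infinite sequence $H_1, H_2, \ldots$: there exist indices $i < j$ with $H_i \leqslant H_j$. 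Because $H_j \in \mathcal{G}_j \subseteq \mathcal{G}_{i+1}$ and $\mathcal{G}_{i+1}$ is $\leqslant$-closed, we get $H_i \in \mathcal{G}_{i+1}$, contradicting the choice of $H_i$. Hence $(\down{\leqslant}{\gall},\subseteq)$ has no infinite strictly descending chain.

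\medskip\noindent
\textbf{Step 3: Existence of $\cobs_{\leqslant}(\p)$.} The family $\UNB_{\leqslant}(\p)$ is a subcollection of $\down{\leqslant}{\gall}$, so it inherits DCC. In a poset with DCC every non-empty subset has a $\subseteq$-minimal element: otherwise, starting from any element we could iteratively pick a strictly smaller one, producing an infinite descending chain. If $\UNB_{\leqslant}(\p)$ is empty, then $\cobs_{\leqslant}(\p) = \emptyset$ trivially exists. Otherwise, the set of its $\subseteq$-minimal elements is non-empty, and by definition this set is $\cobs_{\leqslant}(\p)$. Combining with Step 1, $\pobs_{\leqslant}(\p) = \{\obs_{\leqslant}(\mathcal{G}) \mid \mathcal{G} \in \cobs_{\leqslant}(\p)\}$ is then well-defined.

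\medskip\noindent
The only non-routine step is Step 2; the argument there is the standard wqo trick of promoting a strict chain of closed sets into an infinite bad sequence of witnesses. Everything else is bookkeeping on the definitions of $\UNB_{\leqslant}(\p)$, $\cobs_{\leqslant}(\p)$, and $\pobs_{\leqslant}(\p)$.
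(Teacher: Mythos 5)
Your proof is correct. The paper itself does not include a proof of this proposition (it defers to \cite{PaulPT2023GraphParameters}), but the argument you give — deriving the descending chain condition on $\down{\leqslant}{\gall}$ from wqo via the witness sequence $H_1, H_2, \ldots$, and then extracting minimal elements of $\UNB_{\leqslant}(\p)$ — is the standard and essentially inevitable route; Step~1 is moreover stated verbatim in the paper just after the definition of $\obs_{\leqslant}(\mathcal{G})$.

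One small remark on precision in Step~3: what the definition of $\cobs_{\leqslant}(\p)$ really requires is not merely that $\UNB_{\leqslant}(\p)$ has \emph{some} minimal elements, but that every $\mathcal{G} \in \UNB_{\leqslant}(\p)$ lies above a minimal one (so that, together with the upward-closedness of $\UNB_{\leqslant}(\p)$ noted in the paper, the minimal elements actually determine the whole collection). Your parenthetical justification — iteratively descending from an arbitrary $\mathcal{G}$ must terminate by DCC — in fact proves this stronger statement; it is worth phrasing the conclusion that way explicitly rather than as ``every non-empty subset has a minimal element,'' which by itself is weaker.
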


Note that the well-quasi-ordering assumption does not imply that $\pobs_\leqslant(\p)$ is a finite set.
However it indeed implies that each of the sets it contains is a finite set of graphs. 

\subsection{Parametric graphs}

A \emph{$\leqslant$-parametric graph} is any \emph{$\leqslant$-increasing} sequence of graphs $\mathscr{H} := \langle \mathscr{H}_{t} \rangle_{t \in \mathbb{N}}$ indexed by non-negative integers, i.e., such that for every $i \leq j,$ $\mathscr{H}_{i} \leqslant \mathscr{H}_{j}.$
We define the \emph{$\leqslant$-closure} of a $\leqslant$-parametric graph $\mathscr{H}$ as follows.
\begin{align*}
\closure{\leqslant}{\mathscr{H}}\ \coloneqq \ \big\{ G \in \gall \mid \exists k \in \mathbb{N} : G \leqslant \mathscr{H}_{k} \big\}.
\end{align*}

In what follows, when we introduce some parametric graph, it is frequently convenient to start the indexing from numbers larger than zero.
In these cases, we may assume that the omitted graphs are all the empty graph.

Given two $\leqslant$-parametric graphs $\mathscr{H}$ and $\mathscr{F}$, we write $\mathscr{H} \lesssim \mathscr{F}$ if there exists a function $f \colon \mathbb{N} \to \mathbb{N}$ such that $\mathscr{H}_k \leqslant  \mathscr{F}_{f(k)}$ and $\mathscr{H} \approx \mathscr{F}$ if $\mathscr{H} \lesssim \mathscr{F}$ and $\mathscr{F} \lesssim \mathscr{H}$. 
As we did for $\sim$, we call $f$ \emph{the gap} of the equivalence $\approx$ and we define relations $\mathscr{H} \approx_{\mathsf{P}} \mathscr{F}$ and $\mathscr{H} \approx_{\mathsf{L}} \mathscr{F}$ in a way analogous to the definitions of $\sim_{\mathsf{P}}$ and $\sim_{\mathsf{L}}$.

To every $\leqslant$-parametric graph $\mathscr{H}$, we associate the $\leqslant$-monotone parameter $\p_{\mathscr{H}}$ defined, so that for every graph $G,$
\begin{eqnarray}
\p_{\mathscr{H},\leqslant}(G) \ \coloneqq \ \inf\{ k \in \mathbb{N} \mid \mathscr{H}_{k} \not\leqslant G \}.\label{def_uioej}
\end{eqnarray}

We continue with the definition of a $\leqslant$-omnivore of a $\leqslant$-closed class.

\begin{definition}
Let $\Gcal$ be a $\leqslant$-closed class.
A $\leqslant$-parametric graph $\mathscr{H}$ is a \emph{$\leqslant$-omnivore} of $\Gcal$ if $\Gcal = \closure{\leqslant}{\mathscr{H}}$. 
\end{definition}

We call any \textsl{finite} set $\mathfrak{H}$ of pairwise $\lesssim$-non-comparable $\leqslant$-parametric graphs a \emph{$\leqslant$-parametric family}.
Let $\mathfrak{H}$ and $\mathfrak{F}$ be $\leqslant$-parametric families.
We write $\mathfrak{H} \equiv \mathfrak{F}$ if there exists a bijection $\sigma \colon \mathfrak{H} \to \mathfrak{F}$ such that for every $\mathscr{H} \in \mathfrak{H}$ it holds that $\mathscr{H} \approx \sigma(\mathscr{H}).$

To every $\leqslant$-parametric family $\mathfrak{H}$, we associate the $\leqslant$-monotone parameter $\p_{\mathfrak{H},\leqslant}$ defined, so that for every graph $G,$
\begin{eqnarray}
\p_{\mathfrak{H},\leqslant}(G) \ \coloneqq \ \begin{cases} \sup\{ \p_{\mathscr{H},\leqslant}(G) \mid \mathscr{H} \in \mathfrak{H} \}, &\text{if $\mathfrak{H} \neq \emptyset$}\\
0, &\text{otherwise}.
\end{cases}\label{def_olkol}
\end{eqnarray}

Note that $\p_{\mathfrak{H},\leqslant}(G) \sim \sum_{\mathscr{H} \in \mathfrak{H}} \p_{\mathscr{H},\leqslant}(G)$, which means that, from the point of view of equivalence, the sum of two parameters is the same as taking the supremum of them.

\begin{definition}[Universal obstruction] We say that a $\leqslant$-parametric family $\mathfrak{H}$ is a \emph{$\leqslant$-universal obstruction} for the graph parameter $\p$ if $\p_{\mathfrak{H},\leqslant} \sim \p$.
\end{definition}

In the definitions \eqref{def_uioej}  and  \eqref{def_olkol}, when the quasi-ordering relation is clear from the context we use the simpler notation $\p_{\mathfrak{H}}$ and $\p_{\mathscr{H}}$.

The following proposition is due to results of \cite{PaulPT2023GraphParameters}.
The first statement follows from Theorems 3.41 and 3.48 in \cite{PaulPT2023GraphParameters}.
The second follows from corollary 3.24 and theorem 3.48 in \cite{PaulPT2023GraphParameters}.

\begin{proposition}[\!\cite{PaulPT2023GraphParameters}]\label{cobs_uobs}
Let $\p$ be a $\leqslant$-monotone parameter.
If $\p$ has a $\leqslant$-universal obstruction $\mathfrak{H}$ then $\cobs_{\leqslant}(\p)$ exists, it is finite, and there exists a bijection $\sigma \colon \mathfrak{H} \to \cobs_{\leqslant}(\p)$ such that every $\mathscr{A} \in \mathfrak{H}$ is a $\leqslant$-omnivore of $\sigma(\mathscr{A})$.
Moreover, if $\leqslant$ is a well-quasi-ordering on $\gall,$ then $\cobs_{\leqslant}(\mathsf{p}
)$ exists, and if it is finite, $\p$ has a $\leqslant$-universal obstruction.
\end{proposition}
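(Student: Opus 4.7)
The plan is to establish both directions by relating each $\leqslant$-parametric graph $\mathscr{A}$ to the $\leqslant$-closed class $\closure{\leqslant}{\mathscr{A}}$ and showing that, under the universality hypothesis $\p_{\mathfrak{H}} \sim \p$, these closures are precisely the $\subseteq$-minimal elements of $\UNB_{\leqslant}(\p)$.

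For the forward direction, each $\closure{\leqslant}{\mathscr{A}}$ lies in $\UNB_{\leqslant}(\p)$: directly from the definition, $\p_{\mathscr{A}}(\mathscr{A}_k) \geqslant k+1$, so $\p_{\mathscr{A}}$ (and hence $\p$, by equivalence) is unbounded on $\closure{\leqslant}{\mathscr{A}}$. For $\subseteq$-minimality, any $\leqslant$-closed $\mathcal{G} \subseteq \closure{\leqslant}{\mathscr{A}}$ on which $\p$ is unbounded forces some $\mathscr{B} \in \mathfrak{H}$ to have $\p_{\mathscr{B}}$ unbounded on $\mathcal{G}$ (using that $\mathfrak{H}$ is finite and $\p_{\mathfrak{H}}$ is a supremum); $\leqslant$-closedness of $\mathcal{G}$ then gives $\mathscr{B}_k \in \mathcal{G}$ for every $k$, so $\closure{\leqslant}{\mathscr{B}} \subseteq \mathcal{G} \subseteq \closure{\leqslant}{\mathscr{A}}$ and hence $\mathscr{B} \lesssim \mathscr{A}$, which by $\lesssim$-incomparability forces $\mathscr{B} = \mathscr{A}$ and $\mathcal{G} = \closure{\leqslant}{\mathscr{A}}$. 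Applying the same supremum argument to an arbitrary $\mathcal{H} \in \cobs_{\leqslant}(\p)$ in place of $\mathcal{G}$ shows $\mathcal{H} = \closure{\leqslant}{\mathscr{A}}$ for some $\mathscr{A} \in \mathfrak{H}$, giving the required bijection $\sigma$ and the finiteness of $\cobs_{\leqslant}(\p)$.

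For the reverse direction, assume $\leqslant$ is a well-quasi-ordering and write $\cobs_{\leqslant}(\p) = \{\mathcal{G}_1, \ldots, \mathcal{G}_r\}$. Each $\mathcal{G}_i$ is \emph{irreducible}: it cannot be expressed as the union of two proper $\leqslant$-closed subclasses, since otherwise $\p$ would be bounded on each factor by $\subseteq$-minimality and hence on their union, contradicting its unboundedness on $\mathcal{G}_i$. A classical argument turns irreducibility into \emph{directedness}: if $a, b \in \mathcal{G}_i$ had no common $\leqslant$-upper bound in $\mathcal{G}_i$, then $\{d \in \mathcal{G}_i : a \not\leqslant d\}$ and $\{d \in \mathcal{G}_i : b \not\leqslant d\}$ would be proper $\leqslant$-closed subclasses covering $\mathcal{G}_i$. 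Enumerating the countable set $\mathcal{G}_i$ as $\{G^{(i)}_1, G^{(i)}_2, \ldots\}$ and inductively picking $\mathscr{H}^{(i)}_{k+1} \in \mathcal{G}_i$ to $\leqslant$-dominate both $\mathscr{H}^{(i)}_k$ and $G^{(i)}_{k+1}$ yields an omnivore of $\mathcal{G}_i$. The resulting set $\mathfrak{H} = \{\mathscr{H}^{(1)}, \ldots, \mathscr{H}^{(r)}\}$ is a $\lesssim$-antichain, since $\mathscr{H}^{(i)} \lesssim \mathscr{H}^{(j)}$ would imply $\mathcal{G}_i \subseteq \mathcal{G}_j$, contradicting the $\subseteq$-antichain property of $\cobs_{\leqslant}(\p)$ whenever $i \neq j$.

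It remains to verify $\p \sim \p_{\mathfrak{H}}$. Monotonicity combined with the omnivore property forces $\p(\mathscr{H}^{(i)}_k) \to \infty$, so $f_i(c) \coloneqq \min\{k : \p(\mathscr{H}^{(i)}_k) > c\}$ is well-defined and $\mathscr{H}^{(i)}_{f_i(c)} \not\leqslant G$ whenever $\p(G) \leqslant c$, giving $\p_{\mathfrak{H}} \preceq \p$. Conversely, any $G$ with $\p_{\mathfrak{H}}(G) \leqslant c$ lies in $\mathcal{C}_c \coloneqq \excl_{\leqslant}(\{\mathscr{H}^{(i)}_c : 1 \leqslant i \leqslant r\})$; if $\p$ were unbounded on $\mathcal{C}_c$, the descending chain condition on $\down{\leqslant}{\gall}$ (a classical consequence of well-quasi-ordering, obtained via a minimal bad sequence argument) would produce some $\mathcal{G}_j \in \cobs_{\leqslant}(\p)$ with $\mathcal{G}_j \subseteq \mathcal{C}_c$, contradicting $\mathscr{H}^{(j)}_c \in \mathcal{G}_j \setminus \mathcal{C}_c$. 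The main obstacle is the reverse direction: translating the abstract $\subseteq$-minimality of class obstructions into concrete sequences of witnesses requires the irreducibility-to-directedness passage, and the well-quasi-ordering hypothesis is used essentially only through the descending chain condition that underlies the final equivalence.
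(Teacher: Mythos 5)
The paper only cites this result from~\cite{PaulPT2023GraphParameters} without reproducing the proof, so I am evaluating your argument on its own terms.

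Your proof is correct and complete, and the decomposition is the natural one. The forward direction correctly identifies $\cobs_{\leqslant}(\p)$ as $\{\closure{\leqslant}{\mathscr{A}} \mid \mathscr{A}\in\mathfrak{H}\}$ by observing that $\p_{\mathscr{A}}(\mathscr{A}_k)\geqslant k+1$ puts each closure into $\UNB_{\leqslant}(\p)$, and that $\p_{\mathfrak{H}}$ being a finite supremum forces any unbounded $\leqslant$-closed $\mathcal{G}$ to contain some $\closure{\leqslant}{\mathscr{B}}$. The step $\closure{\leqslant}{\mathscr{B}}\subseteq\closure{\leqslant}{\mathscr{A}} \Rightarrow \mathscr{B}\lesssim\mathscr{A}$ is exactly right, and with the $\lesssim$-antichain hypothesis on $\mathfrak{H}$ it pins down minimality and injectivity of $\sigma$. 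For the reverse direction, the passage irreducibility $\Rightarrow$ $\leqslant$-directedness of each $\mathcal{G}_i$ is the key structural insight, and the sets $\{d\in\mathcal{G}_i : a\not\leqslant d\}$ are indeed $\leqslant$-closed by transitivity and proper by reflexivity. The inductive construction over an enumeration of the (countable) class gives the omnivore, and the final equivalence $\p\sim\p_{\mathfrak{H}}$ is verified cleanly: the upper bound $\p_{\mathfrak{H}}\preceq\p$ follows from monotonicity once $\p(\mathscr{H}^{(i)}_k)\to\infty$ is noted, and the lower bound uses the descending-chain condition on $\down{\leqslant}{\gall}$ to extract a $\mathcal{G}_j\subseteq\mathcal{C}_c$ and derive a contradiction with $\mathscr{H}^{(j)}_c\in\mathcal{G}_j$.

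Two minor remarks. First, your parenthetical that the descending-chain condition on downward-closed classes is ``obtained via a minimal bad sequence argument'' is a slight misattribution: the standard derivation is elementary --- given a hypothetical strictly decreasing chain $\mathcal{G}_0\supsetneq\mathcal{G}_1\supsetneq\cdots$, pick witnesses $x_i\in\mathcal{G}_i\setminus\mathcal{G}_{i+1}$; wqo yields $i<j$ with $x_i\leqslant x_j$, and $\leqslant$-closedness of $\mathcal{G}_{i+1}$ produces the contradiction. No minimal-bad-sequence machinery is involved. Second, in the forward direction you tacitly use that $\p\sim\p_{\mathfrak{H}}$ implies the two parameters are (un)bounded on the same $\leqslant$-closed classes; it is worth spelling out that this follows directly from the definition of $\sim$ (including the agreement of domains) rather than by invoking \autoref{th_equiv}, since that proposition already presupposes the existence of $\cobs_{\leqslant}(\p)$, which is what you are in the process of establishing.
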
 

We can now present the equivalence between the four concepts that we defined so far.
We say that a $\leqslant $-monotone parameter $\p$ is \emph{bounded} in some $\leqslant $-closed graph class $\Gcal$ if 
there exists a $c \in \mathbb{N}$ such that $\p(G) \leqslant c$, for every $G \in \Gcal$.

\begin{proposition}[Corollary 3.45 in~\cite{PaulPT2023GraphParameters}] \label{th_equiv}
Let $\p$ and $\p'$ be $\leqslant$-monotone parameters.
If both $\p$ and $\p'$ have $\leqslant$-universal obstructions and both $\cobs_{\leqslant}(\p)$ and $\cobs_{\leqslant}(\p')$ exist, then the following statements are equivalent:
\begin{enumerate}
\item $\p \sim \p'$,
\item $\cobs_{\leqslant}(\p) = \cobs_{\leqslant}(\p')$,
\item $\pobs_{\leqslant}(\p) = \pobs_{\leqslant}(\p')$,
\item for every $\leqslant$-universal obstruction $\mathfrak{H}$ for $\p$ and every $\leqslant$-universal obstruction $\mathfrak{F}$ for $\p'$, $\mathfrak{H} \equiv \mathfrak{F}$, and
\item for every $\leqslant $-closed class $\Gcal$, $\p$ is bounded in $\Gcal$ if and only if $\p'$ is bounded in $\Gcal$.
\end{enumerate}
\end{proposition}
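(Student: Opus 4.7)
The plan is to close the ring of implications $(1)\Leftrightarrow(5)\Leftrightarrow(2)\Leftrightarrow(3)$ and, separately, $(2)\Leftrightarrow(4)$. The two workhorses are \autoref{cobs_uobs}, which for each parameter in the hypothesis supplies a bijection between any of its $\leqslant$-universal obstructions and $\cobs_{\leqslant}(\cdot)$ under which each parametric graph is a $\leqslant$-omnivore of its image, together with the elementary fact that $\excl_{\leqslant}$ and $\obs_{\leqslant}$ are mutually inverse on $\down{\leqslant}{\gall}$.

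The implication $(1)\Rightarrow(5)$ is immediate from the definition of the gap: if $\p\leqslant f\circ\p'$, then a uniform bound $c$ for $\p'$ on $\mathcal{G}$ transfers to the bound $f(c)$ for $\p$. For $(5)\Rightarrow(1)$ I would argue by contrapositive: failure of $\p\preceq\p'$ yields some $c\in\Nbbb$ for which the $\leqslant$-closed class $\{G:\p'(G)\leqslant c\}$ lies in $\UNB_\leqslant(\p)$ while $\p'$ is bounded on it, contradicting (5). The equivalence $(2)\Leftrightarrow(3)$ is a direct application of $\excl_{\leqslant}\circ\obs_{\leqslant}=\mathrm{id}$ to each element of $\cobs_\leqslant(\p)$ and $\cobs_\leqslant(\p')$.

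The step $(2)\Leftrightarrow(5)$ is where the universal-obstruction hypothesis is essential. Rewriting (5) as $\UNB_\leqslant(\p)=\UNB_\leqslant(\p')$, the direction $(5)\Rightarrow(2)$ holds because $\cobs$ is by definition the set of $\subseteq$-minimal elements of $\UNB$. For $(2)\Rightarrow(5)$ I would pick $\mathcal{G}\in\UNB_\leqslant(\p)$ with witnesses $G_k\in\mathcal{G}$ satisfying $\p(G_k)\to\infty$, and use $\p\sim\p_{\mathfrak{H}}$ for a universal obstruction $\mathfrak{H}$ of $\p$ to infer $\p_{\mathfrak{H}}(G_k)\to\infty$. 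Since $\mathfrak{H}$ is finite, pigeonhole forces some single $\mathscr{H}\in\mathfrak{H}$ to satisfy $\p_\mathscr{H}(G_{k_j})\to\infty$ along a subsequence, i.e.\ $\mathscr{H}_{j-1}\leqslant G_{k_j}$ for all $j$. Combining $\leqslant$-closedness of $\mathcal{G}$ with $\leqslant$-monotonicity of $\mathscr{H}$ then upgrades ``$\mathscr{H}_{j-1}\in\mathcal{G}$ for infinitely many $j$'' to $\closure{\leqslant}{\mathscr{H}}\subseteq\mathcal{G}$. By \autoref{cobs_uobs}, $\closure{\leqslant}{\mathscr{H}}\in\cobs_\leqslant(\p)=\cobs_\leqslant(\p')\subseteq\UNB_\leqslant(\p')$, and upward-closure of $\UNB_\leqslant(\p')$ places $\mathcal{G}$ in $\UNB_\leqslant(\p')$.

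Finally, for $(2)\Leftrightarrow(4)$, \autoref{cobs_uobs} supplies bijections $\sigma_\p\colon\mathfrak{H}\to\cobs_\leqslant(\p)$ and $\sigma_{\p'}\colon\mathfrak{F}\to\cobs_\leqslant(\p')$ under which each parametric graph is an omnivore of its image. When (2) holds, $\sigma_{\p'}^{-1}\circ\sigma_\p$ is the required bijection, and the relation $\mathscr{A}\approx(\sigma_{\p'}^{-1}\circ\sigma_\p)(\mathscr{A})$ follows from the general observation that two $\leqslant$-omnivores of the same $\leqslant$-closed class are $\approx$-equivalent, since each term of one sequence must $\leqslant$-embed into some term of the other via the common closure. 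Conversely, $\mathfrak{H}\equiv\mathfrak{F}$ forces corresponding parametric graphs to share their $\leqslant$-closures, so the images of $\sigma_\p$ and $\sigma_{\p'}$ agree and $\cobs_\leqslant(\p)=\cobs_\leqslant(\p')$. The main obstacle I anticipate is the pigeonhole step in $(2)\Rightarrow(5)$, where one must convert mere pointwise unboundedness of $\p$ on $\mathcal{G}$ into a whole parametric sequence from $\mathfrak{H}$ lying inside $\mathcal{G}$; every other step is essentially bookkeeping around \autoref{cobs_uobs}.
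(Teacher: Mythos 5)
Your proof is correct, and it takes a genuinely different route from the one the paper indicates. The paper does not reprove \autoref{th_equiv}; it cites \cite{PaulPT2023GraphParameters} and, in \autoref{conbncpsjdsusion}, remarks that the symmetric statement factors through the \emph{directed} equivalences $\p\succeq\p'\Leftrightarrow\cobs_{\leqslant}(\p)\subseteq^*\cobs_{\leqslant}(\p')\Leftrightarrow\pobs_{\leqslant}(\p)\leqslant^{**}\pobs_{\leqslant}(\p')\Leftrightarrow\mathfrak{H}\lesssim^*\mathfrak{F}$, combined with the observation that $\cobs_\leqslant(\p)$ is a $\subseteq$-antichain and $\pobs_\leqslant(\p)$ a $\leqslant^*$-antichain (so that the Smyth relation in both directions collapses to equality). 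Your proof instead closes the ring $(1)\Leftrightarrow(5)\Leftrightarrow(2)\Leftrightarrow(3)$ and $(2)\Leftrightarrow(4)$ directly, without ever invoking Smyth extensions. The step you flag as the potential obstacle is in fact sound: once pigeonhole isolates a single $\mathscr{H}\in\mathfrak{H}$ with $\p_{\mathscr{H}}$ unbounded along a subsequence of witnesses in $\mathcal{G}$, the definition $\p_{\mathscr{H}}(G)=\inf\{k\mid\mathscr{H}_k\not\leqslant G\}$ together with $\mathscr{H}$ being $\leqslant$-increasing yields $\mathscr{H}_m\leqslant G_{k_j}$ for all $m$ below the threshold, and arbitrarily large thresholds plus $\leqslant$-closedness of $\mathcal{G}$ then force the entire $\closure{\leqslant}{\mathscr{H}}$ into $\mathcal{G}$; \autoref{cobs_uobs} identifies that closure as an element of $\cobs_\leqslant(\p)$, and upward-closure of $\UNB_\leqslant(\p')$ finishes it. (One small point to make explicit: this gives $\UNB_\leqslant(\p)\subseteq\UNB_\leqslant(\p')$, and the reverse inclusion follows by symmetry since hypothesis (2) is symmetric.) Your $(5)\Rightarrow(1)$ contrapositive should also split into the two ways $\p\preceq\p'$ can fail — a domain mismatch, or unboundedness of $\p$ on some sublevel set $\mathcal{G}_{\p',c}$ — but both land in the same contradiction with (5). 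The trade-off between the two approaches: the paper's directed Smyth-extension version is stronger (it compares non-equivalent parameters, not just equivalent ones, and is the tool actually used throughout \autoref{rfisondlclsidop} to order parameter hierarchies), whereas your argument is more elementary and self-contained, establishing exactly the symmetric statement needed here using only \autoref{cobs_uobs} and the basic bookkeeping around $\excl_\leqslant\circ\obs_\leqslant=\mathrm{id}$.
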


\subsection{Comparing parameters using Smyth extensions}
\label{conbncpsjdsusion}

Let $\bm{\leqslant }$ be a quasi-ordering relation on some set $\mathbf{A}$.
The \emph{Smyth extension} of $\bm{\leqslant}$ (see e.g., \cite{Jancar99ANote}) is the quasi-ordering relation $\bm{\leqslant}^{*}$ defined on $2^{\mathbf{A}}$ such that, for $\mathbf{X}, \mathbf{Y} \in 2^{\mathbf{A}}$, $\mathbf{X}\bm{\leqslant }^*\mathbf{Y}$ iff $\forall \mathbf{y}\in \mathbf{Y}, \exists \mathbf{x}\in \mathbf{X}$ such that $\mathbf{x}\bm{\leqslant }\mathbf{y}$.

\medskip
The proof of \autoref{th_equiv} follows from the fact, shown in \cite[Theorem 3.44]{PaulPT2023GraphParameters}, that the following statements are equivalent:
\begin{eqnarray}
\p & \succeq & \p' \label{first}\\
\cobs_{\leqslant }(\p) & \subseteq^* &\cobs_{\leqslant }(\p'),\label{second}\\ 
\pobs_{\leqslant }(\p) &  \leqslant ^{**} & \pobs_{\leqslant }(\p'),\label{third}\\ 
\mathfrak{H} & \lesssim^*& \mathfrak{F}.\label{fourth}
\end{eqnarray}

Also observe that $\pobs_{\leqslant }(\p)$, when it exists, is a $\leqslant^*$-antichain and that each $\Ocal\in \pobs_{\leqslant}(\p)$ is a $\leqslant$-antichain.
Similarly, $\cobs_{\leqslant}(\p)$, when it exists, is a $\subseteq$-antichain.
Finally if $\mathfrak{H}$ is a $\leqslant$-universal obstruction for $\p$, then $\mathfrak{H}$ is a $\lesssim$-antichain.

\medskip
In the next sections, for each of the parameters that we deal with, we provide all equivalent descriptions corresponding to \autoref{th_equiv}. 
The relations \eqref{first}, \eqref{second}, \eqref{third}, and \eqref{fourth} may serve as useful tools for making approximate comparisons between parameters. 

\subsection{Basic definitions} \label{some_definitions}

Given $a,b\in\mathbb{N}$ we denote the set $\{z\in\mathbb{N} \mid a\leq z\leq b\}$ by $[a,b].$
In case $a>b$ the set $[a,b]$ is empty. For an integer $p\geq 1,$ we set $[p]=[1,p]$ and $\mathbb{N}_{\geq p}=\mathbb{N}\setminus [0,p-1].$ 
Given that $\mathbf{A}$ is a set of objects where the operation $\cup$  is defined, we denote $\cupall \mathbf{A}=\bigcup_{A\in\mathbf{A}}A$.

Given a graph $G$, we use $V(G)$ and $E(G)$ for the vertex and the edge set of $G$, respectively.
We also denote $|G|=|V(G)|$. Given a set $S\subseteq V(G)$, we denote by $G - S$ the graph obtained if we remove the vertices of $S$ from $G$.
Likewise, given a set $F \subseteq E(G)$, we denote by $G - F$ the graph obtained if we remove the edges of $F$ from $G.$
If $a \in V(G)\cup E(G)$ we write $G-a$ instead of $G-\{a\}$.
Also the \emph{subgraph of $G$ induced by $S\subseteq V(G)$} is defined as $G[S] \coloneqq G-(V(G)\setminus S)$.
Given two graphs $G_{1}$ and $G_{2}$ we define $G_{1}\cup G_{2}=(V(G_{1})\cup V(G_{2}),E(G_{1})\cup E(G_{2}))$. Given $v\in V(G)$ we define the \emph{neighbourhood} of $v$ in $G$ as the set $N_{G}(v)=\{u\mid \{v,u\}\in E(G)\}$.
The \emph{degree} of a vertex $v \in V(G)$ is defined as the number of edges that are incident to $v.$
  
Let $e=\{x,y\}$ be an edge in a graph $G$.
The result of the \emph{subdivision} of $e$ in $G$ is the graph obtained from the graph $G-e$ if we add  to it a new vertex $v_{x,y}$ and the two edges $\{x,v_{x,y}\}$ and $\{v_{x,y},y\}$.
A \emph{subdivision} of a graph $H$ is any graph that is obtained from $H$ after a  (possibly empty) sequence of subdivisions.

\section{Obstructions of minor-monotone parameters}
\label{rfisondlclsidop}

We commence our presentation with one of the most studied quasi-ordering relations on graphs, that is the \textsl{minor} relation: a graph $H$ is a \emph{minor} of a graph $G$ if $H$ can be obtained from a subgraph of $G$ after contracting\footnote{The action of \emph{contracting} an edge is the identification of its vertices.
If multiple edges appear after this operation, then we reduce their multiplicity to one.} edges.
We denote this relation by $H \leqslant_{\mathsf{m}} G$.
According to the Robertson and Seymour theorem \cite{RobertsonS04GMXX} the relation $\leqslant_{\mathsf{m}}$ is a well-quasi-ordering on $\gall$.

\subsection{Warm up}
\label{wadfdjgjingngup}

To enhance the reader's familiarity with the concepts tied to the parametric framework introduced in \autoref{diidnidialdedfs}, we  begin our discussion with a few elementary minor-monotone parameters.

\paragraph{Number of vertices.}

The simplest graph parameter of interest that one could consider is perhaps the number of vertices of a graph.
We denote this parameter by $\size$, that is for every graph $G \in \gall,$
\begin{align}
\size(G) \ &\coloneqq \ |G|.\label{size_parameter}
\end{align}

An easy observation is that a minor-universal obstruction for $\size$ is the singleton minor-parametric family $\{\mathscr{K}^{\mathsf{d}}\}$  where $\mathscr{K}^{\mathsf{d}} = \langle k\cdot K_{1} \rangle_{k\in\mathbb{N}}$ consists of the edgeless graphs.
Also note that $\cobs_{\leqslant_{\mathsf{m}}}(\size) = \{ \excl_{\leqslant_{\mathsf{m}}}(K_{2}) \}$ and $\pobs_{\leqslant_{\mathsf{m}}}(\size) = \big\{\{K_{2}\}\big\}.$

\paragraph{Number of edges.}

We may next consider the parameter counting the number of edges of a graph, which we denote $\esize$, that is for every graph $G \in \gall$,
\begin{align}
\esize(G) \ &\coloneqq \ |E(G)|.\label{esize_parameter}
\end{align}

A minor-universal obstruction for $\esize$ is $\{\mathscr{K}^{\mathsf{m}}, \mathscr{K}^{\mathsf{s}}\}$ where 
$\mathscr{K}^{\mathsf{m}} = \langle k\cdot K_{2} \rangle_{k\in\mathbb{N}}$ and 
$\mathscr{K}^{\mathsf{s}} = \langle K_{1,k} \rangle_{k\in\mathbb{N}}$, where $K_{1, 0}$ is defined to be $K_{1}.$
Indeed, a graph has many edges if and only it contains either a large matching as a minor, or a star with many leaves.
Let $\Ccal^{M}$ be the set of forests of maximum degree one and $\Ccal^{S}$ be the set of star forests, i.e., the forests whose connected components contain at most one vertex of degree larger than one, with at most one component on at least two vertices.

It is easy to verify that $\cobs_{\leqslant_{\mathsf{m}}}(\esize) = \{\Ccal^{M}, \Ccal^{S}\}$.
Consequently, $\pobs_{\leqslant_{\mathsf{m}}}(\esize) = \big\{ \{P_{3}\}, \{K_{3}, 2\cdot K_{2}\} \big\}$.

\medskip
So, as we just discussed, $\esize$ is a parameter characterized by a universal obstruction containing two parametric graphs.
It is natural to wonder to which graph parameter each of these two graph sequences correspond.
As a preliminary remark, let us observe that since $\pobs_{\leqslant_{\mathsf{m}}}(\esize) \leqslant^{**} \big\{\{P_{3}\}\big\}$ and $\pobs_{\leqslant_{\mathsf{m}}}(\esize)\leqslant^{**} \big\{\{K_{3},2\cdot K_{2}\}\big\}$, we have that $\p_{\{\mathscr{K}^{\mathsf{s}}\}} \preceq \esize$ and $\p_{\{\mathscr{K}^{\mathsf{m}}\}} \preceq\esize$.

\paragraph{Maximum number of leaves of a tree subgraph.}

Let us first consider the parameter $\p_{\{\mathscr{K}^{\mathsf{s}}\}}$.
As discussed above, we have that $\cobs(\p_{\{\mathscr{K}^{\mathsf{s}}\}})=\{\Ccal^{S}\}$ and $\pobs(\p_{\{\mathscr{K}^{\mathsf{s}}\}})=\big\{\{K_{3},2\cdot K_{2}\}\big\}$. 
Observe now that for every graph $G$, $\p_{\{\mathscr{K}^{\mathsf{s}}\}}(G)$ is the maximum number of leaves of a tree subgraph of $G.$
We denote this graph parameter by $\mathsf{maxstar}(G)$\footnote{If $G$ is connected then $\mathsf{maxstar}(G)$ corresponds to the 
$\maxleaf$ parameter \cite{GareyJ79Computers,FellowsL92Onwell}.} that is for every graph $G \in \gall,$
\begin{align}
\nonumber\mathsf{maxstar}(G) \ \coloneqq \ \max\{ d \in \mathbb{N} \mid \ &T \subseteq G\text{ is a tree and}\\
&d\text{ is the number of leaves of }T\}.\label{maxstar_parameter}
\end{align}

\paragraph{Vertex cover.}

We now turn our attention to $\p_{\{\mathscr{K}^{\mathsf{m}}\}}$, which captures the maximum size of a matching of a graph.
Again, by the above discussion, we have that $\cobs_{\leqslant_{\mathsf{m}}}(\p_{\{\mathscr{K}^{\mathsf{m}}\}}) = \{\Ccal^{M}\}$ and $\pobs_{\leqslant_{\mathsf{m}}}(\p_{\{\mathscr{K}^{\mathsf{m}}\}}) = \big\{\{P_{3}\}\big\}$.
For a graph $G$, let $\vc(G)$ denote the vertex cover of $G$, that is
\begin{align}
\vc(G) \ \coloneqq \ \min\{ |S| \mid S \subseteq V(G)\text{ and }G - S\text{ is edgeless} \}.
\end{align}

It is well known that $\p_{\{\mathscr{K}^{\mathsf{m}}\}}$ and $\vc$ are equivalent parameters.

\paragraph{The two extremes.}

Last but not least, one may wonder which parameters correspond to the two ``extremes'' (up to equivalence) of the quasi-ordering induced by $\preceq$ on the set of all graph parameters.

The first graph parameter\footnote{Formally, we may also consider the graph parameter that is the empty function $\p_{\varnothing} = \varnothing$, i.e., is an empty set of pairs. It follows that $\cobs(\p_{\varnothing}) = \{\emptyset\}$, and $\pobs(\p_{\varnothing}) = \big\{\{K_0\}\big\}$.} we consider is the parameter that associates every graph to infinity which we denote by $\p_{\infty}$ and is defined so that for every non-null graph $G \in \gall,$
\begin{align}
\p_{\infty}(G) \ &\coloneqq \ \infty.\label{infinity_parameter}
\end{align}

It is straightforward to observe that $\p_{\infty}$ corresponds to a maximal element of the quasi-ordering $\preceq$ on the set of all parameters as trivially, for every graph parameter $\p,$ $\p \preceq \p_{\infty}.$
To understand what are the class obstruction and the parametric obstruction of $\mathsf{p}_\infty$, we consider the empty graph\footnote{The \emph{empty graph} $K_{0}$ is the graph whose vertex set is the empty set. We may sometimes refer to this as the empty graph or the empty clique.} $K_0$ and define the minor-parametric graph $\mathscr{K}^{\emptyset} = \langle k\cdot K_{0}\rangle_{k\in\mathbb{N}}$, where for every $k \in \Nbbb$, $k \cdot K_{0} = K_{0}.$
One can observe that, for every non-null graph $G$, $\p_{\{\mathscr{K}^{\emptyset}\}}(G) = \infty$.
It follows that $\{\mathscr{K}^{\emptyset}\}$ is a minor-universal obstruction for $\p_{\infty}$, $\cobs_{\leqslant_{\mathsf{m}}}(\p_{\infty}) = \{\{K_{0}\}\}$, and $\pobs_{\leqslant_{\mathsf{m}}}(\p_{\infty}) = \big\{\{ K_{1} \} \big\}$.

The second graph parameter we consider is the constant zero parameter which we denote by $\p_{0}$ and is defined so that for every graph $G \in \gall,$
\begin{align}
\p_{0}(G) \ &\coloneqq \ 0.\label{zero_parameter}
\end{align}

The parameter $\p_{0}(G)$ defines the other extreme of the relation $\preceq$ as it is a minimal element since for every graph parameter $\p_{0} \preceq \p.$
It is straightforward to see that the empty set $\emptyset$ is a $\leqslant$-universal obstruction of $\p_{0}$ and as a result $\cobs_{\leqslant}(\p_{0}) = \emptyset$ and $\pobs_{\leqslant}(\p_{0}) = \emptyset,$ for any quasi-ordering relation $\leqslant$ on $\gall.$
It is worthwhile to mention that with respect to equivalence of parameters every constant $c$ graph parameter for any non-negative integer $c$ is equivalent to $\p_{0}.$

\medskip
The discussion above on these few simple parameters highlights the versatility of the concepts introduced in \autoref{diidnidialdedfs} and their interrelationships. As we have seen, these concepts provide various perspectives and approaches to defining relevant parameters. In the next subsection, we explore several classic minor-monotone graph parameters, along with some natural yet less studied parameters. This examination will lead to the definition of a rich hierarchy of parameters that, in our view, merits more systematic investigation.

\medskip
We proceed with our presentation of minor-monotone parameters by considering the three arguably most famous ones: treewidth, pathwidth, and treedepth.

\subsection{Treewidth}
\label{treewidth_sec}

Our first guest is treewidth.
A traditional approach is to see treewidth as a measure of the topological resemblance of a graph to a tree and is an omnipresent graph parameter due to its numerous applications both in combinatorics and in algorithms \cite{Bodlaender98}.

\paragraph{Clique sum closure and treewidth.}

To understand what it means to measure the topological resemblance of a graph to a tree, the concept of the \textsl{clique sum} operation comes in handy.
Clique sums play an important role in graph minor theory as they allow to build more complicated graphs in a controlled way by ``gluing'' structurally simpler ones.

\medskip
Formally, the clique sum operation applies to two graphs $G$ and $G'$, each containing a clique of size $c \in \mathbb{N}$, denoted $K$ and $K'$ respectively.
Given a bijection $\sigma \colon V(K)\to V(K')$, the \emph{clique sum} operation takes the disjoint union of $G$ and $G'$; identifies, each vertex $v \in V(K)$ with the vertex $\sigma(v) \in V(K')$ and, in the resulting graph, possibly removes some of the edges between the identified vertices.
Notice that the clique sum operation is a \textsl{one-to-many} operation as the final result depends on the choice of $\sigma$ and on the choice of the edges that are eventually removed.
The \emph{clique sum closure} of a set of graphs $\Gcal$  is the set of all graphs that  can be constructed by a sequence of clique sum operations from the graphs in $\Gcal$.
As we shall see in \autoref{params_smaller_tw}, the clique sum closure is useful for defining more ``complex'' parameters from ``simpler'' ones.

Recall that $\Gcal_{\size, k}$ be the class of graphs on at most $k$ vertices.  Consider now the clique sum closure of  $\Gcal_{\size, k+1}$. In case $k = 1$ one can easily verify that the resulting graph is a forest.
When we allow values of $k \geq 2$, we obtain a version of a ``tree like'' graph which leads to the definition of treewidth.

More precisely, for every graph $G \in \gall$, the treewidth of a graph is defined as follows.
\begin{align}
\tw(G) \ &\coloneqq \ \min\{ k \in \Nbbb \mid G\text{ is in the clique sum closure of } \Gcal_{\size, k+1} \}.\label{treewidth_parameter}
\end{align}

Treewidth is important in algorithmic graph theory 
due to the fact that many problems on graphs can be solved efficiently when restricted to graphs of bounded treewidth.
This fact has been formalized by the celebrated Courcelle's theorem stating that 
every problem on graphs that is expressible in Monadic Second Order Logic (MSOL) can be solved in linear time 
on graphs with bounded treewidth \cite{Courcelle90them, Courcelle97, Courcelle92}.

\medskip
We let $\mathscr{A} = \langle \mathscr{A}_t \rangle_{t\in \mathbb{N}_{\geq 2}}$ denote the minor-parametric graph such that for every $t\geq 2$, $\mathscr{A}_t$ is the $(t \times 4t)$-annulus grid (see \autoref{fig_grid}).
It is easy to verify that both $\mathscr{A}$ and the minor-parametric graph $\Gamma \coloneqq \langle \Gamma_{t} \rangle_{t \in \Nbbb}$, where $\Gamma_{t}$ is the \textsl{$(t \times t)$-grid}\footnote{The $(t \times t)$-grid is the graph obtained as the Cartesian product of two paths on $t$ vertices.}, are equivalent with linear gap with respect to $\approx$.

\begin{figure}[htbp]
\centering
\includegraphics[width=0.9\linewidth]{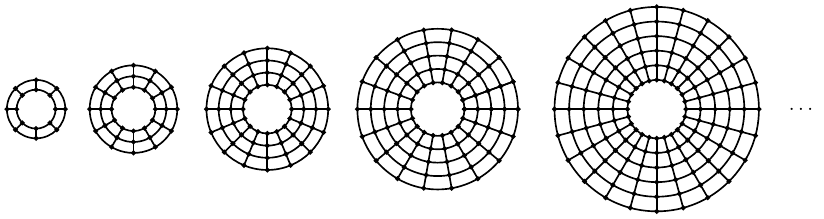}
\caption{The minor-parametric graph $\mathscr{A}=\langle \mathscr{A}_2, \mathscr{A}_3, \mathscr{A}_4, \mathscr{A}_5, \mathscr{A}_6 \ldots \rangle$ of annulus grids.}\label{fig_grid}
\end{figure}

The next proposition sums up the approximate behaviour 
of treewidth in terms of universal obstructions, class obstructions, and parametric obstructions.
Its proof, being the first of its kind in this work, is presented in full detail.

\begin{proposition}\label{prop_tw}
The set $\{\mathscr{A}\}$ consisting of the minor-parametric graph $\mathscr{A}$ of annulus grids is a minor-universal obstruction for $\tw$ with polynomial gap.
Moreover, $\cobs_{\leqslant_{\mathsf{m}}}(\tw) = \{\gplanar\}$ and $\pobs_{\leqslant_{\mathsf{m}}}(\tw) = \big\{ \{K_5, K_{3,3}\} \big\}$.
\end{proposition}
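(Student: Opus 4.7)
The plan is to derive all three claims from two classical inputs: the polynomial Excluded Grid Theorem, in its polynomial-gap incarnation due to Chuzhoy and Tan, and Wagner's characterization of planar graphs. The excerpt already records the linear equivalence $\mathscr{A} \approx_{\mathsf{L}} \Gamma$ between the annulus and square grid parametric graphs, so at each step I can freely replace $\mathscr{A}$ by $\Gamma$ up to a linear change of gap.

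To show that $\{\mathscr{A}\}$ is a $\leqslant_{\mathsf{m}}$-universal obstruction for $\tw$ with polynomial gap, I would first establish $\tw \sim_{\mathsf{P}} \p_{\{\Gamma\}}$ and then transfer along $\mathscr{A} \approx_{\mathsf{L}} \Gamma$. One direction is the classical lower bound $\tw(\Gamma_k) \geq k$; combined with minor-monotonicity of $\tw$, this gives $\tw(G) \geq k$ whenever $\Gamma_k \leqslant_{\mathsf{m}} G$, and hence $\p_{\{\Gamma\}} \preceq \tw$ with linear gap. The reverse direction $\tw \preceq \p_{\{\Gamma\}}$ with polynomial gap is exactly the Excluded Grid Theorem: there is a polynomial $f$ such that $\tw(G) \geq f(k)$ implies $\Gamma_k \leqslant_{\mathsf{m}} G$. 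Combining the two inequalities and applying the linear equivalence $\mathscr{A} \approx_{\mathsf{L}} \Gamma$ yields $\tw \sim_{\mathsf{P}} \p_{\{\mathscr{A}\}}$.

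For $\cobs_{\leqslant_{\mathsf{m}}}(\tw) = \{\gplanar\}$, I would verify that $\gplanar$ is the unique $\subseteq$-minimal element of $\UNB_{\leqslant_{\mathsf{m}}}(\tw)$. Membership is clear since $\Gamma_k \in \gplanar$ and $\tw(\Gamma_k) \geq k$. For minimality, let $\mathcal{G}$ be any $\leqslant_{\mathsf{m}}$-closed class on which $\tw$ is unbounded. By the Excluded Grid Theorem, for each $k$ some $G_k \in \mathcal{G}$ admits $\Gamma_k$ as a minor, and minor-closure of $\mathcal{G}$ gives $\Gamma_k \in \mathcal{G}$ for every $k$. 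Since every planar graph is a minor of a sufficiently large square grid---a folklore fact proved via a planar grid drawing---we conclude $\gplanar \subseteq \mathcal{G}$, so $\gplanar$ is the unique $\subseteq$-minimal class on which $\tw$ is unbounded. The identity $\pobs_{\leqslant_{\mathsf{m}}}(\tw) = \big\{\{K_5, K_{3,3}\}\big\}$ then follows immediately from Wagner's theorem, which identifies $\{K_5, K_{3,3}\}$ as the set of minor-minimal non-planar graphs, so $\obs_{\leqslant_{\mathsf{m}}}(\gplanar) = \{K_5, K_{3,3}\}$.

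The main technical obstacle is the polynomial Excluded Grid Theorem, which simultaneously delivers the polynomial gap of the universal obstruction and the minimality of $\gplanar$ in the class obstruction; any weaker (e.g.\ exponential) version would still yield a universal obstruction but lose the polynomial gap claim. The remaining ingredients---the bound $\tw(\Gamma_k) \geq k$, the fact that every planar graph is a minor of a large grid, the linear equivalence $\mathscr{A} \approx_{\mathsf{L}} \Gamma$, and Wagner's theorem---are all classical and routine.
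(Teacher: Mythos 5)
Your proof is correct and takes essentially the same route as the paper: the lower bound $\tw(\Gamma_k)\geq k$, the polynomial Excluded Grid Theorem of Chuzhoy--Tan, the linear equivalence $\mathscr{A}\approx_{\mathsf{L}}\Gamma$, and Wagner's theorem are exactly the ingredients used there. The only cosmetic difference is that you verify $\cobs_{\leqslant_{\mathsf{m}}}(\tw)=\{\gplanar\}$ by hand (showing $\gplanar$ is the unique $\subseteq$-minimal element of $\UNB_{\leqslant_{\mathsf{m}}}(\tw)$), whereas the paper instead observes that $\mathscr{A}$ is a minor-omnivore of $\gplanar$ and invokes \autoref{cobs_uobs}; both arguments are equivalent and equally valid.
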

\begin{proof}
We prove that $\tw \sim \p_{\mathscr{A}}$, implying that $\mathscr{A}$ is a minor-universal obstruction for $\tw$.
By definition, if a graph $G$ satisfies $\p_{\mathscr{A}}(G) \geq k$, then $\mathscr{A}_k$, and therefore also $Γ_{k}$, is a minor of $G$.
As $\tw(Γ_k) \geq k$ (see e.g., \cite[Lemma 88]{Bodlaender98}), this implies that $\tw(G) \geq k$ and so $\p_{\mathscr{A}} \preceq \tw$.
The fact that $\tw \preceq \p_{\mathscr{A}}$ follows from the grid theorem~\cite{RobertsonS86GMV} stating that there exists a function $f \colon \mathbb{N} \to \mathbb{N}$, such that for every positive integer $k$, every graph of treewidth at least $f(k)$ contains $Γ_k$ as a minor.
Since $\mathscr{A}_{k} \leqslant_{\mathsf{m}} Γ_{3k}$, it follows that $\tw \preceq \p_{\mathscr{A}}$.
Therefore, $\{ \mathscr{A} \}$ is a universal obstruction for $\tw$.
Moreover, it has been shown in~\cite{ChekuriC16Polynomial,ChuzhoyT21Towards} that the gap function $f$ is polynomial, in particular $f(k) = \Ocal(k^9(\log k)^{O(1)})$.
Therefore, $\tw \sim_{\mathsf{P}} \p_{\mathscr{A}}$.
 
It is easy to observe, see e.g., \cite[(1.5)]{RobertsonST94Quickly}, that grids   are planar and that every planar graph is a minor of some large enough grid. 
As $\mathscr{A}_{k} \leqslant_{\mathsf{m}} Γ_{3k} \leqslant_{\mathsf{m}} \mathscr{A}_{3k}$, it follows that the sequence $\mathscr{A}=\langle \mathscr{A}_k\rangle_{k\in\mathbb{N}_{\geq 3}}$ is 
a minor-omnivore of $\gplanar$.
As $\{\mathscr{A}\}$ is a minor-universal obstruction for $\tw$, \autoref{cobs_uobs} implies that $\cobs_{\leqslant_{\mathsf{m}}}(\tw)=\{\gplanar\}$.
Finally, as $\{K_5,K_{3,3}\}$ is the minor-obstruction set of planar graphs~\cite{Wagner37Uber}, we have $\pobs_{\leqslant_{\mathsf{m}}}(\tw) = \big\{\{K_5,K_{3,3}\}\big\}$.
\end{proof}

An interesting question is to what point the $\Ocal(k^9(\log k)^{\Ocal(1)})$ parametric gap, implied by the result of Chuzhoy and Tan in \cite{ChuzhoyT21Towards}, can be improved.
By considering expander graphs one may easily prove that this gap cannot become better than $\Ocal(k^2 \log k)$ (see e.g., \cite{RobertsonST94Quickly, Thilikos12GraphMinors, DemaineHK09AlgorithmicGraph}).

\medskip
Concluding the presentation on treewidth, we should mention an interesting generalization of treewidth that was given in \cite{geelen2016generalization}.
Given a non-negative integer $t$, we define the \emph{$t$-treewidth} of a graph by using the clique sum definition of treewidth with the additional constraint that all clique sums concern cliques on at most $t$ vertices.
The main result of \cite{geelen2016generalization} can be reformulated as a minor-universal obstruction characterization for $t$-treewidth, for every possible value of $t$.

\subsection{Pathwidth}
\label{pathwidth_def}

In the same manner that treewidth is a measure of the topological resemblance of a graph to a tree, pathwidth serves as a measure of its topological resemblance to a path.
As in the case of treewidth, pathwidth has several equivalent definitions of different flavour.
The classic definition is the one given by Robertson and Seymour in \cite{RobertsonS83GMI} that uses \textsl{path decompositions}, a special case of the tree decomposition, where the underlying tree of the decomposition is in fact a path.
Formally, a \emph{path decomposition} of a graph $G$ is a sequence $\Pcal= \{X_{1}, \ldots, X_{q} \}$ of subsets of $V(G)$ where $\cupall \{G[X_{1}],\ldots,G[X_{q}]\} = G$ and such that for every $1 \leq i  \leq j \leq h \leq q$, it holds that $X_{i} \cap X_{h} \subseteq X_{j}$.
The \emph{width} of $\Pcal$ is the maximum size of a set in $\Pcal$ minus one.
The \emph{pathwidth} of $G$, denoted by $\pw(G)$, is defined as follows.
\begin{align}
\pw(G) \ &\coloneqq \ \min\{ k \in \Nbbb \mid G\text{ has a path decomposition of width at most }k \}.
\end{align}

Alternatively, the pathwidth of a graph $G$ can also be defined as the minimum $k$ for which there exists an ordering $v_1, \ldots, v_n$ of $V(G)$ such that for each $i \in [n]$ there are at most $k$ vertices in $\{ v_{1}, \ldots, v_{i-1} \}$ that are adjacent with vertices in $\{v_{i}, \ldots, v_{n}\}$ (see also \cite{Kinnersley92TheVertex, Moehring90grap, KirousisP85inte}).

We define $\mathscr{T} = \langle \mathscr{T}_t \rangle_{t \in \mathbb{N}_{\geq 1}}$ as the minor-parametric graph of complete ternary trees of depth $k$ illustrated in \autoref{fig_ternary_trees}.
We also denote by $\gforest$ the set of all acyclic graphs.

\begin{figure}[htbp]
\centering
\includegraphics[width=0.85\linewidth]{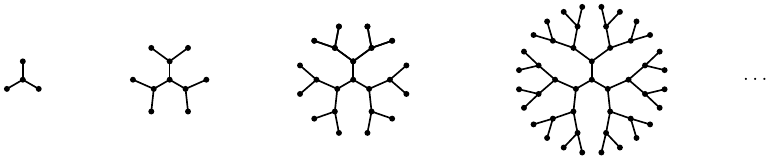}
\caption{\label{fig_ternary_trees} The minor-parametric graph $\mathscr{T} = \langle \mathscr{T}_1, \mathscr{T}_2, \mathscr{T}_3, \mathscr{T}_4, \ldots \rangle$ of complete ternary trees.}
\end{figure}

\begin{proposition}
The set $\mathfrak{T}\coloneqq \{ \mathscr{T} \}$ consisting of the minor-parametric graph $\mathscr{T}$ of complete ternary trees is a minor-universal obstruction for $\pw$.
Moreover, $\cobs_{\leqslant_{\mathsf{m}}}(\pw) = \{\gforest\}$ and $\pobs_{\leqslant_{\mathsf{m}}}(\pw) = \big\{\{K_3\}\big\}$.
\end{proposition}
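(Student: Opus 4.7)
The plan is to mirror the template used in the proof of \autoref{prop_tw}: first establish the equivalence $\pw \sim \p_{\mathscr{T}}$, which by definition certifies $\{\mathscr{T}\}$ as a minor-universal obstruction for $\pw$; then invoke \autoref{cobs_uobs} together with standard facts about forests to deduce both $\cobs_{\leqslant_{\mathsf{m}}}(\pw) = \{\gforest\}$ and $\pobs_{\leqslant_{\mathsf{m}}}(\pw) = \big\{\{K_3\}\big\}$.

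For the direction $\p_{\mathscr{T}} \preceq \pw$, I would note that if $\p_{\mathscr{T}}(G) > k$, then $\mathscr{T}_k$ is a minor of $G$, so minor-monotonicity of pathwidth gives $\pw(G) \geq \pw(\mathscr{T}_k)$; since the pathwidth of the complete ternary tree of depth $k$ grows as $\Theta(k)$, this yields the required estimate. For the reverse inequality $\pw \preceq \p_{\mathscr{T}}$, the key ingredient is the classical theorem of Bienstock, Robertson, Seymour, and Thomas~\cite{BienstockRST91Quickly} asserting that every graph excluding a given forest $F$ as a minor has pathwidth at most $|V(F)|-2$. Applying this to $F = \mathscr{T}_k$, whose order is $\Theta(3^k)$, produces a (single-exponential) gap function $f$ such that $\pw(G) \leq f(\p_{\mathscr{T}}(G))$, completing the equivalence.

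For the class and parametric obstructions, I would verify that $\mathscr{T}$ is a minor-omnivore of $\gforest$. The inclusion $\closure{\leqslant_{\mathsf{m}}}{\mathscr{T}} \subseteq \gforest$ is immediate since each $\mathscr{T}_k$ is itself a tree. The converse inclusion, namely that every forest is a minor of some complete ternary tree of sufficiently large depth, admits a direct embedding argument that greedily routes each vertex of the forest through the branching of the host tree, as already alluded to in the introduction. Once this omnivore property is in hand, the already established equivalence $\pw \sim \p_{\mathscr{T}}$ combined with \autoref{cobs_uobs} yields $\cobs_{\leqslant_{\mathsf{m}}}(\pw) = \{\gforest\}$, and $\pobs_{\leqslant_{\mathsf{m}}}(\pw) = \big\{\{K_3\}\big\}$ then follows from the well-known identity $\obs_{\leqslant_{\mathsf{m}}}(\gforest) = \{K_3\}$ (a graph is a forest precisely when it contains no $K_3$ as a minor).

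The main obstacle I anticipate is quantitative rather than structural: in contrast to the polynomial grid-versus-treewidth comparison available for $\tw$, the BRST bound is only linear in the order of the excluded forest, and the complete ternary tree of depth $k$ already has exponentially many vertices in $k$. Consequently the resulting gap for $\pw \sim \p_{\mathscr{T}}$ is merely single-exponential, and promoting this equivalence to $\sim_{\mathsf{P}}$ would require either a different parametric family (for instance indexed by the number of vertices of the tree rather than its depth) or a sharper forest-versus-pathwidth theorem.
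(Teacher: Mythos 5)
Your proof follows exactly the same route as the paper's: $\p_{\mathscr{T}} \preceq \pw$ from the lower bound $\pw(\mathscr{T}_k) = \Omega(k)$, $\pw \preceq \p_{\mathscr{T}}$ from the Bienstock--Robertson--Seymour--Thomas forest-exclusion theorem, and then the class and parametric obstructions from \autoref{cobs_uobs} together with the omnivore property of $\mathscr{T}$ and $\obs_{\leqslant_{\mathsf{m}}}(\gforest) = \{K_3\}$. Your closing remark on the single-exponential gap is correct and a nice observation, though the paper only records the analogous phenomenon in its discussion of treedepth.
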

\begin{proof}
It is easy to see that $\pw(\mathscr{T}_k) = Ω(k)$ (see e.g., \cite[Proposition 3.2]{KirousisP86Searching}), therefore $\p_{\{\mathscr{T}\}} \preceq \pw$.
The direction $\pw \preceq \p_{\{\mathscr{T}\}}$ follows from the fact that, for every forest $F$ every graph in $\excl_{\leqslant_{\mathsf{m}}}(F)$ has pathwidth at most $|F| - 2$, as proven in \cite{BienstockRST91Quickly} (see also \cite{RobertsonS83GMI}).
As $\mathscr{T}$ is a minor-omnivore of $\gforest$ and $\obs_{\leqslant_{\mathsf{m}}}(\gforest) = \{K_{3}\}$,  from \autoref{cobs_uobs}, we obtain that $\cobs_{\leqslant_{\mathsf{m}}}(\pw) = \{\gforest\}$ and $\pobs_{\leqslant_{\mathsf{m}}}(\pw) = \big\{ \{K_3\} \big\}$.
\end{proof}

\subsection{Treedepth}
\label{treedepth_par}

A third graph parameter of arguable significance that has received considerable attention is \textsl{treedepth}.
Given a graph $G,$ let $\mathsf{tcl}(G)$ denote the transitive closure of $G.$
The \emph{treedepth} of a graph $G$, denoted $\td(G)$, is defined as follows \cite{NesetrilO06Treedepth}.
\begin{align}
\td(G) \ &\coloneqq \ \min\{ d \in \Nbbb \mid F\text{ is a rooted forest of depth }d\text{ and }G \subseteq \mathsf{tcl}(F)\}.\label{treedepth_parameter}
\end{align}

As it is the case for both treewidth and pathwidth, there are several equivalent definitions of treedepth.
An alternative definition is using vertex colorings.
The treedepth of a graph is the minimum $k$ for which there is a proper coloring  $χ \coloneqq V(G) \to [k]$ of $G$ such that every path whose endpoints have the same color contains a vertex of larger color.

A \emph{linear forest} is a graph whose connected components are paths.
We use $\glforest$ in order to denote the set of all linear forests.
Let $\mathscr{P} = \langle \mathscr{P}_t \rangle_{t \in \mathbb{N}_{1}}$ denote the minor-parametric graph of paths on $t$ vertices (see \autoref{fig_paths}).

\begin{figure}[htbp]
\centering
\includegraphics[width=0.6\linewidth]{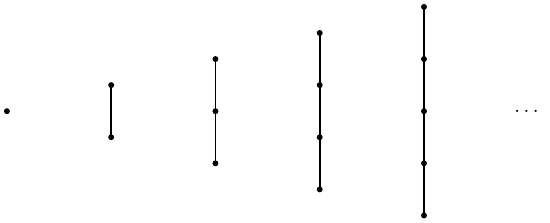}
\caption{\label{fig_paths} The minor-parametric graph $\mathscr{P} = \langle \mathscr{P}_1, \mathscr{P}_2, \mathscr{P}_3, \mathscr{P}_4, \mathscr{P}_5, \ldots \rangle$ of paths.}
\end{figure}

\begin{theorem}\label{prop_td}
The set $\{\mathscr{P}\}$ consisting of the minor-parametric graph $\mathscr{P}$ of paths is a minor-universal obstruction for $\td$.
Moreover, $\cobs_{\leqslant_{\mathsf{m}}}(\td) = \{\glforest\}$ and $\pobs_{\leqslant_{\mathsf{m}}}(\td) = \big\{ \{K_3, K_{1,3}\} \big\}$.
\end{theorem}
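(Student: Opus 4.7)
The plan is to follow the template used in the proofs for treewidth and pathwidth: first establish $\p_{\{\mathscr{P}\}}\sim \td$, then invoke \autoref{cobs_uobs} to obtain the class obstruction, and finally compute the minor-obstruction set of $\glforest$.

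For the equivalence $\p_{\{\mathscr{P}\}}\sim \td$, the two directions are proved separately. The inequality $\p_{\{\mathscr{P}\}}\preceq\td$ follows from the classical formula $\td(\mathscr{P}_k)=\lceil\log_{2}(k+1)\rceil$, which can be verified by the recursive definition $\td(G)=1+\min_{v\in V(G)}\td(G-v)$ for connected $G$ with $|G|\geq 2$. Since $\td$ is minor-monotone, if $\mathscr{P}_k\leqslant_{\mathsf{m}}G$ then $\td(G)\geq \lceil\log_{2}(k+1)\rceil$, giving $\p_{\{\mathscr{P}\}}(G)\leq 2^{\td(G)}$. Conversely, $\td\preceq\p_{\{\mathscr{P}\}}$ holds because the depth-first-search tree $T$ of a connected graph $G$ rooted at any vertex satisfies $G\subseteq \mathsf{tcl}(T)$, so $\td(G)$ is bounded by the number of vertices on a longest path of $G$. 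Moreover, since $\mathscr{P}_k$ has maximum degree two, $\mathscr{P}_k\leqslant_{\mathsf{m}}G$ is equivalent to $\mathscr{P}_k$ being a topological minor, hence a subgraph, of $G$ (any path-minor model can be traced into a genuine subgraph path). Consequently $\td(G)\leq \p_{\{\mathscr{P}\}}(G)-1$, yielding $\td\preceq\p_{\{\mathscr{P}\}}$ with linear gap.

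Next, I would verify that $\mathscr{P}$ is a $\leqslant_{\mathsf{m}}$-omnivore of $\glforest$. Every graph of $\closure{\leqslant_{\mathsf{m}}}{\mathscr{P}}$ is a minor of a path and is therefore acyclic with maximum degree two, hence a linear forest. Conversely, any linear forest $L=\mathscr{P}_{a_{1}}\cup\cdots\cup \mathscr{P}_{a_{r}}$ on $n$ vertices is a subgraph of $\mathscr{P}_{n}$ (delete $r-1$ edges from the path to split it into the required components), and hence a minor of $\mathscr{P}_{n}$. This shows $\glforest=\closure{\leqslant_{\mathsf{m}}}{\mathscr{P}}$. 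Combined with the equivalence $\p_{\{\mathscr{P}\}}\sim\td$ established above, \autoref{cobs_uobs} immediately gives that $\{\mathscr{P}\}$ is a minor-universal obstruction of $\td$ and that $\cobs_{\leqslant_{\mathsf{m}}}(\td)=\{\glforest\}$.

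Finally, to determine $\pobs_{\leqslant_{\mathsf{m}}}(\td)$, I would characterise the minor-obstruction set of $\glforest$. A graph $G$ is a linear forest if and only if $G$ is acyclic and has maximum degree at most two. The former is equivalent to excluding $K_{3}$ as a minor (every cycle contracts to $K_{3}$), while the latter is equivalent to excluding $K_{1,3}$ as a minor (a vertex of degree at least three with three distinct connected components outside of it yields a $K_{1,3}$ minor, and these components can be found in any graph of max degree $\geq 3$ by walking along edges). The set $\{K_{3},K_{1,3}\}$ is a $\leqslant_{\mathsf{m}}$-antichain, and both graphs are easily checked to be minor-minimal non–linear forests, so $\obs_{\leqslant_{\mathsf{m}}}(\glforest)=\{K_{3},K_{1,3}\}$ and hence $\pobs_{\leqslant_{\mathsf{m}}}(\td)=\big\{\{K_{3},K_{1,3}\}\big\}$. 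The only delicate point in this whole argument is the DFS–based proof that $\td(G)$ is bounded by the length of a longest path; everything else is routine once the basic formula $\td(\mathscr{P}_{k})=\lceil\log_{2}(k+1)\rceil$ is in hand.
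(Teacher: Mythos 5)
Your proof is correct and follows essentially the same route as the paper's: establish $\p_{\{\mathscr{P}\}} \sim \td$, observe that $\mathscr{P}$ is a minor-omnivore of $\glforest$, invoke \autoref{cobs_uobs}, and compute $\obs_{\leqslant_{\mathsf{m}}}(\glforest)$. The one place where you are actually more careful than the paper is the direction $\td \preceq \p_{\{\mathscr{P}\}}$: you use a DFS tree, which correctly satisfies $G \subseteq \mathsf{tcl}(T)$ because all non-tree edges of a DFS tree are back-edges, whereas the paper's proof appeals to a ``BFS-tree,'' whose cross-edges between vertices at the same level are not ancestor--descendant pairs, so $G \subseteq \mathsf{tcl}(T)$ need not hold; the DFS version is the standard correct argument. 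The rest (the explicit formula $\td(\mathscr{P}_k)=\lceil\log_2(k+1)\rceil$ in place of the paper's recursive bound, the reduction of a path minor to a path subgraph via bounded degree, and the identification of $\{K_3,K_{1,3}\}$ as the obstruction set) matches the paper's intent with somewhat more detail.
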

\begin{proof}
Notice that $\td(P_{k}) \leq \td(P_{\lceil \nicefrac{k}{2} \rceil}) + 1$, for every $k \in \Nbbb$.
This immediately implies that $\p_{\mathscr{P}} \preceq \td$.
To see that $\td \preceq \p_{\mathscr{P}}$ assume that $G$ does not have any path of length $> k$ and observe that any BFS-tree in $G$ certifies that $\td(G) \leq k$.
We conclude that $\td \sim \p_{\mathscr{P}}$.
As $\mathscr{P}$ is a minor-omnivore of $\glforest$, we have that $\cobs_{\leqslant_{\mathsf{m}}}(\td) = \{\glforest\}$ and it is easy to see that $\obs_{\leqslant_{\mathsf{m}}}(\glforest) = \{K_3, K_{1,3} \}$.
\end{proof}

We stress that the statement of \autoref{prop_td} can be rephrased in the context of the subgraph relation $\leqslant_{\mathsf{sg}}$ instead of the minor one. 
The difference is that $\pobs_{\leqslant_\mathsf{sg}}(\td) = \big\{ \{ K_{1,3} \} \cup \{ C_{k} \mid k \geq 3 \} \big\}$, where $C_{k}$ is the cycle on $k$ vertices, for every $k \in \Nbbb_{\geq 3}.$
The reason for this is that containing a path as a minor is the same as containing it as a subgraph.
Observe that the subgraph-parametric obstruction of $\td$ is a finite set containing only one (infinite) subgraph-antichain.
In general, as we shall discuss in \autoref{obsjjdjdother_a}, for parameters that are monotone under relations which are not well-quasi-ordered (such as the subgraph relation), class obstructions may not be finite or even exist.

\medskip
It is also interesting to note that the proof above does not provide a polynomial gap function as $\td(P_{k}) = \Theta(\log k)$.
Actually, this is unavoidable when considering any minor-universal obstruction consisting of a single minor-parametric graph $\mathscr{H} = \langle \mathscr{H}_{k} \rangle_{k\in\mathbb{N}}$, because $\mathscr{H}_{k}$, being a minor-omnivore of $\glforest$, has to consist of linear forests.
Interestingly, were we to permit universal obstructions to not necessarily be $\lesssim$-antichains, then it may be possible to attain improved gaps.
Indeed, as shown by the following theorem, a polynomial gap for treedepth can be achieved by considering a collection of three minor-parametric graphs $\mathfrak{P}$ such that $\p_\mathfrak{P} \sim_{\mathsf{P}} \td$ as follows.

\begin{theorem}[\!\cite{KawarabayashiRossman2022Polynomial}]\label{mystery_theorem}
There exists a constant $c$ such that every graph of treedepth at least $k^c$ contains at least one of the following graphs as a minor.
\begin{itemize}
\item The $(k\times k)$-grid;
\item the complete ternary tree of height $k$;
\item the path of order $2^k$.
\end{itemize}
\end{theorem}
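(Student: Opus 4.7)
The plan is to prove the contrapositive: assume that $G$ excludes the $(k\times k)$-grid, the complete ternary tree $\mathscr{T}_{k}$, and the path $P_{2^{k}}$ as minors, and show that $\td(G) \leq k^{c}$ for some absolute constant $c$. The first reduction is to bound the treewidth: by the polynomial grid theorem invoked in \autoref{prop_tw}, excluding the $(k \times k)$-grid forces $\tw(G) = \Ocal(k^{9}(\log k)^{\Ocal(1)})$, so we may assume $\tw(G) \leq t$ for some $t = \text{poly}(k)$.

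From a tree decomposition $(T, \{X_{v}\}_{v \in V(T)})$ of $G$ of width $t$, I would then construct an elimination forest of polynomial depth by a recursive procedure: at each step, pick a ``well-chosen'' bag $X_{v}$, place its at most $t+1$ vertices at the top of the elimination forest, and recurse on each connected component of $G - X_{v}$. The depth of the produced elimination forest, which upper-bounds $\td(G)$, is at most $(t+1)$ times the recursion depth, so it suffices to bound the recursion depth by $\Ocal(k)$, giving the target $\td(G) = \Ocal(k \cdot \text{poly}(k)) = \text{poly}(k)$.

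To drive the recursion, I would introduce a potential $\Phi(H) \coloneqq \alpha \cdot h(H) + \beta \cdot \log_{2}(1 + \ell(H))$, where $h(H)$ is the maximum height of a complete ternary tree minor in $H$, $\ell(H)$ is the order of a longest path in $H$, and $\alpha, \beta$ are constants to be tuned. The assumptions give $h(G) < k$ and $\ell(G) < 2^{k}$, hence $\Phi(G) = \Ocal(k)$; so it suffices to show that at each recursive step one can always choose a bag $X_{v}$ whose removal strictly decreases $\Phi$ on every resulting component. This is precisely the main obstacle: if, after removing $X_{v}$, several components retained paths or ternary-tree minors of nearly the same size as in the parent $H$, then stitching them back through a branching vertex of the tree decomposition would produce either a longer path or a taller complete ternary tree in $G$ itself, contradicting our hypotheses. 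Turning this intuition into a quantitative Ramsey-type statement — carefully balancing path concatenation against branching inside the decomposition tree, so that at every level the potential drops by a multiplicative factor — is the technical heart of the argument; once established, it gives recursion depth $\Ocal(k)$ and therefore $\td(G) = \text{poly}(k)$, as required.
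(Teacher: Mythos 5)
The paper does not prove \autoref{mystery_theorem}; it is stated as a citation to Kawarabayashi and Rossman, so there is no ``paper's own proof'' to compare against. Measured against the source, your high-level plan has the right skeleton: reduce to bounded treewidth via the polynomial grid theorem, then recurse on a tree decomposition and argue that the recursion tree is shallow because long chains would yield a long path minor and heavy branching would yield a large ternary tree minor. This is indeed the spirit of the Kawarabayashi--Rossman argument.

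However, the crucial step is not established, and the intuition you offer for it has a genuine gap. You want to show that at each step you can choose a bag $X_v$ such that the potential $\Phi(H) = \alpha\, h(H) + \beta\log_2(1+\ell(H))$ strictly drops on \emph{every} component of $H - X_v$, and you justify this only for the case where ``several components retained paths or ternary-tree minors of nearly the same size.'' But the adversarial case is when a \emph{single} component $C$ of $H - X_v$ inherits essentially all of the potential: a ternary tree minor of height $h(H)$ or a path of length $\ell(H)$ may live entirely inside one component, untouched by $X_v$. In that case there is nothing to ``stitch'' and no contradiction arises. Nor is it clear that a better bag exists: a tree decomposition of width $t$ need not contain any bag of size $t+1$ whose removal strictly decreases the height of the largest ternary-tree minor (think of the ternary-tree minor tucked away in a small corner of the decomposition). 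The actual argument has to \emph{chase} the heavy component through the tree decomposition: if a single component keeps the potential across many levels, this chase is long and produces a path; if the heavy component ever splits into two heavy pieces, one branches; the interplay of these two phenomena is what bounds the depth. You correctly flag this as ``the technical heart,'' but the potential-function phrasing as stated would fail precisely there.

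Two smaller issues: (i) you say the potential drops by a multiplicative factor, yet conclude recursion depth $\Ocal(k)$; a multiplicative drop from $\Phi(G)=\Ocal(k)$ would give depth $\Ocal(\log k)$, while an additive drop by a constant gives $\Ocal(k)$---the sketch conflates the two (either is fine for the final polynomial bound, but the inconsistency signals that the decrement lemma has not been pinned down); (ii) the bound $\td(G) \le (t+1)\cdot(\text{recursion depth})$ is fine, but you should be explicit that recursing on connected components (not arbitrary parts) is what makes the elimination forest valid.
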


\subsubsection{Parameter hierarchy to the ``right'' of treewidth}

At this point we would like to discuss the approximate relationship of the parameters we have introduced so far.
We have that
\begin{align}
\tw \ \preceq \ \pw \ \preceq \ \td \ \preceq \ \vc \ \preceq \ \esize \ \preceq \ \size \ \preceq \ \mathsf{p}_\infty.\label{right_part}
\end{align}

Reflecting on the interrelationship between the minor-parametric obstructions of the previously introduced parameters, the former hierarchy translates to the following hierarchy via the equivalence between \eqref{first} and \eqref{third}.
\begin{align}
\nonumber\big\{\{K_{1}\}\big\} \ &\leqslant_{\mathsf{m}}^{**} \ \big\{\{K_{2}\}\big\} \ \leqslant_{\mathsf{m}}^{**} \ \big\{\{P_{3}\},\{K_{3},2\cdot K_{2}\}\big\} \ \leqslant_{\mathsf{m}}^{**} \ \big\{\{P_{3}\}\big\}\\
&\leqslant_{\mathsf{m}}^{**} \ \big\{\{K_3,K_{1,3}\}\big\} \ \leqslant_{\mathsf{m}}^{**} \ \big\{\{K_{3}\}\big\} \ \leqslant_{\mathsf{m}}^{**} \ \big\{ \{K_{5}, K_{3,3}\} \big\}.
\end{align}

\subsection{Parameters smaller than treewidth}\label{params_smaller_tw}

A natural question to ask is what resides on ``the left'' of this hierarchy, i.e., what kind of minor-monotone parameters are (approximately) smaller than treewidth.
With this question we enter the realm of Graph Minors and specifically of the Graph Minors Structure Theorem of Robertson and Seymour \cite{RobertsonS03GMXVI}.

\paragraph{Graph parameters and their clique sum closure.}

The clique sum operation is particularly useful for defining new graph parameters.
Given a graph parameter $\p$ we define its \emph{clique sum closure $\p^{*}$} so that for every graph $G,$
\begin{align}
\p^{*}(G) \ &\coloneqq \ \min\{ k \in \Nbbb \mid G\text{ is in  the clique sum closure of }\Gcal_{\p,k}\}.
\end{align}

By the clique sum closure definition of treewidth \eqref{treewidth_parameter}, one may easily observe that $\tw(G) = \size^{*}(G) + 1,$ for every graph $G.$

\subsubsection{Hadwiger number}

The \emph{Hadwiger number} of a graph $G$, denoted $\hw(G)$, is defined as follows \cite{Hadwiger43Uber}.
\begin{align}
\hw(G) \ &\coloneqq \ \max\{ k \in \Nbbb \mid K_{k} \leqslant_{\mathsf{m}} G\}.
\end{align}

Its definition immediately] implies that $\hw$ is the parameter associated to the minor-parametric graph $\langle K_{t} \rangle_{t\in\mathbb{N}}$.
We may alternatively replace cliques by an equivalent minor-parametric graph that is obtained by enhancing the $(t \times 4t)$-annulus grid as follows.
Let $\mathscr{K} = \langle \mathscr{K}_t \rangle_{t\in\mathbb{N}}$ be the minor-parametric graph \emph{clique grids} as illustrated in \autoref{fig_hadwiger_grid}.
It is not difficult to verify that $\mathscr{K} \approx \langle K_{t}\rangle_{t\in\mathbb{N}}$ with linear gap and therefore $\hw \sim \p_{\{\mathscr{K}\}}$.

\begin{figure}[htbp]
\centering
\includegraphics[width=0.9\linewidth]{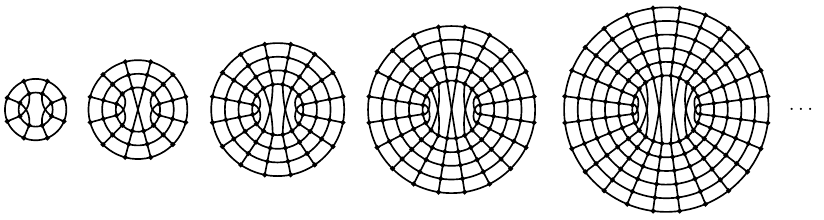}
\caption{\label{fig_hadwiger_grid}The minor-parametric graph $\mathscr{K} = \langle \mathscr{K}_2, \mathscr{K}_3, \mathscr{K}_4, \mathscr{K}_5, \mathscr{K}_6 \ldots \rangle$ of clique grids.}
\end{figure}

A min-max analogue of the Hadwiger number is given by the Graph Minors Structure Theorem (GMST), proven by Robertson and Seymour in \cite{RobertsonS03GMXVI}.
Much like treewidth, the GMST provides insight on the structure of graphs excluding some fixed graph $H$ as a minor.
There are four components to its definition.
The first, as in treewidth, is the clique sum operation that allows us to build more complicated graphs from simpler ones.
However, the notion of ``structurally simple'' now differs significantly.
There are three components to this: 1) \textsl{apices}, 2) \textsl{surfaces}, and 3) \textsl{vortices}.
Essentially, a graph in this setting is considered as simple if after removing a bounded number of ``apex'' vertices, it is embeddable in a surface of bounded Euler genus with the addition of a bounded number of bounded ``depth'' vortices, where ``bounded'' here means a function in the size of the excluded minor $H.$
We do not wish to present a formal definition of the GMST here, therefore we choose to give a more compact statement, proved in \cite{thilikos2023excluding}.

Let $\Ecal_{k}$ denote the class of graphs embeddable in a surface of Euler genus at most $k,$ for every non-negative integer $k.$
Moreover, given a graph $H$ and a set $X \subseteq V(G),$ we say that $H$ is an \emph{$X$-minor} of $G$ if there exists a collection $\mathcal{S} = \{S_{v} \mid v \in V(H))\}$ of pairwise vertex-disjoint connected\footnote{A set $X \subseteq V(G)$ is \emph{connected} in $G$ if the induced subgraph $G[X]$ is a connected graph.} subsets of $V(G)$ each containing at least one vertex of $X$ and such that, for every edge $xy \in E(H),$ the set $S_{x} \cup S_{y}$ is connected in $G.$
An \emph{annotated graph} is a pair $(G,X)$ where $X\subseteq V(G)$,
Given an annotated graph $(G,X),$ we define $\tw(G, X)$ as the maximum treewidth of an $X$-minor of $G.$
We define an extension of treewidth, denoted by $\tw^{*}_{\Ecal}$ as follows.
\begin{align}
\tw^{*}_{\Ecal}(G) \ \coloneqq \ \big( \min\{ k \in \Nbbb \mid \exists X \subseteq V(G) : \tw(G,X) \leq k\text{ and }G - X \in \Ecal_{k}\} \big)^{*}.\label{gmst_annotated_version}
\end{align}

A streamlined way to restate the GMST is the following.

\begin{proposition}[\!\cite{thilikos2023excluding}]\label{@mutilation}
There exists a function $f \colon \Nbbb\to \Nbbb$ such that
\begin{align*}
\tw^{*}_{\Ecal} \ \preceq \ \hw
\end{align*}
with gap function $f.$
\end{proposition}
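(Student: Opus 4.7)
The proposition is a streamlined parametric reformulation of the Graph Minors Structure Theorem (GMST) of Robertson and Seymour, and my plan is to deduce it from the version of the GMST established in \cite{thilikos2023excluding}. I would start from a graph $G$ with $\hw(G)\leq t$, i.e.\ a graph that excludes $K_{t+1}$ as a minor. Invoking the GMST, there is a function $h$ and a tree of clique-sums realising $G$ from factor graphs $G_1,\ldots,G_\ell$, each of which is $h(t)$-almost embeddable: for some apex set $X_i\subseteq V(G_i)$ of size at most $h(t)$, the graph $G_i-X_i$ embeds in a surface of Euler genus at most $h(t)$ after accommodating at most $h(t)$ vortices, each of depth at most $h(t)$. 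These factors are exactly the ingredients that will feed into the clique-sum-closure operator $(\cdot)^{*}$ appearing in \eqref{gmst_annotated_version}.

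The crux of the argument is to translate $h(t)$-almost embeddability of each $G_i$ into the combinatorial condition from \eqref{gmst_annotated_version}: a bound $\tw(G_i,X_i)\leq g(t)$ together with $G_i-X_i\in \mathcal{E}_{h(t)}$. The second clause is immediate from the choice of apex set. For the first clause I would show that every $X_i$-minor $H$ of $G_i$ has treewidth bounded by a function of $h(t)$. Because each branch set of such a minor is anchored at a vertex of $X_i$, there are at most $|X_i|\leq h(t)$ branch sets in total, and after their contraction the remaining structural ``skeleton'' still lives inside the bounded-genus surface plus bounded-depth vortices of $G_i$. Combining this with the classical fact that embedded graphs of Euler genus $g$ carrying at most $k$ vortices of depth $d$ have treewidth bounded by a function of $g+k+d$, one obtains the required uniform bound $g(t)$ on $\tw(H)$, hence on $\tw(G_i,X_i)$.

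Finally, reassembling the factor-level bounds through the clique-sum-closure operator gives $\tw^{*}_{\mathcal{E}}(G)\leq f(t)$ for a function $f$ determined by $h$ and $g$. The main technical obstacle is precisely the translation step above: repackaging the vortex-plus-embedding output of the GMST as a uniform bound on the treewidth of $X_i$-rooted minors is what converts the combinatorially heavy Robertson--Seymour statement into the clean parametric inequality. This repackaging is exactly the contribution of \cite{thilikos2023excluding}; once it is in hand, the inequality $\tw^{*}_{\mathcal{E}}\preceq \hw$ with the claimed gap function follows immediately from the definition of $\preceq$ and of the clique-sum closure.
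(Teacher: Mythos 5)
Your proposal has a genuine gap at the step where you assert ``The second clause is immediate from the choice of apex set.'' If $X_i$ is the bounded-size apex set of the $h(t)$-almost-embeddable factor $G_i$ coming out of the GMST, then $G_i - X_i$ is \emph{not} a member of $\mathcal{E}_{h(t)}$: after removing the apices, the factor embeds in a surface of bounded genus only after accommodating the vortices, and an embedding with vortices is strictly weaker than a genuine surface embedding. (Measuring precisely the gap between these two situations is what the parameter $\svtw$ in this very paper is about.) So your chosen modulator $X_i$ does not satisfy the defining condition $G_i - X_i \in \mathcal{E}_k$ from \eqref{gmst_annotated_version}.

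To actually land in $\mathcal{E}_k$, the modulator $X_i$ must absorb the vortex vertices as well as the apices. But vortices may contain an unbounded number of vertices (only their depth is bounded), so once you make that enlargement the set $X_i$ is no longer of size $\leq h(t)$, and your argument for bounding $\tw(G_i, X_i)$ collapses: the observation that an $X_i$-minor has at most $|X_i|$ branch sets (hence at most $|X_i|$ vertices) only yields a bound when $|X_i|$ is itself bounded. What replaces that argument — exploiting that vortices are linear, bounded-depth structures attached along discs so that $X_i$-minors still have bounded treewidth even when $|X_i|$ is large — is exactly the technical content of \cite{thilikos2023excluding}, not a consequence that ``follows immediately'' from invoking the GMST. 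As written, your proposal conflates two incompatible choices of modulator (small $X_i$, which leaves vortices, versus large $X_i$, which breaks your treewidth bound) and so does not establish the proposition.
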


In fact, $\hw \sim \tw^{*}_{\Ecal}$ (see e.g. Section 1.3, Theorem 1.5 \cite{kawarabayashi2020quickly}).
To conclude, as every graph is a minor of a large enough clique, the following statement holds.

\begin{proposition} 
The set $\{\mathscr{K}\}$ consisting of the minor-parametric graph $\mathscr{K}$ of clique grids is a minor-universal obstruction for $\hw$.
Moreover, $\cobs_{\leqslant_{\mathsf{m}}}(\hw) = \{ \gall \}$ and $\pobs_{\leqslant_{\mathsf{m}}}(\hw) = \{ \emptyset \}$.
\end{proposition}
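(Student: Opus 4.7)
The plan is to directly use the machinery already developed in the excerpt, namely the equivalence $\mathscr{K} \approx \langle K_{t}\rangle_{t\in\mathbb{N}}$ together with \autoref{cobs_uobs}. First I would observe that by the very definition of the Hadwiger number, $\hw(G) = \p_{\langle K_{t}\rangle_{t\in\mathbb{N}}}(G)$, because $\hw(G) \geq k$ iff $K_{k} \leqslant_{\mathsf{m}} G$. Since it has already been noted just before the statement that $\mathscr{K} \approx \langle K_{t}\rangle_{t\in\mathbb{N}}$ with linear gap, we immediately obtain $\hw \sim_{\mathsf{L}} \p_{\{\mathscr{K}\}}$, which is precisely what it means for $\{\mathscr{K}\}$ to be a minor-universal obstruction for $\hw$.

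Next, I would verify that $\mathscr{K}$ is a minor-omnivore of $\gall$. Every graph $G$ is a minor of the clique $K_{|G|}$ (take the trivial model where each vertex of $G$ is mapped to a singleton), and in turn $K_{t}$ is a minor of $\mathscr{K}_{t}$ by construction of the clique grid. Hence for every $G \in \gall$ there exists $k \in \mathbb{N}$ with $G \leqslant_{\mathsf{m}} \mathscr{K}_{k}$, while every $\mathscr{K}_{k}$ belongs to $\gall$ trivially. Therefore $\closure{\leqslant_{\mathsf{m}}}{\mathscr{K}} = \gall$.

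Having established that $\{\mathscr{K}\}$ is a minor-universal obstruction for $\hw$, I would now invoke \autoref{cobs_uobs}: the class obstruction $\cobs_{\leqslant_{\mathsf{m}}}(\hw)$ exists, is finite, and there is a bijection $\sigma$ between $\{\mathscr{K}\}$ and $\cobs_{\leqslant_{\mathsf{m}}}(\hw)$ mapping each parametric graph to a class of which it is an omnivore. Since $\mathscr{K}$ is a minor-omnivore of $\gall$, we conclude $\cobs_{\leqslant_{\mathsf{m}}}(\hw) = \{\gall\}$. Finally, because $\gall$ trivially has no minor obstructions, $\obs_{\leqslant_{\mathsf{m}}}(\gall) = \emptyset$, which yields $\pobs_{\leqslant_{\mathsf{m}}}(\hw) = \big\{\emptyset\big\}$.

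There is essentially no hard step here: the only thing to double-check is that $K_{t} \leqslant_{\mathsf{m}} \mathscr{K}_{t}$, which follows immediately from how the clique grids are constructed (the ``clique'' part dominates any $K_{t}$ as a minor). The whole argument reduces to bookkeeping: the definition of $\hw$, the linear equivalence $\mathscr{K} \approx \langle K_{t}\rangle_{t\in\mathbb{N}}$, and one application of \autoref{cobs_uobs}.
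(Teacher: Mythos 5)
Your proof is correct and follows exactly the reasoning the paper intends; indeed the paper gives this proposition without a formal proof, only the terse remark ``as every graph is a minor of a large enough clique'' together with the earlier observation that $\mathscr{K} \approx \langle K_t \rangle_{t\in\mathbb{N}}$ with linear gap, and an application of \autoref{cobs_uobs}, which is precisely what you spell out. One cosmetic quibble: with the paper's definition $\p_{\mathscr{H}}(G) = \inf\{k \in \mathbb{N} \mid \mathscr{H}_k \not\leqslant G\}$ one actually gets $\p_{\langle K_t\rangle}(G) = \hw(G) + 1$ rather than equality, but this off-by-one clearly does not affect the linear equivalence $\hw \sim_{\mathsf{L}} \p_{\{\mathscr{K}\}}$ or any subsequent step.
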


\subsubsection{Singly-crossing treewidth}

We commence with the definition of an auxiliary parameter $\psize$ as follows.
\begin{align*}
\psize(G) \ \coloneqq \ \begin{cases} 0, &G\text{ is planar}\\
                                     |G|, &\text{otherwise}
                        \end{cases}
\end{align*}

The \emph{singly-crossing treewidth} which we denote by $\sctw$, is defined as follows.
\begin{align}
\sctw \ &\coloneqq \ \psize^{*}\label{sctw_parameter}
\end{align}

Singly-crossing treewidth has been considered in algorithmic applications where it is possible to combine the structure of a planar graph with the tree decomposition formed  by the clique sums of the bounded-size non-planar parts.
In \cite{Kaminski12Maxcut} for instance, Kamiński showed that there exists a function $f \colon \nton$ such that \textsc{Max-Cut} can be solved in time $\Ocal(|G|^{f(\sctw(G))})$, which gives a polynomial time algorithm for every graph class where $\sctw(G)$ is bounded.
For other algorithmic applications of $\sctw$, see \cite{DemaineHT05exponential, DemaineHNRT04approximation}.

\begin{figure}[htbp]
\centering
\includegraphics[width=0.9\linewidth]{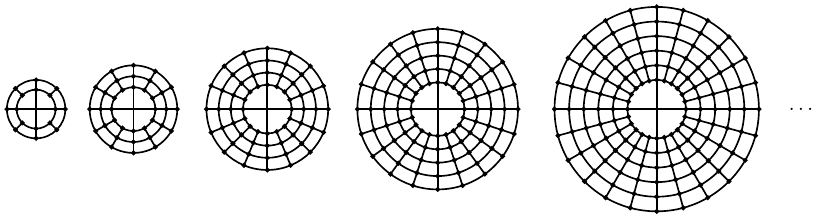}
\caption{\label{fig_singly_crossing}The minor-parametric graph $\mathscr{S} = \langle \mathscr{S}_2, \mathscr{S}_3, \mathscr{S}_4, \mathscr{S}_5, \mathscr{S}_6 \ldots \rangle$ of singly-crossing grids.}
\end{figure}

We define the minor-parametric graph $\mathscr{S} = \langle \mathscr{S}_k \rangle_{k\in\mathbb{N}_{\geq 2}}$ where $\mathscr{S}_k$ is the  \emph{singly-crossing grid} depicted in  \autoref{fig_singly_crossing}.
We also define $\scrossing$ to be the class of \emph{singly-crossing graphs} that are graphs that can be drawn in the sphere so that at most one edge $e$ intersects at most once the drawing of $G - e$.
In \cite{RobertsonS93Excluding}, Robertson and Seymour proved that there exists a function $f \colon \mathbb{N}\to\mathbb{N}$ such that if a graph excludes a singly-crossing graph $H$ on $k$ vertices as a minor, then it is the clique sum closure of set of graphs that are either planar or have  treewidth at most $f(k)$, or equivalently, the clique sum closure of the union of planar graphs and of  graphs of size at most $f(k)+1$.
Also, note that $\mathscr{S}$ is a minor-omnivore of singly-crossing graphs.
Moreover, according to Robertson and Seymour \cite{RobertsonS93Excluding}, $\obs_{\leqslant_{\mathsf{m}}}(\scrossing) = \Ocal_{\mathsf{projective}} \cup \Ocal_{\mathsf{linkless}}$
where $\Ocal_{\mathsf{projective}}$ is the minor-obstruction set of the class of graphs that are embeddable in the projective plane and $\Ocal_{\mathsf{linkless}}$ is the minor-obstruction set of the class of \textsl{linklessly embeddable}\footnote{A graph is \emph{linklessly embeddable} if it has an embedding in the 3-dimensional space so that no two cycles of this embedding are linked \cite{RobertsonST95sachs}.} graphs.
According to \cite{ArchdeaconH89akuratowski, Archdeacon81AKuratowski, GloverHW79graphsthat}, $\Ocal_{\mathsf{projective}}$ contains 35 graphs and, according to \cite{RobertsonST95sachs}, $\Ocal_{\mathsf{linkless}}$ contains the $7$ graphs of the \textit{Petersen family} {that are} the graphs that can be obtained from $K_{6}$ by applying combinations of $\Delta$-$Y$ or $Y$-$\Delta$ transformations  (see \autoref{petrpet}).
As $\Ocal_{\mathsf{projective}} \cap \Ocal_{\mathsf{linkless}} = \{K_{4,4}^-\}$, where $K_{4,4}^-$ is $K_{4,4}$ minus an edge, it follows that $\Ocal_{\mathsf{projective}} \cup \Ocal_{\mathsf{linkless}}$ contains 41 graphs.
We summarize the above facts to the following statement.

\begin{figure}[htbp]
\centering
\includegraphics[width=0.35\linewidth]{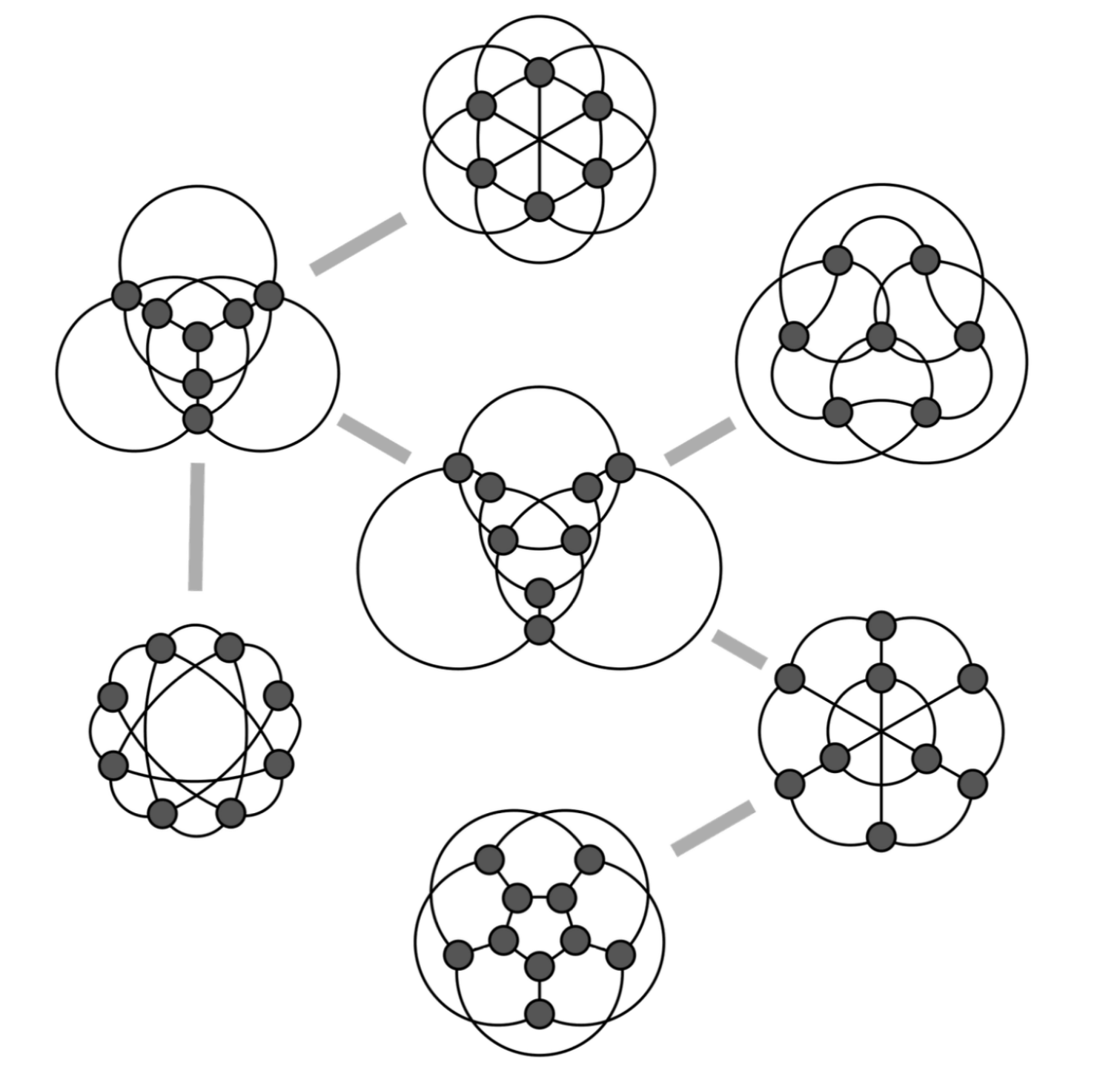}
\caption{\label{petrpet}The Petersen family (image taken from \href{https://en.wikipedia.org/wiki/Petersen_graph}{\sf Wikipedia}).}
\end{figure}

\begin{proposition}
The set $\{\mathscr{S}\}$ consisting of the minor-parametric graph $\mathscr{S}$ of singly-crossing grids is a minor-universal obstruction for $\sctw$.
Moreover, $\cobs_{\leqslant_{\mathsf{m}}}(\sctw) = \{\scrossing\}$ and $\pobs_{\leqslant_{\mathsf{m}}}(\sctw) = \big\{ \Ocal_{\mathsf{projective}} \cup \Ocal_{\mathsf{linkless}} \big\}$.
\end{proposition}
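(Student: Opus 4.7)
The plan is to mirror the proofs of \autoref{prop_tw} and \autoref{prop_td}: first establish $\sctw \sim \p_{\{\mathscr{S}\}}$, and then derive $\cobs_{\leqslant_{\mathsf{m}}}(\sctw)$ and $\pobs_{\leqslant_{\mathsf{m}}}(\sctw)$ via \autoref{cobs_uobs} combined with the omnivore property of $\mathscr{S}$ for $\scrossing$ and the Robertson--Seymour characterization of $\obs_{\leqslant_{\mathsf{m}}}(\scrossing)$ that was recalled just before the proposition.

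For the direction $\sctw \preceq \p_{\{\mathscr{S}\}}$, I would invoke the singly-crossing structure theorem of \cite{RobertsonS93Excluding} with the singly-crossing graph $\mathscr{S}_k$ as the excluded minor. If $\mathscr{S}_k$ is not a minor of $G$, the theorem provides a function $f \colon \mathbb{N}\to\mathbb{N}$ such that $G$ is a $3$-clique sum closure of planar graphs and graphs of treewidth at most $f(|\mathscr{S}_k|)$. Since a graph of treewidth $t$ is a clique sum closure of graphs of size at most $t+1$, this refines to $G$ being a clique sum closure of $\mathcal{G}_{\psize, f(|\mathscr{S}_k|)+1}$, so $\sctw(G) \leq f(|\mathscr{S}_k|)+1$. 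As $|\mathscr{S}_k| = \Ocal(k^2)$, this yields the desired bound.

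For the direction $\p_{\{\mathscr{S}\}} \preceq \sctw$, I would first note that $\sctw$ is minor-monotone, which follows from the minor-monotonicity of $\psize$ and the fact that the clique sum closure of a minor-closed family is minor-closed. It then suffices to prove that $\sctw(\mathscr{S}_k) \to \infty$ as $k \to \infty$. I expect this to be the main obstacle of the proof: one must show that the unique crossing of $\mathscr{S}_k$ cannot be localized inside a bounded-size piece of a clique sum decomposition. The intuition is that the non-planar core of $\mathscr{S}_k$ is witnessed by a $K_5$ or $K_{3,3}$ minor whose branch sets are forced to spread across the whole annulus by its grid structure, so that any non-planar piece in a clique sum decomposition must contain branch sets of total size $\Omega(k)$. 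Making this rigorous would combine the bounded clique size of $\mathscr{S}_k$ with a separator-style argument showing that removing any $\Ocal(1)$-size vertex set from $\mathscr{S}_k$ still leaves a non-planar graph whose non-planarity witness lies far from the removed region.

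With $\sctw \sim \p_{\{\mathscr{S}\}}$ in hand, \autoref{cobs_uobs} together with the stated fact that $\mathscr{S}$ is a minor-omnivore of $\scrossing$ yields $\cobs_{\leqslant_{\mathsf{m}}}(\sctw) = \{\scrossing\}$ via the resulting bijection from the universal obstruction to the class obstruction. Then, by definition of the parametric obstruction, $\pobs_{\leqslant_{\mathsf{m}}}(\sctw) = \{\obs_{\leqslant_{\mathsf{m}}}(\scrossing)\} = \{\mathcal{O}_{\mathsf{projective}} \cup \mathcal{O}_{\mathsf{linkless}}\}$ by the Robertson--Seymour characterization of minor-obstructions of singly-crossing graphs cited in the text.
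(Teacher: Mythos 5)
Your upper-bound direction (invoking the singly-crossing structure theorem of Robertson and Seymour) and your derivation of $\cobs_{\leqslant_{\mathsf{m}}}(\sctw)$ and $\pobs_{\leqslant_{\mathsf{m}}}(\sctw)$ from \autoref{cobs_uobs} are correct; indeed this is the bundle of facts the paper collects into the proposition, and like you, the paper leaves the converse inequality implicit. You also correctly reduce that converse to minor-monotonicity of $\sctw$ together with $\sctw(\mathscr{S}_k)\to\infty$.

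However, the argument you outline for $\sctw(\mathscr{S}_k)\to\infty$ would fail, and its concluding claim is false. Deleting the constantly many vertices incident to the two crossing chords of $\mathscr{S}_k$ already leaves a planar graph, so ``removing any $\Ocal(1)$-size vertex set from $\mathscr{S}_k$ still leaves a non-planar graph'' is untrue; correspondingly, the branch vertices of the Kuratowski subdivision witnessing non-planarity all lie in a bounded neighbourhood of the single crossing rather than spreading over the annulus (only the subdivision paths wind around). The genuine content of the lower bound is a \emph{separation} statement, not a deletion statement: in a clique-sum decomposition of $\mathscr{S}_k$ with torsos in $\mathcal{G}_{\psize,t}$, every adhesion is a clique in a planar torso (hence of size at most $4$, since planar graphs contain no $K_5$) or in a torso of size at most $t$, so all adhesions have order at most $\max\{4,t\}$. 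The $\Omega(k)\times\Omega(k)$ grid inside $\mathscr{S}_k$ induces a tangle of order $\Omega(k)$ that no separation of that small order can detach from the crossing, and for $k$ large compared with $t$ the torso capturing this tangle would have to be simultaneously non-planar (it frames the crossing) and of size $\Omega(k)>t$, contradicting $\psize\leq t$. Your remark about the bounded clique size of $\mathscr{S}_k$ points in this direction, but the deletion-based argument as proposed does not deliver it.
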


We remark that $\sctw$ sits at the bottom of the ``$\mathsf{vga}$-lattice''\footnote{The term ``$\mathsf{vga}$-lattice'' in \cite{ThilikosW22Killing}. ``$\mathsf{v}$'' stands for vortex, ``$\mathsf{g}$'' stands for genus, and ``$\mathsf{a}$'' stands for apices. The $\mathsf{vga}$-lattice is used to define refined versions of the GMST by considering as building blocks all possible subsets of the three components used in the GMST: vortices, genus, and apices.}  as it corresponds to a version of the GMST where neither apices nor vortices appear, and where the genus of the surface in which the graph embeds is zero.

\subsubsection{Shallow-vortex treewidth}

The next graph parameter we introduce originates from the results in \cite{ThilikosW22Killing}.

The \emph{shallow-vortex treewidth} of a graph $G$, which we denote by $\svtw(G)$ is defined as the minimum non-negative integer $k$ for which $G$ is the clique sum closure of the set of all graphs containing a set of at most $k$ vertices, whose removal yields a graph in $\Ecal_{k}$.
In other words,
\begin{align}
\svtw \ &\coloneqq \ \big( \min\big\{k\in\mathbb{N}\mid\text{$\exists X \subseteq V(G) : |X|\leq k\text{~and~}G-X\in \Ecal_{k}$}\big\} \big)^*.\label{svminor}
\end{align}

Note that the definition above adopts the \textsl{modulator vs target} scheme of graph modification problems.
The set $X$ is the modulator that, when removed, we obtain a graph in the target class $\Ecal_{k}$.
Moreover, the (annotated) parameter $\size$ acts as the measure on the modulator, that in \eqref{svminor} is $|X|$.

The parameter $\svtw$ has been introduced in \cite{ThilikosW22Killing} in the context of the study of the problem \textsc{\#Perfect Matching} asking for the number of perfect matchings in a graph.
The main result of \cite{ThilikosW22Killing} is that there exists a function $f \colon \nton$ such that \textsc{\#Perfect Matching} can be solved in time $\Ocal(|G|^{f(\sctw(G))})$.
This makes \textsc{\#Perfect Matching} polynomial time solvable on every minor-closed graph class where $\sctw$ is bounded.
Moreover, it was proved in \cite{ThilikosW22Killing} that for minor-closed graph classes where $\sctw$ is unbounded, \textsc{\#Perfect Matching} is \#\textsf{P}-complete.
This means that $\sctw$ precisely captures the transition between polynomially solvable and hard with respect to the counting complexity of \textsc{\#Perfect Matching}.

\begin{figure}[htbp]
\centering
\includegraphics[width=0.9\linewidth]{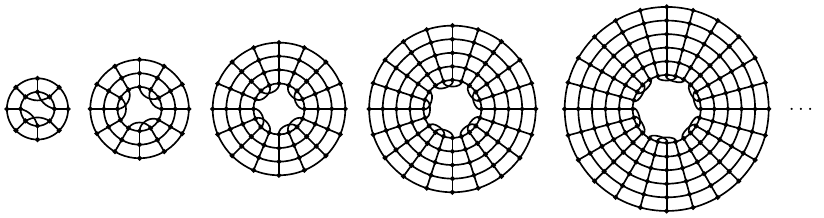}
\caption{\label{svds_fig}The minor-parametric graph of shallow-vortex grids $\mathscr{V} = \langle \mathscr{V}_2, \mathscr{V}_3, \mathscr{V}_4, \mathscr{V}_5, \mathscr{V}_6 \ldots \rangle$.}
\end{figure}

We define the minor-parametric graph $\mathscr{V} \coloneqq \langle \mathscr{V}_{t} \rangle_{t \in \Nbbb_{\geq 2}}$ of shallow-vortex grids, where for every $t \geq 2,$ $\mathscr{V}_{t}$ is the graph obtained from a $(t \times 4t)$-annulus grid by adding a matching $M$ of size $2t$ in its inner cycle, as illustrated in \autoref{svds_fig}.
The class of \emph{shallow-vortex minors} is defined as $\shallowvortexminors = \closure{\leqslant_{\mathsf{m}}}{\mathscr{V}}.$

Using our terminology, in \cite{ThilikosW22Killing}, the following is proved.

\begin{proposition}[\!\cite{ThilikosW22Killing}]
The set $\{\mathscr{V}\}$ consisting of the minor-parametric graph $\mathscr{V}$ of shallow-vortex grids is a minor-universal obstruction for $\svtw$.
Moreover, $\cobs_{\leqslant_{\mathsf{m}}}(\sctw) = \{ \shallowvortexminors \}$ and $\pobs_{\leqslant_{\mathsf{m}}}(\sctw) = \big\{ \obs_{\leqslant_{\mathsf{m}}}(\shallowvortexminors) \big\}$.
\end{proposition}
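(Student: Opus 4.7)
The plan mirrors the two-step template used for treewidth in \autoref{prop_tw}: first establish the parametric equivalence $\svtw \sim \p_{\{\mathscr{V}\}}$, so that $\{\mathscr{V}\}$ qualifies as a minor-universal obstruction for $\svtw$, and then derive the class and parametric obstructions by combining this with \autoref{cobs_uobs} and the fact that, by definition, $\shallowvortexminors = \closure{\leqslant_{\mathsf{m}}}{\mathscr{V}}$.

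For the easy direction $\p_{\{\mathscr{V}\}} \preceq \svtw$, I would argue that $\svtw(\mathscr{V}_k) = \Omega(k)$ so that, by minor-monotonicity of $\svtw$ (which it inherits termwise from the minor-monotonicity of $\size$, $\tw$, and Euler genus in the definition \eqref{svminor}), containing $\mathscr{V}_k$ as a minor forces $\svtw$ to be at least $\Omega(k)$. Two features of $\mathscr{V}_k$ drive the lower bound on $\svtw(\mathscr{V}_k)$: first, $\mathscr{V}_k$ contains the $(k\times 4k)$-annulus grid as a subgraph and is sufficiently $3$-connected that no clique-sum decomposition into small pieces is available, ruling out a small value of the inner minimum in \eqref{svminor} via the $|X|\le k$ constraint alone; second, the matching $M$ of size $2k$ added across the inner cycle produces a vortex structure of depth $\Omega(k)$ that cannot be absorbed either by removing $o(k)$ apices or by embedding the remainder in a surface of Euler genus $o(k)$.

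The harder direction $\svtw \preceq \p_{\{\mathscr{V}\}}$ is the main obstacle and is the substantive content of \cite{ThilikosW22Killing}. What is required is a refinement of the GMST inside the ``$\mathsf{vga}$-lattice'' that eliminates the vortex component at the cost of excluding a specific shallow-vortex minor: there exists a function $f\colon\Nbbb\to\Nbbb$ such that every graph $G$ with $\mathscr{V}_k\not\leqslant_{\mathsf{m}} G$ can be obtained by clique sums of graphs $G'$ each admitting a set $X\subseteq V(G')$ with $|X|\le f(k)$ and $G'-X\in\mathcal{E}_{f(k)}$. Plugging this decomposition into \eqref{svminor} yields $\svtw(G)\le f(\p_{\{\mathscr{V}\}}(G))$ directly. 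I expect this step to be by far the most delicate: one has to track how the three components of the GMST (apices, genus, vortices) interact along a tangle-based decomposition and show that, once no shallow-vortex grid appears as a minor, the vortex depth can be absorbed into either the apex set or the genus bound.

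Once $\svtw\sim\p_{\{\mathscr{V}\}}$ is established, $\{\mathscr{V}\}$ is a minor-universal obstruction for $\svtw$. By definition, $\mathscr{V}$ is a minor-omnivore of $\closure{\leqslant_{\mathsf{m}}}{\mathscr{V}}=\shallowvortexminors$, so \autoref{cobs_uobs} applied to the universal obstruction $\{\mathscr{V}\}$ produces the bijection sending $\mathscr{V}$ to $\shallowvortexminors$, whence $\cobs_{\leqslant_{\mathsf{m}}}(\svtw)=\{\shallowvortexminors\}$. Taking minor-obstructions of this single class then yields $\pobs_{\leqslant_{\mathsf{m}}}(\svtw)=\big\{\obs_{\leqslant_{\mathsf{m}}}(\shallowvortexminors)\big\}$, completing the plan.
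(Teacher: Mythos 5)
Your plan is correct in structure and matches the intended reading of the statement; the paper itself offers no proof here, since the proposition is explicitly a restatement of the main structural theorem of \cite{ThilikosW22Killing} in the paper's terminology (``Using our terminology, in \cite{ThilikosW22Killing}, the following is proved''). You correctly identify the two directions of $\svtw \sim \p_{\{\mathscr{V}\}}$, defer the hard direction (excluding a shallow-vortex grid as a minor forces a clique-sum decomposition with bounded apex set and bounded genus, no vortices) to \cite{ThilikosW22Killing}, and then derive $\cobs$ and $\pobs$ from \autoref{cobs_uobs} together with the definition $\shallowvortexminors = \closure{\leqslant_{\mathsf{m}}}{\mathscr{V}}$. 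You also silently repair the typo in the proposition's statement, which writes $\sctw$ in place of $\svtw$ in the ``Moreover'' clause. The one point worth tightening in the easy direction is that you should state explicitly that $\svtw$ is minor-monotone (so that $\mathscr{V}_{k}\leqslant_{\mathsf{m}}G$ transfers the lower bound $\svtw(\mathscr{V}_{k})=\Omega(k)$ to $G$); this is inherited from the minor-monotonicity of the underlying modulator/target parameter and of the clique-sum-closure operation $\p\mapsto\p^{*}$, but it is a step that should not be left tacit, since not every graph-decomposition parameter is minor-monotone.
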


We should remark that the minor-obstruction set of the class of shallow-vortex minors is unknown, and most likely not easy to identify.
In addition, with respect to the $\mathsf{vga}$-lattice, $\svtw$ corresponds to the version of the GMST without vortices.

\subsubsection{Excluding surfaces as minors}

In \cite{thilikos2023excluding}, a refinement of the GMST was proved, that gives a structural characterization depending on the Euler genus of the excluded minor.
This leads to a definition of an extension of treewidth, for every non-negative integer $g,$ denoted by $\tw^{*}_{\Ecal_{g}},$ that ``fixes'' the Euler genus of the surface in \eqref{gmst_annotated_version}.
Formally, for every graph $G,$
\begin{align}
\tw^{*}_{\Ecal_{g}}(G) \ \coloneqq \ \big( \min\{ k \in \Nbbb \mid \exists X \subseteq V(G) : \tw(G,X) \leq k\text{ and }G - X \in \Ecal_{g}\} \big)^{*}.\label{surfex_treewidth}
\end{align}

For every non-negative integer $g \in \Nbbb,$ consider the set $\Sbbb_{g}$ consisting of the surfaces of minimal Euler genus larger than $g.$
By the surface classification theorem \cite{MoharT01Graphs}, one may observe that there are at most two such surfaces, one orientable and one non-orientable (see \cite{thilikos2023excluding}).
To each surface $\Sigma \in \Sbbb_{g}$ we may associate a ``wall-like'' minor-parametric graph $\mathscr{D}^{\Sigma} = \langle \mathscr{D}^{\Sigma}_{t} \rangle_{t \in \Nbbb_{\geq 2}},$ that precisely captures $\Sigma,$ i.e., such that the minor-closure of $\mathscr{D}^{\Sigma}$ corresponds to the class of graphs embeddable in $\Sigma.$
To define $\mathscr{D}^{\Sigma}$ we require three ingredients.
The minor-parametric graph $\mathscr{A}$ of annulus grids, the minor-parametric graph $\mathscr{D}^{(1, 0)} = \lin{\mathscr{D}^{(1, 0)}_{t}}_{t \geq 2},$ where $\mathscr{D}^{(1, 0)}_{t}$ is obtained from the $(t \times 4t)$-annulus grid by adding a linkage of order $t$ joining vertices of its inner cycle as illustrated in \autoref{fig_torus_grid}, and the minor-parametric graph $\mathscr{D}^{(0, 1)} = \lin{\mathscr{D}^{(0, 1)}_{t}}_{t \geq 2},$ where $\mathscr{D}^{(0, 1)}_{t}$ is once again obtained from the $(t \times 4t)$-annulus grid by adding a linkage of order $2t$ joining vertices of its inner cycle as illustrated in \autoref{fig_projective_grid}.

\begin{figure}[htbp]
\centering
\includegraphics[width=0.9\linewidth]{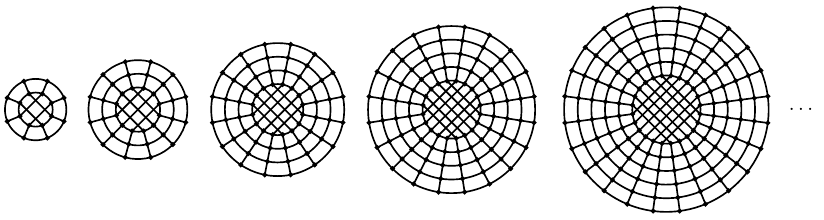}
\caption{\label{fig_torus_grid}The minor-parametric graph $\mathscr{D}^{(1, 0)} = \langle \mathscr{D}^{(1, 0)}_2, \mathscr{D}^{(1, 0)}_3, \mathscr{D}^{(1, 0)}_4, \mathscr{D}^{(1, 0)}_5, \mathscr{D}^{(1, 0)}_6 \ldots \rangle$ of torus grids.}
\end{figure}

\begin{figure}[htbp]
\centering
\includegraphics[width=0.9\linewidth]{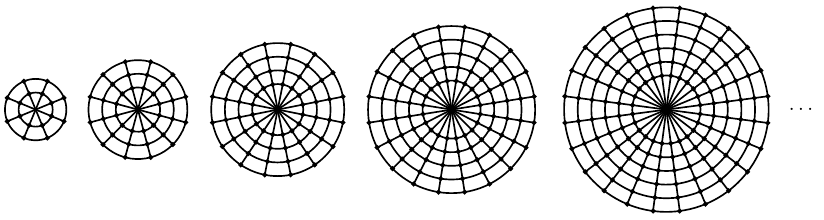}
\caption{\label{fig_projective_grid}The minor-parametric graph $\mathscr{D}^{(0, 1)} = \langle \mathscr{D}^{(0, 1)}_2, \mathscr{D}^{(0, 1)}_3, \mathscr{D}^{(0, 1)}_4, \mathscr{D}^{(0, 1)}_5, \mathscr{D}^{(0, 1)}_6 \ldots \rangle$ of projective grids.}
\end{figure}

It can be seen \cite{thilikos2023excluding} that $\mathscr{D}^{(1, 0)} = \mathscr{D}^{\Sigma^{(1, 0)}}$ captures the torus $\Sigma^{(1, 0)}$, i.e., the orientable surface homeomorphic to the sphere with the addition of a handle, while $\mathscr{D}^{(1, 0)} = \mathscr{D}^{\Sigma^{(0, 1)}}$ captures the projective plane $\Sigma^{(0, 1)}$, i.e., the non-orientable surface homeomorphic to the sphere with the addition of a crosscap \cite{MoharT01Graphs}.
Now, we define $\mathscr{D}^{\Sigma}_{t},$ $t \geq 2,$ for a surface $\Sigma$ homeomorphic to the sphere with the addition of $h \in \Nbbb$ many handles and $c \in [0, 2]$\footnote{By Dyck's Theorem \cite{Dyck1888Beitrage, Francis99ConwayZIP} we may assume that $c \in [0,2]$ since any two crosscaps are equivalent (with respect to homeomorphisms) to a handle in the presence of a (third) crosscap.} many crosscaps,  by joining together one copy of $\mathscr{A}_{t},$ $h$ copies of $\mathscr{D}^{(1,0)}_{t},$ and $c$ copies of $\mathscr{D}^{(0,1)}_{t},$ in the cyclic order $\mathscr{A}_{t},\mathscr{D}^{(1,0)}_{t}, \ldots, \mathscr{D}^{(1,0)}_{t}, \mathscr{D}^{(0,1)}_{t},\ldots, \mathscr{D}^{(0,1)}_{t},$ as indicated in \autoref{@freemasonry}.

\begin{figure}[htbp]
\centering
\includegraphics[width=0.95\linewidth]{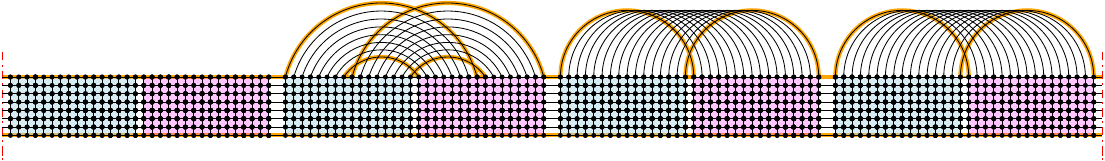}
\caption{\label{@freemasonry}The instance $\mathscr{D}_{8}^{(1,2)}$ of the minor-parametric graph $\mathscr{D}^{\Sigma^{(1,2)}}$.}
\end{figure}

The main result of \cite{thilikos2023excluding} is a minor-universal obstruction for $\tw^{*}_{\Ecal_{g}}$ which is stated in the following proposition.

\begin{proposition}\label{surfex_univ_obs}
For every non-negative integer $g,$ the set $\{ \mathscr{D}^{\Sigma} \mid \Sigma \in \Sbbb_{g} \}$ is a minor-universal obstruction for $\tw^{*}_{\Ecal_{g}}.$
\end{proposition}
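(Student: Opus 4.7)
The plan is to establish both $\p_{\mathfrak{H}} \preceq \tw^{*}_{\mathcal{E}_g}$ and $\tw^{*}_{\mathcal{E}_g} \preceq \p_{\mathfrak{H}}$, where $\mathfrak{H} = \{\mathscr{D}^{\Sigma} \mid \Sigma \in \Sbbb_{g}\}$; by \autoref{cobs_uobs}, these two inequalities, together with the fact that each $\mathscr{D}^{\Sigma}$ is a $\leqslant_{\mathsf{m}}$-omnivore of the class of graphs embeddable in $\Sigma$, yield the advertised $\leqslant_{\mathsf{m}}$-universal obstruction.

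For the direction $\p_{\mathfrak{H}} \preceq \tw^{*}_{\mathcal{E}_g}$, the approach is to use each $\mathscr{D}^{\Sigma}_{k}$, with $\Sigma \in \Sbbb_g$, as a lower-bound certificate. The key observations are: (i) $\mathscr{D}^{\Sigma}_{k}$ has Euler genus equal to that of $\Sigma$, which strictly exceeds $g$, so $\mathscr{D}^{\Sigma}_{k} \notin \mathcal{E}_g$; (ii) the handle and crosscap gadgets are spread cyclically around an annulus grid of order $k$, so the obstruction to $\mathcal{E}_g$-embeddability is ``global'' and cannot be destroyed by a small modulator nor localised in a small piece of a clique-sum decomposition. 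Concretely, assume $\mathscr{D}^{\Sigma}_{k} \leqslant_{\mathsf{m}} G$ with $\tw^{*}_{\mathcal{E}_g}(G) \leq c$; then $G$ is the clique-sum closure of pieces $G'$ each admitting $X \subseteq V(G')$ with $\tw(G', X) \leq c$ and $G' - X \in \mathcal{E}_g$. A standard minor-model analysis shows that some piece must contain a substantial sub-model of $\mathscr{D}^{\Sigma}_{k}$, and its associated $X$ must intersect $\Omega(k)$ of the branch sets in order to restore $\mathcal{E}_g$-embeddability. These intersected branch sets give rise to an $X$-minor containing an $\Omega(k) \times \Omega(k)$ grid, whence $\tw(G', X) = \Omega(k)$ by the grid theorem (\autoref{prop_tw}), forcing $c = \Omega(k)$.

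For the direction $\tw^{*}_{\mathcal{E}_g} \preceq \p_{\mathfrak{H}}$, the plan is to invoke the refined Graph Minors Structure Theorem of~\cite{thilikos2023excluding}. By the surface classification theorem~\cite{MoharT01Graphs}, $\Sbbb_g$ consists of at most two surfaces (one orientable and one non-orientable), so $\p_{\mathfrak{H}}(G) \leq k$ means that $G$ avoids both wall-like parametric graphs $\mathscr{D}^{\Sigma}_{k}$. The refined structure theorem then asserts that any such $G$ admits a clique-sum decomposition into pieces, each of which, after removing a modulator of bounded annotated treewidth, embeds in a surface of Euler genus at most $g$ — which is precisely the definition of $\tw^{*}_{\mathcal{E}_g}(G)$ being bounded by some function of $k$.

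The main obstacle is the hard direction: it is there that the surface-parametrised refinement of the GMST from~\cite{thilikos2023excluding} is essential, since the classical GMST only yields a structural description in terms of the Euler genus forced by $\hw(G)$, rather than by the concrete excluded walls. The easy direction, by contrast, reduces to elementary non-embeddability and connectivity arguments combined with the grid theorem. Once both inequalities are in place, a direct application of \autoref{cobs_uobs} also yields $\cobs_{\leqslant_{\mathsf{m}}}(\tw^{*}_{\mathcal{E}_g}) = \big\{ \closure{\leqslant_{\mathsf{m}}}{\mathscr{D}^{\Sigma}} \mid \Sigma \in \Sbbb_g \big\}$.
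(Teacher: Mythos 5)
The paper presents this proposition as a restatement of the main theorem of~\cite{thilikos2023excluding} and offers no independent proof, and your hard direction ($\tw^{*}_{\mathcal{E}_g} \preceq \p_{\mathfrak{H}}$) relies on the same reference, so at the top level the two proofs coincide. The difference is that you also sketch the converse ($\p_{\mathfrak{H}} \preceq \tw^{*}_{\mathcal{E}_g}$) from scratch, and that sketch has a concrete gap.

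The problematic step is the sentence asserting that the branch sets intersected by $X$ ``give rise to an $X$-minor containing an $\Omega(k) \times \Omega(k)$ grid.'' Recall that in an $X$-minor every branch set must contain a vertex of $X$, so the only building blocks available to you are precisely those $\Omega(k)$ hit branch sets. Their mere \emph{number} does not yield a high-treewidth $X$-minor: if, say, all hit branch sets happened to lie along a single concentric cycle of the underlying annulus grid, you would only obtain a cycle (treewidth $2$) as your $X$-minor. What you actually need is control over the \emph{arrangement} of the hit branch sets. The genuine argument must exploit the specific geometry of $\mathscr{D}^{\Sigma}_k$: the handle and crosscap gadgets are realised by $\Omega(k)$ long paths that each traverse $\Omega(k)$ concentric cycles of the annulus grid, so a vertex set $X$ whose removal lowers the Euler genus must meet $\Omega(k)$ of these paths at $\Omega(k)$ radially separated levels, and it is this two-dimensional spread of hits — not just their count — that produces a grid-like $X$-minor. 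This is precisely the kind of topological-structural argument carried out in~\cite{thilikos2023excluding}; calling this direction ``elementary non-embeddability and connectivity arguments'' overstates how much comes for free. Similarly, the clique-sum reduction (``some piece must contain a substantial sub-model'') is stated but not justified; it requires a connectivity argument about $\mathscr{D}^{\Sigma}_k$ together with the standard fact that a sufficiently connected minor cannot be split across a clique-sum of bounded adhesion.

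In short: your outline is the right shape, but the lower-bound direction is not a separate easy fact sitting beside~\cite{thilikos2023excluding}; it is part of what that paper proves, and the step from ``$X$ hits many branch sets'' to ``$\tw(G',X)$ is large'' is exactly where the real work lies.
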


With respect to the $\mathsf{vga}$-lattice, $\tw^{*}_{\Ecal_{g}}$ corresponds to a parameterization of the GMST, for each possible value of the genus component.

\subsubsection{From small treewidth to small size modulators}

A consequence of the results in \cite{PaulPTW24Delineating} is a minor-universal obstruction for a special case of $\tw^{*}_{\Ecal_{g}}$, that transforms the small annotated treewidth modulators in \eqref{surfex_treewidth} to modulators of small size.
This result is obtained by combining the results in \cite{thilikos2023excluding} and \cite{ThilikosW22Killing}.
Towards this, for every non-negative integer $g,$ we define the parameter $\apex^{*}_{\Ecal_{g}},$ where for every graph $G,$
\begin{align}
\apex^{*}_{\Ecal_{g}} \ \coloneqq \ \big( \min\{ k \in \Nbbb \mid \exists X \subseteq V(G) : |X| \leq k\text{ and }G - X \in \Ecal_{g}\} \big)^{*}.\label{killing_surfex_treewidth}
\end{align}

They prove that by combining the minor-parametric graphs in the set $\{ \mathscr{D}^{\Sigma} \mid \Sigma \in \Sbbb_{g} \}$ with the minor-parametric graph $\mathscr{V}$ of shallow-vortex grids we obtain a minor-universal obstruction for $\apex^{*}_{\Ecal_{g}}.$

\begin{proposition}\label{killing_surfex_univ_obs}
The set $\{ \mathscr{D}^{\Sigma} \mid \Sigma \in \Sbbb_{g} \} \cup \{ \mathscr{V} \}$ is a minor-universal obstruction for $\apex^{*}_{\Ecal_{g}}.$
\end{proposition}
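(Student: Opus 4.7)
The plan is to establish the two directions of the equivalence $\p_{\mathfrak{H}} \sim \apex^{*}_{\Ecal_{g}}$, where $\mathfrak{H} = \{ \mathscr{D}^{\Sigma} \mid \Sigma \in \Sbbb_{g} \} \cup \{ \mathscr{V} \}$. The easier direction is $\p_{\mathfrak{H}} \preceq \apex^{*}_{\Ecal_{g}}$. For each $\mathscr{D}^{\Sigma}$ with $\Sigma \in \Sbbb_{g}$, I would argue that $\apex^{*}_{\Ecal_{g}}(\mathscr{D}^{\Sigma}_{k})$ grows with $k$: the graph $\mathscr{D}^{\Sigma}_{k}$ admits a faithful embedding in $\Sigma$ together with a large nested family of non-contractible cycles and ``handle/crosscap gadgets''; by a standard Euler-genus counting argument, any apex set $X$ of size $o(k)$ leaves a subgraph still carrying a $\mathscr{D}^{\Sigma}_{k'}$-minor for $k' = \Omega(k)$, which has Euler genus strictly greater than $g$. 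Because the $\mathscr{D}^{\Sigma}_{k}$ are sufficiently connected (they contain a large annulus grid), no non-trivial clique sum decomposition can break this obstruction into small $\Ecal_{g}$-pieces, so the starred closure does not help. The same reasoning, now using the vortex gadget structure established in \cite{ThilikosW22Killing}, handles $\mathscr{V}_{k}$.

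For the converse $\apex^{*}_{\Ecal_{g}} \preceq \p_{\mathfrak{H}}$, I would proceed in two stages. First, applying \autoref{surfex_univ_obs} to the sub-family $\{ \mathscr{D}^{\Sigma} \mid \Sigma \in \Sbbb_{g} \} \subseteq \mathfrak{H}$, the assumption $\p_{\mathfrak{H}}(G) < k$ yields a bound $\tw^{*}_{\Ecal_{g}}(G) \leq h(k)$ for some function $h$. Unfolding the definition, $G$ decomposes via clique sums into pieces $G'$ each admitting a vertex set $X \subseteq V(G')$ with $\tw(G',X) \leq h(k)$ and $G' - X \in \Ecal_{g}$. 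The second and harder stage converts every such ``annotated-treewidth modulator'' into a bounded-size vertex modulator, giving $\apex^{*}_{\Ecal_{g}}(G) \leq h'(k)$ as required.

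The hardest step is precisely this conversion. Here I would invoke the shallow-vortex analysis of \cite{ThilikosW22Killing}: a set $X$ with $\tw(G', X)$ large but $|X|$ unbounded forces, together with the surface embedding of $G'-X$, the appearance of a large shallow-vortex grid $\mathscr{V}_{k'}$ as a minor of $G'$, hence of $G$. Since the hypothesis $\p_{\mathfrak{H}}(G) < k$ also excludes $\mathscr{V}_{k}$ as a minor, the modulator $X$ cannot be much larger than $\tw(G', X)$, up to a function of $k$. Packaging this argument uniformly across every piece of the clique-sum decomposition, and checking that the size bound is preserved under clique sums (the modulators of the pieces glue to a modulator of $G$ of controlled size), yields the desired bound on $\apex^{*}_{\Ecal_{g}}(G)$.

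The principal obstacle is making the ``shallow-vortex trade'' uniform over all pieces and over both the orientable and non-orientable surfaces in $\Sbbb_{g}$: one must carefully align the vortex gadgets of \cite{ThilikosW22Killing} with the handle/crosscap gadgets of $\mathscr{D}^{\Sigma}$ so that failure to find a size-bounded modulator in even one piece produces a single large minor from $\mathfrak{H}$ in the whole graph $G$, rather than several unrelated partial obstructions. Once this localization-to-globalization step is in place, combining it with \autoref{surfex_univ_obs} and the lower bound above yields $\p_{\mathfrak{H}} \sim \apex^{*}_{\Ecal_{g}}$, and hence \autoref{killing_surfex_univ_obs}.
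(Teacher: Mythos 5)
The paper itself gives no proof of \autoref{killing_surfex_univ_obs}; it simply states the proposition as a consequence of \cite{PaulPTW24Delineating}, which combines \cite{thilikos2023excluding} and \cite{ThilikosW22Killing}. Your two-stage architecture --- exclude $\{\mathscr{D}^\Sigma \mid \Sigma\in\Sbbb_g\}$ via \autoref{surfex_univ_obs} to bound $\tw^*_{\Ecal_g}$, then use exclusion of $\mathscr{V}$ to ``kill vortices'' and turn the annotated modulator into a bounded-size one --- is exactly the route those citations suggest, and the lower-bound sketch for $\apex^*_{\Ecal_g}(\mathscr{D}^\Sigma_k)$ and $\apex^*_{\Ecal_g}(\mathscr{V}_k)$ is unproblematic, so your overall plan matches the paper's intent.

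The one genuine soft spot is the ``shallow-vortex trade''. First, a slip: you write ``a set $X$ with $\tw(G',X)$ large but $|X|$ unbounded,'' when the relevant regime is $\tw(G',X)$ already \emph{bounded} (by $h(k)$) while $|X|$ may be large. More substantively, the inequality you then assert --- ``$X$ cannot be much larger than $\tw(G',X)$, up to a function of $k$'' --- is not the right shape of statement, and I do not see how to derive it for an arbitrary annotated modulator $X$. Excluding $\mathscr{V}_k$ does not give a pointwise comparison between $|X|$ and $\tw(G',X)$; what it gives, per \cite{ThilikosW22Killing}, is that the vortices in the GMST-style decomposition produced by \cite{thilikos2023excluding} can be eliminated, so that the modulator collapses to the decomposition's apex set, whose size is bounded as a function of $k$ by the structure theorem itself. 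In other words, the $|X|$-bound comes from the apex bound of the torso structure, not from trading $|X|$ against $\tw(G',X)$ for a generic $X$; your localization step should be routed through the decomposition's vortex/torso structure rather than treating $X$ abstractly. (Also note that since $\apex^*_{\Ecal_g}$ is already the clique-sum closure, you need not ``glue'' the modulators across pieces --- bounding each piece's modulator independently suffices.)
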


We remark that in the $\mathsf{vga}$-lattice, $\apex^{*}_{\Ecal_{g}}$ corresponds to a parameterization of the GMST for every possible value of the genus, where additionally the vortex component is completely eliminated.

\subsubsection{Two particular special cases of $\tw^{*}_{\Ecal_{g}}$ and $\apex^{*}_{\Ecal_{g}}$}

Two rather simple and interesting instantiations of $\tw^{*}_{\Ecal_{g}}$ and $\apex^{*}_{\Ecal_{g}}$ are in the setting of planarity.

\medskip
Recall that we use $\gplanar$ to denote the class of all planar graphs.
Planar graphs are the graphs embeddable in the sphere which is the single surface of Euler genus zero.
Therefore, a reader familiar with these concepts may quickly observe that, the set $\Sbbb_{0}$ is equal to $\{ \Sigma^{(1, 0)}, \Sigma^{(0, 1)} \}$.
We denote by $\Gcal_{\mathsf{toroidal}}$ the class of all graphs embeddable in the torus and by $\Gcal_{\mathsf{projective}}$ the class of all graphs embeddable in the projective plane.
Note that, the minor-obstruction set for $\Gcal_{\mathsf{toroidal}}$ has yet to be fully found.

We obtain the following two statements on the minor-universal obstruction for $\tw^{*}_{\gplanar}$ and $\apex^{*}_{\gplanar}$, which are directly implied from \autoref{surfex_univ_obs} and \autoref{killing_surfex_univ_obs} respectively.

\begin{proposition}
The set $\{ \mathscr{D}^{(1, 0)}, \mathscr{D}^{(0, 1)} \}$ is a minor-universal obstruction for $\tw^{*}_{\gplanar}.$
Moreover,
\begin{align*}
\cobs_{\leqslant_{\mathsf{m}}}(\tw^{*}_{\gplanar}) \ &= \ \big\{ \Gcal_{\mathsf{toroidal}}, \Gcal_{\mathsf{projective}} \big\}\text{ and}\\
\pobs_{\leqslant_{\mathsf{m}}}(\tw^{*}_{\gplanar}) \ &= \ \big\{ \obs_{\leqslant_{\mathsf{m}}}(\Gcal_{\mathsf{toroidal}}), \Ocal_{\mathsf{projective}} \big\}.
\end{align*}
\end{proposition}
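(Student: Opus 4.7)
The proof is essentially a direct instantiation of \autoref{surfex_univ_obs} to the case $g=0$. My plan is first to unfold the definitions so that $\tw^{*}_{\gplanar}$ matches $\tw^{*}_{\Ecal_{0}}$, then identify the set $\Sbbb_{0}$ explicitly, and finally read off the class and parametric obstructions using \autoref{cobs_uobs}.

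The first step is to observe that $\Ecal_{0} = \gplanar$ (the sphere is the single surface of Euler genus zero), so substituting $g=0$ in~\eqref{surfex_treewidth} gives $\tw^{*}_{\Ecal_{0}} = \tw^{*}_{\gplanar}$. The second step is to determine $\Sbbb_{0}$: by the surface classification theorem, the orientable surface of minimal Euler genus strictly greater than $0$ is the torus $\Sigma^{(1,0)}$ (Euler genus $2$), and the non-orientable surface of minimal Euler genus strictly greater than $0$ is the projective plane $\Sigma^{(0,1)}$ (Euler genus $1$). Hence $\Sbbb_{0} = \{\Sigma^{(1,0)}, \Sigma^{(0,1)}\}$. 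Plugging this into \autoref{surfex_univ_obs} immediately yields that $\{\mathscr{D}^{(1,0)}, \mathscr{D}^{(0,1)}\}$ is a minor-universal obstruction for $\tw^{*}_{\gplanar}$.

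For the class obstruction, I invoke \autoref{cobs_uobs}: since $\tw^{*}_{\gplanar}$ admits the minor-universal obstruction $\{\mathscr{D}^{(1,0)}, \mathscr{D}^{(0,1)}\}$, the set $\cobs_{\leqslant_{\mathsf{m}}}(\tw^{*}_{\gplanar})$ exists, is finite, and is in bijection with this obstruction via the minor-closure operation. By the property noted in~\cite{thilikos2023excluding} that $\mathscr{D}^{(1,0)}$ (resp.\ $\mathscr{D}^{(0,1)}$) is a minor-omnivore of $\Gcal_{\mathsf{torus}}$ (resp.\ $\Gcal_{\mathsf{projective}}$), we get $\closure{\leqslant_{\mathsf{m}}}{\mathscr{D}^{(1,0)}} = \Gcal_{\mathsf{torus}}$ and $\closure{\leqslant_{\mathsf{m}}}{\mathscr{D}^{(0,1)}} = \Gcal_{\mathsf{projective}}$, yielding the claimed $\cobs_{\leqslant_{\mathsf{m}}}(\tw^{*}_{\gplanar}) = \{\Gcal_{\mathsf{torus}}, \Gcal_{\mathsf{projective}}\}$. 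The parametric obstruction is then obtained by applying $\obs_{\leqslant_{\mathsf{m}}}$ to each class in $\cobs_{\leqslant_{\mathsf{m}}}(\tw^{*}_{\gplanar})$, giving $\pobs_{\leqslant_{\mathsf{m}}}(\tw^{*}_{\gplanar}) = \{\obs_{\leqslant_{\mathsf{m}}}(\Gcal_{\mathsf{torus}}), \Ocal_{\mathsf{projective}}\}$, where $\Ocal_{\mathsf{projective}} = \obs_{\leqslant_{\mathsf{m}}}(\Gcal_{\mathsf{projective}})$ is the (known, $35$-element~\cite{ArchdeaconH89akuratowski, Archdeacon81AKuratowski, GloverHW79graphsthat}) projective-plane obstruction set.

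There is essentially no obstacle: the only non-mechanical step is verifying that $\Sbbb_{0}$ really consists of exactly these two surfaces, and checking that the two parametric graphs are minor-antichains so that they form a legitimate parametric family (neither embeds in the surface captured by the other, since $\mathscr{D}^{(1,0)}_{t}$ is non-planar on the projective plane and $\mathscr{D}^{(0,1)}_{t}$ is non-toroidal for large $t$); both facts are standard and already used in the derivation of \autoref{surfex_univ_obs}. Thus the statement follows with no new argument beyond a faithful specialization of \autoref{surfex_univ_obs}.
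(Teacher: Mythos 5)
Your proposal is correct and follows exactly the route the paper takes: the paper states that the proposition is ``directly implied from \autoref{surfex_univ_obs}'' and does not spell out a proof, and your argument simply unfolds that implication (checking $\Ecal_0 = \gplanar$, $\Sbbb_0 = \{\Sigma^{(1,0)}, \Sigma^{(0,1)}\}$, and reading off $\cobs$ and $\pobs$ via \autoref{cobs_uobs} using that $\mathscr{D}^{(1,0)}$ and $\mathscr{D}^{(0,1)}$ are minor-omnivores of the toroidal and projective-planar graphs, respectively). No gap.
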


By adding the minor-parametric graph $\mathscr{V}$ (see \autoref{svds_fig} for an illustration) to the previous set we obtain the following.

\begin{proposition}
The set $\{ \mathscr{D}^{(1, 0)}, \mathscr{D}^{(0, 1)}, \mathscr{V} \}$ is a minor-universal obstruction for $\apex^{*}_{\gplanar}.$
Moreover,
\begin{align*}
\cobs_{\leqslant_{\mathsf{m}}}(\apex^{*}_{\gplanar}) \ &= \ \big\{ \Gcal_{\mathsf{toroidal}}, \Gcal_{\mathsf{projective}}, \shallowvortexminors \big\}\text{ and}\\
\pobs_{\leqslant_{\mathsf{m}}}(\apex^{*}_{\gplanar}) \ &= \ \big\{ \obs_{\leqslant_{\mathsf{m}}}(\Gcal_{\mathsf{toroidal}}), \Ocal_{\mathsf{projective}}, \obs_{\leqslant_{\mathsf{m}}}(\shallowvortexminors) \big\}.
\end{align*}
\end{proposition}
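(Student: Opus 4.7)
The plan is to obtain this statement as a direct specialization of Proposition \autoref{killing_surfex_univ_obs} to the case $g=0$, and then to derive the class and parametric obstructions from the resulting universal obstruction via Proposition \autoref{cobs_uobs}.

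First I would verify that the sphere is the only surface of Euler genus $0$, so $\Ecal_0=\gplanar$ and $\apex^{*}_{\Ecal_0}=\apex^{*}_{\gplanar}$. By the surface classification theorem (combined with Dyck's theorem, as is already used in the excerpt to justify the definition of $\mathscr{D}^{\Sigma}$), the set $\Sbbb_0$ of surfaces of minimal Euler genus strictly larger than $0$ consists of exactly two surfaces, both of Euler genus $1$: the torus $\Sigma^{(1,0)}$ and the projective plane $\Sigma^{(0,1)}$. Hence
\[
\{\mathscr{D}^{\Sigma}\mid \Sigma\in\Sbbb_0\}\cup\{\mathscr{V}\} \ = \ \{\mathscr{D}^{(1,0)},\, \mathscr{D}^{(0,1)},\, \mathscr{V}\},
\]
and Proposition \autoref{killing_surfex_univ_obs} applied with $g=0$ immediately gives that this set is a minor-universal obstruction for $\apex^{*}_{\gplanar}$.

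Next I would recover the class obstruction by invoking Proposition \autoref{cobs_uobs}: since $\apex^{*}_{\gplanar}$ admits a universal obstruction of size three, $\cobs_{\leqslant_{\mathsf{m}}}(\apex^{*}_{\gplanar})$ exists, is finite, and is in bijection with that universal obstruction via a map sending each parametric graph to a $\leqslant_{\mathsf{m}}$-closed class of which it is a minor-omnivore. Recall from the discussion preceding the statement that $\mathscr{D}^{(1,0)}$ and $\mathscr{D}^{(0,1)}$ are minor-omnivores of $\Gcal_{\mathsf{torus}}$ and $\Gcal_{\mathsf{projective}}$ respectively (this is the content of the ``$\mathscr{D}^{\Sigma}$ captures $\Sigma$'' assertion from \cite{thilikos2023excluding}), while by definition $\shallowvortexminors=\closure{\leqslant_{\mathsf{m}}}{\mathscr{V}}$, so $\mathscr{V}$ is a minor-omnivore of $\shallowvortexminors$. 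This yields the claimed equality
\[
\cobs_{\leqslant_{\mathsf{m}}}(\apex^{*}_{\gplanar}) \ = \ \{\Gcal_{\mathsf{torus}},\,\Gcal_{\mathsf{projective}},\,\shallowvortexminors\}.
\]

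Finally, the parametric obstruction follows by applying the definition $\pobs_{\leqslant_{\mathsf{m}}}(\p)=\{\obs_{\leqslant_{\mathsf{m}}}(\Gcal)\mid \Gcal\in\cobs_{\leqslant_{\mathsf{m}}}(\p)\}$ to each of the three classes, noting that $\Ocal_{\mathsf{projective}}$ is exactly $\obs_{\leqslant_{\mathsf{m}}}(\Gcal_{\mathsf{projective}})$ (the $35$ graphs of \cite{ArchdeaconH89akuratowski, Archdeacon81AKuratowski, GloverHW79graphsthat}), while the two remaining obstruction sets are left implicit because, as already mentioned in the excerpt, neither $\obs_{\leqslant_{\mathsf{m}}}(\Gcal_{\mathsf{torus}})$ nor $\obs_{\leqslant_{\mathsf{m}}}(\shallowvortexminors)$ has a known complete description. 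No essential obstacle arises in the argument: it is pure bookkeeping on top of Propositions \autoref{killing_surfex_univ_obs} and \autoref{cobs_uobs}, and the only point requiring a moment of care is the surface-classification step that pins down $\Sbbb_0$.
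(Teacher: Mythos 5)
Your proposal is correct and takes essentially the same route as the paper: the paper derives this proposition as a direct specialization of \autoref{killing_surfex_univ_obs} to $g=0$ (after noting $\Sbbb_0 = \{\Sigma^{(1,0)}, \Sigma^{(0,1)}\}$ and $\Ecal_0 = \gplanar$), with the class and parametric obstructions read off from the universal obstruction via \autoref{cobs_uobs} and the definition of $\pobs_{\leqslant_{\mathsf{m}}}$. You have simply made the bookkeeping explicit.
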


The former two examples also represent two particular refinements of the GMST in the context of the $\mathsf{vga}$-lattice.
The former, $\tw^{*}_{\gplanar}$ corresponds to the version where the genus component is restricted to zero.
The latter, $\apex^{*}_{\gplanar}$ in addition also eliminates the presence of vortices.

\subsubsection{Parameter hierarchy to the ``left'' of treewidth}

Having introduced the previous graph parameters, we may now complement the hierarchy of parameters represented in \eqref{right_part} that sits to the ``left'' of treewidth as follows.
\begin{align}
\p_\mathsf{0} \ \preceq \ \hw \ \preceq \ \begin{matrix*} \svtw \\ \tw^{*}_{\gplanar} \end{matrix*} \ \preceq \ \apex^{*}_{\gplanar} \ \preceq \ \sctw \ \preceq \ \tw.
\label{left_part}
\end{align}

Note that $\svtw$ and $\tw^{*}_{\gplanar}$ are incomparable parameters.
Moreover, the corresponding minor-parametric obstructions are ordered as follows.
\begin{align}
\nonumber\big\{ \{K_{5}, K_{3,3}\} \big\} \ &\leqslant_{\mathsf{m}}^{**} \ \big\{ \{\Ocal_{\mathsf{projective}} \cup \Ocal_{\mathsf{linkless}}\} \big\}\\
&\leqslant_{\mathsf{m}}^{**} \ \big\{ \obs_{\leqslant_{\mathsf{m}}}(\Gcal_{\mathsf{toroidal}}), \Ocal_{\mathsf{projective}}, \obs_{\leqslant_{\mathsf{m}}}(\shallowvortexminors) \big\}\\[3pt]
\nonumber&\leqslant_{\mathsf{m}}^{**} \begin{matrix*} \big\{ \obs_{\leqslant_{\mathsf{m}}}(\shallowvortexminors) \big\} \\[5pt] \big\{ \obs_{\leqslant_{\mathsf{m}}}(\Gcal_{\mathsf{toroidal}}), \Ocal_{\mathsf{projective}} \big\} \end{matrix*} \ \leqslant_{\mathsf{m}}^{**} \ \big\{ \emptyset \big\} \ \leqslant_{\mathsf{m}}^{**} \ \emptyset.
\end{align}

\subsection{Elimination distance parameters}

In this subsection, we turn our focus to graph parameters defined through various vertex elimination procedures.
These parameters are used to quantify the ``distance'' a given graph has from belonging to a specific graph class, with the measure of this distance determined by the vertex elimination process described in each definition.

To begin, we present a general framework that encompasses many such parameters under a unified definition.

\subsubsection{Variants of elimination}\label{evjvhaiendndodbabsed}

Consider an integer $c \in [0, 2].$
A \emph{$c$-connected component} of a graph is a maximal set $C \subseteq V(G)$ of vertices such that $G[C]$ is \emph{$c$-connected}, that is it contains at least $c$ internally vertex-disjoint paths between every pair of vertices of $C.$
Note that all graphs are $0$-connected.
A subgraph $B$ of $G$ is a \emph{$c$-block} of $G$ if it is a $c$-connected component or an isolated vertex of $G$ or, in case $c = 2$, it  
is a bridge\footnote{A \emph{bridge} of a graph $G$ is a subgraph  of $G$ on two vertices and one edge, whose removal increases the number of $1$-connected components of $G$.} of $G$.
Let $c$-$\mathsf{blocks}(G)$ denote the set of $c$-blocks of $G.$
Utilizing these notions, we can define the \textsl{$c$-elimination distance} to a graph class.

Let $c \in [0, 2]$ and consider a minor-closed graph class $\Hcal$ where each graph in $\obs_{\leqslant_{\mathsf{m}}}(\Hcal)$ is $c$-connected.
We define the \emph{$c$-elimination distance} of a graph $G$ to $\Hcal$, which we denote by $(c, \Hcal)$-$\td(G),$ recursively as follows.
If $G \in \Hcal$, then $(c, \Hcal)$-$\td(G) = 0$, otherwise $(c, \Hcal)$-$\td(G) \leq k$ if there exists a vertex $x \in V(G)$ such that, for every $c$-block $B$ of $G - x$, $(c, \Hcal)$-$\td(B) \leq k-1$.
In other words,
\begin{align}
(c, \Hcal)\text{-}\td(G) \ \coloneqq \ \begin{cases}
                                       0,&\text{if }G \in \Hcal\\
                                       1 + \min\{ (c, \Hcal)\text{-}\td(G - v) \mid v \in V(G) \}, &\text{if }G\text{ is }c\text{-connected}\\
                                       \max\{ (c, \Hcal)\text{-}\td(H) \mid H \in c\text{-}\mathsf{blocks}(G) \}, &\text{otherwise}
                                       \end{cases}\label{c_htd_parameter}
\end{align}

It follows by definition that $(2,\Hcal)\text{-}\td \preceq (1,\Hcal)\text{-}\td \preceq (0,\Hcal)\text{-}\td$.
Depending on the choice of $c$ and $\Hcal$, 
$(c,\Hcal)\text{-}\td$ can express several graph modification parameters.
Note that $(c,\Hcal)\text{-}\td$ expresses a graph modification modulator/target scheme where the eliminated vertices are the modulator and the class $\Hcal$ is the target graph class.

Observe that $(0,\Hcal)\text{-}\td$ corresponds to the parameter known as the \emph{$\Hcal$-apex number}, denoted by $\apex_{\Hcal}$, of a graph $G$ that is the minimum size of a set $S \subseteq V(G)$ of vertices to remove such that $G - S \in \Hcal.$
\begin{align}
\apex_{\Hcal} \ &\coloneqq \ (0,\Hcal)\text{-}\td.\label{apex_parameter}
\end{align}

Additionally, $(1,\Hcal)\text{-}\td$ defines the \emph{elimination distance} of a graph $G$ to $\Hcal$, 
denoted by $\Hcal\text{-}\td(G),$ which was introduced by Bulian and Dawar in \cite{BulianD16graph, BulianD17Fixed}.
\begin{align}
\Hcal\text{-}\td \ &\coloneqq \ (1,\Hcal)\text{-}\td.\label{ed_parameter}
\end{align}

In addition, $(2,\Hcal)\text{-}\td$ defines the \emph{block elimination distance} of a graph $G$ to $\Hcal$, denoted by $\Hcal\text{-}\bed(G)$, which was introduced by Diner, Giannopoulou, Sau, and Stamoulis in \cite{DinerGST22Block}.
\begin{align}
\Hcal\text{-}\bed \ &\coloneqq \ (2,\Hcal)\text{-}\td.\label{bed_parameter}
\end{align}

Notice that both vertex cover and treedepth which we discussed in \autoref{wadfdjgjingngup} and \autoref{treedepth_par} respectively, form special cases of $(c, \Hcal)\text{-}\td$ as $\vc = (0, \excl_{\leqslant_{\mathsf{m}}}(K_2))\text{-}\td$ and $\td = (1, \excl_{\leqslant_{\mathsf{m}}}(K_2))\text{-}\td$.

\subsubsection{Obstructions for apex parameters}
\label{appe4jnfbvjpsjdpers}

Consider a proper minor-closed class $\Hcal$.
We define the \emph{$\Hcal$-barrier number} of a graph $G$ as follows.
\begin{align*}
\mathsf{barrier}_{\Hcal}(G) \ \coloneqq \ \max\{ k \in \Nbbb \mid k \cdot Z \leqslant_{\mathsf{m}} G\text{ and }Z \in \obs_{\leqslant_{\mathsf{m}}}(\Hcal) \}.
\end{align*}

Note that, for every graph $G \in \gall,$ $\mathsf{barrier}_{\Hcal}(G) \leq \mathsf{apex}_{\Hcal}(G),$ i.e., $\mathsf{barrier}_{\Hcal} \preceq \apex_{\Hcal}.$
We say that a proper minor-closed class $\Hcal$ has the \emph{Erd\H{o}s-P{\'o}sa property} if there exists a function $f \colon \mathbb{N} \to \mathbb{N}$ such that $\mathsf{apex}_{\Hcal}(G) \leq f(\mathsf{barrier}_{\Hcal}(G)),$ i.e., if $\mathsf{apex}_{\Hcal} \preceq \mathsf{barrier}_{\Hcal}.$
As a consequence of the celebrated Grid Theorem of Robertson and Seymour \cite{RobertsonS86GMV} the following statement holds.

\begin{proposition}[\!\cite{RobertsonS86GMV, CamesHJR19Atight}]\label{prop_erdos_posa_gap} 
A proper minor-closed class $\Hcal$ has the Erd\H{o}s-P{\'o}sa property if and only if $\obs_{\leqslant_{\mathsf{m}}}(\Hcal)$ contains a planar graph. {Moreover, if $
\Hcal$ has the Erdős-Pósa property, then  the gap function  belongs to $\Ocal(k \log k)$.} 
\end{proposition}

Now, given a graph $Z$ we define the \emph{connectivization} of $Z$, denoted by $\conn(Z),$ as the set of all edge-minimal connected graphs that contain $Z$ as a subgraph.
Clearly, if $Z$ is connected then $\conn(Z) = \{Z\}$.
Given a finite set $\Zcal$ of graphs, we define its \emph{minor-connectivization} as the set $\conn(\Zcal) \coloneqq \mathsf{min}(\cupall\{ \conn(Z) \mid Z \in \Zcal \})$ where $\mathsf{min}(\Zcal)$ denotes the set of minor-minimal graphs in $\Zcal.$

Given a proper minor-closed class $\Hcal$ we define its \emph{connectivity closure} to be the class
\begin{eqnarray}
\mathbf{C}(\Hcal) &  \coloneqq &  \{ G \mid \text{every connected component of }G\text{ belongs to }\Hcal\}.\label{conn_clo}
\end{eqnarray}

The minor-obstruction set of $\mathbf{C}(\Hcal)$ was described by Bulian and Dawar \cite{BulianD17Fixed} as follows.

\begin{proposition}[\!\cite{BulianD17Fixed}]\label{obs_conn}\label{b_d_th_conn}
For every proper minor-closed class $\Hcal$ it holds that
\begin{align*}
\obs_{\leqslant_{\mathsf{m}}}(\mathbf{C}(\Hcal)) \ = \ \conn(\obs_{\leqslant_{\mathsf{m}}}(\Hcal)).
\end{align*}
\end{proposition}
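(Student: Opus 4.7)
The plan is to establish the equality by separately proving the two inclusions. Both directions share a common reduction template: if some single minor operation on the candidate graph $G$ preserves a witness that $G$ has a graph of $\obs_{\leqslant_{\mathsf{m}}}(\Hcal)$ as a minor, then minor-minimality of $G$ in the relevant family forces a contradiction. As a preliminary, observe that every $G \in \obs_{\leqslant_{\mathsf{m}}}(\mathbf{C}(\Hcal))$ must be connected, for otherwise some connected component of $G$ is not in $\Hcal$ and is itself a proper minor of $G$ outside $\mathbf{C}(\Hcal)$, violating minor-minimality. In particular such a $G$ satisfies $G \notin \Hcal$, so some $Z \in \obs_{\leqslant_{\mathsf{m}}}(\Hcal)$ is a minor of $G$.

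For the inclusion $\obs_{\leqslant_{\mathsf{m}}}(\mathbf{C}(\Hcal)) \subseteq \conn(\obs_{\leqslant_{\mathsf{m}}}(\Hcal))$, I would fix such $G$ and $Z$, take any minor model $(B_v)_{v \in V(Z)}$ of $Z$ in $G$, and apply the template in three successive passes. First, if some branch set $B_v$ contains an edge $e$, then $G/e$ is a connected proper minor of $G$ that still admits $Z$ as a minor; by obstruction-minimality $G/e \in \mathbf{C}(\Hcal)$ and hence $G/e \in \Hcal$, contradicting $Z \notin \Hcal$. This forces every $|B_v|=1$, so $Z$ becomes a subgraph of $G$ after identifying branch sets with vertices. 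Next, if there exists a ``Steiner'' vertex $x \in V(G) \setminus V(Z)$ and a neighbour $y$ of $x$ in $G$, contracting the edge $xy$ preserves the singleton minor model, and the same argument forces $V(G) = V(Z)$. Finally, for every edge $e \in E(G) \setminus E(Z)$, $G-e$ is a proper minor containing $Z$ as a subgraph, and if $G - e$ were connected it would again yield a contradiction; hence every such edge is a bridge of $G$. Applying the template once more to an edge lying on a cycle of the auxiliary ``component graph'' $H$ whose vertices are the connected components $Z_1, \ldots, Z_k$ of $Z$ and whose edges are the elements of $E(G) \setminus E(Z)$ rules out cycles in $H$, while connectivity of $G$ forces $H$ to be connected. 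Thus $H$ is a spanning tree on $k$ nodes, i.e., $G \in \conn(Z) \subseteq \cupall\{\conn(Z') \mid Z' \in \obs_{\leqslant_{\mathsf{m}}}(\Hcal)\}$. Minor-minimality of $G$ in this union is automatic: a proper minor of $G$ lying in some $\conn(Z')$ would be connected and contain $Z' \notin \Hcal$ as a subgraph, hence lie outside $\mathbf{C}(\Hcal)$, against the obstruction-minimality of $G$.

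For the reverse inclusion, I would take $G \in \conn(\obs_{\leqslant_{\mathsf{m}}}(\Hcal))$, so that $G \in \conn(Z)$ for some $Z \in \obs_{\leqslant_{\mathsf{m}}}(\Hcal)$ and $G$ is $\leqslant_{\mathsf{m}}$-minimal in $\cupall\{\conn(Z') \mid Z' \in \obs_{\leqslant_{\mathsf{m}}}(\Hcal)\}$. The non-membership $G \notin \mathbf{C}(\Hcal)$ is immediate from the connectedness of $G$ together with $Z \leqslant_{\mathsf{m}} G$ and $Z \notin \Hcal$. For obstruction-minimality, I would suppose towards contradiction that a proper minor $G'$ of $G$ has a component $C \notin \Hcal$, pick $Z' \in \obs_{\leqslant_{\mathsf{m}}}(\Hcal)$ with $Z' \leqslant_{\mathsf{m}} C$, contract the branch sets of the corresponding minor model to obtain a connected graph $C^{\star}$ on vertex set $V(Z')$ containing $Z'$ as a subgraph, and then greedily delete non-bridge edges of $E(C^{\star}) \setminus E(Z')$ to reach a graph $D \in \conn(Z')$. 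Since $D$ is a minor of $C \subseteq G'$ and $G'$ is a proper minor of $G$, $D$ is a proper minor of $G$ lying in the same union, contradicting the $\leqslant_{\mathsf{m}}$-minimality of $G$.

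The step I expect to be the most delicate is the last pass of the forward direction, where one must show that the extra edges $E(G) \setminus E(Z)$ form exactly a spanning tree of the coarse ``component graph'' $H$. The bridge dichotomy itself follows from a routine application of the template, but correctly bookkeeping two different levels — the fine level of vertices in $V(Z)$ and the coarse level of components of $Z$ — and simultaneously ruling out both extra edges internal to a single component $V(Z_i)$ and cycles in $H$ requires some care. Once that bookkeeping is in place, the remaining arguments amount to a disciplined reuse of the single contraction/deletion template.
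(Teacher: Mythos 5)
The paper states this proposition as a citation to Bulian and Dawar without an in-text proof, so there is no argument of its own to compare against; your proof is a correct, self-contained derivation of the cited fact, following the natural two-inclusion strategy. The forward direction is tight: the three contraction/deletion passes do establish that a minor-minimal obstruction $G$ of $\mathbf{C}(\Hcal)$ is connected, has $V(G)=V(Z)$ for some $Z\in\obs_{\leqslant_{\mathsf{m}}}(\Hcal)$ with $Z$ a spanning subgraph, and that every edge of $E(G)\setminus E(Z)$ is a bridge (which already rules out both edges internal to a component of $Z$ and cycles in the component graph $H$, so your extra pass on $H$ is redundant rather than incorrect); together with connectivity this forces $H$ to be a spanning tree, i.e.\ $G\in\conn(Z)$, and your argument for minor-minimality of $G$ inside the union is right.

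The one place where you should be a little more careful is in the reverse inclusion, at the step ``contract the branch sets of the corresponding minor model to obtain a connected graph $C^{\star}$ on vertex set $V(Z')$.'' An arbitrary minor model of $Z'$ in the connected component $C$ need not have branch sets covering $V(C)$, so contracting them does not immediately produce a graph on $V(Z')$. What you are implicitly using is that, because $C$ is connected, any minor model of $Z'$ in $C$ can be extended so that its branch sets partition $V(C)$ (repeatedly absorb, say by BFS, each vertex of $V(C)\setminus\bigcup_{v}B_{v}$ that is adjacent to some branch set); only with this enlarged model does contracting each branch set yield a connected $C^{\star}$ with $V(C^{\star})=V(Z')$ and $Z'\subseteq C^{\star}$. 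Once that routine point is made explicit, greedily deleting non-bridge edges of $E(C^{\star})\setminus E(Z')$ lands you in $\conn(Z')$, and the contradiction with the $\leqslant_{\mathsf{m}}$-minimality of $G$ in $\cupall\{\conn(Z'')\mid Z''\in\obs_{\leqslant_{\mathsf{m}}}(\Hcal)\}$ goes through exactly as you wrote.
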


Finally, given a graph $Z$ we define the following minor-parametric graph.
\begin{align*}
\mathscr{H}_{Z} \ \coloneqq \ \langle k \cdot Z \rangle_{k\in\mathbb{N}}.
\end{align*}

Also, given a finite set of graphs $\Zcal$ we define the following minor-parametric family.
\begin{align*}
\mathfrak{H}_{\Zcal} \ \coloneqq \ \big\{ \mathscr{H}_{Z} \mid Z \in \Zcal \big\}.
\end{align*}

The following theorem is now a simple consequence of \autoref{prop_erdos_posa_gap}.

\begin{theorem}\label{prop_erdos_posa_planar}
Let $\Hcal$ be a proper minor-closed class whose minor-obstruction set contains a planar graph.
Then, the set $\mathfrak{H}_{\obs_{\leqslant_{\mathsf{m}}}(\Hcal)}$ is a minor-universal obstruction for $\apex_{\Hcal}$ with a polynomial gap function which belongs to $\Ocal(k \log k)$.
Moreover,
\begin{align*}
\cobs_{\leqslant_{\mathsf{m}}}(\apex_{\Hcal}) \ &= \ \big\{\closure{\leqslant_{\mathsf{m}}}{\mathscr{H}} \mid \mathscr{H} \in \mathfrak{H}_{\obs_{\leqslant_{\mathsf{m}}}(\Hcal)} \big\}\text{ and}\\
\pobs_{\leqslant_{\mathsf{m}}}(\apex_{\Hcal}) \ &= \ \big\{\conn(\obs_{\leqslant_{\mathsf{m}}}(\closure{\leqslant_{\mathsf{m}}}{\{Z\}})) \mid Z \in \obs_{\leqslant}(\Hcal) \big\}.
\end{align*}
\end{theorem}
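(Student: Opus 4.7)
The plan is to establish $\apex_{\Hcal} \sim_{\mathsf{P}} \p_{\mathfrak{H}_{\obs_{\leqslant_{\mathsf{m}}}(\Hcal)}}$ with the stated $\Ocal(k \log k)$ gap and then to read off the class and parametric obstructions via \autoref{cobs_uobs} and \autoref{b_d_th_conn}. First I would unfold the definitions: by construction, for every graph $G$, $\p_{\mathfrak{H}_{\obs_{\leqslant_{\mathsf{m}}}(\Hcal)}}(G) = \sup_{Z \in \obs_{\leqslant_{\mathsf{m}}}(\Hcal)} \p_{\mathscr{H}_{Z}}(G)$, and $\p_{\mathscr{H}_{Z}}(G)$ is, up to an additive constant, the largest $k$ with $k \cdot Z \leqslant_{\mathsf{m}} G$. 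Hence $\p_{\mathfrak{H}_{\obs_{\leqslant_{\mathsf{m}}}(\Hcal)}} \sim \mathsf{barrier}_{\Hcal}$, and it suffices to show $\apex_{\Hcal} \sim_{\mathsf{P}} \mathsf{barrier}_{\Hcal}$ with the advertised gap.

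The easy direction $\mathsf{barrier}_{\Hcal} \leq \apex_{\Hcal}$ is immediate: if $k \cdot Z \leqslant_{\mathsf{m}} G$ with $Z \in \obs_{\leqslant_{\mathsf{m}}}(\Hcal)$, then any vertex set $S$ with $|S| < k$ misses at least one of the $k$ pairwise vertex-disjoint branch sets realising this minor model, so $Z \leqslant_{\mathsf{m}} G - S$ and thus $G - S \notin \Hcal$. The reverse direction $\apex_{\Hcal} \preceq \mathsf{barrier}_{\Hcal}$ with gap in $\Ocal(k \log k)$ is exactly the ``if'' part of \autoref{prop_erdos_posa_gap}, applicable here because, by hypothesis, $\obs_{\leqslant_{\mathsf{m}}}(\Hcal)$ contains a planar graph. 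This yields the desired polynomial equivalence $\apex_{\Hcal} \sim_{\mathsf{P}} \p_{\mathfrak{H}_{\obs_{\leqslant_{\mathsf{m}}}(\Hcal)}}$, and hence that $\mathfrak{H}_{\obs_{\leqslant_{\mathsf{m}}}(\Hcal)}$ is a minor-universal obstruction for $\apex_\Hcal$.

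For the ``moreover'' part, I would invoke \autoref{cobs_uobs}: since $\mathfrak{H}_{\obs_{\leqslant_{\mathsf{m}}}(\Hcal)}$ is a minor-universal obstruction for $\apex_{\Hcal}$, the set $\cobs_{\leqslant_{\mathsf{m}}}(\apex_{\Hcal})$ is in bijection with $\mathfrak{H}_{\obs_{\leqslant_{\mathsf{m}}}(\Hcal)}$ via $\mathscr{H}_{Z} \mapsto \closure{\leqslant_{\mathsf{m}}}{\mathscr{H}_{Z}}$, yielding the claimed description of $\cobs_{\leqslant_{\mathsf{m}}}(\apex_{\Hcal})$. To derive the parametric obstructions, I would observe that a graph is a minor of $k \cdot Z$ for some $k$ if and only if each of its connected components is a minor of $Z$; this amounts to the identity $\closure{\leqslant_{\mathsf{m}}}{\mathscr{H}_{Z}} = \mathbf{C}(\closure{\leqslant_{\mathsf{m}}}{\{Z\}})$, after which \autoref{b_d_th_conn} delivers the formula for $\pobs_{\leqslant_{\mathsf{m}}}(\apex_{\Hcal})$.

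The main conceptual hurdle is \autoref{prop_erdos_posa_gap} itself, which is imported from Robertson--Seymour together with the recent polynomial bound. The remaining technical point is to verify that $\mathfrak{H}_{\obs_{\leqslant_{\mathsf{m}}}(\Hcal)}$ is genuinely a $\leqslant_{\mathsf{m}}$-parametric family in the sense of \autoref{diidnidialdedfs}, i.e.\ that its elements are pairwise $\lesssim$-incomparable; this follows from the minor-antichain property of $\obs_{\leqslant_{\mathsf{m}}}(\Hcal)$, with a small additional argument needed whenever an obstruction happens to be disconnected so that its components could spread across several copies of another obstruction.
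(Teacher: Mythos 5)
Your proof follows essentially the same route as the paper: identify $\p_{\mathfrak{H}_{\obs_{\leqslant_{\mathsf{m}}}(\Hcal)}}$ with $\barrier_{\Hcal}$, invoke \autoref{prop_erdos_posa_gap} for the equivalence with $\apex_{\Hcal}$ and the $\Ocal(k\log k)$ gap, then read off $\cobs$ via \autoref{cobs_uobs} and $\pobs$ via the identity $\closure{\leqslant_{\mathsf{m}}}{\mathscr{H}_{Z}} = \mathbf{C}(\closure{\leqslant_{\mathsf{m}}}{\{Z\}})$ together with \autoref{obs_conn}. The paper's own proof is a terser version of exactly this.

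One point deserves a flag, and you were right to raise it: your claim that the $\lesssim$-antichain property of $\mathfrak{H}_{\obs_{\leqslant_{\mathsf{m}}}(\Hcal)}$ ``follows from the minor-antichain property of $\obs_{\leqslant_{\mathsf{m}}}(\Hcal)$, with a small additional argument'' when some obstruction is disconnected is not actually true, and the ``small additional argument'' does not exist in general. For instance, take $\Hcal = \excl_{\leqslant_{\mathsf{m}}}(\{2\cdot K_3,\,K_4\})$, so $\obs_{\leqslant_{\mathsf{m}}}(\Hcal) = \{2\cdot K_3,\,K_4\}$ (a minor-antichain, both planar). Then $\mathscr{H}_{2\cdot K_3} \lesssim \mathscr{H}_{K_4}$ strictly, since $2k\cdot K_3 \leqslant_{\mathsf{m}} 2k\cdot K_4$ while $k\cdot K_4 \not\leqslant_{\mathsf{m}} m\cdot K_3$ for any $m$; correspondingly $\closure{\leqslant_{\mathsf{m}}}{\mathscr{H}_{2\cdot K_3}} \subsetneq \closure{\leqslant_{\mathsf{m}}}{\mathscr{H}_{K_4}}$ and both are in $\UNB_{\leqslant_{\mathsf{m}}}(\apex_{\Hcal})$, so the larger one is not $\subseteq$-minimal. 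The statement therefore needs a $\lesssim$-minimization of $\mathfrak{H}_{\obs_{\leqslant_{\mathsf{m}}}(\Hcal)}$ (exactly as \autoref{more_erdos_posa_planar} builds in) to be correct when $\obs_{\leqslant_{\mathsf{m}}}(\Hcal)$ contains disconnected graphs. The paper's proof silently skips this; your instinct that something is needed there was correct, but the fix is a minimization of the parametric family, not an argument that the antichain property holds automatically.
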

\begin{proof}
Let $\Zcal = \obs_{\leqslant_{\mathsf{m}}}(\Hcal).$
First and foremost, it is easy to see that $\barrier_{\Hcal} = \p_{\mathfrak{H}_{\Zcal}}.$
Then, the fact that $\mathfrak{H}_{\Zcal}$ is a minor-universal obstruction for $\apex_{\Hcal}$ with a gap function that belongs to $\Ocal(k \log k)$, follows directly from \autoref{prop_erdos_posa_gap}.
By \autoref{cobs_uobs}, $\cobs_{\leqslant_{\mathsf{m}}}(\apex_{\Hcal}) = \big\{ \closure{\leqslant_{\mathsf{m}}}{\mathscr{H}} \mid \mathscr{H} \in \mathfrak{H}_{\Zcal} \big\}$.
Moreover, it is easy to observe that for every $\mathscr{H}_{Z} \in \mathfrak{H}_{\Zcal}$, $\closure{\leqslant_{\mathsf{m}}}{\mathscr{H}_{Z}}$ contains all graphs whose connected components are minors of $Z$, i.e. it is equal to the class $\mathbf{C}(\closure{\leqslant_{\mathsf{m}}}{\{Z\}})$.
Then, by definition and \autoref{obs_conn}, $\pobs_{\leqslant_{\mathsf{m}}}(\apex_{\Hcal}) = \big\{ \conn(\obs_{\leqslant_{\mathsf{m}}}(\closure{\leqslant_{\mathsf{m}}}{\{Z\}})) \mid Z\in\Zcal \big\}$.
\end{proof}

We continue this subsection by presenting several examples of instantiations of the $c$-elimination distance to $\Hcal$.
For each example, we provide their corresponding minor-universal obstructions by applying \autoref{prop_erdos_posa_planar}.

\paragraph{Feedback vertex set.}

The \emph{feedback vertex set} parameter, denoted by $\fvs$, is defined so that for every graph $G$, $\fvs(G)$ is the minimum size of a set $X \subseteq V(G)$ such that $G-X$ is acyclic.
In other words,
\begin{align}
\fvs \ &\coloneqq \ \apex_{\gforest}.
\end{align}

We define the minor-parametric graph $\mathscr{K}^{(K_3)} = \langle k \cdot K_3 \rangle_{k \geq 1}$.
Let $\minorsoftriangles = \closure{\leqslant_{\mathsf{m}}}{\mathscr{K}^{(K_3)}}$ be the class of all graphs whose connected components have at most three vertices.
Note that, \autoref{obs_conn} implies that $\obs_{\leqslant_{\mathsf{m}}}(\minorsoftriangles) = \conn(\{4 \cdot K_1\}) = \{P_4, K_{1,3} \}$.
Applying \autoref{prop_erdos_posa_planar}, we have the following proposition.

\begin{proposition}
The set $\{ \mathscr{K}^{(K_3)} \}$ is a minor-universal obstruction for $\fvs$ with a gap function which belongs to $\Ocal(k \log k)$.
Moreover, $\cobs_{\leqslant_{\mathsf{m}}}(\fvs) = \{ \minorsoftriangles \}$ and $\pobs_{\leqslant_{\mathsf{m}}}(\fvs) = \big\{ \{P_4, K_{1,3}\} \big\}$.
\end{proposition}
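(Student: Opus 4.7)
The plan is to apply \autoref{prop_erdos_posa_planar} to the class $\Hcal = \gforest$ and then unpack each of the three conclusions. First I would verify the hypotheses of that theorem. Since a graph is a forest if and only if it contains no cycle, and any cycle has $K_3$ as a minor, we have $\obs_{\leqslant_{\mathsf{m}}}(\gforest) = \{K_3\}$. The class $\gforest$ is clearly proper and minor-closed, and $K_3$ is planar, so the theorem applies.

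Next, I would identify the resulting universal obstruction. By definition, $\mathfrak{H}_{\{K_3\}} = \{ \mathscr{H}_{K_3} \}$, where $\mathscr{H}_{K_3} = \langle k \cdot K_3 \rangle_{k \geq 1}$, which is exactly $\mathscr{K}^{(K_3)}$. Since $\fvs = \apex_{\gforest}$ by definition, \autoref{prop_erdos_posa_planar} immediately yields that $\{\mathscr{K}^{(K_3)}\}$ is a minor-universal obstruction for $\fvs$ with gap function in $\Ocal(k \log k)$, and $\cobs_{\leqslant_{\mathsf{m}}}(\fvs) = \{\closure{\leqslant_{\mathsf{m}}}{\mathscr{K}^{(K_3)}}\}$. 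By the definition of $\minorsoftriangles$ given right before the proposition, this set equals $\{\minorsoftriangles\}$.

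The only step with any real content is computing $\pobs_{\leqslant_{\mathsf{m}}}(\fvs)$, which according to the theorem equals $\{\conn(\obs_{\leqslant_{\mathsf{m}}}(\closure{\leqslant_{\mathsf{m}}}{\{K_3\}}))\}$. I would first argue that $\closure{\leqslant_{\mathsf{m}}}{\{K_3\}}$ is exactly the class of simple graphs on at most three vertices: every such graph is a minor of $K_3$, and conversely any minor of $K_3$ has at most three vertices. A minor-minimal graph outside this class must have exactly four vertices (deleting a vertex should drop us into the class) and must have no edges (deleting any edge preserves the vertex count, so if any edge were present the graph would not be minimal). Hence $\obs_{\leqslant_{\mathsf{m}}}(\closure{\leqslant_{\mathsf{m}}}{\{K_3\}}) = \{4 \cdot K_1\}$.

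Finally, I would compute the connectivization of $\{4 \cdot K_1\}$: these are the edge-minimal connected graphs containing $4 \cdot K_1$ as a subgraph, which are precisely the connected graphs on four vertices with exactly three edges, i.e., the two trees on four vertices, $P_4$ and $K_{1,3}$. This gives $\pobs_{\leqslant_{\mathsf{m}}}(\fvs) = \big\{\{P_4, K_{1,3}\}\big\}$, completing the proof. The main subtlety, such as it is, lies in this last identification of the connectivization; everything else is bookkeeping downstream of \autoref{prop_erdos_posa_planar}.
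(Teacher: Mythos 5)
Your proof is correct and follows essentially the same route as the paper: apply \autoref{prop_erdos_posa_planar} with $\Hcal = \gforest$, observe $\obs_{\leqslant_{\mathsf{m}}}(\gforest) = \{K_3\}$, and compute the parametric obstruction by identifying $\obs_{\leqslant_{\mathsf{m}}}(\closure{\leqslant_{\mathsf{m}}}{\{K_3\}}) = \{4\cdot K_1\}$ and then its connectivization $\{P_4, K_{1,3}\}$. The paper is a bit terser, invoking \autoref{obs_conn} directly to get $\obs_{\leqslant_{\mathsf{m}}}(\minorsoftriangles) = \conn(\{4\cdot K_1\})$, but this is the same calculation you carried out via the $\pobs$ formula of \autoref{prop_erdos_posa_planar}.
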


\paragraph{Apex to outerplanarity.}

Another example that is illustrative of the applicability of \autoref{prop_erdos_posa_planar} is the parameter defined as the minimum size of a set $X \subseteq V(G)$ such that $G - X$ is outerplanar.
This corresponds to the following instantiation of $\apex_{\Hcal}.$
\begin{align*}
\mathsf{apexouter} \ \coloneqq \ \apex_{\gouterplanar}.
\end{align*}

For a study of the minor-obstruction set of $\Gcal_{\mathsf{apexouter}, 1}$ we refer the reader to \cite{DingD16excluded}.
It is well-known that $\obs_{\leqslant_{\mathsf{m}}}(\gouterplanar) = \{ K_{4}, K_{2,3} \}$.
We define the minor-parametric graphs $\mathscr{K}^{(K_4)} \coloneqq \langle k\cdot K_4 \rangle_{k\geq 1}$ and $\mathscr{K}^{(K_{2,3})} \coloneqq \langle k \cdot K_{2,3} \rangle_{k \geq 1}$.
By definition, $\closure{\leqslant_{\mathsf{m}}}{\mathscr{K}^{(K_4)}}$ is the class of graphs whose connected components are minors of $K_4,$ i.e., have at most four vertices.
Note that, from \autoref{obs_conn}, $\obs_{\leqslant_{\mathsf{m}}}(\closure{\leqslant_{\mathsf{m}}}{\mathscr{K}^{(K_4)}}) = \conn(\{5 \cdot K_1\})$.
Therefore, $\obs_{\leqslant_{\mathsf{m}}}(\closure{\leqslant_{\mathsf{m}}}{\mathscr{K}^{(K_4)}}) = \{ P_{5}, K_{1,4}, K_{1,3}^{s} \}$ which are all non-isomorphic trees on five vertices.
See \autoref{apex_outerplanar_pobs_1} for an illustration.

\begin{figure}[htbp]
\centering
\includegraphics[width=0.25\linewidth]{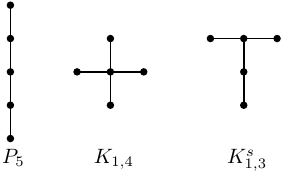}
\caption{\label{apex_outerplanar_pobs_1} The graphs $P_{5}$, $K_{1,4}$, and $K_{1,3}^s$ which are all non-isomorphic trees on five vertices.}
\end{figure}

Moreover, $\closure{\leqslant_{\mathsf{m}}}{\mathscr{K}^{(K_{2,3})}}$ is the class whose connected components are minors of $K_{2,3}$.
First of all, observe that $K_{1,4} \not\leqslant_{\mathsf{m}} K_{2,3}.$
Therefore, $\obs_{\leqslant_{\mathsf{m}}}(\closure{\leqslant_{\mathsf{m}}}{\mathscr{K}^{(K_{2,3})}}) = \{ G \in \conn(\{ 6 \cdot K_{1} \}) \mid K_{1,4} \not\leqslant_{\mathsf{m}} G \} \cup \{ K_{1,4}, K_{4}, C_{5}, K_{3}^{de}, K_{3}^{es} \}$ where $ \conn(\{ 6 \cdot K_{1} \}) = \{ P_{6}, Q_{2}, K_{1,5}, K_{1,4}^{\mathsf{s}}, K_{1,3}^{\mathsf{ds}}, K_{1,3}^{\mathsf{2s}} \}$ which are all non-isomorphic trees on six vertices and $\{ G \in \conn(\{ 6 \cdot K_{1} \}) \mid K_{1,4} \not\leqslant_{\mathsf{m}} G \} = \{ P_{6}, Q_{2}, K_{1,3}^{ds}, K_{1,3}^{2s} \}.$
See \autoref{apex_planar_pobs_1} for an illustration of $\conn(\{ 6 \cdot K_{1} \})$ and \autoref{apex_outerplanar_pobs_2} for an illustration of $\{ K_{4}, C_{5}, K_{3}^{de}, K_{3}^{es} \}$.

\begin{figure}[htbp]
\centering
\includegraphics[width=0.6\linewidth]{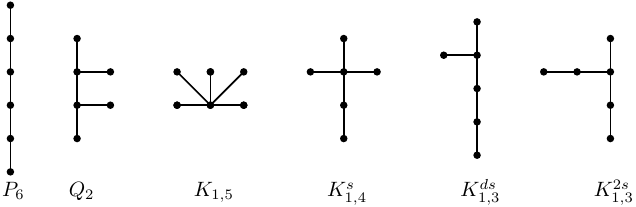}
\caption{\label{apex_planar_pobs_1}The graphs $P_{6}$, $Q_{2}$, $K_{1,5}$, $K_{1,4}^{s}$, $K_{1,3}^{ds}$, and $K_{1,3}^{2s}$ which are all non-isomorphic trees on six vertices.}
\end{figure}

\begin{figure}[htbp]
\centering
\includegraphics[width=0.5\linewidth]{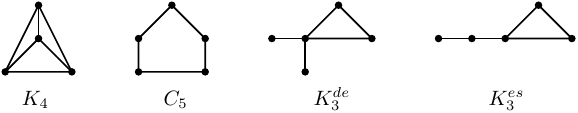}
\caption{\label{apex_outerplanar_pobs_2}The graphs $K_{4}$, $C_{5}$, $K_{3}^{de}$, and $K_{3}^{es}$.}
\end{figure}

Applying \autoref{prop_erdos_posa_planar} we have the following proposition.

\begin{theorem}
The set $\{\mathscr{K}^{(K_4)}, \mathscr{K}^{(K_{2,3})}\}$ is a minor-universal obstruction for $\mathsf{apexouter}$ with a gap function that belongs to $\Ocal(k \log k)$.
Moreover,
\begin{align*}
\cobs_{\leqslant_{\mathsf{m}}}(\mathsf{apexouter}) \ = \ \big\{ &\closure{\leqslant_{\mathsf{m}}}{\mathscr{K}^{(K_4)}}, \closure{\leqslant_{\mathsf{m}}}{\mathscr{K}^{(K_{2,3})}} \big\}\text{, and}\\
\pobs_{\leqslant_{\mathsf{m}}}(\mathsf{apexouter}) \ = \ \big\{ &\{P_{5}, K_{1,4}, K_{1,3}^s \},\\
&\{ P_{6}, Q_{2}, K_{1,3}^{ds}, K_{1,3}^{2s}, K_{1,4}, K_{4}, C_{5}, K_{3}^{de}, K_{3}^{es} \} \big\}.
\end{align*}
\end{theorem}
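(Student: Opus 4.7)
The plan is to apply \autoref{prop_erdos_posa_planar} to the minor-closed class $\gouterplanar$ and then to unfold, by direct combinatorial inspection, the two minor-obstruction sets that appear in the statement of that theorem.

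First, I would observe that $\gouterplanar$ is a proper minor-closed class with $\obs_{\leqslant_{\mathsf{m}}}(\gouterplanar)=\{K_{4},K_{2,3}\}$, and that both $K_{4}$ and $K_{2,3}$ are planar graphs. Hence the hypothesis of \autoref{prop_erdos_posa_planar} is satisfied. Invoking the theorem with $\Hcal=\gouterplanar$ immediately yields that $\mathfrak{H}_{\obs_{\leqslant_{\mathsf{m}}}(\gouterplanar)}=\{\mathscr{H}_{K_{4}},\mathscr{H}_{K_{2,3}}\}=\{\mathscr{K}^{(K_{4})},\mathscr{K}^{(K_{2,3})}\}$ is a minor-universal obstruction for $\apex_{\gouterplanar}=\mathsf{apexouter}$ with gap in $\Ocal(k\log k)$, and that
\[
\cobs_{\leqslant_{\mathsf{m}}}(\mathsf{apexouter}) \ = \ \big\{\closure{\leqslant_{\mathsf{m}}}{\mathscr{K}^{(K_{4})}},\ \closure{\leqslant_{\mathsf{m}}}{\mathscr{K}^{(K_{2,3})}}\big\}.
\]
It remains to identify the parametric obstruction, which according to \autoref{prop_erdos_posa_planar} equals $\big\{\conn(\obs_{\leqslant_{\mathsf{m}}}(\closure{\leqslant_{\mathsf{m}}}{\{K_{4}\}})),\ \conn(\obs_{\leqslant_{\mathsf{m}}}(\closure{\leqslant_{\mathsf{m}}}{\{K_{2,3}\}}))\big\}$.

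For the first component, I would argue that every graph on at most four vertices is a subgraph, hence a minor, of $K_{4}$, so $\closure{\leqslant_{\mathsf{m}}}{\{K_{4}\}}$ is precisely the class of graphs on at most four vertices; consequently $\obs_{\leqslant_{\mathsf{m}}}(\closure{\leqslant_{\mathsf{m}}}{\{K_{4}\}})=\{5\cdot K_{1}\}$. Taking the connectivization amounts to listing the edge-minimal connected graphs on five vertices, i.e. the trees on five vertices, which are exactly $P_{5}$, $K_{1,4}$ and $K_{1,3}^{s}$; this recovers the first set in the claimed $\pobs_{\leqslant_{\mathsf{m}}}$.

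The main obstacle is the second component, which requires a careful case analysis of $\obs_{\leqslant_{\mathsf{m}}}(\closure{\leqslant_{\mathsf{m}}}{\{K_{2,3}\}})$. Any graph on six or more vertices cannot be a minor of $K_{2,3}$, so the minor-minimal such obstructions on six vertices must be edge-minimal, i.e., trees on six vertices; a direct enumeration yields the six non-isomorphic such trees (see \autoref{apex_planar_pobs_1}), and among them only those avoiding $K_{1,4}$ as a minor are $\{P_{6},Q_{2},K_{1,3}^{ds},K_{1,3}^{2s}\}$, the remaining ones being dominated by $K_{1,4}$. For the obstructions on at most five vertices, I would verify, using the facts that $K_{2,3}$ is bipartite and has maximum degree three, that $K_{1,4}$, $K_{4}$, $C_{5}$, $K_{3}^{de}$ and $K_{3}^{es}$ are not minors of $K_{2,3}$ while each of their proper minors is; conversely a short check rules out any further minor-minimal candidate on at most five vertices. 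Putting these together gives $\obs_{\leqslant_{\mathsf{m}}}(\closure{\leqslant_{\mathsf{m}}}{\{K_{2,3}\}})=\{6\cdot K_{1},K_{1,4},K_{4},C_{5},K_{3}^{de},K_{3}^{es}\}$ (together with the six-vertex trees inherited from connectivizing $6\cdot K_{1}$). Applying \autoref{obs_conn} and then $\conn$, and finally taking the minor-minimal elements (which removes the trees dominated by $K_{1,4}$), yields the second set in the claimed $\pobs_{\leqslant_{\mathsf{m}}}$, completing the proof.
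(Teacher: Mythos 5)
Your proof follows the paper's route exactly: instantiate \autoref{prop_erdos_posa_planar} with $\Hcal=\gouterplanar$ (whose obstructions $K_4,K_{2,3}$ are both planar), and then compute $\conn(\obs_{\leqslant_{\mathsf{m}}}(\closure{\leqslant_{\mathsf{m}}}{\{Z\}}))$ for $Z\in\{K_4,K_{2,3}\}$ by direct enumeration, as the paper does in the preceding paragraphs. One small slip in presentation — the unique minor-minimal member of $\obs_{\leqslant_{\mathsf{m}}}(\closure{\leqslant_{\mathsf{m}}}{\{K_{2,3}\}})$ on six or more vertices is $6\cdot K_1$ itself, and the six-vertex trees arise only afterwards as $\conn(6\cdot K_1)$, rather than being ``the minor-minimal obstructions on six vertices'' — but your later sentences recover the correct set, so this does not affect the argument.
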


\paragraph{Apex to planarity.}

A consequence of the study in \cite{PaulPTW24Obstructions} on obstructions to Erd\H{o}s-P{\'o}sa dualities, is a minor-universal obstruction for $\apex_{\Hcal},$ for every proper minor-closed class $\Hcal,$ vastly generalizing \autoref{prop_erdos_posa_gap} and \autoref{prop_erdos_posa_planar}.

\medskip
A particularly interesting (and simple to present) instantiation of this result is to consider the parameter defined as the minimum size of a set $X \subseteq V(G)$ such that $G - X$ is planar.
This corresponds to the following instantiation of $\apex_{\Hcal}.$
\begin{align}
\mathsf{apexplanar} \ \coloneqq \ \apex_{\gplanar}.
\end{align}

The parameter $\mathsf{apexplanar}$ is known in the literature as the \textsl{planarizer number}.
The minor-obstruction set $\obs_{\leqslant_{\mathsf{m}}}(\Gcal_{\mathsf{apexplanar}, k})$ is unknown for every positive value of $k$ and its size is expected to grow rapidly as a function of $k$ (see \cite{Dinneen97} for  an exponential lower bound and \cite{SauST23apices} for a triply exponential upper bound).
The identification of $\obs_{\leqslant_{\mathsf{m}}}(\Gcal_{\mathsf{apexplanar}, k})$ is a non-trivial problem even for small values of $k.$
In particular, it has been studied extensively for the case where $k = 1$ in \cite{LiptonMMPRT16sixv, Mattman16forb, Yu06more}.
In this direction, Mattman and Pierce \cite{MattmanPierce2017} conjectured that $\obs_{\leqslant_{\mathsf{m}}}(\Gcal_{\mathsf{apexplanar}, k})$ contains the $Y \Delta Y$-families of $K_{n+5}$ and $K_{3^2,2^n}$ and provided evidence towards this.
Moreover, Jobson and Kézdy \cite{JobsonK21allm} identified \textsl{all} graphs in $\obs_{\leqslant_{\mathsf{m}}}(\Gcal_{\mathsf{apexplanar}, 1})$ of connectivity two.
In addition, they reported that $|\obs_{\leqslant_{\mathsf{m}}}(\Gcal_{\mathsf{apexplanar}, 1})| \geq 401.$

\medskip
Recall that, by the Kuratowski-Wagner theorem $\obs_{\leqslant_{\mathsf{m}}}(\gplanar) = \{ K_{5}, K_{3, 3} \}.$
We define the minor-parametric graphs $\mathscr{K}^{(K_{5})} \coloneqq \langle k \cdot K_{5} \rangle_{k \in \Nbbb}$ and $\mathscr{K}^{(K_{3, 3})} \coloneqq \langle k \cdot K_{3, 3} \rangle_{k \in \Nbbb}.$
By definition, $\closure{\leqslant_{\mathsf{m}}}{\mathscr{K}^{(K_5)}}$ is the class of graphs whose connected components are minors of $K_5,$ i.e., have at most five vertices.
Note that, from \autoref{obs_conn}, $\obs_{\leqslant_{\mathsf{m}}}(\closure{\leqslant_{\mathsf{m}}}{\mathscr{K}^{(K_5)}}) = \conn(\{6 \cdot K_1\})$,
therefore $\obs_{\leqslant_{\mathsf{m}}}(\closure{\leqslant_{\mathsf{m}}}{\mathscr{K}^{(K_5)}}) = \{ P_{6}, Q_{2}, K_{1,5}, K_{1,4}^{\mathsf{s}}, K_{1,3}^{\mathsf{ds}}, K_{1,3}^{\mathsf{2s}} \}$ which are all non-isomorphic trees on six vertices.
See \autoref{apex_planar_pobs_1} for an illustration.
Moreover, $\closure{\leqslant_{\mathsf{m}}}{\mathscr{K}^{(K_{3,3})}}$ is the class whose connected components are minors of $K_{3,3}$.
In this case identifying the minor-obstruction set of $\closure{\leqslant_{\mathsf{m}}}{\mathscr{K}^{(K_{3,3})}}$ is more tricky.
With similar tricks as in the case of $\closure{\leqslant_{\mathsf{m}}}{\mathscr{K}^{(K_{2,3})}}$ one may attempt to identify this set, however this is beyond the scope of this work.
Now, recall the definition of the minor-parametric graphs $\mathscr{D}^{(1, 0)}$ (see \autoref{fig_torus_grid}) and $\mathscr{D}^{(0, 1)}$ (see \autoref{fig_projective_grid}).
A consequence of the general result in \cite{PaulPTW24Obstructions} is the following.

\begin{proposition}
The set $\{ \mathscr{D}^{(1, 0)}, \mathscr{D}^{(0, 1)}, \mathscr{K}^{(K_{5})}, \mathscr{K}^{(K_{3, 3})} \}$ is a minor-universal obstruction for $\mathsf{apexplanar}$ with single exponential gap.
Moreover,
\begin{align*}
\cobs_{\leqslant_{\mathsf{m}}}(\mathsf{apexplanar}) \ = \ \big\{ &\Gcal_{\mathsf{toroidal}}, \Gcal_{\mathsf{projective}}, \closure{\leqslant_{\mathsf{m}}}{\mathscr{K}^{(K_5)}}, \closure{\leqslant_{\mathsf{m}}}{\mathscr{K}^{(K_{3,3})}} \big\}\text{, and}\\
\pobs_{\leqslant_{\mathsf{m}}}(\mathsf{apexplanar}) \ = \ \big\{ &\obs_{\leqslant_{\mathsf{m}}}(\Gcal_{\mathsf{toroidal}}),\\
&\Ocal_{\mathsf{projective}},\\
&\{ P_{6}, Q_{2}, K_{1,5}, K_{1,4}^{\mathsf{s}}, K_{1,3}^{\mathsf{ds}}, K_{1,3}^{\mathsf{2s}} \},\\
&\obs_{\leqslant_{\mathsf{m}}}(\closure{\leqslant_{\mathsf{m}}}{\mathscr{K}^{(K_{3,3})}}) \big\}.
\end{align*}
\end{proposition}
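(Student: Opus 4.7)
The plan is to specialize the general universal-obstruction theorem for $\apex_{\Hcal}$ from \cite{PaulPTW24Obstructions} to the case $\Hcal = \gplanar$. Since $\obs_{\leqslant_{\mathsf{m}}}(\gplanar) = \{K_5, K_{3,3}\}$ contains no planar graph, \autoref{prop_erdos_posa_planar} does not apply: the apex number to planarity cannot be bounded purely in terms of disjoint copies of $K_5$ and $K_{3,3}$. The general theorem addresses this by supplementing the local family $\mathfrak{H}_{\{K_5,K_{3,3}\}} = \{\mathscr{K}^{(K_5)},\mathscr{K}^{(K_{3,3})}\}$ with omnivores for the minimal proper supersurfaces of the sphere; by the surface classification theorem these surfaces are the torus and the projective plane, whose omnivores are precisely $\mathscr{D}^{(1,0)}$ and $\mathscr{D}^{(0,1)}$, as in \autoref{surfex_univ_obs}.

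To verify the lower bound that $\mathsf{apexplanar}$ is unbounded on each of the four parametric graphs: for $\mathscr{K}^{(K_5)}_k$ and $\mathscr{K}^{(K_{3,3})}_k$, any apex set of size less than $k$ must leave some non-planar component intact; for $\mathscr{D}^{(1,0)}_k$ and $\mathscr{D}^{(0,1)}_k$, removing $o(k)$ vertices from a large toroidal or projective grid still leaves a substructure witnessing non-planarity via the fixed surface it embeds on but not the sphere, so $\mathsf{apexplanar}(\mathscr{D}^{(i,j)}_k) = \Omega(k)$. The upper bound---the hard direction---is exactly what the general theorem from \cite{PaulPTW24Obstructions} provides: a single-exponential function $f$ such that $\mathsf{apexplanar}(G) \geq f(k)$ forces at least one of $\mathscr{K}^{(K_5)}_k$, $\mathscr{K}^{(K_{3,3})}_k$, $\mathscr{D}^{(1,0)}_k$, $\mathscr{D}^{(0,1)}_k$ to appear as a minor of $G$.

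Having established that $\{\mathscr{D}^{(1,0)}, \mathscr{D}^{(0,1)}, \mathscr{K}^{(K_5)}, \mathscr{K}^{(K_{3,3})}\}$ is a minor-universal obstruction for $\mathsf{apexplanar}$, the computation of $\cobs_{\leqslant_{\mathsf{m}}}(\mathsf{apexplanar})$ and $\pobs_{\leqslant_{\mathsf{m}}}(\mathsf{apexplanar})$ follows from \autoref{cobs_uobs}. The closures $\closure{\leqslant_{\mathsf{m}}}{\mathscr{D}^{(1,0)}}$ and $\closure{\leqslant_{\mathsf{m}}}{\mathscr{D}^{(0,1)}}$ are $\Gcal_{\mathsf{torus}}$ and $\Gcal_{\mathsf{projective}}$ by construction, and their minor-obstruction sets are $\obs_{\leqslant_{\mathsf{m}}}(\Gcal_{\mathsf{torus}})$ and $\Ocal_{\mathsf{projective}}$. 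The classes $\closure{\leqslant_{\mathsf{m}}}{\mathscr{K}^{(K_5)}}$ and $\closure{\leqslant_{\mathsf{m}}}{\mathscr{K}^{(K_{3,3})}}$ consist of graphs whose connected components are minors of $K_5$ and $K_{3,3}$ respectively; their minor-obstruction sets are computed via the connectivization identity of \autoref{b_d_th_conn}, giving the six non-isomorphic trees on six vertices in the $K_5$ case and a more intricate set (left implicit) in the $K_{3,3}$ case.

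The principal obstacle is the upper-bound direction: its proof is the content of the general theorem of \cite{PaulPTW24Obstructions}, which combines the Graph Minors Structure Theorem with a surface-aware refinement of the Erd\H{o}s-P\'osa dichotomy in order to achieve the claimed single-exponential gap. Once that input is granted, the remaining work---verifying unboundedness on each parametric graph and matching class and parametric obstructions via \autoref{cobs_uobs} and \autoref{b_d_th_conn}---is routine and follows the pattern already used in this subsection for $\fvs$ and $\mathsf{apexouter}$.
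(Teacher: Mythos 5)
Your proposal matches the paper's approach: the paper offers no proof of this proposition beyond citing the general theorem of \cite{PaulPTW24Obstructions} on universal obstructions for $\apex_{\Hcal}$, specialized to $\Hcal = \gplanar$, which is exactly the key step you isolate. Your supporting remarks — the lower bounds on $\mathsf{apexplanar}$ over each parametric graph, and the derivation of $\cobs_{\leqslant_{\mathsf{m}}}$ and $\pobs_{\leqslant_{\mathsf{m}}}$ via \autoref{cobs_uobs} together with \autoref{b_d_th_conn} and the computation $\obs_{\leqslant_{\mathsf{m}}}(\closure{\leqslant_{\mathsf{m}}}{\mathscr{K}^{(K_5)}}) = \conn(\{6 \cdot K_1\})$ — are the same routine verifications carried out in the surrounding text for $\fvs$ and $\mathsf{apexouter}$, and correctly leave the $K_{3,3}$ obstruction set implicit, as the paper does.
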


It is easy to see that $\{( k+1) \cdot K_{5}, (k+1) \cdot K_{3,3}\} \subseteq \obs_{\leqslant_{\mathsf{m}}}(\Gcal_{\mathsf{apexplanar}, k}),$ for every $k \in \Nbbb.$
Our results, together with the fact that $\Sbbb_{0} = \{ \Sigma^{(1,0)}, \Sigma^{(0,1)} \},$ provide the following additional information on $\obs_{\leqslant_{\mathsf{m}}}(\Gcal_{\mathsf{apexplanar}, k})$: for every $k\in \Nbbb,$ it contains some graph, say $G_{k}^{\mathsf{t}}$, embeddable in the torus and some graph, say  $G_{k}^{\mathsf{p}},$ embeddable in the projective plane.
Most importantly, our results indicate, that the four-member subset $\{(k+1) \cdot K_{5}, (k+1) \cdot K_{3,3}, G_{k}^{\mathsf{t}}, G_{k}^{\mathsf{p}} \}$ of $\obs_{\leqslant_{\mathsf{m}}}(\Gcal_{\mathsf{apexplanar}, k})$ is sufficient to determine the approximate behaviour of the planarizer number.

\subsubsection{Obstructions for elimination in blocks}
\label{apeplicx}

Let $\Hcal$ be a minor-closed class.
We denote by $\mathbf{A}(\Hcal)$ the class of all apex graphs of $\Hcal$, i.e. $\mathbf{A}(\Hcal) \coloneqq \Gcal_{\apex_{\Hcal},1}$.
Bulian and Dawar in \cite{BulianD17Fixed} gave the following alternative definition for $\Hcal\text{-}\td$.

\begin{proposition}[\!\cite{BulianD17Fixed}]\label{prop_dawar_ed}
Let $\Hcal$ be a proper minor-closed class and
\begin{align*}
\Ccal_{0} \ &\coloneqq \ \mathbf{C}(\Hcal)\text{ and}\\
\Ccal_{i+1} \ &\coloneqq \ \mathbf{C}(\mathbf{A}(\Ccal_{i}))\text{, for every }i\in\Nbbb.
\end{align*}

Then, $\Gcal_{\Hcal\text{-}\td, k} = \Ccal_{k}$, for every non-negative integer $k.$
\end{proposition}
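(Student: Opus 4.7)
The plan is a straightforward induction on $k$, exploiting the fact that both sides of the claimed equality decompose over connected components: on the parameter side, the third case of the recursion~\eqref{c_htd_parameter} makes $(1,\Hcal)\text{-}\td$ of a disconnected graph equal to the maximum over its $1$-blocks, which for $c=1$ are precisely the connected components; on the class side, each $\Ccal_k$ is by construction an image of $\mathbf{C}(\cdot)$, hence closed under both decomposition into and reassembly from connected components. Consequently, in both directions of both inclusions it suffices to treat a connected $G$.

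For the base case $k=0$, a connected $G$ satisfies $\mathcal{H}\text{-}\td(G)=0$ exactly when $G\in\Hcal$ (the first case of~\eqref{c_htd_parameter}), so a general $G$ satisfies it iff every connected component lies in $\Hcal$, i.e.\ iff $G\in\mathbf{C}(\Hcal)=\Ccal_0$. For the inductive step, I assume $\mathcal{G}_{\mathcal{H}\text{-}\td, i}=\Ccal_i$ and fix a connected $G$. For the forward inclusion, if $\mathcal{H}\text{-}\td(G)\leq i+1$ and $G\in\Hcal$, then $G\in\Hcal\subseteq\Ccal_i\subseteq\mathbf{A}(\Ccal_i)$; otherwise the second case of~\eqref{c_htd_parameter} supplies a vertex $v$ with $\mathcal{H}\text{-}\td(G-v)\leq i$, whence $G-v\in\Ccal_i$ by induction and $G\in\mathbf{A}(\Ccal_i)$. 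Either way, connectedness of $G$ gives $G\in\mathbf{C}(\mathbf{A}(\Ccal_i))=\Ccal_{i+1}$. For the reverse inclusion, a connected $G\in\mathbf{A}(\Ccal_i)$ either itself lies in $\Ccal_i$ (so $\mathcal{H}\text{-}\td(G)\leq i$ by induction) or admits a vertex $v$ with $G-v\in\Ccal_i$, in which case the recursion yields $\mathcal{H}\text{-}\td(G)\leq 1+\mathcal{H}\text{-}\td(G-v)\leq i+1$.

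\textbf{Main obstacle.} The proof is essentially bookkeeping, with no deep obstruction. The one point that merits care is the clean alignment of the three cases in~\eqref{c_htd_parameter} with the two-layer structure $\mathbf{C}(\mathbf{A}(\cdot))$. In particular, one should verify the monotonicity chain $\Hcal\subseteq\Ccal_0\subseteq\Ccal_1\subseteq\cdots$, which follows from $\Hcal$ being minor-closed (hence closed under taking connected components, giving $\Hcal\subseteq\mathbf{C}(\Hcal)$) together with $\Ccal_i\subseteq\mathbf{A}(\Ccal_i)$ (remove zero vertices); this is what lets the $G\in\Hcal$ branch of the recursion be absorbed correctly at every inductive level.
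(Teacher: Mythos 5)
Your proof is correct and self-contained. The paper attributes Proposition~\ref{prop_dawar_ed} to Bulian and Dawar~\cite{BulianD17Fixed} and does not include a proof of its own, so there is no in-paper argument against which to compare; your induction on $k$ is the natural route and everything checks out. The reduction to connected $G$ is sound on both sides: the parameter side because for $c=1$ the recursion~\eqref{c_htd_parameter} takes the maximum over $1$-blocks, which are exactly the connected components (and the one residual case, $G \in \Hcal$ disconnected, is handled automatically since minor-closedness puts every component of $G$ in $\Hcal$ too), and the class side because $\Ccal_k$ is by construction of the form $\mathbf{C}(\cdot)$, so $G\in\Ccal_k$ holds iff each component of $G$ does. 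The base case and both inclusions of the inductive step are stated correctly, and you correctly identify and dispatch the only point needing care, namely that $\Hcal\subseteq\Ccal_0\subseteq\Ccal_1\subseteq\cdots$ so that the $G\in\Hcal$ branch of the recursion can be absorbed at every level.

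One small point worth spelling out: the monotonicity $\Ccal_i\subseteq\Ccal_{i+1}$ needs not only $\Ccal_i\subseteq\mathbf{A}(\Ccal_i)$ (remove zero vertices) but also $\mathbf{A}(\Ccal_i)\subseteq\mathbf{C}(\mathbf{A}(\Ccal_i))$, which requires $\mathbf{A}(\Ccal_i)$ to be closed under taking connected components. This holds because each $\Ccal_i$ is minor-closed (by an easy induction, since $\mathbf{C}$ and $\mathbf{A}$ both preserve minor-closedness), hence so is $\mathbf{A}(\Ccal_i)$, hence it is closed under taking components. You implicitly have this via the observation that $\Ccal_k$ is an image of $\mathbf{C}(\cdot)$, but it is cleanest to record the minor-closedness of each $\Ccal_i$ explicitly alongside the containment chain, since both are invoked throughout.
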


A \emph{2-rooted-graph} is a triple $\mathbf{Z}=(Z,v,u)$ where $Z$ is a connected graph and $v,u\in V(Z).$
We say that two 2-rooted-graphs $\mathbf{Z}=(Z,v,u)$ and $\mathbf{Z'}=(Z',v',u')$ are \emph{isomorphic}  
if there is an isomorphism $\sigma \colon V(Z)\to V(Z')$ such that $\sigma(v)=v'$ and $\sigma(u)=u'.$
We define $\mathcal{R}(Z)$ as the set of all pairwise non-isomorphic 2-rooted-graphs whose underlying graph is $Z.$

Let $\Hcal$ be a proper minor-closed class.
An \emph{$\Hcal$-chain} of size $t$ is a graph obtained by considering a sequence 
\begin{align}
(Z^1,v^1,u^1), \ldots, (Z^t,v^t,u^t)\label{seq_two_root}
\end{align}
of $2$-rooted graphs, where for each $i \in [t],$ there exists some $Z^i \in \conn(\obs_{\leqslant_{\mathsf{m}}}(\Hcal))$ such that $(Z^{i}, v^i, u^i) \in \Rcal(Z),$ and adding the edges in $\big\{\{u^{i}, v^{i+1}\} \mid i \in [t-1] \big\}$ in the graph obtained as the disjoint union of the graphs in \eqref{seq_two_root}.
An $\Hcal$-chain where $(Z^{i},v^i,u^i)$ is isomorphic to some particular  $\mathbf{Z},$ for every $i \in [t],$ is called \emph{$\mathbf{Z}$-chain}.
For every 2-rooted graph $\mathbf{Z} = (Z,v,u)\in\mathcal{R}(Z),$ we define
the minor-parametric graph $\mathscr{E}^{\mathbf{Z}} = \langle\mathscr{E}^{\mathbf{Z}}_{t}\rangle_{t\in\mathbb{N}}$
where $\mathscr{E}^{\mathbf{Z}}_{t}$ is the $\mathbf{Z}$-chain of size $t.$

The following observation follows directly from the definitions.

\begin{observation}\label{one_obst_Htd}
There exists a function $f \colon \nn{1}{1}$ such that for every proper minor-closed class $\Hcal$ and every $k \in \Nbbb,$ every $\Hcal$-chain of size $t \cdot f(h),$ where $h = \max\{|Z| \mid Z \in \obs_{\leqslant_{\mathsf{m}}}(\Hcal) \},$ contains $\mathscr{E}^{\mathbf{Z}}_{t}$ as a minor for some $\mathbf{Z} \in \mathcal{R}(Z)$ where $Z \in \conn(\obs_{\leqslant_{\mathsf{m}}}(\Hcal)).$
\end{observation}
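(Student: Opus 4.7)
The plan is to combine a pigeonhole argument with a routine minor-extraction via edge contraction, using the connectedness of each piece of an $\Hcal$-chain.

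First I would bound the diversity of pieces. Every graph $Z \in \conn(\obs_{\leqslant_{\mathsf{m}}}(\Hcal))$ satisfies $|Z|\leq h$, so there are at most $2^{\binom{h}{2}}$ non-isomorphic such $Z$, and for each one $|\Rcal(Z)|\leq h^{2}$. Setting $f(h) := h^{2}\cdot 2^{\binom{h}{2}}$ therefore bounds the number of non-isomorphic 2-rooted graphs that can appear as a piece of any $\Hcal$-chain; note that the dependence on $\Hcal$ enters only through $h$. Given any $\Hcal$-chain $C$ of size $T = t\cdot f(h)$ built from a sequence $(Z^{1},v^{1},u^{1}),\ldots,(Z^{T},v^{T},u^{T})$, the pigeonhole principle yields a 2-rooted graph $\mathbf{Z} = (Z,v,u)$ with $Z\in\conn(\obs_{\leqslant_{\mathsf{m}}}(\Hcal))$ and indices $i_{1}<i_{2}<\cdots<i_{t}$ such that $(Z^{i_{\ell}},v^{i_{\ell}},u^{i_{\ell}})$ is isomorphic to $\mathbf{Z}$ for every $\ell\in[t]$.

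Next I would exhibit $\mathscr{E}^{\mathbf{Z}}_{t}$ as a minor of $C$. Delete every vertex lying in a piece $Z^{j}$ with $j<i_{1}$ or $j>i_{t}$, and keep the pieces at the positions $i_{1},\ldots,i_{t}$ intact. For each $\ell\in[t-1]$, let $B_{\ell}$ be the subgraph of $C$ induced by the vertices of pieces at positions strictly between $i_{\ell}$ and $i_{\ell+1}$. Then $B_{\ell}\cup\{u^{i_{\ell}},v^{i_{\ell+1}}\}$ is connected: each intermediate piece is connected, and successive pieces are joined by the chain-edges $\{u^{j},v^{j+1}\}$, with $u^{i_{\ell}}$ linked to $v^{i_{\ell}+1}$ and $u^{i_{\ell+1}-1}$ linked to $v^{i_{\ell+1}}$. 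I would pick a path $P_{\ell}$ from $u^{i_{\ell}}$ to $v^{i_{\ell+1}}$ inside this connected subgraph, delete all vertices of $B_{\ell}$ off $P_{\ell}$, and contract the interior of $P_{\ell}$ into a single edge $\{u^{i_{\ell}},v^{i_{\ell+1}}\}$. Performing this for every $\ell$ produces the disjoint union of $t$ copies of $\mathbf{Z}$ joined at the appropriate roots, which is precisely $\mathscr{E}^{\mathbf{Z}}_{t}$.

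The argument is largely bookkeeping; the only thing to check is that each bridge contraction leaves every preserved $\mathbf{Z}$-piece untouched, and that the resulting edge runs from the ``tail'' root $u$ of one piece to the ``head'' root $v$ of the next. Both are immediate from the vertex-disjointness of the pieces and from the fact that contractions are confined to $V(P_{\ell})\setminus\{u^{i_{\ell}},v^{i_{\ell+1}}\}$ while the $\mathbf{Z}$-pieces are extracted in the same cyclic order as in $C$.
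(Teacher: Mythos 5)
Your proposal is correct and follows the same pigeonhole strategy as the paper's proof, but it is worked out more carefully in two respects. First, your count $f(h) = h^{2}\cdot 2^{\binom{h}{2}}$ accounts both for the number of non-isomorphic underlying graphs $Z$ on at most $h$ vertices and for the number of root-pair choices per graph; the paper sets $c = \max_{Z\in\Zcal}|\Rcal(Z)|$ and takes $f = c\cdot t$, which only bounds the root pairs for a single $Z$ and therefore glosses over the diversity of underlying graphs (and also leaves $c$ depending on $\Hcal$ rather than on $h$ alone, contrary to what the statement asks). Second, you make explicit the minor-extraction step — picking a $u^{i_\ell}$-to-$v^{i_{\ell+1}}$ path through the intermediate pieces, deleting off-path vertices, and contracting — whereas the paper compresses this into ``repeats at least $t$ times, i.e.\ it contains $\mathscr{E}^{\mathbf{Z}}_t$ as a minor.'' So while the core idea is identical, your write-up is the more complete and more careful of the two.
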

\begin{proof}
Let $\Zcal = \conn(\obs_{\leqslant_{\mathsf{m}}}(\Hcal))$ and $c = \max_{Z \in \Zcal} |\mathcal{R}(Z)|$.
Then, let $f(t) = c \cdot t$.
By definition, for every $t \in \Nbbb$, every $\mathscr{E}^{\mathbf{Z}}_{t}$ is a $\mathbf{Z}$-chain of length $t$ where $\mathbf{Z} = (Z, u, v)$.
Moreover, it is easy to see that for every $\Hcal$-chain of length at least $f(t)$ there exists a $\mathbf{Z} = (Z, u, v) \in \mathcal{R}(Z),$ where $Z \in \Zcal,$ that repeats at least $t$ times, i.e. it contains $\mathscr{E}^{\mathbf{Z}}_{t}$ as a minor.
\end{proof}

The next observation is also an immediate consequence of the definition of $\Hcal$-$\td$ and the minor-parametric graphs $\mathscr{E}^{\mathbf{Z}}.$

\begin{observation}\label{obst_Htd_lower}
Let $\Hcal$ be a minor-closed class and $Z \in \conn(\obs_{\leqslant_{\mathsf{m}}}(\Hcal)).$
Then for every graph $G,$ every 2-rooted-graph graph $\mathbf{Z} \in \mathcal{R}(Z),$ every $\mathscr{E}^{\mathbf{Z}} = \langle \mathscr{E}^{\mathbf{Z}}_{t} \rangle_{{t\in\mathbb{N}}},$ and every $t\in\Nbbb,$ it holds that 
$\Hcal\text{-}\td(\mathscr{E}^{\mathbf{Z}}_{t}) = \Omega(\log t).$
\end{observation}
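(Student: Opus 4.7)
The plan is to establish, by induction on $t$, the quantitative bound $\Hcal\text{-}\td(\mathscr{E}^{\mathbf{Z}}_{t}) \geq c \log t$ for some absolute constant $c > 0$ (say $c = 1/2$). The two engines of the argument are (i) the recursive definition of $\Hcal\text{-}\td$ on connected graphs, and (ii) the fact that deleting a single vertex destroys at most one of the $t$ copies of $Z$ in the chain, so ``most'' of the chain survives any one-vertex cut.

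Before running the induction I would first record, as a preliminary observation, that $\Hcal\text{-}\td$ is minor-monotone. By \autoref{prop_dawar_ed} we have $\mathcal{G}_{\Hcal\text{-}\td, k} = \Ccal_{k}$, and since both $\mathbf{C}(\cdot)$ and $\mathbf{A}(\cdot)$ preserve the property of being minor-closed, a direct induction on $k$ yields that each $\Ccal_{k}$ is minor-closed; hence $H \leqslant_{\mathsf{m}} G$ implies $\Hcal\text{-}\td(H) \leq \Hcal\text{-}\td(G)$. The base case $t = 1$ is immediate: since $Z$ contains some $Z' \in \obs_{\leqslant_{\mathsf{m}}}(\Hcal)$ as a subgraph, we have $\mathscr{E}^{\mathbf{Z}}_{1} = Z \notin \Hcal$ and thus $\Hcal\text{-}\td(\mathscr{E}^{\mathbf{Z}}_{1}) \geq 1$.

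For the inductive step, fix $t \geq 2$ and pick any $x \in V(\mathscr{E}^{\mathbf{Z}}_{t})$ lying in the $i$-th copy $Z^{i}$. Every inter-copy edge $\{u^{j}, v^{j+1}\}$ with $j \notin \{i-1, i\}$ and every copy $Z^{j}$ with $j \neq i$ survives the deletion, so the subgraphs induced on $\bigcup_{j < i} V(Z^{j})$ and on $\bigcup_{j > i} V(Z^{j})$ produce intact copies of $\mathscr{E}^{\mathbf{Z}}_{i-1}$ and $\mathscr{E}^{\mathbf{Z}}_{t-i}$, each of them contained in a single connected component of $\mathscr{E}^{\mathbf{Z}}_{t} - x$. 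One of these two sub-chains has size at least $\lceil (t-1)/2 \rceil$; let $B$ denote the component containing it. Minor-monotonicity and the inductive hypothesis give $\Hcal\text{-}\td(B) \geq c \log \lceil (t-1)/2 \rceil$. Since $\mathscr{E}^{\mathbf{Z}}_{t}$ is connected and not in $\Hcal$, the recursion
\begin{align*}
\Hcal\text{-}\td(\mathscr{E}^{\mathbf{Z}}_{t}) \ &= \ 1 + \min_{x} \max_{B'} \Hcal\text{-}\td(B'),
\end{align*}
where $B'$ ranges over connected components of $\mathscr{E}^{\mathbf{Z}}_{t} - x$, then yields $\Hcal\text{-}\td(\mathscr{E}^{\mathbf{Z}}_{t}) \geq 1 + c \log \lceil (t-1)/2 \rceil \geq c \log t$ once $c$ is chosen small enough to absorb the additive constant.

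The only delicate point is the structural claim that after deleting any single vertex $x$ from $Z^{i}$, both the left and the right sub-chains remain inside single connected components of $\mathscr{E}^{\mathbf{Z}}_{t} - x$. This is a straightforward inspection of adjacencies, with the mild subtlety that such a component may additionally absorb stray pieces of $Z^{i} - x$ (and, when $x \in \{u^{i}, v^{i}\}$, lose the bridge to the neighbouring copy); neither effect can decrease the elimination distance of the host component, so the bound carries through. Beyond this verification, the induction is mechanical.
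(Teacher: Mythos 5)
Your proof is correct and takes essentially the same route as the paper: both observe that every vertex lies in exactly one copy of $Z$, so deleting one vertex leaves the left and right sub-chains intact, one of which keeps roughly half the copies, and an induction on $t$ then yields the $\Omega(\log t)$ bound. The paper's version is looser (it asserts that removing a vertex leaves ``at most two connected components,'' which need not hold literally), whereas you handle the stray fragments of $Z^{i}-x$ via minor-monotonicity; this is a tightening of the same argument, not a different one.
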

\begin{proof}
Let $G$ be a graph and $\mathbf{Z} \in \mathcal{R}(Z)$ such that $\mathscr{E}^{\mathbf{Z}}_{t} \leqslant G,$ for some $t \in \Nbbb.$
We argue that in this case $\Hcal\text{-}\td(G) = \Omega(\log t)$.
To see this, observe that every vertex of $\mathscr{E}^{\mathbf{Z}}_{t}$ intersects precisely one copy of the underlying graph $Z$ of $\mathbf{Z}$ and its removal separates the graph $\mathscr{E}^{\mathbf{Z}}_{t}$ in at most two connected components.
Therefore, inductively, the minimum height elimination tree is obtained by removing vertices so as to balance the number of copies of $Z$ in each connected component.
This gives an elimination tree of height $\Omega(\log t)$.
\end{proof}

We are now in the position to prove the following theorem.

\begin{theorem}\label{more_erdos_posa_planar}
Let $\Hcal$ be a proper-minor closed class whose minor-obstruction set contains a planar graph.
Then, any $\lesssim$-minimization of the set
$$\bigcup_{Z \in \conn(\obs_{\leqslant_{\mathsf{m}}}(\Hcal))} \{ \mathscr{E}^{\mathbf{Z}} \mid \mathbf{Z} \in \Rcal(Z) \}$$
is a minor-universal obstruction for $\Hcal\text{-}\td$.
\end{theorem}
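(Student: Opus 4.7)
The plan is to verify $\p_{\mathfrak{H}} \sim \Hcal\text{-}\td$, where $\mathfrak{H}$ denotes the given $\lesssim$-minimization. The direction $\p_{\mathfrak{H}} \preceq \Hcal\text{-}\td$ is immediate from \autoref{obst_Htd_lower} combined with the minor-monotonicity of $\Hcal\text{-}\td$: the bound $\Hcal\text{-}\td(\mathscr{E}^{\mathbf{Z}}_{t}) = \Omega(\log t)$ propagates to any minor-supergraph, yielding $\p_{\mathfrak{H}}(G) \leq 2^{O(\Hcal\text{-}\td(G))}$.

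For the reverse direction $\Hcal\text{-}\td \preceq \p_{\mathfrak{H}}$, I would prove by induction on $t$ the existence of a function $g \colon \Nbbb \to \Nbbb$ such that $\Hcal\text{-}\td(G) \geq g(t)$ forces $G$ to contain some $\Hcal$-chain of length $t$ as a minor; \autoref{one_obst_Htd} then upgrades this to $\mathscr{E}^{\mathbf{Z}}_{\lfloor t / f(h) \rfloor} \leqslant_{\mathsf{m}} G$ for a suitable $\mathbf{Z}$, where $h = \max\{|Z| \mid Z \in \obs_{\leqslant_{\mathsf{m}}}(\Hcal)\}$, closing the equivalence. The base case $t = 1$ follows from \autoref{obs_conn}: if $\Hcal\text{-}\td(G) \geq 1$ then $G \notin \mathbf{C}(\Hcal) = \Ccal_{0}$, so some connected component contains a graph in $\conn(\obs_{\leqslant_{\mathsf{m}}}(\Hcal))$ as a minor, which is a chain of length one. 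For the inductive step, take $G$ connected with $\Hcal\text{-}\td(G) \geq g(t+1)$ and invoke \autoref{prop_erdos_posa_gap}, applicable because $\obs_{\leqslant_{\mathsf{m}}}(\Hcal)$ contains a planar graph. The resulting dichotomy gives, for a suitably chosen $N = N(t)$, either a hitting set $S$ of size $O(N \log N)$ with $G - S \in \mathbf{C}(\Hcal)$ (and hence $\Hcal\text{-}\td(G) \leq |S|$, contradicting $g(t+1)$ being large enough), or $N$ pairwise vertex-disjoint connected obstructions packed inside $G$.

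The principal obstacle is converting such an unorganised packing into a genuine $\Hcal$-chain, since the definition demands a rooted linear arrangement with connecting edges $\{u^{i}, v^{i+1}\}$ rather than merely disjoint copies. To handle this I would switch to the elimination-forest viewpoint supporting \autoref{prop_dawar_ed}: fix an optimal elimination forest $F$ of depth $d = \Hcal\text{-}\td(G)$, take a root-to-leaf path $x_{1}, \ldots, x_{d}$, and consider the induced nested connected subgraphs $C_{1} \supsetneq C_{2} \supsetneq \cdots \supsetneq C_{d-1}$ with $\Hcal\text{-}\td(C_{i}) \geq d - i$. Each $C_{i}$ then hosts a connected obstruction $W_{i}$, and the elimination vertex $x_{i+1}$ supplies a natural link between $V(C_{i}) \setminus V(C_{i+1})$ and $C_{i+1}$, furnishing candidate root pairs. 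Selecting the $W_{i}$ successively via Erdős–Pósa inside each $C_{i}$ to remain disjoint from previously-used material, and then pigeonholing over the finitely many rooted types in $\bigcup_{Z} \Rcal(Z)$, should concentrate $\Omega(d)$ of them on a common $\mathbf{Z}$ and concatenate them through the $x_{i}$'s into the desired $\mathscr{E}^{\mathbf{Z}}$-chain. The delicate step, and the genuine use of the planarity hypothesis, is ensuring that a fresh obstruction can indeed be located at each level disjoint from those already chosen: planarity keeps the Erdős–Pósa gap polynomial, which is what permits the recursion to close with a tractable $g$ and the $\lesssim$-minimization $\mathfrak{H}$ of the stated set to serve as a minor-universal obstruction.
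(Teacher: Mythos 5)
Your proof takes a genuinely different route from the paper's, and the different route contains a gap at exactly the place you identify as "the principal obstacle."

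The paper never tries to extract a chain directly from the elimination structure of a fixed graph $G$. Instead, it defines an ascending sequence of minor-closed classes $\Gcal_0 \subseteq \Gcal_1 \subseteq \cdots$ by iteratively alternating the barrier parameter with the connectivity closure $\mathbf{C}$, and proves two properties by a separate induction: $\obs_{\leqslant_{\mathsf{m}}}(\Gcal_i)$ always retains a planar member, and every graph in $\obs_{\leqslant_{\mathsf{m}}}(\Gcal_i)$ contains an $\Hcal$-chain of length $i+2$ as a minor. The chain structure is built \emph{into the class obstructions by construction}, via \autoref{obs_conn}: the obstructions of $\Gcal_{i+1}$ are connectivizations of doubled $\Gcal_i$-obstructions, so each carries two linked chains of the previous length, and a careful (slightly delicate) routing argument turns these into one longer chain. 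The Erdős–Pósa / planarity hypothesis is then used only to propagate a quantitative bound $h(k)$ showing that $\Hcal\text{-}\td(G) > h(k)$ forces $G \notin \Gcal_{h(k)}$, after which the chain appears automatically in whatever obstruction of $\Gcal_{h(k)}$ is a minor of $G$.

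The gap in your proposal is the step where you "concatenate them through the $x_i$'s into the desired $\mathscr{E}^{\mathbf{Z}}$-chain." You do obtain a packing: since each $W_j$ has bounded size $\leq h$, if $\Hcal\text{-}\td(C_i) > (i-1)h$ then deleting $\bigcup_{j<i} V(W_j)$ from $C_i$ cannot land in $\Hcal$, so a fresh model disjoint from earlier ones exists (this is a counting argument and does not even need the full Erdős–Pósa machinery). But a packing of disjoint obstruction models in a connected graph is \emph{not} the same as an $\Hcal$-chain minor. The $W_i$ is located \emph{somewhere} in $C_i$, which properly contains $C_{i+1}$; nothing ties $W_i$ to the elimination vertex $x_{i+1}$ or guarantees a linking path from $W_i$ to $W_{i+1}$ that avoids all other $W_j$. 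If, say, the packing sits in a star pattern around a single hub vertex, or if one model $W_j$ separates the others, no chain of length $\Omega(d)$ can be routed through the packing at all. Your sentence "furnishing candidate root pairs" presupposes that the minor-model of $W_i$ has a branch set adjacent to $x_{i+1}$, which there is no reason to expect. This is the precise difficulty the paper's construction sidesteps by making chains the \emph{definition} of the level-$i$ obstructions rather than something to be extracted a posteriori from a packing inside $G$.

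Your lower-bound direction (from \autoref{obst_Htd_lower}) and your base case (from \autoref{obs_conn}) are fine, and the appeal to \autoref{one_obst_Htd} to reduce from generic $\Hcal$-chains to a specific $\mathscr{E}^{\mathbf{Z}}$ is exactly what the paper does at the end. It is only the linking step in the upper bound that is unjustified and, as written, not correct.
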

\begin{proof}
The lower bound of the desired equivalence follows from \autoref{obst_Htd_lower}.
Therefore, it suffices to prove the upper bound.
For every $k \in \Nbbb,$ let $\Hcal\text{-}\Pcal_{k}$ be the set that consists of all $\Hcal$-chains of size $k.$
We prove that there exists a function $h \colon \mathbb{N} \to \mathbb{N}$ such that, for every graph $G$, if $\Hcal\text{-}\td(G) > h(k)$ then there exists $H \in \Hcal\text{-}\Pcal_{k}$ such that $H \leqslant G$.
We define the  following graph classes.
\begin{align*}
\Gcal_{0} \ &= \ \mathbf{C}(\Gcal_{\barrier_{\Hcal}, 1})\text{ and}\\
\Gcal_{i+1} \ &= \ \mathbf{C}(\Gcal_{\barrier_{\Gcal_{i}}, 1})\text{, for every }i \geq 0.
\end{align*}

We first argue that for every $i \geq 0$,
\begin{enumerate}
\item[(1)] $\obs_{\leqslant_{\mathsf{m}}}(\Gcal_{i})$ contains a planar graph;
\item[(2)] Every graph $H \in \obs_{\leqslant_{\mathsf{m}}}(\Gcal_{i})$ contains an $\Hcal$-chain of size at least $i+2$ as a minor.
\end{enumerate}

For $i = 0$, condition (1) holds trivially.
As for condition (2), let $\Zcal = \obs_{\leqslant_{\mathsf{m}}}(\Hcal).$
Observe that, by definition of $\barrier_{\Hcal}$, $\obs_{\leqslant_{\mathsf{m}}}(\Gcal_{\barrier_{\Hcal}, 1}) = \big\{ 2 \cdot Z \mid Z \in \Zcal \big\}$.
Then, \autoref{obs_conn} implies that $\obs_{\leqslant_{\mathsf{m}}}(\mathbf{C}(\Gcal_{\barrier_{\Hcal}, 1}))$ contains the minor-minimal graphs in $\Hcal\text{-}\Pcal_{2}$.

Now, assume that (1) and (2) hold for every non-negative integer less than $i \geq 1$.
To prove (1), by the induction hypothesis $\obs_{\leqslant_{\mathsf{m}}}(\Gcal_{i-1})$ contains a planar graph $H'.$
Let $H$ be a graph obtained from $2 \cdot H'$ by adding an edge $xy$ between a vertex $x$ of one of the two copies of $H'$ and a vertex $y$ of the other copy of $H'.$
Now, we can draw each $H'$ such that the outer faces contain the vertices $x$ and $y$ respectively.
Then clearly, $H$ is planar and by \autoref{obs_conn}, $H \in \obs_{\leqslant_{\mathsf{m}}}(\Gcal_{i}).$

To prove (2), let $H' \in \obs_{\leqslant_{\mathsf{m}}}(\Gcal_{i - 1}).$
By the induction hypothesis, $H'$ contains an $\Hcal$-chain of size at least $i + 1$ as a minor.
By \autoref{obs_conn} and the definition of $\Gcal_{i},$ we have that $\obs_{\leqslant_{\mathsf{m}}}(\Gcal_{i})$ contains a graph $H$ that is obtained from $2 \cdot H'$ by adding an edge $xy$ between a vertex $x$ of one of the two copies of $H'$ and a vertex $y$ of the other copy of $H'.$
Let $A$ be the graph induced by one of the two copies of $H'$ in $H$ and $B$ be the graph induced by the other copy of $H'$ in $H$. 
Now, let $\mathcal{X} = \big\{X_{u} \mid u \in V(P_{Z})\big\}$ and $\mathcal{X}' = \big\{X'_{u} \mid u \in V(P_{Z})\big\}$) be two minor models of an $\Hcal$-chain of size at least $i + 1$ in $A$ and in $B$ respectively. 
Also consider a $u-v$ path $P$ in $H$ such that $u \in X_{u'}$ for some $X_{u'} \in \mathcal{X}$ and $v \in X_{v'}$ for some $X_{v'} \in \mathcal{X}'$ that is internally disjoint from any set in $\mathcal{X}$ and $\mathcal{X}'$.
It is easy to verify that the graph induced by $\cupall \mathcal{X} \cup \cupall \mathcal{X}' \cup P$ contains an $\Hcal$-chain of size at least $i+2$ as a minor.

Therefore, by \autoref{prop_erdos_posa_planar}, for every $i \in \Nbbb,$ there exists a function $f_{i} \colon \Nbbb \to \Nbbb$ that certifies that $(\Gcal_{i}, \gall)$ is an EP-pair with gap function $f_{i}.$
For every $k \in \Nbbb$, we define
$$h(k) := \sum_{i \in [k]} f_{i}(1).$$

For a minor-closed class $\Gcal$, let $\Ccal_{0}(\Gcal) = \mathbf{C}(\Gcal)$ and $\Ccal_{i+1}(\Gcal) = \mathbf{C}(\mathbf{A}(\Ccal_{i}(\Gcal)))$.
Let $G$ be a graph such that $\Hcal\text{-}\td(G) > h(k)$.
By \autoref{prop_dawar_ed}, $G \notin \Ccal_{h(k)}(\Hcal)$.
Since for every non-negative integer $k$, $\Gcal_{\apex_{\Hcal}, k} \subseteq \Gcal_{\Hcal\text{-}\td, k}$ and for every minor-closed class $\Gcal$, $\Gcal \subseteq \mathbf{C}(\Gcal)$, it also holds that $G \notin \Ccal_{h(k) - \alpha}(\mathbf{C}(\Gcal_{\apex_{\Hcal}, \alpha}))$ for any constant $\alpha \leq h(k)$.
Moreover, by \autoref{prop_erdos_posa_gap}, for every $k$, $\Gcal_{\barrier_{\Hcal}, k} \subseteq \Gcal_{\apex_{\Hcal}, f_{0}(k)}$ where $f_{0}(k) = \Ocal(k \log k)$.
Hence, $G \notin \Ccal_{h(k) - \alpha}(\mathbf{C}(\Gcal_{\barrier_{\Hcal}, \beta}))$, where $f_{0}(\beta) = \alpha$.
Set $\beta = 1$.
Then $G \notin \Ccal_{h(k) - f_{0}(1)}(\mathbf{C}(\Gcal_{\barrier_{\Hcal}, 1}))$ or equivalently $G \notin \Ccal_{h(k) - f_{0}(1)}(\Gcal_{0})$. Inductively repeating this argument $h(k)$ many times, implies that $G \notin \Ccal_{0}(\mathbf{C}(\Gcal_{\barrier_{\Gcal_{h(k)-1}}, 1}))$.
Hence, $G \notin \Gcal_{h(k)}$.
Then, by (2), our claim follows.

To conclude, we have that if $\Hcal\text{-}\td(G) > h(k) \cdot f(h),$ where $f \colon \Nbbb \to \Nbbb$ is the function implied by \autoref{one_obst_Htd} and $h = \max\{|Z| \mid Z \in \Zcal \}$, then $G$ contains $\mathscr{E}^{\mathbf{Z}}_{k}$ as a minor for some $\mathbf{Z} \in \Rcal(Z)$ where $Z \in \conn(\Zcal).$
\end{proof}

We continue our presentation with a few indicative examples of elimination distance parameters and, for each of them, we give the corresponding universal obstructions, by applying \autoref{more_erdos_posa_planar}.

\paragraph{Elimination distance to a forest.}

The next parameter under consideration is the \emph{elimination distance to a forest}, that is defined as follows.
\begin{align}
\edforest \ &\coloneqq \ (1, \gforest)\text{-}\td.
\end{align}

In other words, for every graph $G$ and non-negative integer $k$, $\edforest(G) \leq k$ if and only if $G$ has a vertex $v$ where, for each connected component $C$ of $G - v$, it holds that $\edforest(C) \leq k-1$ and where we agree that, for every acyclic graph $F$, $\edforest(F) = 0$.
The parameter $\edforest$ has been considered in \cite{DekkerJansen2024FVS} in the context of kernelization algorithms.

Towards an application of \autoref{more_erdos_posa_planar}, let $\{\mathscr{P}^{(K_{3}), 1}, \mathscr{P}^{(K_{3}), 2}\}$, where $\mathscr{P}^{(K_{3}), 1}$ and $\mathscr{P}^{(K_{3}), 2}$ are the two minor-parametric graphs depicted in \autoref{fig_K3_paths}.

\begin{figure}[htbp]
\centering
\includegraphics[width=0.85\linewidth]{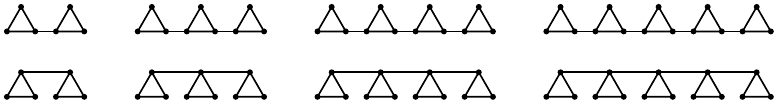}
\caption{\label{fig_K3_paths} Two minor-parametric graphs, corresponding to the two ways triangles can be linearly arranged on a path. On the first row we have $\mathscr{P}^{(K_{3}), 1} = \langle \mathscr{P}^{(K_{3}), 1}_2, \mathscr{P}^{(K_{3}), 1}_3, \mathscr{P}^{(K_{3}), 1}_4, \mathscr{P}^{(K_{3}), 1}_5, \ldots \rangle$ while on the second row we have $\mathscr{P}^{(K_{3}), 2} = \langle \mathscr{P}^{(K_{3}), 2}_2, \mathscr{P}^{(K_{3}), 2}_3, \mathscr{P}^{(K_{3}), 2}_4, \mathscr{P}^{(K_{3}), 2}_5, \ldots \rangle$.}
\end{figure}

These two minor-parametric graphs are created by the two different ways to linearly arrange triangles along a path and are generated by the two non-isomorphic 2-rooted graphs $(K_{3}, v, u)$ and $(K_{3}, v, v)$, where $v \neq u$.
Let $\Ccal^{(1)} = \closure{\leqslant_{\mathsf{m}}}{\mathscr{P}^{(K_{3}), 1}}$ and  $\Ccal^{(2)} = \closure{\leqslant_{\mathsf{m}}}{\mathscr{P}^{(K_{3}),2}}$.
Also, let $\Ocal^{(1)}$ and $\Ocal^{(2)}$ be the set of graphs depicted in \autoref{obs_path_triangles_1} and \autoref{obs_path_triangles_2} respectively.
We prove the following two lemmata.

\begin{figure}[htbp]
\centering
\includegraphics[width=0.5\linewidth]{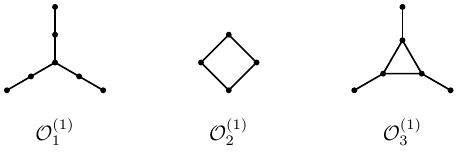}
\caption{\label{obs_path_triangles_1}The minor-obstruction set $\Ocal^{(1)} = \big\{\Ocal^{(1)}_1,\Ocal^{(1)}_2,\Ocal^{(1)}_3\big\}$.}
\end{figure}

\begin{lemma}
\label{pbs3e}
$\obs_{\leqslant_{\mathsf{m}}}(\Ccal^{(1)}) = \Ocal^{(1)}$.
\end{lemma}
\begin{proof}
It is easy to see that for every $Z \in \Ocal^{(1)}$, $Z \notin \Ccal^{(1)}$ and moreover that $\Ocal^{(1)}$ is a minor-antichain.
Let $G$ be a connected graph such that $\Ocal^{(1)}_{1}, \Ocal^{(1)}_{2}, \Ocal^{(1)}_{3} \not\leqslant_{\mathsf{m}} G$.
Let $P$ be a longest path in $G$.
We claim that for every $C \in \textsf{cc}(G - P)$, $|V(C)| = 1$.
Assume that there exists a connected component $C \in \textsf{cc}(G - P)$ such that $|V(C)| \geq 2$.
It is easy to see that depending on which vertices are contained in $N(V(C)) \cap V(P)$, we either contradict the longest path assumption or we conclude that $\Ocal^{(1)}_{1} \leqslant_{\mathsf{m}} G$.

Now, since $\Ocal^{(1)}_{2} \not\leqslant_{\mathsf{m}} G$ then all cycles in $G$ are triangles. Moreover, since for every $C \in \textsf{cc}(G - P)$, $|V(C)| = 1$, and $\Ocal^{(1)}_{3} \not\leqslant_{\mathsf{m}} G$, all triangles in $G$ have at least two vertices on $P$ which have to be consecutive.
Hence $G \in \Ccal^{(1)}$.
If $G$ is not connected we apply the same arguments in each connected component of $G$.
\end{proof}

\begin{figure}[htbp]
\centering
\includegraphics[width=0.85\linewidth]{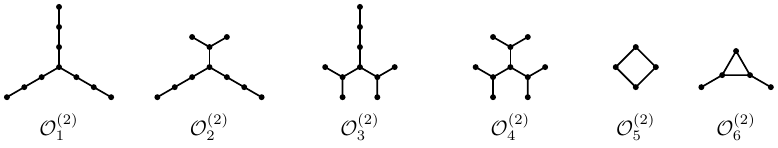}
\caption{\label{obs_path_triangles_2}The minor-obstruction set $\Ocal^{(2)}=\big\{\Ocal^{(2)}_1, \Ocal^{(2)}_2,\Ocal^{(2)}_3, \Ocal^{(2)}_4, \Ocal^{(2)}_5, \Ocal^{(2)}_6 \big\}$.}
\end{figure}

\begin{lemma}
\label{ios3e}
$\obs_{\leqslant_{\mathsf{m}}}(\Ccal^{(2)}) = \Ocal^{(2)}$.
\end{lemma}
\begin{proof}
It is easy to see that for every $Z \in \Ocal^{(2)}$, $Z \notin \Ccal^{(2)}$ and moreover, that $\Ocal^{(2)}$ is a minor-antichain.
Let $G$ be a connected graph such that for every $Z \in \Ocal^{(2)}$, $Z \not\leqslant_{\mathsf{m}} G$. 
First observe that if $|V(G)| \leq 4$ then $G \in \Ccal^{(2)}$.
We call $u \in V(G)$ a \emph{spine vertex} if $|\textsf{cc}(G - v) | \geq 3$ or $\textsf{cc}(G - v) = \{ C, C' \}$ and $|V(C)|, |V(C')| \geq 2$.

Assume that $|V(G)| \geq 5$.
Since $\Ocal^{(2)}_{5} \not\leqslant_{\mathsf{m}} G$, all cycles in $G$ are triangles.
Then, $G$ contains at least one cut-vertex.
We moreover argue that it contains a spine vertex.
Assume it does not.
Let $u \in V(G)$ be a cut-vertex.
Since $u$ is not a spine vertex, we have that $|\textsf{cc}(G - u)| = 2$ and one of the two components is trivial.
Then, since all cycles in $G$ are triangles and $\Ocal^{(2)}_{6} \not\leqslant_{\mathsf{m}} G$, $u$ has a unique neighbour in the non-trivial component, which must be a spine vertex.

Now, let $u, v \in V(G)$ be two spine vertices at distance at least two.
We first claim that every internal vertex $z$ on a $u$-$v$ path $P$ is also a spine vertex.
Suppose that $z$ is not a cut-vertex.
Since every cycle is a triangle, the two neighbours of $z$ on the path $P$ are adjacent, implying that $\Ocal^{(2)}_{6} \leqslant_{\mathsf{m}} G$, a contradiction.
So $z$ is a cut-vertex and since $u$ and $v$ are spine vertices, it implies that $z$ is also a spine vertex.

Next, let $u, v \in V(G)$ be spine vertices at maximum distance in $G$ (it might be that $u = v$) and $P$ be a $u\text{-}v$ path in $G$.
We prove that every connected component $C$ of $G \setminus P$ has at most two vertices and has a unique neighbour on $P$.
This implies that $G \in \Ccal^{(2)}$.
First, suppose that $N(C) \cap V(P) = \{ u \}$ (as $u$ is a spine vertex such a connected component exists).
If $|V(C)| \geq 3$, then by the same argument as in the second paragraph, either $\Ocal^{(1)}_{5} \leqslant_{\mathsf{m}} G$ or $\Ocal^{(1)}_{6} \leqslant_{\mathsf{m}} G$ or $u$ has a unique neighbour $w$ in $C$ that is a spine vertex, contradicting the choice of $u$ and $v$.
So if $u = v$, we can already conclude that $G \in \Ccal^{(2)}$.
Suppose that $u$ and $v$ are distinct vertices.
First observe that $C$ cannot be adjacent to two distinct vertices of $P$.
Indeed, if these neighbours are not adjacent, then $\Ocal^{(1)}_{5} \leqslant_{\mathsf{m}} G$, otherwise $\Ocal^{(1)}_{6} \leqslant_{\mathsf{m}} G$.
So it remains to consider the case where $C$ is adjacent to a unique internal vertex of $P$, say $z$.
If $|V(C)| \geq 3$, then we can observe that we get one of $\Ocal^{(2)}_{1}$, $\Ocal^{(2)}_{2}$, $\Ocal^{(2)}_{3}$, or $\Ocal^{(2)}_{4}$ as a minor, where the center vertex is $z$.
It follows as claimed, that every connected component $C$ of $G \setminus P$ has at most two vertices and has a unique neighbour on $P$.
\end{proof}

Based on \autoref{more_erdos_posa_planar} and lemmata \ref{pbs3e} and \ref{ios3e}, we conclude with the following proposition.

\begin{theorem}
The set $\{\mathscr{P}^{(K_{3}), 1}, \mathscr{P}^{(K_{3}),2}\}$ is a minor-universal obstruction for $\edforest$.
Moreover, $\cobs_{\leqslant_{\mathsf{m}}}(\edforest) = \big\{ \Ccal^{(1)}, \Ccal^{(2)} \big\}$ and $\pobs_{\leqslant_{\mathsf{m}}}(\edforest) = \big\{\Ocal^{(1)},\Ocal^{(2)}\big\}$.
\end{theorem}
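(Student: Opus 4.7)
The plan is to apply \autoref{more_erdos_posa_planar} to the minor-closed class $\Hcal = \gforest$. Since $\obs_{\leqslant_{\mathsf{m}}}(\gforest) = \{K_3\}$ and $K_3$ is planar, the hypothesis of that theorem is satisfied. As $K_3$ is connected, $\conn(\obs_{\leqslant_{\mathsf{m}}}(\gforest)) = \{K_3\}$, so the candidate minor-parametric family produced by \autoref{more_erdos_posa_planar} is $\{ \mathscr{E}^{\mathbf{Z}} \mid \mathbf{Z} \in \Rcal(K_3) \}$.

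The second step is to enumerate $\Rcal(K_3)$. Because $K_3$ is vertex- and edge-transitive, there are exactly two elements of $\Rcal(K_3)$ up to isomorphism of $2$-rooted graphs: the rooted triangle $(K_3, v, v)$ with coinciding roots and $(K_3, v, u)$ with $v \neq u$. Unfolding the definition of the $\mathbf{Z}$-chain for each of these two rooted graphs yields precisely the minor-parametric graphs $\mathscr{P}^{(K_3), 1}$ (from $(K_3, v, u)$, triangles strung together by edges between two distinguished vertices of consecutive triangles) and $\mathscr{P}^{(K_3), 2}$ (from $(K_3, v, v)$, triangles hanging off a spine through a single distinguished vertex per triangle) shown in \autoref{fig_K3_paths}.

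The third step, which I expect to be the most delicate, is to verify that $\{\mathscr{P}^{(K_3), 1}, \mathscr{P}^{(K_3), 2}\}$ is $\lesssim$-minimal, so that no element of the candidate family is absorbed during the $\lesssim$-minimization in \autoref{more_erdos_posa_planar}. The plan is to argue via a degree-plus-triangle invariant that the two sequences are incomparable: the internal spine vertices of $\mathscr{P}^{(K_3), 2}_k$ have degree four while simultaneously lying on a triangle, whereas $\mathscr{P}^{(K_3), 1}$ has maximum degree three, so producing such a configuration via contractions in $\mathscr{P}^{(K_3), 1}_{f(k)}$ is forced to merge two triangles into a bowtie that cannot be separated from its neighbours in the way required by $\mathscr{P}^{(K_3), 2}_k$; symmetrically, $\mathscr{P}^{(K_3), 1}_k$ contains $k-1$ cut-edges each separating two consecutive triangles, a property that fails for every minor of $\mathscr{P}^{(K_3), 2}_{f(k)}$ once $k$ is sufficiently large, since every edge of $\mathscr{P}^{(K_3), 2}$ lies on a triangle or on the spine between two vertices of degree four.

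Once the universal-obstruction claim $\{\mathscr{P}^{(K_3), 1}, \mathscr{P}^{(K_3), 2}\}$ is established, the remaining two equalities are immediate. By \autoref{cobs_uobs}, $\cobs_{\leqslant_{\mathsf{m}}}(\edforest)$ consists of the $\leqslant_{\mathsf{m}}$-closures of the two parametric graphs, which are by definition $\mathcal{C}^{(1)} = \closure{\leqslant_{\mathsf{m}}}{\mathscr{P}^{(K_3), 1}}$ and $\mathcal{C}^{(2)} = \closure{\leqslant_{\mathsf{m}}}{\mathscr{P}^{(K_3), 2}}$. The parametric obstruction then follows by combining this identification with \autoref{pbs3e} and \autoref{ios3e}, which respectively give $\obs_{\leqslant_{\mathsf{m}}}(\mathcal{C}^{(1)}) = \mathcal{O}^{(1)}$ and $\obs_{\leqslant_{\mathsf{m}}}(\mathcal{C}^{(2)}) = \mathcal{O}^{(2)}$, and hence $\pobs_{\leqslant_{\mathsf{m}}}(\edforest) = \{\mathcal{O}^{(1)}, \mathcal{O}^{(2)}\}$.
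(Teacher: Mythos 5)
Your overall route is exactly the paper's: invoke \autoref{more_erdos_posa_planar} for $\Hcal = \gforest$, note $\conn(\{K_3\}) = \{K_3\}$, enumerate the two elements of $\Rcal(K_3)$, identify the corresponding chains with $\mathscr{P}^{(K_3),1}$ and $\mathscr{P}^{(K_3),2}$, and then read off $\cobs$ and $\pobs$ from \autoref{cobs_uobs} together with Lemmata \ref{pbs3e} and \ref{ios3e}. The one place where the paper is terse and you attempt to supply a justification is the $\lesssim$-minimality of the two-element family, and that part of your argument is not sound.

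Both of your distinguishing invariants fail. First, degree-four-on-a-triangle vertices \emph{can} be produced as minors of $\mathscr{P}^{(K_3),1}_n$: contract a linking edge $u^i v^{i+1}$ to obtain a vertex $w$ of degree four lying on the two triangles $\{w, v^i, a^i\}$ and $\{w, u^{i+1}, a^{i+1}\}$, then delete the edge $u^{i+1}a^{i+1}$ to put $w$ on a single triangle; the obstruction to realizing all of $\mathscr{P}^{(K_3),2}_k$ is not the degree/triangle configuration per se but the fact that one of the two resulting pendants at $w$ (an $a$-type vertex) has no further structure to extend. Second, the cut-edge argument is backwards: every spine edge $\{v^i, v^{i+1}\}$ of $\mathscr{P}^{(K_3),2}_n$ is a cut-edge, so the presence of many cut-edges separating triangles does not separate the two families. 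The clean and essentially free way to establish $\lesssim$-incomparability is to use the obstruction sets the lemmata already compute: $S_{1,1,1} = \mathcal{O}^{(1)}_1$ is a minor of $\mathscr{P}^{(K_3),2}_3$ (take the center $v^2$ and legs $v^2 - v^1 - a^1$, $v^2 - v^3 - a^3$, $v^2 - a^2 - b^2$) but of no $\mathscr{P}^{(K_3),1}_n$, giving $\mathscr{P}^{(K_3),2} \not\lesssim \mathscr{P}^{(K_3),1}$; and the triangle with a pendant at each of two distinct vertices belongs to $\mathcal{O}^{(2)}$ yet is a subgraph of $\mathscr{P}^{(K_3),1}_3$ (take $\{u^1, v^2, u^2, a^2, v^3\}$), giving $\mathscr{P}^{(K_3),1} \not\lesssim \mathscr{P}^{(K_3),2}$. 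Replace your third step with this obstruction-based argument and the proof is complete.
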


\paragraph{Elimination distance to a forest of paths.}

An equivalent way to \eqref{treedepth_parameter} to define the treedepth of a graph is the elimination distance to an edgeless graph, that is $\td = (1, \excl_{\leqslant_{\mathsf{m}}}(K_{2}))\text{-}\td$.
As we have seen in \autoref{treedepth_par} the set $\{\mathscr{P}\}$ containing the sequence $\mathscr{P}$ of paths is a minor-universal obstruction for $\td$.
Moreover, it is easy to see that $\edforest \preceq \td$.

Suppose that we wish to detect a graph parameter that sits between $\edforest$ and $\td$ in the hierarchy.
A natural way to approach this objective would be to consider as the target class
of the elimination distance procedure to be the class of linear forests $\closure{\leqslant_{\mathsf{m}}}{\mathscr{P}}$, i.e. the class of forests of paths.
As $\obs_{\leqslant_{\mathsf{m}}}(\closure{\leqslant_{\mathsf{m}}}{\mathscr{P}}) = \{K_{3}, K_{1,3}\}$, we define the following parameter.
\begin{align}
\edpforest \ &\coloneqq \ (1, \excl_{\leqslant_{\mathsf{m}}}(\{K_{3},K_{1,3})\})\text{-}\td.
\end{align}

We define the minor-parametric graph $\mathscr{Q}$, where 
$\mathscr{Q} = \{\mathscr{Q}_{t} \}_{t\in\mathbb{N}_{\geq 1}}$ are the ternary caterpillars depicted in \autoref{catrexlem}.

\begin{figure}[htbp]
\centering
\includegraphics[width=0.6\linewidth]{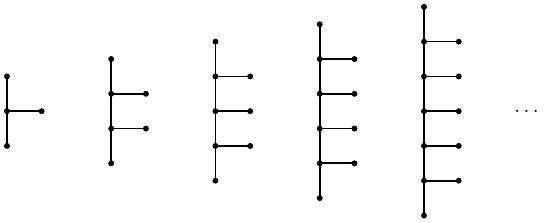}
\caption{\label{catrexlem}The minor-parametric graph $\mathscr{Q} = \langle \mathscr{Q}_1, \mathscr{Q}_2,\mathscr{Q}_3,\mathscr{Q}_4,\mathscr{Q}_5,\dots \rangle$ of ternary caterpillars.}
\end{figure}

Note that $\closure{\leqslant_{\mathsf{m}}}{\mathscr{Q}}$ is the class of caterpillars, that is $\caterpillars \coloneqq \excl_{\leqslant_{\mathsf{m}}}(\{K_{3}, S_{1,1,1}\})$, where $S_{1,1,1} = \Ocal_{1}^{(1)}$ is obtained from $K_{1,3}$ after subdividing once each of the edges of $K_{1,3}$.

To conclude, it is not difficult to observe that the $\mathbf{Z}$-chains that are constructed by considering $K_{3}$ and $K_{1,3}$ are either the two minor-parametric graphs $\mathscr{P}^{(1)}$ and $\mathscr{P}^{(2)}$ of linearly arranged triangles (because of $K_{3}$) or minor-parametric graphs that are equivalent to the minor-parametric graph $\mathscr{Q}$ of ternary caterpillars (because of $K_{1,3}$).
Therefore, by applying \autoref{more_erdos_posa_planar}, only the ternary caterpillars survive the $\lesssim$-minimization operation and we conclude with the following proposition.

\begin{theorem}\label{threqe}
The set $\{ \mathscr{Q} \}$ is a minor-universal obstruction for $\edpforest$.
Moreover, $\cobs_{\leqslant_{\mathsf{m}}}(\edpforest) = \big\{ \caterpillars \big\}$ and $\pobs_{\leqslant_{\mathsf{m}}}(\edpforest) = \big\{ \{K_{3}, S_{1,1,1}\} \big\}$.
\end{theorem}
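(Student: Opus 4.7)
The plan is to obtain the result as a direct application of \autoref{more_erdos_posa_planar} to $\Hcal = \excl_{\leqslant_{\mathsf{m}}}(\{K_{3}, K_{1,3}\})$, the class of linear forests. Since both obstructions $K_3$ and $K_{1,3}$ are planar (and connected), \autoref{more_erdos_posa_planar} guarantees that any $\lesssim$-minimization of
\begin{align*}
\mathfrak{E} \ \coloneqq \ \{ \mathscr{E}^{\mathbf{Z}} \mid \mathbf{Z} \in \Rcal(K_{3}) \} \ \cup \ \{ \mathscr{E}^{\mathbf{Z}} \mid \mathbf{Z} \in \Rcal(K_{1,3}) \}
\end{align*}
is a minor-universal obstruction for $\edpforest$. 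The bulk of the work is then to identify exactly which $\mathbf{Z}$-chains survive the minimization and verify that the surviving ones form a single $\approx$-class, namely that of $\mathscr{Q}$.

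First I would enumerate, up to isomorphism, the $2$-rooted graphs. For $K_3$ there are two: $(K_3,v,v)$ and $(K_3,v,u)$ with $v\neq u$, producing exactly the sequences $\mathscr{P}^{(K_3),1}$ and $\mathscr{P}^{(K_3),2}$ of \autoref{fig_K3_paths}. For $K_{1,3}$, denoting the center by $c$ and the leaves by $l_1,l_2,l_3$, the four non-isomorphic classes are $(K_{1,3},c,c)$, $(K_{1,3},c,l_1)$, $(K_{1,3},l_1,l_1)$, and $(K_{1,3},l_1,l_2)$. For each of these six cases I would draw the corresponding chain and exhibit explicit minor models in both directions with $\mathscr{Q}$.

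The key claims would be:
\begin{enumerate}
\item For $i\in\{1,2\}$ and every $t$, the sequence $\mathscr{P}^{(K_3),i}_t$ contains $\mathscr{Q}_{\Omega(t)}$ as a minor, obtained by contracting, in each triangle of the chain, one of its edges incident to a shared vertex; the two remaining vertices then provide one pendant leg attached to each spine vertex. Hence $\mathscr{Q}\lesssim \mathscr{P}^{(K_3),i}$ with linear gap, so $\mathscr{P}^{(K_3),1}$ and $\mathscr{P}^{(K_3),2}$ are dominated and drop out of the $\lesssim$-minimization.
\item For each $\mathbf{Z}\in\Rcal(K_{1,3})$, the chain $\mathscr{E}^{\mathbf{Z}}_{t}$ is a tree, and contracting its degree-$2$ vertices yields a caterpillar whose internal spine vertices all have degree~$3$. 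Together with the easy direction (every $\mathscr{Q}_t$ is a minor of $\mathscr{E}^{\mathbf{Z}}_{\Theta(t)}$, realised by reading off the spine through the centers/leaves), this shows $\mathscr{E}^{\mathbf{Z}}\approx \mathscr{Q}$ with linear gap.
\end{enumerate}
Combining both claims, the $\lesssim$-minimization of $\mathfrak{E}$ consists of a single representative of the $\approx$-class of $\mathscr{Q}$, hence $\{\mathscr{Q}\}$ is a minor-universal obstruction for $\edpforest$.

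Once $\{\mathscr{Q}\}$ is known to be a minor-universal obstruction, \autoref{cobs_uobs} immediately gives $\cobs_{\leqslant_{\mathsf{m}}}(\edpforest)=\{\closure{\leqslant_{\mathsf{m}}}{\mathscr{Q}}\}$, which by the definition recalled just before the theorem is exactly $\{\caterpillars\}$. The identification $\pobs_{\leqslant_{\mathsf{m}}}(\edpforest) = \big\{\{K_3,S_{1,1,1}\}\big\}$ then follows because $\obs_{\leqslant_{\mathsf{m}}}(\caterpillars)=\{K_3,S_{1,1,1}\}$, as also noted in the text preceding the theorem. The most delicate step I expect is the case $\mathbf{Z}=(K_{1,3},l_1,l_1)$, where the chain attaches consecutive copies through the \emph{same} leaf and the natural spine has to zig-zag through both $l_1$-leaves and centers; verifying that a clean ternary caterpillar emerges after contracting degree-two vertices, with only a linear loss in the parameter, is where I would concentrate the detailed calculation.
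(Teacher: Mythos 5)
Your overall strategy coincides with the paper's: instantiate \autoref{more_erdos_posa_planar} with $\Hcal = \excl_{\leqslant_{\mathsf{m}}}(\{K_3,K_{1,3}\})$, enumerate the $2$-rooted graphs over $K_3$ and $K_{1,3}$, and argue that the $\lesssim$-minimization of the resulting chain family is $\{\mathscr{Q}\}$. Claim~1 is fine: from each triangle of $\mathscr{P}^{(K_3),i}_t$ one extracts a spine vertex with one pendant, so $\mathscr{Q}_{\Omega(t)}\leqslant_{\mathsf{m}}\mathscr{P}^{(K_3),i}_t$ and both triangle chains are strictly $\gtrsim\mathscr{Q}$.

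Claim~2, however, is false precisely for the case $\mathbf{Z}=(K_{1,3},l_1,l_1)$ that you flagged as delicate. Write the copies as center $c^i$ with leaves $l_1^i,l_2^i,l_3^i$; the chain edges join consecutive $l_1^i$'s, so for $2\leq i\leq t-1$ every $l_1^i$ and every $c^i$ has degree exactly $3$. There are no degree-$2$ vertices to suppress, and the tree is not a caterpillar: for $t\geq 3$ the vertex $l_1^2$ has the three vertex-disjoint length-$2$ paths $l_1^2\,l_1^1\,c^1$, $l_1^2\,l_1^3\,c^3$, $l_1^2\,c^2\,l_2^2$, so $S_{1,1,1}\leqslant_{\mathsf{m}}\mathscr{E}^{\mathbf{Z}}_t$. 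Since $\caterpillars=\excl_{\leqslant_{\mathsf{m}}}(\{K_3,S_{1,1,1}\})=\closure{\leqslant_{\mathsf{m}}}{\mathscr{Q}}$, this means $\mathscr{E}^{\mathbf{Z}}_3$ is not a minor of any $\mathscr{Q}_k$, hence $\mathscr{E}^{\mathbf{Z}}\not\lesssim\mathscr{Q}$ and $\mathscr{E}^{\mathbf{Z}}\not\approx\mathscr{Q}$. Also note that for $(c,c)$ and $(c,l_1)$ the spine vertices have degree $5$ and $4$, respectively, not $3$; those cases are still caterpillars and still $\approx\mathscr{Q}$ (for $(c,c)$ one needs to contract runs of three consecutive spine vertices of $\mathscr{Q}$, a linear gap), but the ``internal degree $3$'' statement only holds for $(l_1,l_2)$.

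The repair is small and leaves your conclusion intact: you do not need every $K_{1,3}$-chain to be $\approx\mathscr{Q}$, only that every chain in the family satisfies $\mathscr{Q}\lesssim\mathscr{E}^{\mathbf{Z}}$ and that at least one does so with equality. For $(l_1,l_1)$, deleting all $l_2^i,l_3^i$ already gives $\mathscr{Q}_t\leqslant_{\mathsf{m}}\mathscr{E}^{\mathbf{Z}}_t$, so $\mathscr{Q}\lesssim\mathscr{E}^{\mathbf{Z}}$ strictly and this chain, like the two triangle chains, is discarded by the $\lesssim$-minimization. The cases $(c,c)$, $(c,l_1)$, $(l_1,l_2)$ each give $\mathscr{E}^{\mathbf{Z}}\approx\mathscr{Q}$, so the minimization is $\{\mathscr{Q}\}$, and the identifications of $\cobs_{\leqslant_{\mathsf{m}}}$ and $\pobs_{\leqslant_{\mathsf{m}}}$ via \autoref{cobs_uobs} then go through exactly as you wrote. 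For what it is worth, the paper's own one-line justification also asserts equivalence for all $K_{1,3}$-chains and thus glosses over $(l_1,l_1)$; your instinct to single it out was correct, only the resolution is domination rather than equivalence.
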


\paragraph{Elimination distance to planarity.}

Another consequence of the study in \cite{PaulPTW24Obstructions} on obstructions to Erd\H{o}s-P{\'o}sa dualities, is a minor-universal obstruction for $\Hcal\text{-}\td,$ for every proper minor-closed class $\Hcal,$ therefore generalizing \autoref{more_erdos_posa_planar}.

\medskip
To exemplify these results, we once more look into a particular instantiation of $\Hcal$-$\td,$ where $\Hcal = \gplanar.$
We define
\begin{align}
\mathsf{edplanar} \ &\coloneqq \ \gplanar\text{-}\td.
\end{align}

The first step is to understand which minor-parametric graphs are generated by all $\mathbf{Z}$-chains generated by linearly arranging $K_{5}$ and $K_{3,3}$ along a path in non-isomorphic ways.
We define $\mathscr{P}^{(K_{5}), 1}$ and $\mathscr{P}^{(K_{5}), 2}$ to be the minor-parametric graphs corresponding to the two ways $K_{5}$ can be arranged (see the first two graphs in \autoref{fig_td_planar_obs}) and $\mathscr{P}^{(K_{3,3}), 1}$, $\mathscr{P}^{(K_{3,3}), 2},$ and $\mathscr{P}^{(K_{3,3}), 3}$ to be the minor-parametric graphs corresponding to the three ways $K_{3,3}$ can be arranged (see the last three graphs in \autoref{fig_td_planar_obs}).
Moreover, recall the definition of the minor-parametric graphs $\mathscr{D}^{(1, 0)}$ (see \autoref{fig_torus_grid}) and $\mathscr{D}^{(0, 1)}$ (see \autoref{fig_projective_grid}).

\begin{figure}[htbp]
\centering
\includegraphics[width=0.6\linewidth]{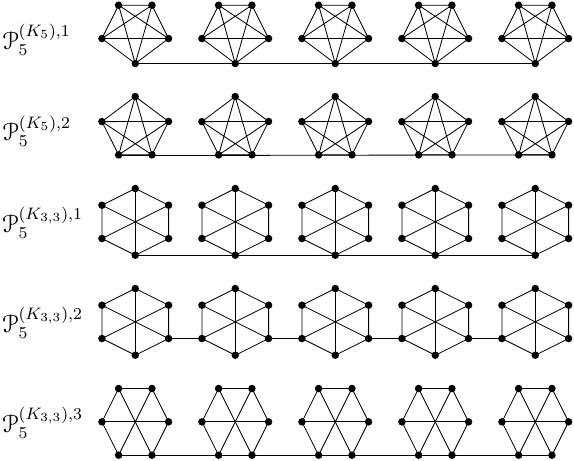}
\caption{\label{fig_td_planar_obs} An instance of each of the five minor-parametric graphs obtained from the two non-isomorphic ways to linearly arrange $K_{5}$ along a path and the three non-isomorphic ways to linearly arrange $K_{3,3}$ along a path.}
\end{figure}

A consequence of the general result in \cite{PaulPTW24Obstructions} is the following.

\begin{proposition}
The set
\begin{align*}
\{ \mathscr{D}^{(1, 0)}, \mathscr{D}^{(0, 1)}, \mathscr{P}^{(K_{5}), 1}, \mathscr{P}^{(K_{5}), 2}, \mathscr{P}^{(K_{3,3}), 1}, \mathscr{P}^{(K_{3,3}), 2}, \mathscr{P}^{(K_{3,3}), 3}\}
\end{align*}
is a minor-universal obstruction for $\mathsf{edplanar}$.
\end{proposition}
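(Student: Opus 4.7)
The plan is to adapt the strategy of \autoref{more_erdos_posa_planar} to the setting where the minor-obstructions of the target class are non-planar. Since $\obs_{\leqslant_{\mathsf{m}}}(\gplanar) = \{K_5, K_{3,3}\}$, and neither $K_5$ nor $K_{3,3}$ is planar, \autoref{prop_erdos_posa_gap} fails for $\gplanar$ and consequently \autoref{more_erdos_posa_planar} does not apply directly. The remedy is to replace the use of \autoref{prop_erdos_posa_planar} inside the inductive construction by the more general universal obstruction for $\apex_{\gplanar}$ established earlier in the paper, namely $\{\mathscr{D}^{(1,0)},\mathscr{D}^{(0,1)},\mathscr{K}^{(K_{5})},\mathscr{K}^{(K_{3,3})}\}$. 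The resulting extra obstructions of surface type will survive the iteration and produce precisely the two surface parametric graphs in the statement, while the $\mathscr{K}^{(K_{5})}$ and $\mathscr{K}^{(K_{3,3})}$ chunks will be "chained" by the elimination recursion into the five chain parametric graphs, exactly as non-isomorphic $2$-rooted graphs over $K_{5}$ and $K_{3,3}$.

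For the lower bound, I would argue that each of the seven parametric graphs has unbounded $\mathsf{edplanar}$. For $\mathscr{D}^{(1,0)}_{t}$ and $\mathscr{D}^{(0,1)}_{t}$, the universal obstruction for $\apex_{\gplanar}$ already yields $\apex_{\gplanar}(\mathscr{D}^{(\cdot,\cdot)}_{t}) = \Omega(t)$; moreover, these graphs are essentially $3$-connected, so the deletion of any single vertex still leaves a connected graph containing $\mathscr{D}^{(\cdot,\cdot)}_{t-O(1)}$ as a minor. Iterating yields $\mathsf{edplanar}(\mathscr{D}^{(\cdot,\cdot)}_{t}) = \Omega(t)$. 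For the five chain parametric graphs $\mathscr{P}^{(K_{5}),i}$ and $\mathscr{P}^{(K_{3,3}),j}$, the proof of \autoref{obst_Htd_lower} carries over verbatim: removing any vertex splits such a chain into at most two sub-chains sharing the total mass of copies of the underlying obstruction, giving $\mathsf{edplanar} = \Omega(\log t)$.

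For the upper bound, I would mirror the proof of \autoref{more_erdos_posa_planar}, defining $\Gcal_{0} = \mathbf{C}(\Gcal_{\barrier_{\gplanar},1})$ and $\Gcal_{i+1} = \mathbf{C}(\Gcal_{\barrier_{\Gcal_{i}},1})$, and prove inductively that $\obs_{\leqslant_{\mathsf{m}}}(\Gcal_{i})$ contains an "$i$-fold chained obstruction" together with surface-type obstructions that are witnessed by $\mathscr{D}^{(1,0)}$ and $\mathscr{D}^{(0,1)}$. The iteration uses, at each step, the universal obstruction for $\apex$ over the current class; since the obstruction set of $\gplanar$ is non-planar we invoke the general $\apex$-obstruction characterization from \cite{PaulPTW24Obstructions} instead of \autoref{prop_erdos_posa_planar}. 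Via \autoref{prop_dawar_ed}, large $\mathsf{edplanar}(G)$ forces $G \notin \Gcal_{h(k)}$ for an appropriate $h$, which then forces the presence of one of the seven parametric graphs indexed at $k$.

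The main obstacle is tracking how the two types of obstructions behave differently under the iteration of the $\mathbf{C}(\mathbf{A}(\cdot))$ operator. Concretely, one must show: (i) a $\mathscr{K}^{(K_{5})}$ or $\mathscr{K}^{(K_{3,3})}$ minor appearing in some iterated class $\Gcal_{i}$ can be unpacked into an $\Hcal$-chain in the original graph $G$, and by the pigeonhole-style argument of \autoref{one_obst_Htd} applied to the at most $|\mathcal{R}(K_{5})|+|\mathcal{R}(K_{3,3})|$ isomorphism types of $2$-rooted obstructions, we obtain one of the five chain parametric graphs after $\lesssim$-minimization; and (ii) a $\mathscr{D}^{(1,0)}$ or $\mathscr{D}^{(0,1)}$ minor appearing in an iterated class survives directly to $G$ because surface grids are internally connected and cannot be produced via decompositions into blocks of proper sub-classes. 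Combining (i) and (ii) with a careful bookkeeping of the gap function of each step yields the seven-element universal obstruction claimed in the statement.
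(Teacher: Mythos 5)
The paper does not actually prove this proposition: it is stated as ``a consequence of the general result in \cite{PaulPTW24Obstructions}'' (the general universal-obstruction characterization of $\Hcal\text{-}\td$ for arbitrary proper minor-closed $\Hcal$), and no argument is given in the text. Your proposal, in contrast, tries to reconstruct a proof from first principles by bootstrapping the machinery of \autoref{more_erdos_posa_planar}. The lower-bound half of your plan is sound: the chain obstructions give $\Omega(\log t)$ by the verbatim argument of \autoref{obst_Htd_lower}, and for the surface grids the observation that they are $3$-connected (so that $c$-block elimination degenerates to vertex deletion on them, hence $\mathsf{edplanar}(\mathscr{D}^{(\cdot,\cdot)}_t) \geq \apex_{\gplanar}(\mathscr{D}^{(\cdot,\cdot)}_t) = \Omega(t)$) is the right idea.

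The gap is in the upper bound. The recursion of \autoref{more_erdos_posa_planar}, namely $\Gcal_{i+1}=\mathbf{C}(\Gcal_{\barrier_{\Gcal_i},1})$, is glued to the elimination hierarchy $\Ccal_{i+1}(\Hcal)=\mathbf{C}(\mathbf{A}(\Ccal_i(\Hcal)))$ precisely through the Erd\H{o}s--P\'osa inclusion $\Gcal_{\barrier_{\Hcal},k}\subseteq \Gcal_{\apex_{\Hcal},f_0(k)}$, which in turn requires $\obs_{\leqslant_{\mathsf{m}}}(\Hcal)$ to contain a planar graph at \emph{every} level of the iteration (condition~(1) in that proof). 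When $\Hcal=\gplanar$ this fails already at level $0$, so the entire chain of inclusions that lets one pass from $G\notin\Ccal_{h(k)}(\Hcal)$ to $G\notin\Gcal_{h(k)}$ collapses. You say ``invoke the general $\apex$-obstruction from \cite{PaulPTW24Obstructions} instead of \autoref{prop_erdos_posa_planar}'', but that substitution does not slot cleanly into the same recursion: the general obstruction for $\apex_{\Gcal_i}$ involves surface-type parametric graphs whose genus depends on (and grows with) the genus of $\obs_{\leqslant_{\mathsf{m}}}(\Gcal_i)$, which is no longer constant as $i$ increases (the obstructions of $\Gcal_i$ are chains of $K_5$'s and $K_{3,3}$'s of length $i+2$, hence of genus $\Theta(i)$). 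You would need an additional argument that all these higher-genus surface grids $\lesssim$-reduce to the two genus-one grids $\mathscr{D}^{(1,0)},\mathscr{D}^{(0,1)}$, and that the apex obstruction at level $i$ can be pulled back to level $0$ through the $\mathbf{C}(\mathbf{A}(\cdot))$ iterations; neither of these is explained, and they are not a matter of ``bookkeeping the gap function'' but require a restructuring of the inductive mechanism. This is precisely the nontrivial content delegated by the paper to \cite{PaulPTW24Obstructions}.
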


\subsubsection{More on the hierarchy of minor-monotone parameters}

In this subsection, we have considered two types of parameters, namely apex parameters and elimination distance parameters.
This leads to two partial hierarchies of parameters.
For the apex parameters, we have
\begin{align}
\mathsf{apexplanar} \ \preceq \ \apexouter \ \preceq \ \fvs \ \preceq \ \vc
\end{align}
and the relationship between their minor-parametric obstructions is
\begin{align}
\nonumber\big\{\{P_{3}\}\big\} \ \leqslant^{**}_{\mathsf{m}} \ \big\{ &\{P_4,K_{1,3}\}\big\}\\
\nonumber\leqslant^{**}_{\mathsf{m}} \ \big\{ &\{P_{5}, K_{1,4}, K_{1,3}^s \},\\
\nonumber&\{ P_{6}, Q_{2}, K_{1,3}^{ds}, K_{1,3}^{2s}, K_{1,4}, K_{4}, C_{5}, K_{3}^{de}, K_{3}^{es} \} \big\}\\
\leqslant^{**}_{\mathsf{m}} \ \big\{ &\obs_{\leqslant_{\mathsf{m}}}(\Gcal_{\mathsf{toroidal}}),\\
\nonumber&\Ocal_{\mathsf{projective}},\\
\nonumber&\{ P_{6}, Q_{2}, K_{1,5}, K_{1,4}^{\mathsf{s}}, K_{1,3}^{\mathsf{ds}}, K_{1,3}^{\mathsf{2s}} \},\\
\nonumber&\obs_{\leqslant_{\mathsf{m}}}(\closure{\leqslant_{\mathsf{m}}}{\mathscr{K}^{(K_{3,3})}}) \big\}.
\end{align}

For the elimination distance parameters, we have that
\begin{align}
\mathsf{edplanar} \ \preceq \ \edforest \ \preceq \ \edpforest \ \preceq \ \td
\end{align}
and for the corresponding minor-parametric obstructions
\begin{align}
\big\{\{K_3,K_{1,3}\}\big\} \ &\leqslant^{**}_{\mathsf{m}} \ \big\{\{K_{3},S_{1,1,1}\}\big\} \ \leqslant^{**}_{\mathsf{m}} \ \big\{\Ocal^{(1)},\Ocal^{(2)}\big\}\\
\nonumber&\leqslant^{**}_{\mathsf{m}} \ \pobs_{\leqslant_{\mathsf{m}}}(\mathsf{edplanar}).
\end{align}

\subsection{Biconnected variants}
\label{blosckldkior}

Our next and last step in the realm of the minor relation is to focus on variants of already known parameters whose decomposition scheme applies to the $2$-blocks of the input graph.
For simplicity, we use the term \emph{blocks} instead of $2$-blocks.

\subsubsection{Biconnected pathwidth}\label{bicpw}

The first parameter we consider is biconnected pathwidth, denoted by $\bipw$, which we have already seen as part of our presentation in \autoref{diidnidialdedfs}.
Recall that the biconnected pathwidth of a graph $G$ is defined so that
\begin{align}
\bipw(G) \ &\coloneqq \ \min\{ \pw(B) \mid B\text{ is a block of }G\}.\label{bi_pathwidth_parameter}
\end{align}

We define the minor-parametric graph $\mathscr{T}^{a} = \langle \mathscr{T}_k^{a} \rangle_{k \in \mathbb{N}_{geq 2}}$, obtained from the complete ternary tree $\mathscr{T}_k$, defined in \autoref{pathwidth_def}, by adding a new vertex and making it adjacent to the leaves of $\mathscr{T}_k$.
See \autoref{fig_bpw} for an illustration.

\begin{figure}[htbp]
\centering
\includegraphics[width=0.8\linewidth]{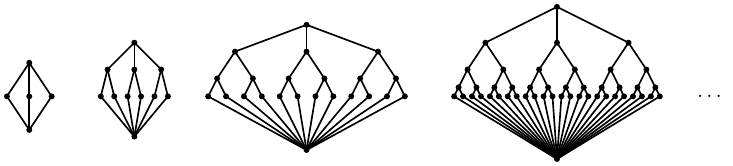}
\caption{\label{fig_bpw} The minor-parametric graph $\mathscr{T}^a = \langle \mathscr{T}^a_2, \mathscr{T}^a_3, \mathscr{T}^a_4, \mathscr{T}^a_5,\ldots\rangle$.}
\end{figure}

We also define the minor-parametric graph $\mathscr{T}^{a*} = \langle \mathscr{T}^{a*}_k \rangle_{k\in\mathbb{N}_{\geq 2}}$ so that $\mathscr{T}^{a*}_k$
is obtained from $\mathscr{T}^{a}_k$ after taking its dual and, for each pair of 
double edges that occur, subdividing one of them once.
See \autoref{fig_dual_Tk} for an illustration.

\begin{figure}[htbp]
\centering
\includegraphics[width=0.8\linewidth]{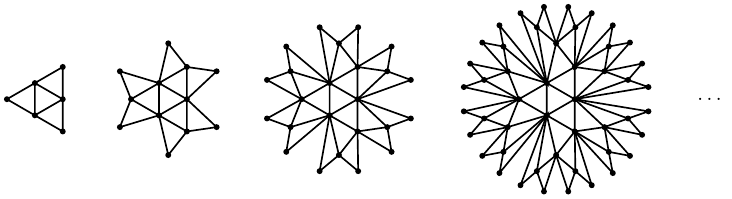}
\caption{\label{fig_dual_Tk} The minor-parametric graph $\mathscr{T}^{a*} = \langle \mathscr{T}^{a*}_2, \mathscr{T}^{a*}_3, \mathscr{T}^{a*}_4, \mathscr{T}^{a*}_5, \ldots \rangle$.}
\end{figure}

Note that $\tw(\mathscr{T}^{a}_k) = \tw(\mathscr{T}^{a*}_k) = 2$, while for the biconnected graphs $\mathscr{T}^a_k$ and $\mathscr{T}^{a*}_k$, it holds that 
$\bipw(\mathscr{T}^a_k) = Ω(k)$ and $\bipw(\mathscr{T}^{a*}_k) = Ω(k)$.

In addition, note that $\mathscr{T}^{a}$ is a minor-omnivore of the class of apex forests $\gfapex$, i.e., the graphs that have \emph{apex number} at most one to the class $\excl_{\leqslant_{\mathsf{m}}}(K_{3})$ of acyclic graphs (see e.g., \cite[Lemma 2.2]{dang2018minors}).
Also $\mathscr{T}^{a*}$ is a minor-omnivore of the class $\gouterplanar = \excl_{\leqslant_{\mathsf{m}}}(\{K_{4}, K_{2,3}\})$ of outerplanar graphs (see e.g., \cite[Lemma 2.4]{dang2018minors}).
One may show (see e.g., \cite[Lemma 2.1]{dang2018minors} or \cite{DinneenCF0fForbidden}) that $\obs_{\leqslant_{\mathsf{m}}}(\closure{\leqslant_{\mathsf{m}}}{\mathscr{T}^{a}}) = \{S_3, 2 \cdot K_{3}, K_{4}\}$, where $S_3 = \mathscr{T}^{a*}_2$ is the octahedron $K_{2,2,2}$ minus one triangle (in fact $\closure{\leqslant_{\mathsf{m}}}{\mathscr{T}^{a}}$ consists of the duals of the outerplanar graphs).
See \autoref{biconnected_pathwidth_pobs} for an illustration of these graphs.

\begin{figure}[htbp]
\centering
\includegraphics[width=0.41\linewidth]{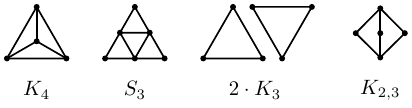}
\caption{\label{biconnected_pathwidth_pobs}The graphs $K_4, S_3, 2 \cdot K_3, K_{2,3}$.}
\end{figure}

Dang and Thomas \cite{dang2018minors} as well as Huynh,  Joret,  Micek, and  Wood \cite{HuynhJMW20Seymour} independently proved that there exists a function $f \colon \nton$ such that every 2-connected graph $G$ where $\pw(G) \geq f(k)$ contains as a minor either $\mathscr{T}_k^{a}$ or $\mathscr{T}_k^{a*}$.
We conclude with the following proposition

\begin{theorem}
The set $\{\mathscr{T}^a, \mathscr{T}^{a*}\}$ is a minor-universal obstruction for $\bipw$.
Moreover, it holds that $\cobs_{\leqslant_{\mathsf{m}}}(\bipw) = \{\gfapex, \gouterplanar\}$ and $\pobs_{\leqslant_{\mathsf{m}}}(\bipw) = \big\{ \{K_4, S_3, 2\cdot K_3\}, \{K_4, K_{2,3}\} \big\}$.
\end{theorem}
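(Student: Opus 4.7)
The plan is to first establish the equivalence $\bipw \sim \p_{\{\mathscr{T}^a,\mathscr{T}^{a*}\}}$; the stated descriptions of $\cobs_{\leqslant_{\mathsf{m}}}(\bipw)$ and $\pobs_{\leqslant_{\mathsf{m}}}(\bipw)$ will then follow by combining \autoref{cobs_uobs} with the omnivore statements recalled just before the theorem ($\mathscr{T}^a$ is a minor-omnivore of $\gfapex$ and $\mathscr{T}^{a*}$ of $\gouterplanar$) together with the minor-obstruction sets $\obs_{\leqslant_{\mathsf{m}}}(\gfapex) = \{K_4, S_3, 2\cdot K_3\}$ and $\obs_{\leqslant_{\mathsf{m}}}(\gouterplanar) = \{K_4, K_{2,3}\}$.

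For the direction $\p_{\{\mathscr{T}^a,\mathscr{T}^{a*}\}} \preceq \bipw$, I would use that both $\mathscr{T}^a_k$ and $\mathscr{T}^{a*}_k$ are $2$-connected. Therefore any minor model of either graph inside $G$ lies within a single block $B$ of $G$, since the branch sets of a $2$-connected minor cannot straddle a cut vertex. Combined with the facts $\pw(\mathscr{T}^a_k) = \Omega(k)$ and $\pw(\mathscr{T}^{a*}_k) = \Omega(k)$ recorded just before the statement, this yields $\bipw(G) \geq \pw(B) = \Omega(k)$ whenever $\p_{\{\mathscr{T}^a,\mathscr{T}^{a*}\}}(G) \geq k$.

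For the reverse direction $\bipw \preceq \p_{\{\mathscr{T}^a,\mathscr{T}^{a*}\}}$, I would invoke the theorem of Dang--Thomas and, independently, Huynh--Joret--Micek--Wood quoted immediately above the statement: there is a function $f \colon \Nbbb \to \Nbbb$ such that every $2$-connected graph of pathwidth at least $f(k)$ contains either $\mathscr{T}^a_k$ or $\mathscr{T}^{a*}_k$ as a minor. Applied to a block $B$ of $G$ whose pathwidth witnesses $\bipw(G) \geq f(k)$, this produces a minor of $G$ in $\{\mathscr{T}^a_k,\mathscr{T}^{a*}_k\}$, i.e.\ $\p_{\{\mathscr{T}^a,\mathscr{T}^{a*}\}}(G) \geq k$.

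Once the equivalence is in hand, \autoref{cobs_uobs} provides a bijection between $\{\mathscr{T}^a, \mathscr{T}^{a*}\}$ and $\cobs_{\leqslant_{\mathsf{m}}}(\bipw)$ pairing each parametric graph with the minor-closed class of which it is an omnivore; since $\closure{\leqslant_{\mathsf{m}}}{\mathscr{T}^a} = \gfapex$ and $\closure{\leqslant_{\mathsf{m}}}{\mathscr{T}^{a*}} = \gouterplanar$, this fixes $\cobs_{\leqslant_{\mathsf{m}}}(\bipw) = \{\gfapex,\gouterplanar\}$, and reading off the two recalled minor-obstruction sets yields the displayed $\pobs_{\leqslant_{\mathsf{m}}}(\bipw)$. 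I expect the main obstacle to be the bookkeeping in the lower bound step: one must argue cleanly that the $2$-connectedness of $\mathscr{T}^a_k$ and $\mathscr{T}^{a*}_k$ confines any minor model to a single block, and it is worth separately verifying that $\{\mathscr{T}^a, \mathscr{T}^{a*}\}$ is a $\lesssim$-antichain (so that it qualifies as a $\leqslant_{\mathsf{m}}$-parametric family), which should follow from the $\subseteq$-incomparability of $\gfapex$ and $\gouterplanar$.
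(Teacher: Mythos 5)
Your proof is correct and follows essentially the same route that the paper's surrounding discussion sketches: the Dang--Thomas / Huynh--Joret--Micek--Wood theorem gives $\bipw \preceq \p_{\{\mathscr{T}^a,\mathscr{T}^{a*}\}}$, the $2$-connectedness of $\mathscr{T}^a_k$ and $\mathscr{T}^{a*}_k$ together with their $\Omega(k)$ pathwidth gives the reverse, and \autoref{cobs_uobs} plus the stated omnivore identities and obstruction sets yield the class and parametric obstructions. You make explicit the one step the paper leaves implicit, namely that a minor model of a $2$-connected graph lies inside a single block.
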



Marshall and Wood \cite{MarshallW15circumference} defined the minor-monotone parameter $\mathsf{g} \colon \gall\to\mathbb{N}$ such that $\mathsf{g}(G)$ is the minimum $k$ for which there exists a function $h \colon \nton$ such that every $\mathsf{g}(G)$-connected $G$-minor-free graph has pathwidth at most $h(G)$. 
It was conjectured in \cite{MarshallW15circumference} that 
$\obs_{\leqslant_{\mathsf{m}}}(\{G \in \gall \mid \mathsf{g}(G) \leq 2\}) = \{K_4, S_3, 2\cdot K_3, K_4, K_{2,3}\}$.
This conjecture was proved in \cite[Section 2]{dang2018minors} as a consequence of the results of \cite{dang2018minors} and \cite{HuynhJMW20Seymour}.
The proof can be seen as a consequence of the fact that $\pobs_{\leqslant_{\mathsf{m}}}(\bipw) = \big\{ \{K_4, S_3, 2\cdot K_3\}, \{K_4, K_{2,3}\} \big\}$.

\subsubsection{Block treedepth}
\label{subsec_block_tree}

Consider a minor-closed class $\Hcal$ whose minor-obstruction set consists of $2$-connected graphs.
In \autoref{evjvhaiendndodbabsed} we defined the parameter $\Hcal\text{-}\bed =(2, \Hcal)\text{-}\td$ as the block elimination distance of a graph $G$ to the class $\Hcal.$
The parameter $\Hcal\text{-}\bed$ was first introduced in \cite{DinerGST22Block}.

In this subsection, we focus on a particular simple instantiation of $\Hcal\text{-}\bed$ where $\Hcal$ is the class $\gforest$ of acyclic graphs.
We define the \emph{block treedepth} of a graph $G$, denoted by $\btd$ as follows.
\begin{align}
\btd(G) \ \coloneqq \ \gforest\text{-}\bed.\label{block_treedepth_parameter}
\end{align}

The parameter $\btd$ was considered by Huynh,  Joret,  Micek,  Seweryn, and  Wollan \cite{HuynhJMSW22Excluding}.

We define the minor-parametric graph $\mathscr{L} = \langle \mathscr{L}_k \rangle_{k\in\mathbb{N}_{\geq 2}}$ where for every $k \geq 2,$ $\mathscr{L}_k$ is the \emph{$k$-ladder}: the Cartesian product of the path $P_k$ and $K_2.$
See \autoref{fig_ladder} for an illustration.

\begin{figure}[htbp]
\centering
\includegraphics[width=0.6\linewidth]{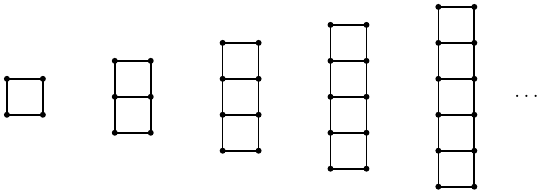}
\caption{\label{fig_ladder} The minor-parametric graph $\mathscr{L} = \langle \mathscr{L}_2, \mathscr{L}_3, \mathscr{L}_4, \mathscr{L}_5, \ldots \rangle$ of ladders.}
\end{figure}

According to \cite{HuynhJMSW22Excluding}, there exists a function $f \colon \nton$ such that every graph $G$ where $\btd(G) \geq  f(k)$ contains $\mathscr{L}_k$ as a minor. Moreover, it is easy to verify that $\btd(\mathscr{L}_k) = \Omega(\log k)$.
The graph class $\mathcal{\ladderminors} \coloneqq \closure{\leqslant_{\mathsf{m}}}{\mathscr{L}}$ of minors of ladders naturally arises in different contexts of graph theory.
Indeed, it contains all graphs with mixed search number at most two (see \cite{TakahashiYK95Mixed, TakahashiUK95b}).
Following the terminology of \cite{HarveyW17Parameters}, these graphs also consist of the graphs with \textsl{Cartesian path product number} at most two.
Finally, following the terminology of \cite{MescoffPT22Themixed}, these graphs also admit a loose path decomposition of width at most two.
Most interestingly, Takahashi,  Ueno,  and Kajitani \cite{TakahashiUK95b} identified the minor-obstruction set $\Ocal^{L} = \obs_{\leqslant_{\mathsf{m}}}(\mathcal{\ladderminors})$ that consists of the 36 graphs depicted in \autoref{lasder_j}.

\begin{figure}[htbp]
\centering
\includegraphics[width=1\linewidth]{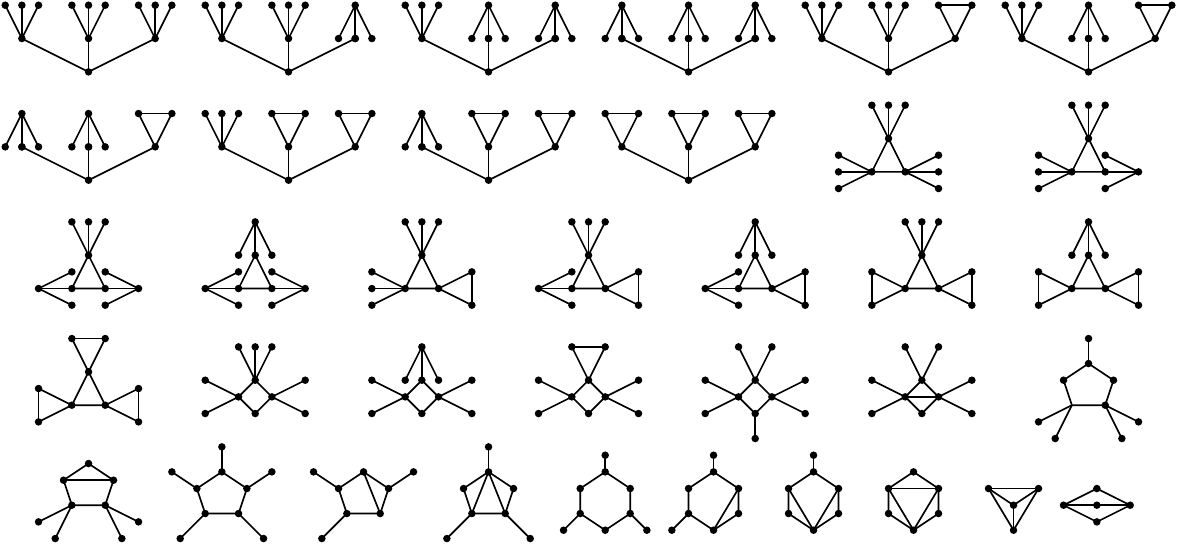}
\caption{\label{lasder_j}The graphs in the minor-obstruction set $\Ocal^{L} = \obs_{\leqslant_{\mathsf{m}}}(\mathcal{\ladderminors})$.}
\end{figure}

We can now conclude with the following proposition.

\begin{proposition}\label{blochdtheorem}
The set $\mathfrak{L} = \{ \mathscr{L} \}$ is a minor-universal obstruction for $\btd$. Moreover, $\cobs_{\leqslant_{\mathsf{m}}}(\btd) = \{ \ladderminors \}$ and $\pobs_{\leqslant_{\mathsf{m}}}(\btd) = \big\{ \Ocal^{L} \big\}$.
\end{proposition}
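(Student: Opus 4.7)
The plan is to verify that $\mathscr{L}$ satisfies the equivalence $\btd \sim \p_{\{\mathscr{L}\}}$ and then derive the class and parametric obstructions via \autoref{cobs_uobs}. First I would confirm that $\btd$ is $\leqslant_{\mathsf{m}}$-monotone, which follows from the recursive definition \eqref{c_htd_parameter} together with the fact that $\gforest$ is minor-closed and that minor operations respect the block decomposition (contracting an edge either lives inside a block, identifies two blocks at a cut vertex, or removes a bridge; in each case the block-elimination procedure for the minor can be realized by eliminating the image of the vertex deleted in the host).

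For the direction $\p_{\{\mathscr{L}\}} \preceq \btd$, I would use the indicated lower bound $\btd(\mathscr{L}_k) = \Omega(\log k)$. The quickest way to see this is inductive: since $\mathscr{L}_k$ is $2$-connected, removing any single vertex leaves a graph whose unique non-trivial block contains $\mathscr{L}_{\lceil k/2\rceil}$ as a minor (the deletion splits the ladder into two halves joined by a rung; the larger half is already a smaller ladder). This yields a recurrence $\btd(\mathscr{L}_k) \geq 1 + \btd(\mathscr{L}_{\lceil k/2\rceil})$, hence the logarithmic lower bound. Combined with $\leqslant_{\mathsf{m}}$-monotonicity this gives $\p_{\{\mathscr{L}\}} \preceq \btd$. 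For the reverse $\btd \preceq \p_{\{\mathscr{L}\}}$, I would invoke directly the theorem of Huynh, Joret, Micek, Seweryn, and Wollan from \cite{HuynhJMSW22Excluding} cited in the paragraph above: there exists $f \colon \mathbb{N} \to \mathbb{N}$ such that $\btd(G) \geq f(k)$ forces $\mathscr{L}_k \leqslant_{\mathsf{m}} G$, which rephrased is precisely $\btd(G) \leq f(\p_{\{\mathscr{L}\}}(G))$.

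Having established $\btd \sim \p_{\{\mathscr{L}\}}$, the singleton $\{\mathscr{L}\}$ is by definition a $\leqslant_{\mathsf{m}}$-universal obstruction for $\btd$. To read off the class obstruction, observe that $\mathscr{L}$ is trivially a $\leqslant_{\mathsf{m}}$-omnivore of $\ladderminors = \closure{\leqslant_{\mathsf{m}}}{\mathscr{L}}$, so \autoref{cobs_uobs} yields a bijection from $\{\mathscr{L}\}$ to $\cobs_{\leqslant_{\mathsf{m}}}(\btd)$ whose image is $\ladderminors$; that is, $\cobs_{\leqslant_{\mathsf{m}}}(\btd) = \{\ladderminors\}$. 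For the parametric obstruction we apply the defining formula $\pobs_{\leqslant_{\mathsf{m}}}(\btd) = \{\obs_{\leqslant_{\mathsf{m}}}(\ladderminors)\}$ and invoke the explicit determination of this minor-obstruction set by Takahashi, Ueno, and Kajitani in \cite{TakahashiUK95b}, namely $\obs_{\leqslant_{\mathsf{m}}}(\ladderminors) = \mathcal{O}^L$, the $36$ graphs depicted in \autoref{lasder_j}.

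The only non-routine ingredient internal to this proof is the lower bound $\btd(\mathscr{L}_k) = \Omega(\log k)$, so the main obstacle is a careful verification that removing a single vertex from a large ladder leaves behind a block containing a comparably large ladder as a minor; this is visually clear but deserves a short formal argument distinguishing whether the deleted vertex is a rung endpoint on the top or bottom rail and tracking which side of the split hosts the surviving sub-ladder. Everything else consists either of citing the two external results of \cite{HuynhJMSW22Excluding} and \cite{TakahashiUK95b} or applying the general machinery of \autoref{cobs_uobs}.
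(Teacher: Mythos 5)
Your proposal is correct and follows exactly the argument the paper sketches in the paragraph preceding the proposition: invoke the Huynh--Joret--Micek--Seweryn--Wollan theorem for $\btd \preceq \p_{\{\mathscr{L}\}}$, the logarithmic lower bound $\btd(\mathscr{L}_k) = \Omega(\log k)$ for the converse, and read off $\cobs$ and $\pobs$ via \autoref{cobs_uobs} together with the Takahashi--Ueno--Kajitani obstruction set $\mathcal{O}^L$. The only content you add is the short recursive argument for the lower bound (which the paper leaves as ``easy to verify''); your version is sound, modulo the inconsequential off-by-one that deleting an interior vertex $(i,0)$ leaves two rail-blocks $\mathscr{L}_{i-1}$ and $\mathscr{L}_{k-i}$ whose larger one has order at least $\lceil (k-1)/2\rceil$ rather than $\lceil k/2\rceil$, which does not affect the $\Omega(\log k)$ bound.
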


\subsubsection{From treedepth to pathwidth and treewidth}
 An interesting question concerning graph parameters is whether it is possible  
to construct hierarchies of parameters lying between already known ones.  
In \Cref{treewidth_sec} we have already seen how treewidth can be expressed in terms of clique-sum closures (see {\eqref{treewidth_parameter}}).  
Treedepth and block treedepth, defined in \autoref{treedepth_par} and \autoref{subsec_block_tree} respectively, can also be described using elimination distances.  
An important step toward finding a hierarchy of parameters between treedepth and treewidth was made by  Rambaud in \cite{Rambaud2025Excluding}, who introduced the parameter $k$-tree-depth by combining both the clique-sum and elimination-distance frameworks.

Let $G_1$ and $G_2$ be graphs, and let $C_1$ and $C_2$ be, possibly
empty, cliques of $G_1$ and $G_2$, respectively, of the same size.
Recall that, a \emph{clique-sum} of $G_1$ and $G_2$ is any graph obtained from the
disjoint union of $G_1$ and $G_2$ by identifying $C_1$ and $C_2$ via a
bijection between their vertex sets, and possibly deleting some edges
whose endpoints both belong to the identified clique.
If $|C_1|=|C_2|=r$, we call it an \emph{$r$-clique-sum}.

For $k\in\Nbbb_{\ge 1}$, a \emph{$(<k)$-clique-sum} is an
$r$-clique-sum for some integer $r$ with $0\le r<k$.
Note that $0$-clique-sum means taking the disjoint union.

Given $k\in\Nbbb_{\ge 1}$ and a class $\Ccal$ of graphs, the
\emph{$(<k)$-clique-sum closure} of $\Ccal$, denoted by $\Ccal^{(k)}$,
is the smallest class of graphs that contains $\Ccal$ and is closed
under $(<k)$-clique-sums. Equivalently, $\Ccal^{(k)}$ is the class of
all graphs that can be obtained from graphs in $\Ccal$ by repeatedly
taking $(<k)$-clique-sums.
We also set $\Ccal^{(0)}\coloneqq\Ccal$.

With these definitions, $\Ccal^{(1)}$ is the closure of $\Ccal$ under
disjoint unions, i.e. the connectivity closure of $\Ccal$, defined
in~\eqref{conn_clo}, and $\Ccal^{(2)}$ is the \emph{biconnectivity
closure} of $\Ccal$, that is, the class of all graphs whose $2$-blocks
belong to $\Ccal$.

Recall that in \Cref{apeplicx} we defined, for a class $\Ccal$ of
graphs, its \emph{apex extension} as
$\mathbf{A}(\Ccal)
  \coloneqq
  \Ccal
  \cup
  \{\,G \mid \exists v\in V(G) \text{ such that } G-v\in\Ccal\,\}$.
For $k\in\Nbbb_{\ge 1}$, we recursively define the graph classes
$\Gcal_t^{(k)}$, for $t\in\Nbbb_{\ge 0}$, as follows:
$\Gcal_0^{(k)}\coloneqq \{K_0\}$, where $K_0$ denotes the empty graph,
and, for every $t\ge 1$,
$$\Gcal_t^{(k)}
  \coloneqq
  \bigl(\mathbf{A}(\Gcal_{t-1}^{(k)})\bigr)^{(k)}.$$
The \emph{$k$-treedepth} of a graph $G$ is then defined by
$\td_k(G)
  \coloneqq
  \min\{\,t\in\Nbbb_{\ge 0}\mid G\in\Gcal_t^{(k)}\,\}.$

In words, for $t\ge 1$, a graph $G$ has $k$-treedepth at most $t$ if
and only if $G$ belongs to the $(<k)$-clique-sum closure of the class
of graphs $H$ such that either $\td_k(H)\le t-1$, or there exists a
vertex $v\in V(H)$ with $\td_k(H-v)\le t-1$.

The definition above yields a hierarchy of parameters:
\[
\mathsf{td}_1 = \mathsf{td} \le \mathsf{td}_2 \le \cdots \le \mathsf{td}_\infty,
\]
where $\mathsf{td}_1=\td$, $\mathsf{td}_2=\btd$, and $\mathsf{td}_\infty = \mathsf{tw} + 1$.  
For each tree $T$, we define $\mathscr{H}^T = \{T \square P_k \mid k \in \mathbb{N}\},$
where ``$\square$'' denotes the Cartesian product of graphs.  
Furthermore, for every $k\in\Nbbb_{\geq 1}\cup\{\infty\}$ we define 
$
\mathfrak{H}^{(k)}=\{\mathscr{H}^T \mid T \text{ is a tree on $k$ vertices}\}.
$
 In~\cite{Rambaud2025Excluding}, the following result was established.

\begin{proposition}\label{the_k_td}
For every $k\in\Nbbb_{\geq 1}\cup\{\infty\}$, the set $\mathfrak{H}^{(k)}$ is a minor-universal obstruction for $\td_{k}$. 
\end{proposition}

In~\cite{Rambaud2025Excluding}, Rambaud also introduced, using $k$-clique-sums and path decompositions, a hierarchy of parameters between treedepth and pathwidth:
\[
\mathsf{pd}_1 \le \mathsf{pd}_2 \le \cdots \le \mathsf{pd}_\infty.
\]
Here $\mathsf{pd}_1=\td$ and $\mathsf{pd}_\infty = \mathsf{pw} + 1$.  
Recall that $\mathfrak{T}=\{ \mathscr{T}\}$, where $\mathscr{T}$ is the minor-parametric graph of complete ternary trees of depth $k$ (see \autoref{fig_ternary_trees}).  
A universal  obstruction characterization for the ${\sf pd}_{k}$ hierarchy as proved in \cite{Rambaud2025Excluding} as follows.

\begin{proposition}\label{the_k_pd}
For every $k\in\Nbbb_{\geq 1}\cup\{\infty\}$, the set $\mathfrak{H}^{(k)}\cup\mathfrak{T}$ is a minor-universal obstruction for ${\sf pd}_{k}$. 
\end{proposition}

\subsubsection{More on the hierarchy of minor-monotone parameters}

Observe that biconnected pathwidth is by definition sandwiched between treewidth and pathwidth.
\begin{align}
\tw \ \preceq \ \bipw \ \preceq \ \pw
\end{align}
and the corresponding parametric obstructions are ordered as follows
\begin{align}
\big\{\{K_3\}\big\} \ \leqslant^{**}_{\mathsf{m}} \ \big\{\{K_4,S_3,2\cdot K_3\},\{K_4,K_{2,3}\}\big\} \ \leqslant^{**}_{\mathsf{m}} \ \big\{\{K_{5},K_{3,3}\}\big\}.
\end{align}

Another interesting observation is that the two parameters $\bipw$ and $\btd$ that we considered in this subsection are not $\preceq$-comparable.
Indeed, this follows easily by observing that their minor-universal obstructions \autoref{threqe} and \autoref{blochdtheorem} are $\lesssim^{*}$-incomparable.
Indeed, $\{\mathscr{L}\} \not\lesssim^* \{\mathscr{T}^a, \mathscr{T}^{a*}\}$ since $\mathscr{L} \not\lesssim \mathscr{T}^{a*}$ as we need to delete many vertices from a ladder in order to make it acyclic and $\{\mathscr{T}^a, \mathscr{T}^{a*}\} \not\lesssim^* \{\mathscr{L}\}$ since $\mathscr{T}^{a} \not \lesssim\mathscr{L}$ as not every outerplanar graph is a minor of a ladder and $\mathscr{T}^{a*} \not\lesssim \mathscr{L}$ as not every dual of an outerplanar graph is a minor of a ladder.

\subsection{Hierarchy of minor-monotone parameters}

We conclude this section with \autoref{fig_minor_closed}, which is only a small fraction of the quasi-ordering of minor-monotone parameters ordered by $\preceq$ that corresponds to the minor-monotone parameters we considered in this section.

\begin{figure}[htbp]
\centering
\includegraphics[width=\linewidth]{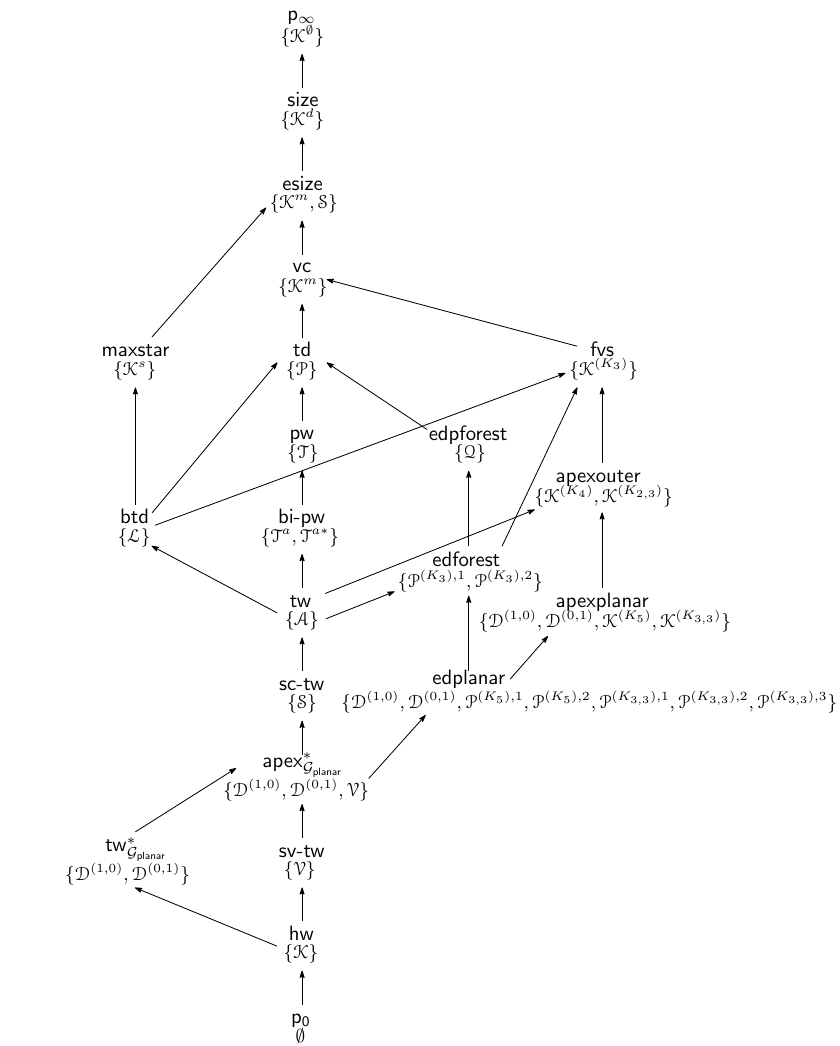}
\caption{\label{fig_minor_closed} The hierarchy of minor-monotone parameters, each accompanied with some minor-universa l obstruction of it.}
\end{figure}

\section{Obstructions of immersion-monotone parameters}
\label{ommsers}

In this section, we consider graphs that may contain {parallel} edges but no loop.
For this, given a graph $G$, we see its edge set as a multiset. 
\rev{page 35, line 55: “graphs with multiple edges”. Very confusing. I would simply state that in this section we allow {parallel} edges in graphs but we do not allow loops. Same comment for footnote 13 in page 40.}
We also use $\gall^{\mathsf{e}}$ in order to denote the class of all graphs with {parallel} edges {but not loop}.
Accordingly, the graph parameters we examine in these sections are functions $\p \colon \gall^{\mathsf{e}} \to \Nbbb,$ i.e., that map graphs with {parallel} edges {and no loop} to non-negative integers.

Let $G$ be a graph and let $e_{1} = xy$ and $e_{2} = yz$ be two edges of $G$ where $y$ is a common endpoint for both.
The result of the operation of \emph{lifting} the edges $e_{1}$ and $e_{2}$ in $G$ is the graph obtained from $G$ if we first remove $e_{1}$ and $e_{2}$ from $G$ and then add the edge $xz.$
As we deal with {graphs with parallel edges}, we agree that if any of $e_{1}$ and $e_{2}$ has multiplicity bigger than two, then its removal reduces its multiplicity by one.
Also if the edge $xz$ already exists, then we simply increase its multiplicity by one.
Moreover, if $e_{1}$ and $e_{2}$ are parallel edges then the lifting operation removes both of them without adding a new edge.
That is because we consider {(loopless)} {graphs} with  {parallel} edges and therefore we suppress any loop that may arise from the lifting operation.

We define the \textsl{immersion} relation on the set of graphs with multiple edges as follows.
We say that a graph $H$ is an \emph{immersion} of a graph $G$, denoted by $H \leqslant_{\mathsf{i}} G,$ if $H$ can be obtained from a subgraph of $G$ after a (possibly empty) sequence of lifting pairs of edges.
Observe that if $H$ is a topological minor of $G$ then $H$ is also an immersion of $G$ (but not vice versa).

Robertson and Seymour \cite{RobertsonS10GMXXIII} demonstrated that $\leqslant_{\mathsf{i}}$ is a well-quasi-ordering on $\gall^{\mathsf{e}}$.
In light of this result, the immersion relation inherits all the beneficial properties that arise from well-quasi-ordering, similarly to its counterpart, the minor relation.

\subsection{A few ``trivial'' immersion-monotone parameters}\label{lesstrcas}

As with the minor-monotone parameters, we can begin by addressing the trivial cases (which, in this context, are somewhat less ``trivial'' than in the minor case).

\paragraph{The parameters $\p_{\infty}$ and $\size$.}

Our first observation is that, by the same arguments as in \autoref{wadfdjgjingngup}, the minor-universal obstructions for $\p_{\infty}$ and $\size$ also serve as immersion-universal obstructions.

In particular, $\{\mathscr{K^{\emptyset}}\}$ is an immersion-universal obstruction for $\p_{\infty}$ and $\{\mathscr{K}^{\mathsf{d}}\}$ is an immersion-universal obstruction for $\size$.

\paragraph{Edge-admissibility.}

The \emph{edge-admissibility} denoted by $\eadm$, is defined so that for every graph $G \in \gall^{\mathsf{e}},$
\begin{align}
\eadm(G) \ \coloneqq \ \min\{ k \in \Nbbb \mid \ &\text{there exists an ordering }v_{1}, \ldots, v_{n}\text{ of }V(G) \text{~such that,~} \nonumber\\
&\text{for every $i\in[2,n]$, there exist at most }k\text{ edge-disjoint }\\
\nonumber&\mbox{paths~} \text{from }v_{i}\text{ to }v_{1}, \ldots, v_{i-1}\text{ in }G \}.
\end{align}

\begin{figure}[htbp]
\centering
\includegraphics[width=0.65\linewidth]{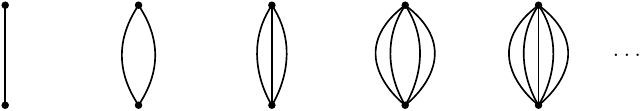}
\caption{\label{pumpkin_sequence}The immersion-parametric graph $ \Theta  = \langle  \Theta_1, \Theta_2, \Theta_3, \Theta_4, \Theta_5 \ldots \rangle$ of $t$-pumpkins.}
\end{figure}

We define the immersion-parametric graph $\Theta \coloneqq \langle \Theta_{t} \rangle_{t \in \mathbb{N}}$, where $ \Theta _{t}$ is the graph on two vertices and $t$ parallel edges, for every $t \in \Nbbb$.
These graphs also known as \emph{$t$-pumpkins} are illustrated in \autoref{pumpkin_sequence}.
Observe that $\Ccal^{\Theta} \coloneqq \closure{\leqslant_{\mathsf{i}}}{\Theta} = \{ \Theta_{t} \mid t\in\mathbb{N}_{\geq 1}\} \cup \{K_{1}, K_0 \}$.
It is proved in \cite{LimniosPPT20Edge} that if $\eadm(G) \geq 2k$ then $\Theta_{k+1} \leqslant_{\mathsf{i}} G$.
It is also straightforward to see that $\eadm(\Theta_{k+1}) \geq k+1$, for every $k \in \Nbbb.$
Therefore, we obtain the following statement.

\begin{proposition}
The set $\{ \Theta \}$ is an immersion-universal obstruction for $\eadm$. Moreover, $\cobs_{\leqslant_{\mathsf{i}}}(\eadm) = \{\Ccal^{\Theta }\}$ and $\pobs_{\leqslant_{\mathsf{i}}}(\eadm) = \big\{ \{3\cdot K_{1}\} \big\}$.
\end{proposition}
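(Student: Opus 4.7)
The plan is to establish the equivalence $\eadm \sim \p_{\{\Theta\}}$ and then invoke \autoref{cobs_uobs} to derive the stated forms of $\cobs_{\leqslant_{\mathsf{i}}}(\eadm)$ and $\pobs_{\leqslant_{\mathsf{i}}}(\eadm)$. The argument splits neatly into two inequalities plus a short combinatorial computation of the immersion-obstruction set of $\mathcal{C}^{\Theta}$.

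For the lower bound $\p_{\{\Theta\}} \preceq \eadm$, first I would observe that $\eadm(\Theta_{t}) = t$: for the only admissible ordering $v_{1}, v_{2}$ of its two vertices, there are exactly $t$ edge-disjoint paths from $v_{2}$ to $v_{1}$, one per parallel edge. Then I would verify that $\eadm$ is immersion-monotone, which follows from the standard fact that if $H \leqslant_{\mathsf{i}} G$ then, for every pair $x, y$ of branch vertices, the maximum number of edge-disjoint $x$--$y$ paths in $H$ is at most the same quantity in $G$. Combining these two facts, $\Theta_{k+1} \leqslant_{\mathsf{i}} G$ implies $\eadm(G) \geq k+1$, yielding $\p_{\{\Theta\}} \preceq \eadm$ with linear gap.

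For the upper bound $\eadm \preceq \p_{\{\Theta\}}$, I would directly invoke the result of \cite{LimniosPPT20Edge} quoted in the text: $\eadm(G) \geq 2k$ implies $\Theta_{k+1} \leqslant_{\mathsf{i}} G$. This gives $\eadm \preceq \p_{\{\Theta\}}$ with linear gap and hence $\eadm \sim \p_{\{\Theta\}}$, establishing that $\{\Theta\}$ is an immersion-universal obstruction for $\eadm$. Since $\Theta$ is by construction an immersion-omnivore of $\mathcal{C}^{\Theta} = \closure{\leqslant_{\mathsf{i}}}{\Theta}$, \autoref{cobs_uobs} immediately delivers $\cobs_{\leqslant_{\mathsf{i}}}(\eadm) = \{\mathcal{C}^{\Theta}\}$.

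It remains to compute $\obs_{\leqslant_{\mathsf{i}}}(\mathcal{C}^{\Theta})$. Observe that $\mathcal{C}^{\Theta}$ is exactly the set of graphs with at most two vertices, since every $\Theta_{t}$ has two vertices and every graph on at most two vertices (with arbitrary multiplicity) is an immersion of a sufficiently large pumpkin. Now $3\cdot K_{1}$ has three vertices and so does not belong to $\mathcal{C}^{\Theta}$; however every proper immersion of $3\cdot K_{1}$ is obtained by deleting at least one vertex (there are no edges to delete or lift), and therefore has at most two vertices and lies in $\mathcal{C}^{\Theta}$. Conversely, any graph $G \notin \mathcal{C}^{\Theta}$ has at least three vertices, so taking three of them and discarding all edges yields a subgraph isomorphic to $3\cdot K_{1}$, which is in particular an immersion of $G$. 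Hence $\obs_{\leqslant_{\mathsf{i}}}(\mathcal{C}^{\Theta}) = \{3\cdot K_{1}\}$ and $\pobs_{\leqslant_{\mathsf{i}}}(\eadm) = \big\{\{3\cdot K_{1}\}\big\}$. The only non-trivial input is the structural theorem of \cite{LimniosPPT20Edge}; everything else is a verification that the pumpkin sequence and the parameter $\eadm$ fit together as the definitions require.
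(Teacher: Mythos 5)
Your proof is correct and follows essentially the same route as the paper: cite \cite{LimniosPPT20Edge} for the upper bound, observe $\eadm(\Theta_t)$ grows linearly in $t$ for the lower bound, and invoke \autoref{cobs_uobs}. You additionally give an explicit verification that $\obs_{\leqslant_{\mathsf{i}}}(\mathcal{C}^{\Theta}) = \{3\cdot K_1\}$, which the paper leaves implicit; your characterization of $\mathcal{C}^{\Theta}$ as exactly the graphs on at most two vertices is in fact slightly cleaner than the paper's ad hoc listing.
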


\paragraph{Maximum degree.}

A (simple) example of an immersion-monotone parameter is the \emph{degree} $\Delta.$
For every vertex $v \in V(G)$ let $E_{v}$ denote the set of edges of $G$ incident to $v.$
Then, for every graph $G \in \gall^{\mathsf{e}},$
\begin{align}
\Delta(G) \ &\coloneqq \ \max_{v \in V(G)} |E_{v}|.
\end{align}

Recall that $\mathscr{K}^{\mathsf{s}} = \langle K_{1,k} \rangle_{k\in\mathbb{N}}$ is the immersion-parametric graph of stars.
We define the immersion-closed class $\Ccal^{S}_{\mathsf{i}}$ of star forests with at most one component on at least three vertices.

It is straightforward to see that a graph $G$ has bounded degree $\Delta(G)$ if and only if it excludes a star and a pumpkin as an immersion.
Moreover, $\obs_{\leqslant_{\mathsf{i}}}(\Ccal^{S}_{\mathsf{i}}) = \{ \Theta_{2}, 2 \cdot P_{3}, P_{4} \}.$
Therefore,

\begin{proposition} The set $\{ \Theta, \mathscr{K}^{\mathsf{s}} \}$ is an immersion-universal obstruction for $\Delta.$
Moreover, $\cobs_{\leqslant_{\mathsf{i}}}(\Delta) = \big\{ \Ccal^{\Theta},\Ccal^{S}_{\mathsf{i}} \big\}$ and $\pobs_{\leqslant_{\mathsf{i}}}(\Delta) = \big\{ \{3 \cdot K_{1}\}, \{ \Theta_{2}, 2 \cdot P_{3}, P_{4} \} \big\}.$
\end{proposition}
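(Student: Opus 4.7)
The plan is to establish the equivalence $\p_{\{\Theta, \mathscr{K}^{\mathsf{s}}\}} \sim \Delta$ and then invoke \autoref{cobs_uobs} to read off $\cobs_{\leqslant_{\mathsf{i}}}(\Delta)$, from which $\pobs_{\leqslant_{\mathsf{i}}}(\Delta)$ follows by direct inspection. The direction $\p_{\{\Theta, \mathscr{K}^{\mathsf{s}}\}} \preceq \Delta$ is immediate from the immersion-monotonicity of $\Delta$: since $\Delta(\Theta_{k}) = \Delta(K_{1,k}) = k$, $\Theta_{k} \leqslant_{\mathsf{i}} G$ or $K_{1,k} \leqslant_{\mathsf{i}} G$ forces $\Delta(G) \geq k$.

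For the converse direction $\Delta \preceq \p_{\{\Theta, \mathscr{K}^{\mathsf{s}}\}}$, I would apply a pigeonhole argument at a vertex $v$ of maximum degree. Let $G$ satisfy $\Delta(G) \geq k$, and let $m$ denote the largest multiplicity of an edge incident to $v$. If $m \geq \lceil \sqrt{k}\, \rceil$, then $v$ together with a neighbour joined to it by $m$ parallel edges witnesses $\Theta_{\lceil \sqrt{k}\, \rceil} \leqslant_{\mathsf{i}} G$ as a subgraph; otherwise $v$ has at least $k/m > \sqrt{k}$ distinct neighbours, providing $K_{1, \lceil \sqrt{k}\, \rceil} \leqslant_{\mathsf{i}} G$ as a subgraph. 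Hence $\Delta \sim \p_{\{\Theta, \mathscr{K}^{\mathsf{s}}\}}$ with polynomial (quadratic) gap, and $\{\Theta, \mathscr{K}^{\mathsf{s}}\}$ is an immersion-universal obstruction for $\Delta$.

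Next I would verify the two immersion-closures in order to apply \autoref{cobs_uobs}. The equality $\closure{\leqslant_{\mathsf{i}}}{\Theta} = \Ccal^{\Theta}$ is immediate from the definition of $\Ccal^{\Theta}$ recalled just before the statement. For $\closure{\leqslant_{\mathsf{i}}}{\mathscr{K}^{\mathsf{s}}} = \mathcal{C}^{S}_{\mathsf{i}}$, I would observe that any two edges of $K_{1,k}$ sharing an endpoint must meet at the centre, so each admissible lift simply replaces two star-edges by a single matching edge between former leaves. Iterating from a substar of $K_{1,k}$ produces exactly the disjoint unions of one star $K_{1,j}$, a matching and isolated vertices, which is exactly $\mathcal{C}^{S}_{\mathsf{i}}$; the converse containment is witnessed by choosing $k$ large enough. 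Therefore $\cobs_{\leqslant_{\mathsf{i}}}(\Delta) = \{\Ccal^{\Theta}, \mathcal{C}^{S}_{\mathsf{i}}\}$ by \autoref{cobs_uobs}.

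Finally, $\pobs_{\leqslant_{\mathsf{i}}}(\Delta)$ is obtained by describing the two obstruction sets: $\Ccal^{\Theta}$ contains only graphs on at most two vertices together with the edgeless ones, so $3 \cdot K_{1}$ is the unique minimal outsider; and the three graphs $\Theta_{2}$, $2 \cdot P_{3}$, $P_{4}$ capture the three ways to minimally leave $\mathcal{C}^{S}_{\mathsf{i}}$—introducing a multi-edge, two components of order $\geq 3$, and a non-star tree component. The main obstacle will be verifying completeness of this triple: one must check that every graph outside $\mathcal{C}^{S}_{\mathsf{i}}$ immerses one of them, which reduces to a short case analysis distinguishing whether the offending component carries a multi-edge (then $\Theta_{2}$), a cycle (iteratively lifting around the cycle yields $K_{3}$ and then $\Theta_{2}$), or is a simple acyclic non-star component (which necessarily contains $P_{4}$, since any tree on $\geq 4$ vertices that is not a star has diameter at least three).
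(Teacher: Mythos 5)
Your proof takes the same direct route that the paper implicitly intends: establish $\p_{\{\Theta, \mathscr{K}^{\mathsf{s}}\}} \sim \Delta$ by a pigeonhole argument at a maximum-degree vertex, identify the two immersion-closures $\Ccal^{\Theta}$ and $\mathcal{C}^{S}_{\mathsf{i}}$, and read off the class and parametric obstructions via \autoref{cobs_uobs}. The paper itself supplies no proof for this proposition --- it merely asserts that the equivalence is ``straightforward'' and states $\obs_{\leqslant_{\mathsf{i}}}(\mathcal{C}^{S}_{\mathsf{i}}) = \{ \Theta_{2}, 2 \cdot P_{3}, P_{4} \}$ without argument --- so your write-up helpfully supplies the details the paper omits, and it is correct in all essentials including the quadratic-gap estimate.

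There is one small completeness gap to repair. In your closing case analysis justifying $\obs_{\leqslant_{\mathsf{i}}}(\mathcal{C}^{S}_{\mathsf{i}}) = \{ \Theta_{2}, 2\cdot P_{3}, P_{4} \}$, you distinguish only three alternatives for ``the offending component'': a multi-edge (giving $\Theta_2$), a cycle (lifting down to $K_3$ and then $\Theta_2$), or a simple acyclic non-star component (giving $P_4$). None of these ever produces $2\cdot P_3$ as the witness, and indeed they cannot, since they all concern a single component. The case you leave out is precisely the one that makes $2 \cdot P_3$ necessary: $G$ is a simple star forest, every component is a star, but at least two components have three or more vertices. You do mention this failure mode when enumerating the ``three ways to minimally leave'' $\mathcal{C}^{S}_{\mathsf{i}}$ earlier in the paragraph, but it then drops out of the final case analysis. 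The fix is trivial --- each such star component contains $P_3$ as a subgraph, so $2 \cdot P_3 \leqslant_{\mathsf{i}} G$ --- but without this fourth case the completeness argument does not actually cover $2 \cdot P_3$, and the case analysis as written would (vacuously, but incorrectly) suggest that $\{\Theta_2, P_4\}$ already suffices.
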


\paragraph{The order of a graph.}



Let us now consider the parameter $\esize$ which we already studied in \autoref{wadfdjgjingngup} in the context of the minor relation.
Recall that $\mathscr{K}^{\mathsf{m}} = \langle k\cdot K_{2} \rangle_{k\in\mathbb{N}}$ and that $\Ccal^{M} = \closure{\leqslant_{\mathsf{m}}}{\mathscr{K}^{\mathsf{m}}} = \closure{\leqslant_{\mathsf{i}}}{\mathscr{K}^{\mathsf{m}}}$ is the set of forests of maximum degree one.
Observe that $\obs_{\leqslant_{\mathsf{i}}}(\Ccal^{M}) = \{P_{3}, \Theta_2 \}$.
It is moreover easy to see that $\esize \sim Δ + \p_{\mathscr{K}^{\mathsf{m}}}$ and therefore we obtain the following proposition.

\begin{proposition} The set $\{ \Theta, \mathscr{K}^{\mathsf{s}}, \mathscr{K}^{\mathsf{m}} \}$ is an immersion-universal obstruction for $\esize.$
Moreover, $\cobs_{\leqslant_{\mathsf{i}}}(\esize) = \big\{ \Ccal^{\Theta},\Ccal^{S}_{\mathsf{i}}, \Ccal^{M} \big\}$ and $\pobs_{\leqslant_{\mathsf{i}}}(\esize) = \big\{ \{3 \cdot K_{1}\}, \{ \Theta_{2}, 2 \cdot P_{3}, P_{4} \}, \{ P_{3}, \Theta_{2} \} \big\}.$
\end{proposition}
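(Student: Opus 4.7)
The hint stated just before the proposition, $\esize \sim \Delta + \p_{\mathscr{K}^{\mathsf{m}}}$, essentially reduces the claim to the previous proposition on $\Delta$. My plan is to first establish this equivalence, combine it with the previous proposition, and then read off the class and parametric obstructions via \autoref{cobs_uobs}.

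For $\Delta + \p_{\mathscr{K}^{\mathsf{m}}} \preceq \esize$, both summands are trivially bounded by $|E(G)|+1$: $\Delta(G) \leq |E(G)|$, and any immersion matching of size $k$ uses at least $k$ distinct edges. For the reverse direction $\esize \preceq \Delta + \p_{\mathscr{K}^{\mathsf{m}}}$, fix a maximum (subgraph) matching $M$ of $G$; by maximality every edge of $G$ has an endpoint in $V(M)$, so
\begin{align*}
|E(G)| \ \leq \ \sum_{v \in V(M)} \deg_G(v) \ \leq \ 2|M| \cdot \Delta(G),
\end{align*}
and since the subgraph matching size lower-bounds the immersion matching size $\p_{\mathscr{K}^{\mathsf{m}}}(G) - 1$, this polynomially bounds $\esize$ in terms of $\Delta + \p_{\mathscr{K}^{\mathsf{m}}}$. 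Combining this with the previous proposition $\p_{\{\Theta, \mathscr{K}^{\mathsf{s}}\}} \sim \Delta$ and the property from Section \ref{diidnidialdedfs} that $\p_{\mathfrak{H}} \sim \sum_{\mathscr{H} \in \mathfrak{H}} \p_{\mathscr{H}}$, I obtain
\begin{align*}
\p_{\{\Theta, \mathscr{K}^{\mathsf{s}}, \mathscr{K}^{\mathsf{m}}\}} \ \sim \ \p_{\{\Theta, \mathscr{K}^{\mathsf{s}}\}} + \p_{\mathscr{K}^{\mathsf{m}}} \ \sim \ \Delta + \p_{\mathscr{K}^{\mathsf{m}}} \ \sim \ \esize,
\end{align*}
which certifies $\{\Theta, \mathscr{K}^{\mathsf{s}}, \mathscr{K}^{\mathsf{m}}\}$ as an immersion-universal obstruction for $\esize$.

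The class obstructions then follow directly from \autoref{cobs_uobs}: $\cobs_{\leqslant_{\mathsf{i}}}(\esize) = \{\closure{\leqslant_{\mathsf{i}}}{\Theta}, \closure{\leqslant_{\mathsf{i}}}{\mathscr{K}^{\mathsf{s}}}, \closure{\leqslant_{\mathsf{i}}}{\mathscr{K}^{\mathsf{m}}}\}$, and a short analysis of lifting sequences identifies these three immersion-closures as $\Ccal^{\Theta}$, $\mathcal{C}^{S}_{\mathsf{i}}$, and $\Ccal^{M}$ respectively; the middle identification uses that iterated liftings of $K_{1,k}$ produce a smaller star together with some isolated edges, so every immersion of some $K_{1,k}$ is a star forest with at most one star on at least three vertices. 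Finally, the parametric obstructions are the $\leqslant_{\mathsf{i}}$-minimal graphs outside each of these classes: $\obs_{\leqslant_{\mathsf{i}}}(\Ccal^{\Theta}) = \{3 \cdot K_1\}$, $\obs_{\leqslant_{\mathsf{i}}}(\mathcal{C}^{S}_{\mathsf{i}}) = \{\Theta_2, 2 \cdot P_3, P_4\}$, and $\obs_{\leqslant_{\mathsf{i}}}(\Ccal^{M}) = \{P_3, \Theta_2\}$. The principal obstacle is the enumeration of $\obs_{\leqslant_{\mathsf{i}}}(\mathcal{C}^{S}_{\mathsf{i}})$: one must verify that the three listed graphs are pairwise immersion-incomparable, that each lies outside $\mathcal{C}^{S}_{\mathsf{i}}$, and that every graph outside $\mathcal{C}^{S}_{\mathsf{i}}$ contains one of them as an immersion, corresponding to the three failure modes of being non-simple, having two non-trivial star components, or having a connected component that is not a star.
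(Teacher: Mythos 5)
Your approach mirrors the paper's terse hint (establish $\esize \sim \Delta + \p_{\mathscr{K}^{\mathsf{m}}}$ via a matching argument, combine with the proposition on $\Delta$, then read off $\cobs$ and $\pobs$ via \autoref{cobs_uobs}), and the equivalence $\esize \sim \Delta + \p_{\mathscr{K}^{\mathsf{m}}}$ is proved carefully. However, there is a genuine gap, one that you inherit from the paper's own statement: the set $\{\Theta, \mathscr{K}^{\mathsf{s}}, \mathscr{K}^{\mathsf{m}}\}$ is \emph{not} a $\leqslant_{\mathsf{i}}$-parametric family. Indeed $\mathscr{K}^{\mathsf{m}} \lesssim \mathscr{K}^{\mathsf{s}}$ under immersions: by lifting disjoint pairs of edges at the centre of a star one obtains $k \cdot K_2 \leqslant_{\mathsf{i}} K_{1,2k}$, so $\closure{\leqslant_{\mathsf{i}}}{\mathscr{K}^{\mathsf{m}}} = \Ccal^{M} \subseteq \Ccal^{S}_{\mathsf{i}} = \closure{\leqslant_{\mathsf{i}}}{\mathscr{K}^{\mathsf{s}}}$. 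This is in sharp contrast to the minor setting, where $2 \cdot K_2$ is never a minor of a star, which is precisely why the analogous claim in \autoref{wadfdjgjingngup} is sound.

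Two things therefore break. First, since the three parametric graphs are not pairwise $\lesssim$-incomparable, the set does not qualify as a universal obstruction under the definitions of \autoref{diidnidialdedfs}, so your invocation of \autoref{cobs_uobs} in ``the class obstructions then follow directly from \autoref{cobs_uobs}'' is not licensed; \autoref{cobs_uobs} assumes $\mathfrak{H}$ is a parametric family. Second, and more concretely, $\Ccal^{M}$ is a proper immersion-closed subclass of $\Ccal^{S}_{\mathsf{i}}$ on which $\esize$ is also unbounded, so $\Ccal^{S}_{\mathsf{i}}$ is not $\subseteq$-minimal in $\UNB_{\leqslant_{\mathsf{i}}}(\esize)$. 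Hence $\{\Ccal^{\Theta}, \Ccal^{S}_{\mathsf{i}}, \Ccal^{M}\}$ is not even a $\subseteq$-antichain and cannot be $\cobs_{\leqslant_{\mathsf{i}}}(\esize)$. The correct conclusion, which your matching bound already essentially proves, is that $\{\Theta, \mathscr{K}^{\mathsf{m}}\}$ is a $\lesssim$-antichain with $\p_{\{\Theta, \mathscr{K}^{\mathsf{m}}\}} \sim \esize$ (since $\p_{\mathscr{K}^{\mathsf{s}}} \preceq \p_{\mathscr{K}^{\mathsf{m}}}$, the star sequence is absorbed), so $\cobs_{\leqslant_{\mathsf{i}}}(\esize) = \{\Ccal^{\Theta}, \Ccal^{M}\}$ and $\pobs_{\leqslant_{\mathsf{i}}}(\esize) = \big\{\{3 \cdot K_1\}, \{P_3, \Theta_2\}\big\}$, one class and one obstruction set fewer than claimed.
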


We stress that $\esize$ has different parametric obstructions for immersions and minors which is an indication of the different nature of the two quasi-orderings.

\subsection{Tree-cutwidth}\label{treewishhwith}

Tree-cutwidth is an immersion-monotone parameter that was introduced by Wollan \cite{Wollan15Thestructure} as an analogue of treewidth, in the context of the immersion relation.

A \emph{near-partition} of a set $X$ is a family of (possibly empty) subsets $X_{1}, \dots, X_{k}$ of $X$ such that 
$\bigcup_{i=1}^{k} X_{i} = X$ and $X_{i} \cap X_{j} = \emptyset$ for every $i\neq j$.
A \emph{tree-cut decomposition} of a graph $G$ is a pair $\mathcal{T} = (T, {\mathcal{X}})$ such that $T$ is a forest and ${\mathcal{X}} = \{X_{t} \mid t \in V(T)\}$ is a near-partition of the vertices of $V(G)$.
Furthermore, we require that if $T_1, \ldots, T_r$ are the connected components of $T$, then $\bigcup_{t \in V(T_i)} X_t$, for $i \in [r]$, are exactly the vertex sets of connected components of $G$.
We call the elements of $V(T)$ \emph{nodes} of $T$.
For an edge $e = uv \in E(T)$ we write $E_{e}$ for the set of edges of $G$ that have one endpoint in $\bigcup_{t \in V(T_{v})} X_{t}$ and one endpoint in $\bigcup_{t \in V(T_{u})} X_{t}$, where $T_{u}$ and $T_{v}$ are the connected components of $T - e$ that contain $u$ and $v$ respectively.
We define the \emph{adhesion} of $e$ to be $\adh_{\mathcal{T}}(e) = |E_{e}|$.
The adhesion of $e$ is \emph{thin} if it has at most two edges and \emph{bold} otherwise.

Let $G$ be a graph with a tree-cut decomposition $\mathcal{T}=(T, {\mathcal{X}})$.
For every $t \in V(T)$, we define
\begin{align}
w(t) \ &\coloneqq \ |X_{t}| + \big|\{t'\in N_{T}(t)\mid\adh_\mathcal{T}(\{t,t'\})~\text{is bold}\}\big|.\label{kkfkadjg}
\end{align}
We then set
\begin{align}
\width(\mathcal{T}) \ &\coloneqq \ \max\big\{\max_{e\in E(T)}|\adh_{\Tcal}(e)|,\max_{t\in V(T)}w(t)\big\}.\label{aclop}
\end{align}

The \emph{tree-cutwidth} of a graph $G \in \gall^{\mathsf{e}},$ denoted by  $\tcw(G),$ is then defined as follows.
\begin{align}
\tcw(G) \ &\coloneqq \ \min\{ \width(\Tcal) \mid \Tcal\text{ is a tree-cut decomposition of }G.\}
\end{align}


Let $\mathscr{W} \coloneqq \langle \mathscr{W}_{k} \rangle_{k \geq 3}$ be the wall of height $k$, that is the graph obtained from a $(k\times 2k)$-grid after removing a perfect matching that sits between its ``horizontal paths'' and then removing all occurring vertices of degree one.
See \autoref{wsalls} for an illustration.

\begin{figure}[htbp]
\centering
\includegraphics[width=0.95\linewidth]{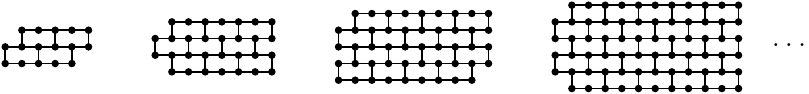}
\caption{\label{wsalls}The immersion-parametric graph $\mathscr{W} = \langle \mathscr{W}_3, \mathscr{W}_4, \mathscr{W}_5, \mathscr{W}_6, \ldots \rangle$ of walls.}
\end{figure}

Wollan \cite{Wollan15Thestructure} proved that there exists a function $f \colon \nton$ such that every graph $G$ with $\tcw(G) \geq f(k)$ contains $\mathscr{W}_{k}$ as an immersion, for every $k \geq 3.$
It is therefore implied that $\tcw \preceq \p_{\mathscr{W}}$.
Moreover, it is known that $\tw \preceq \tcw$ (see e.g., \cite{Wollan15Thestructure}).

As already mentioned, since $\Delta(\mathscr{W}_{k}) \leq 3$, if a graph contains $\mathscr{W}_{k}$ as an immersion, it also contains it as a topological minor and therefore also as a minor.
This implies that since $\p_{\mathscr{W}} \preceq \tw,$ then $\p_{\mathscr{W}} \preceq \tcw$.
We conclude that $\p_{\mathscr{W}} \sim \tcw$.
Note that if a graph is an immersion of the wall $\mathscr{W}_{k}$, then it is planar and has maximum degree three.
In other words, it belongs to the class $\Pcal^{(\leq 3)}$ of  \emph{planar subcubic} graphs.
Also, note that every graph in $\Pcal^{(\leq 3)}$ is a topological minor of some wall in $\mathscr{W}$, therefore $\mathscr{W}$ is an immersion-omnivore of $\Pcal^{(\leq 3)}$.
Let $P_{3}^{2,2}$ be the graph obtained if we duplicate the two edges of $P_{3}$, $K_{1,3}^{2}$ be the graph obtained if we duplicate one of the edges of $K_{1,3},$ and $K^{3}_{1, 2}$ be the graph obtained from $K_{1,2}$ by replacing one edge with three parallel edges.
See \autoref{tree_cutwidth_pobs} for an illustration of these graphs.
 
\begin{figure}[htbp]
\centering
\includegraphics[width=0.6\linewidth]{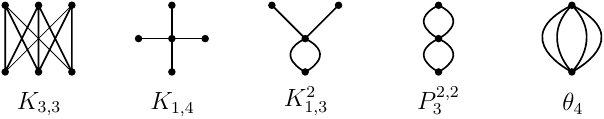}
\caption{\label{tree_cutwidth_pobs}The graphs $K_{3,3}, K_{1,4}, K_{1,3}^{2}, P_{3}^{2,2}, K^{3}_{1,2}, \Theta_{4}$.}\
\end{figure}

Suppose now that a graph $G$ does not belong to $\Pcal^{(\leq 3)}$.
Then, it either has a vertex of degree larger than three or it is not planar.
In the first case, one of $K_{1,4}$, $K_{1,3}^{2}$, $P_{3}^{2,2}$, $K^{3}_{1,2}$ or $\Theta_{4}$ is an immersion of $G$.
In the second case $G$ contains one of $K_{3,3}$ or $K_{5}$ as a topological minor.
As a subcubic graph cannot contain $K_{5}$ as a topological minor, we conclude that, in case $G$ is not planar, it contains $K_{3,3}$ as a topological minor and therefore also as an immersion.
We conclude that one of 
$K_{3,3}$, $K_{1,4}$, $K_{1,3}^{2}$, $P_{3}^{2,2}$, $K_{1,2}^{3}$ or $\Theta _{4}$ is an immersion of $G$.
We summarize our observations in the following.

\begin{proposition}\label{ismdwsaal}
The set $\{ \mathscr{W} \}$ is an immersion-universal obstruction for $\tcw$.
Moreover,
\begin{align*}
\cobs_{\leqslant_{\mathsf{i}}}(\tcw) \ &= \ \{\Pcal^{(\leq 3)}\}\text{ and}\\
\pobs_{\leqslant_{\mathsf{i}}}(\tcw) \ &= \ \big\{ \{K_{3,3}, K_{1,4}, K_{1,3}^{2}, P_{3}^{2,2}, K^{3}_{1,2}, \Theta_{4}\} \big\}.
\end{align*}
\end{proposition}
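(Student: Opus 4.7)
The plan is to prove the proposition in three stages: (i) establish the equivalence $\tcw \sim \p_{\mathscr{W}}$, (ii) identify $\closure{\leqslant_{\mathsf{i}}}{\mathscr{W}}$ with $\mathcal{P}^{(\leq 3)}$, and (iii) enumerate the immersion-obstructions of $\mathcal{P}^{(\leq 3)}$. The conclusions for $\cobs_{\leqslant_{\mathsf{i}}}(\tcw)$ and $\pobs_{\leqslant_{\mathsf{i}}}(\tcw)$ will then follow from \autoref{cobs_uobs}. In fact, the three ingredients are already outlined in the paragraphs preceding the statement, so the bulk of the work is organisational.

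For stage (i), the direction $\tcw \preceq \p_{\mathscr{W}}$ is Wollan's excluded-wall theorem \cite{Wollan15Thestructure}, which supplies $f \colon \Nbbb \to \Nbbb$ with $\tcw(G) \geq f(k) \Rightarrow \mathscr{W}_{k} \leqslant_{\mathsf{i}} G$. For the reverse $\p_{\mathscr{W}} \preceq \tcw$ I would use that walls are subcubic, so any immersion copy $\mathscr{W}_{k} \leqslant_{\mathsf{i}} G$ is automatically a topological minor (hence a minor) of $G$; combined with $\tw(\mathscr{W}_{k}) = \Omega(k)$ and the known bound $\tw \preceq \tcw$ from \cite{Wollan15Thestructure} this yields $\tcw(G) = \Omega(k)$. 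For stage (ii), both maximum degree and planarity are preserved under taking subgraphs and lifting pairs of edges (a lift at a pivot vertex $v$ strictly decreases $\deg(v)$ by two, leaves the other endpoints' degrees unchanged, and can be realised inside any planar embedding by rerouting the new edge through a face incident to $v$), so $\closure{\leqslant_{\mathsf{i}}}{\mathscr{W}} \subseteq \mathcal{P}^{(\leq 3)}$; conversely, every planar subcubic graph is a topological minor of a sufficiently tall wall, and topological minor containment implies immersion containment.

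For stage (iii), a graph $G \notin \mathcal{P}^{(\leq 3)}$ either has a vertex of degree at least four, or is subcubic and non-planar. In the first case the five graphs $K_{1,4}, K_{1,3}^{2}, P_{3}^{2,2}, K^{3}_{1,2}, \Theta_{4}$ enumerate the minimal ways of distributing four edge-ends at a single vertex among (at most four) distinct neighbours, each pattern being obtained by restricting to four incidences at the offending vertex. In the second case, Kuratowski's theorem produces $K_{5}$ or $K_{3,3}$ as a topological minor of $G$; since $K_{5}$ is not subcubic, $K_{3,3}$ is an immersion of $G$. The six resulting graphs form an immersion-antichain: the five degree-four graphs have fewer than six vertices, so none of them contains $K_{3,3}$; conversely $K_{3,3}$ is cubic and $\Delta$ is immersion-monotone, so it contains none of the degree-four graphs; and pairwise among the five, each is minimal with respect to accommodating four edge-ends at a single vertex, ruling out inclusions. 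The main obstacle I foresee lies in the degree-four case analysis (making sure that the five listed multiplicity patterns really exhaust all minimal configurations) together with the rigorous handling of planarity preservation under lifts in the multigraph setting; everything else is routine once \autoref{cobs_uobs} and Wollan's theorem are in hand.
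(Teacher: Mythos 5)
Your proposal matches the paper's own argument: Wollan's excluded-wall theorem gives $\tcw \preceq \p_{\mathscr{W}}$, subcubicity of walls together with $\tw(\mathscr{W}_k) = \Omega(k)$ and $\tw \preceq \tcw$ gives the converse, the immersion closure of $\mathscr{W}$ is identified with $\mathcal{P}^{(\leq 3)}$, and the six obstructions are obtained by a degree-four case split plus Kuratowski. The only additions over the paper's informal discussion (it has no \texttt{proof} environment for this proposition) are the explicit check that the six graphs form an immersion antichain and the planarity-preservation argument for lifts, both of which are sound.
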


\subsection{Slim tree-cutwidth}

Note that an important feature in the definition of tree-cutwidth is the concept of a \textsl{bold} adhesion.
Bold adhesions are those that ``essentially count'' as one can observe from \eqref{kkfkadjg} and \eqref{aclop}.
However, what if we were to slightly relax the definition of bold adhesions so that they are those that have size at least two?
This variant of tree-cutwidth  was considered by Ganian and Korchemna \cite{GanianKorchemna2024SlimTreeCutWidth} under the name \textsl{slim tree-cutwidth}.
Following the terminology introduced in \cite{GanianKorchemna2024SlimTreeCutWidth}, we abbreviate this parameter by
\stcw.

It appears that to capture the approximate behaviour of $\stcw$, obtained via the aforementioned relaxation of the boldness definition, we require one more immersion-parametric graph.
According to \cite{GanianKorchemna2024SlimTreeCutWidth}, there exists a function $f \colon \nton$ such that every graph $G$ where $\stcw(G) \geq f(k)$ contains either $\mathscr{W}_{k}$ or $K_{1,k}^{2}$ as an immersion, where $K_{1,k}^{2}$ is the \emph{double-edge star}, obtained from the star $K_{1,k}$ after duplicating all its edges\footnote{The result in \cite{GanianKorchemna2024SlimTreeCutWidth} is stated in terms of simple graphs. Here we give an interpretation of the universal obstruction of \cite{GanianKorchemna2024SlimTreeCutWidth} on graphs with multiple edges}.
Let $\mathscr{K}^{\mathsf{ds}} \coloneqq \langle K_{1,k}^{2} \rangle_{k \in \mathbb{N}}$.

\begin{figure}[htbp]
\centering
\includegraphics[width=0.49\linewidth]{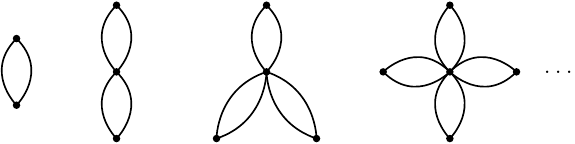}
\caption{\label{double_star_sequence}The immersion-parametric graph $\mathscr{K}^{\mathsf{ds}} = \langle \mathscr{K}^{\mathsf{ds}}_1, \mathscr{K}^{\mathsf{ds}}_2, \mathscr{K}^{\mathsf{ds}}_3, \mathscr{K}^{\mathsf{ds}}_4,\dots\rangle$ of double-edge stars.}
\end{figure}

We also denote by $\Ccal^{DS} \coloneqq \closure{\leqslant_{\mathsf{i}}}{\mathscr{K}^{\mathsf{ds}}}$.
Let us observe that every subdivision of $K_{1,k}^{2}$ belongs to $\Ccal^{DS}$.
To see this, one has to lift pairs of edges that are incident to distinct degree-two vertices.
To complete the picture we need to identify the immersion obstruction set of $\Ccal^{DS}$.
For this, let $\Ocal^{\textsf{ds}}$ be the set of graphs depicted in \autoref{obs_stcw}.

\begin{figure}[htbp]
\centering
\includegraphics[width=0.9\linewidth]{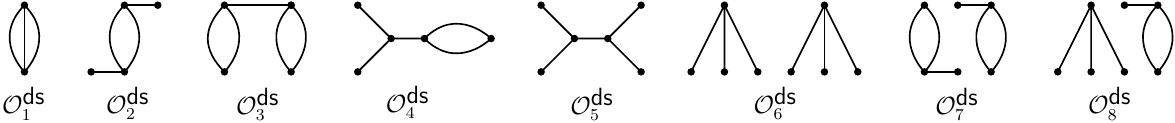}
\caption{\label{obs_stcw}The immersion-obstruction set $\Ocal^{\mathsf{ds}}$.}
\end{figure}

\begin{lemma}\label{smallsobsletution}
$\obs_{\leqslant_{\mathsf{i}}}(\Ccal^{DS}) = \Ocal^{\mathsf{ds}}$.
\end{lemma}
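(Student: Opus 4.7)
The plan is first to establish a structural characterization of $\mathcal{C}^{DS}$ and then to prove both inclusions of the desired equality using it.

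I claim $\mathcal{C}^{DS}$ is exactly the class of graphs with at most one vertex of degree at least three. The forward direction is easy: since $K_{1,k}^{2}$ has exactly one vertex (the center) of degree $2k$ while all remaining vertices have degree $2$, and since an immersion model is vertex-injective, any vertex of $H$ mapped to a leaf must have its $H$-degree bounded by the leaf's degree, namely $2$. For the reverse direction, given $H$ with at most one vertex $c$ of degree $\geq 3$, I map $c$ (if it exists) to the center of a sufficiently large $K_{1,k}^{2}$ and map the remaining vertices injectively to distinct leaves; each edge of $H$ at $c$ routes through a single center-to-leaf edge, and each edge between two non-$c$ vertices routes through a length-two path via the center. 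The degree bound on non-$c$ vertices together with taking $k$ large enough ensures edge-disjoint routing, and no extra constraint on multiplicities between non-$c$ vertices arises, since a triple edge between two vertices already forces both endpoints to have degree $\geq 3$.

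For $\mathcal{O}^{\mathsf{ds}} \subseteq \obs_{\leqslant_{\mathsf{i}}}(\mathcal{C}^{DS})$, I would verify for each graph $Z \in \mathcal{O}^{\mathsf{ds}}$ depicted in the statement that $Z$ has two vertices of degree at least three (hence $Z \notin \mathcal{C}^{DS}$ by the characterization) and then check immersion-minimality by inspecting every edge-deletion, vertex-deletion, and lift of a pair of incident edges and confirming that in every case the result has at most one vertex of degree $\geq 3$. Since each $Z$ has constant size, this is a finite and routine check.

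For the reverse inclusion, let $G$ be an immersion-minimal element of $\gall^{\mathsf{e}} \setminus \mathcal{C}^{DS}$. By the characterization, $G$ has two vertices $u, v$ of degree at least three, and by minimality their degrees are exactly three while every other vertex of $G$ has degree at most two. I would then case-analyze the structure of $G$ around $u$ and $v$: the number of edge-disjoint $u$-$v$ paths in $G$, the shape and multiplicity of the remaining edges at $u$ and at $v$, and whether $u$ and $v$ lie in the same connected component. When there are three edge-disjoint $u$-$v$ paths, iterative lifting at degree-two internal vertices immerses $\Theta_{3}$ into $G$ and hence $G \cong \Theta_{3}$ by minimality. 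Otherwise, analogous clean-up lifts at degree-two vertices reduce $G$ to one of the remaining graphs in $\mathcal{O}^{\mathsf{ds}}$ corresponding to the possible pendant configurations at $u$ and $v$ (parallel double-edge pendants, shared pendants, components separated or joined by a single bridging edge, and the various mixtures thereof). The main obstacle is this final case analysis: enumerating each local configuration around the two branch vertices and matching it to exactly one minimal obstruction in $\mathcal{O}^{\mathsf{ds}}$, all while respecting the multigraph lifting convention under which lifting two parallel edges suppresses the resulting loop.
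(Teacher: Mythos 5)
Your proposal is correct and, in one respect, tidier than the paper's own argument. The key structural fact---that $\mathcal{C}^{DS}$ is exactly the class of multigraphs with at most one vertex of degree at least three---is made explicit and proved carefully in your first paragraph (degree-monotonicity of both subgraph deletion and lifting gives the forward direction; your routing of edges through the center of a large $K_{1,m}^2$ gives the converse). The paper only uses this fact implicitly, and its closing step even phrases it slightly imprecisely (``$G$ is the disjoint union of paths, cycles and a subdivision of $K_{1,k}^2$'' is not literally what ``at most one vertex of degree $\geq 3$'' buys you, e.g.\ if the high-degree vertex is shared by a cycle and a pendant), although the conclusion $G \in \mathcal{C}^{DS}$ is nonetheless valid by precisely the routing you describe.

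Where you genuinely diverge from the paper is in the organization of the substantive direction. The paper verifies that $\mathcal{O}^{\mathsf{ds}}$ is an antichain disjoint from $\mathcal{C}^{DS}$ and then proves $\excl_{\leqslant_{\mathsf{i}}}(\mathcal{O}^{\mathsf{ds}}) \subseteq \mathcal{C}^{DS}$: taking two vertices of degree $\geq 3$ at minimum distance in an arbitrary $G$, it cases on that distance ($1$, $2$, or $\geq 3$) and exhibits some $\mathcal{O}^{\mathsf{ds}}_i$ as an immersion of $G$; minimality of each $Z\in\mathcal{O}^{\mathsf{ds}}$ then follows for free from the antichain property. You instead aim to enumerate all immersion-minimal multigraphs outside $\mathcal{C}^{DS}$ and match each one to an element of $\mathcal{O}^{\mathsf{ds}}$. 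This has to identify the whole graph rather than merely locate a substructure, which is harder in principle, but minimality gives you strong local control (exactly two branch vertices, both of degree exactly three, all other vertices of degree at most two---a claim that does need the short parity/edge-deletion argument you elide, but which is correct). Your separate minimality check for each $Z\in\mathcal{O}^{\mathsf{ds}}$ is logically redundant once the antichain property and the reverse inclusion are established, though it does no harm. In both versions the case analysis you flag is genuinely where all the remaining work lies, and your organization of it (number of edge-disjoint $u$--$v$ paths, pendant shapes, connectivity of $u$ and $v$) is a plausible route to the eight graphs of $\mathcal{O}^{\mathsf{ds}}$.
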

\begin{proof}
It is easy to see that for every $\Ocal^{\textsf{ds}} \cap \Ccal^{DS} = \emptyset$ and moreover that $\Ocal^{\textsf{ds}}$ is an immersion-antichain.
Let $G$ be a graph such that $\Ocal^{\textsf{ds}}_{1}$, $\Ocal^{\textsf{ds}}_{2}$, $\Ocal^{\textsf{ds}}_{3}$, $\Ocal^{\textsf{ds}}_{4}$, $\Ocal^{\textsf{ds}}_{5}$, $\Ocal^{\textsf{ds}}_{6}$, $\Ocal^{\textsf{ds}}_{7}$, $\Ocal^{\textsf{ds}}_{8} \nleqslant_{\mathsf{i}} G$.
Assume that there exist at least two vertices of degree at least three. Let $u, v$ be vertices of degree at least three that are within minimum distance in $G$.
First, observe that since $\Ocal^{\textsf{ds}}_{1} \nleqslant_{\mathsf{i}} G$, no edge incident to either $u$ or $v$ can be of multiplicity at least three.

Assume that $u$ and $v$ are at distance one, i.e. $e = uv \in E(G)$.
If $e$ has multiplicity two and $u$ and $v$ have a common neighbour then $\Ocal^{\textsf{ds}}_{1} \leqslant_{i} G$.
Otherwise they each have a private neighbour and then $\Ocal^{\textsf{ds}}_{2} \leqslant_{i} G$.
Suppose that $e$ has multiplicity and $u$ and $v$ have a common neighbour, say $w$.
If at least one of the edges $uw$ and $vw$ has multiplicity two, then $\Ocal^{\textsf{ds}}_{1} \leqslant_{\mathsf{i}} G$.
Otherwise, both $uw$ and $vw$ are of multiplicity one which implies that both $u$ and $v$ have at least one more neighbour.
If it is a common neighbour, then $\Ocal^{\textsf{ds}}_{1} \leqslant_{\mathsf{i}} G$.
Otherwise, $\Ocal^{\textsf{ds}}_{2} \leqslant_{\mathsf{i}} G$.
It remains to examine the case when $u$ and $v$ have no common neighbour.
Then both $u$ and $v$ have private neighbours.
If both have a private neighbour where the incident edges have multiplicity two, then $\Ocal^{\textsf{ds}}_{3} \leqslant_{\mathsf{i}} G$.
If one of the two has multiplicity one, then $\Ocal^{\textsf{ds}}_{4} \leqslant_{\mathsf{i}} G$.
Otherwise, if both have multiplicity one then both $u$ and $v$ have a second private neighbour and then $\Ocal^{\textsf{ds}}_{5} \leqslant_{\mathsf{i}} G$.

Next, assume that $u$ and $v$ are at distance two, i.e. $u$ and $v$ are not adjacent but have a common neighbour $w \in V(G)$. Let $uw$ and $vw$ be the incident edges.
Observe that neither $uw$ nor $vw$ can be of multiplicity two, since this would contradict the minimum distance assumption.
Then both $u$ and $v$ have at least two more neighbours.
If both are common, then $\Ocal^{\textsf{ds}}_{1} \leqslant_{\mathsf{i}} G$.
If one is common, then $\Ocal^{\textsf{ds}}_{2} \leqslant_{\mathsf{i}} G$.
If none is common, then $\Ocal^{\textsf{ds}}_{5} \leqslant_{\mathsf{i}} G$.

Finally, assume that $u$ and $v$ are at distance at least three.
In this case the graphs induced by vertices of distance up to two from $u$ and $v$ are disjoint.
Then it is easy to see that one of $\Ocal^{\textsf{ds}}_{6} \leqslant_{\mathsf{i}} G$, $\Ocal^{\textsf{ds}}_{7} \leqslant_{\mathsf{i}} G$ or $\Ocal^{\textsf{ds}}_{8} \leqslant_{\mathsf{i}} G$ holds.

Hence, there exists at most one vertex of degree at least three in $G$.
It is not difficult to observe that removing that vertex (if it  exists) yields a graph whose connected components are paths or cycles.
In other words, $G$ is the disjoint union of a set of paths, cycles and a subdivision of $K_{1, k}^{2}$.
It follows that $G \in \Ccal^{DS}$.
\end{proof}

Based on \autoref{smallsobsletution} and the results of \cite{GanianKorchemna2024SlimTreeCutWidth} we conclude the discussion on slim tree-cutwidth with the following proposition.

\begin{theorem}\label{ismdwsaal}
The set $\{\mathscr{W}, \mathscr{K}^{\mathsf{ds}}\}$ is an immersion-universal obstruction for $\stcw$.
Moreover,
\begin{align*}
\cobs_{\leqslant_{\mathsf{i}}}(\stcw) \ &= \ \{\Pcal^{(\leq 3)}, \Ccal^{DS}\}\text{ and}\\
\pobs_{\leqslant_{\mathsf{i}}}(\stcw) \ &= \ \big\{ \{K_{3,3}, K_{1,4}, K_{1,3}^{2}, P_{3}^{2,2}, K_{1,3}^{3} \Theta _{4}\}, \Ocal^{\sf{ds}} \big\}.
\end{align*}
\end{theorem}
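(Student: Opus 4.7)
The plan is to mirror the argument used for tree-cutwidth in Proposition~\ref{ismdwsaal}, adapting it to accommodate the second obstruction $\mathscr{K}^{\mathsf{ds}}$ that arises from the stricter boldness threshold in the slim variant. The target is to establish the equivalence $\stcw \sim \p_{\{\mathscr{W},\mathscr{K}^{\mathsf{ds}}\}}$, from which everything else will follow by applying Proposition~\ref{cobs_uobs} together with Lemma~\ref{smallsobsletution}.

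First I would deal with the upper bound $\stcw \preceq \p_{\{\mathscr{W},\mathscr{K}^{\mathsf{ds}}\}}$: this is exactly the Ganian--Korchemna theorem quoted in the text, asserting the existence of a function $f$ such that $\stcw(G) \geq f(k)$ forces either $\mathscr{W}_k$ or $\mathscr{K}^{\mathsf{ds}}_k$ as an immersion. The reverse direction $\p_{\{\mathscr{W},\mathscr{K}^{\mathsf{ds}}\}} \preceq \stcw$ requires showing that $\stcw$ is unbounded on each of the two parametric graphs. For the walls, the key observation is that any slim tree-cut decomposition is a tree-cut decomposition, and its slim width dominates its ordinary width, because passing from boldness threshold ``$\geq 3$'' to ``$\geq 2$'' can only enlarge the second summand in the definition of $w(t)$. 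Hence $\tcw \preceq \stcw$, and combining this with $\tcw(\mathscr{W}_k) = \Omega(k)$ used in Proposition~\ref{ismdwsaal} yields $\stcw(\mathscr{W}_k) = \Omega(k)$.

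The second lower bound, $\stcw(K_{1,k}^{2}) \to \infty$, must be argued directly, as $\tcw(K_{1,k}^{2})$ is easily seen to be bounded. I would fix an arbitrary tree-cut decomposition $(T,\mathcal{X})$ of $K_{1,k}^{2}$ and let $t_v$ be the node whose bag contains the centre $v$. Every edge of $T$ incident to $t_v$ separates a non-empty set of leaves from $v$, and each such leaf contributes two parallel edges to the corresponding cut, so every adhesion across an edge of $T$ incident to $t_v$ is at least two and therefore bold in the slim sense. If $j$ subtrees of $T-t_v$ contain leaves, with the $i$th such subtree holding $k_i$ leaves, then the slim width is bounded below by $\max\{2\max_i k_i,\; 1+j\}$. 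A pigeonhole on $\max_i k_i \geq k/j$ forces this maximum to be $\Omega(\sqrt{k})$, giving $\stcw(K_{1,k}^{2}) = \Omega(\sqrt{k})$, which is enough for the parametric equivalence.

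With $\stcw \sim \p_{\{\mathscr{W},\mathscr{K}^{\mathsf{ds}}\}}$ in hand, I would apply Proposition~\ref{cobs_uobs}. Since $\leqslant_{\mathsf{i}}$ is a well-quasi-ordering on $\gall^{\mathsf{e}}$, the class obstruction $\cobs_{\leqslant_{\mathsf{i}}}(\stcw)$ exists, is finite, and is in bijection with the parametric family via the omnivore correspondence. For $\mathscr{W}$, the identity $\closure{\leqslant_{\mathsf{i}}}{\mathscr{W}} = \mathcal{P}^{(\leq 3)}$ was verified in the discussion preceding Proposition~\ref{ismdwsaal}; for $\mathscr{K}^{\mathsf{ds}}$, we have $\closure{\leqslant_{\mathsf{i}}}{\mathscr{K}^{\mathsf{ds}}} = \mathcal{C}^{DS}$ by definition. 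The parametric obstruction is then the corresponding pair of immersion-obstruction sets: $\{K_{3,3}, K_{1,4}, K_{1,3}^{2}, P_{3}^{2,2}, K_{1,2}^{3}, \Theta_{4}\}$ for $\mathcal{P}^{(\leq 3)}$ (already identified in the proof of Proposition~\ref{ismdwsaal}) and $\mathcal{O}^{\mathsf{ds}}$ for $\mathcal{C}^{DS}$ by Lemma~\ref{smallsobsletution}. A final small check is that $\{\mathscr{W},\mathscr{K}^{\mathsf{ds}}\}$ is a genuine parametric family, i.e.\ a $\lesssim$-antichain; this is immediate because $\mathscr{K}^{\mathsf{ds}}_k$ has a vertex of degree $2k$ while every immersion of $\mathscr{W}_j$ is subcubic, and conversely $\mathscr{W}_k$ cannot be an immersion of any multi-star. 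The main obstacle I expect is the second lower bound: the multigraph bookkeeping for the slim boldness threshold has to be done carefully, but the pigeonhole scheme sketched above should suffice for the qualitative conclusion required here.
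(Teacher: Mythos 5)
Your proof is correct and follows the same strategy as the paper, which relies on the Ganian--Korchemna duality from \cite{GanianK22SlimTree} together with \autoref{cobs_uobs} and \autoref{smallsobsletution}. You additionally supply explicit lower-bound arguments (the monotonicity $\tcw\preceq\stcw$ and the pigeonhole bound $\stcw(K_{1,k}^{2})=\Omega(\sqrt{k})$) that the paper leaves implicit in the citation of \cite{GanianK22SlimTree}.
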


\subsection{Carving width}
\label{def_carv_all}

A \emph{carving decomposition} of a graph is a pair $(T, \sigma)$ where $T$ is a ternary tree whose set of leaves is $L$ and $ \sigma$ is a bijection $\sigma \colon L \to V(G)$.
For every edge $e = xy \in E(T)$, we denote by $L_{1}^{e}, L_{2}^{e}$ the set of leaves of the two connected components of $T - e$ and we define $E_{e}$ as the set of all edges that have one endpoint in $\sigma(L_{1}^{e})$ and the other in $\sigma(L_{2}^{e})$.
The \emph{width} of the carving decomposition $(T, \sigma)$ is the maximum $|E_{e}|$ over all edges of $T$.
The \emph{carving width} of a graph $G$, denoted by $\cvw(G),$ is defined as follows.
\begin{align}
\cvw(G) \ \coloneqq \ \min\{ k \in \Nbbb \mid \text{there exists a carving decomposition of }G\text{ of width }k\}.
\end{align}

Carving width was defined by Seymour and Thomas \cite{SeymourT94CallRouting} and can be seen as the ``edge'' analogue of the parameter \textsl{branchwidth}\footnote{Branchwidth was defined in \cite{RobertsonS91GMX} and is equivalent to treewidth.}.
It is easy to prove (see e.g., \cite{NestoridisT14Squareroots}) that
\begin{eqnarray}
\tw \ \preceq \ \cvw \ \preceq \ \Delta + \tw.\label{equkkkdwe}
\end{eqnarray}

Observe that, as $\Delta(\mathscr{W}_{k}) \leq 3$ and $\mathscr{W}_{k}$ is planar, if a graph excludes the wall $\mathscr{W}_{k}$
as an immersion, then it also excludes it as a topological minor and therefore also as a minor.
This implies that the treewidth of graphs with no $W_{k}$ minor is at most $\Ocal(k^9(\log k)^{\Ocal(1)})$ by the grid exclusion theorem \cite{ChuzhoyT21Towards}.
We deduce that $\tw \preceq \p_{\mathscr{W}}$ and this, combined with  \eqref{equkkkdwe}, implies that $\cvw \preceq \Delta + \p_{\mathscr{W}}$. 
Moreover, as  $\tw(\mathscr{W}_{k}) = k$ and $\tw \preceq \cvw$ (because of 
the left part of \eqref{equkkkdwe}), it follows that $\p_{\mathscr{W}} \preceq \cvw$.
It is also easy to see that $\Delta \preceq_{\mathsf{P}} \cvw.$
Collectively, these observations imply that $\Delta + \p_{\mathscr{W}} \preceq_{\mathsf{P}} \cvw$.
We conclude that $\Delta + \p_{\mathscr{W}} \sim_{\mathsf{P}} \cvw$, which allows us to state the following.

\begin{theorem}\label{ismsdwsaal}
The set $\{\Theta, \mathscr{K}^{\mathsf{s}}, \mathscr{W}\}$ is an immersion-universal obstruction for $\cvw$ with polynomial gap.
Moreover,
\begin{align*}
\cobs_{\leqslant_{\mathsf{i}}}(\cvw) \ &= \ \{\Ccal^{\Theta }, \Ccal^{S}_{\mathsf{i}}, \Pcal^{(\leq 3)}\}\text{ and}\\
\pobs_{\leqslant_{\mathsf{i}}}(\cvw) \ &= \ \big\{ \{3\cdot K_{1}\}, \{\Theta_{2}, 2\cdot P_{3}, P_{4}\}, \{K_{3,3}, K_{1,4}, K_{1,3}^{2}, P_{3}^{2,2}, K_{1,2}^{3}, \Theta_{4}\} \big\}.
\end{align*}
\end{theorem}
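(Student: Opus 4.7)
My plan is to derive this statement as a consolidation of several already-established facts, so the proof is essentially a chain of equivalences up to polynomial gap.

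First, I would reduce the statement to a comparison between two parameters. By the observation in \autoref{sliubeabaueike} that $\p_{\mathfrak{H}} \sim \sum_{\mathscr{H} \in \mathfrak{H}} \p_{\mathscr{H}}$, we have $\p_{\{\Theta,\mathscr{K}^{\mathsf{s}},\mathscr{W}\}} \sim \p_{\{\Theta,\mathscr{K}^{\mathsf{s}}\}} + \p_{\mathscr{W}}$. Since the earlier proposition on maximum degree gives $\Delta \sim \p_{\{\Theta,\mathscr{K}^{\mathsf{s}}\}}$, it suffices to show $\cvw \sim_{\mathsf{P}} \Delta + \p_{\mathscr{W}}$. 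This is exactly the equivalence already sketched by the paragraph preceding the theorem, so the task is to give a careful (polynomial-gap) verification.

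For the direction $\cvw \preceq_{\mathsf{P}} \Delta + \p_{\mathscr{W}}$, I would invoke \eqref{equkkkdwe}, i.e.\ $\cvw \preceq \Delta + \tw$, and then bound $\tw$ in terms of $\p_{\mathscr{W}}$. The key observation is that $\mathscr{W}_k$ is subcubic and planar, so being an immersion of $G$ is equivalent to being a topological minor of $G$, and therefore implies being a minor of $G$. Hence $\mathscr{W}_k \leqslant_{\mathsf{m}} G$ whenever $\mathscr{W}_k \leqslant_{\mathsf{i}} G$, which together with the polynomial-gap grid theorem of Chuzhoy and Tan \cite{ChuzhoyT21Towards} (and the fact that walls contain arbitrarily large grids as minors) yields $\tw \preceq_{\mathsf{P}} \p_{\mathscr{W}}$ with gap $\Ocal(k^9(\log k)^{\Ocal(1)})$. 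The converse direction $\Delta + \p_{\mathscr{W}} \preceq_{\mathsf{P}} \cvw$ follows from two easy inequalities: $\Delta \preceq_{\mathsf{P}} \cvw$ (in fact $\Delta \leq \cvw$, since every edge of the carving decomposition incident to the leaf $\sigma^{-1}(v)$ separates $v$ from the rest), and $\p_{\mathscr{W}} \preceq \cvw$, which follows from the left part of \eqref{equkkkdwe} together with $\tw(\mathscr{W}_k) = \Omega(k)$ (so $\p_{\mathscr{W}} \preceq \tw$ with linear gap). Combining the two directions proves $\cvw \sim_{\mathsf{P}} \p_{\{\Theta,\mathscr{K}^{\mathsf{s}},\mathscr{W}\}}$, establishing that $\{\Theta, \mathscr{K}^{\mathsf{s}}, \mathscr{W}\}$ is an immersion-universal obstruction for $\cvw$ with polynomial gap.

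For the ``Moreover'' part, I would invoke \autoref{cobs_uobs}: since $\p = \cvw$ admits the universal obstruction $\mathfrak{H} = \{\Theta, \mathscr{K}^{\mathsf{s}}, \mathscr{W}\}$, the class obstruction $\cobs_{\leqslant_{\mathsf{i}}}(\cvw)$ is in bijection with $\mathfrak{H}$ via $\mathscr{A} \mapsto \closure{\leqslant_{\mathsf{i}}}{\mathscr{A}}$. The three closures are already identified in earlier parts of the text: $\closure{\leqslant_{\mathsf{i}}}{\Theta} = \mathcal{C}^{\Theta}$ from \autoref{lesstrcas}, $\closure{\leqslant_{\mathsf{i}}}{\mathscr{K}^{\mathsf{s}}} = \mathcal{C}^{S}_{\mathsf{i}}$ from the maximum-degree discussion, and $\closure{\leqslant_{\mathsf{i}}}{\mathscr{W}} = \mathcal{P}^{(\leq 3)}$ from the treatment of $\tcw$ (every planar subcubic graph is a topological minor, hence an immersion, of a large wall, and conversely every immersion of a wall is planar and subcubic). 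The corresponding $\leqslant_{\mathsf{i}}$-obstruction sets are likewise available: $\obs_{\leqslant_{\mathsf{i}}}(\mathcal{C}^{\Theta}) = \{3 \cdot K_1\}$, $\obs_{\leqslant_{\mathsf{i}}}(\mathcal{C}^{S}_{\mathsf{i}}) = \{\Theta_2, 2 \cdot P_3, P_4\}$, and $\obs_{\leqslant_{\mathsf{i}}}(\mathcal{P}^{(\leq 3)}) = \{K_{3,3}, K_{1,4}, K_{1,3}^{2}, P_{3}^{2,2}, K_{1,2}^{3}, \Theta_4\}$ (the last being the core content of \autoref{ismdwsaal}).

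The main obstacle is not in the argument itself but in the polynomial-gap claim: one must carefully track that all the component inequalities -- the grid theorem, $\tw \preceq \cvw$, $\cvw \preceq \Delta + \tw$, and $\Delta \leq \cvw$ -- have polynomial (indeed linear, apart from the grid theorem step) gap functions, so that the overall composition remains polynomial. Apart from this bookkeeping, every ingredient has already been established earlier in the paper.
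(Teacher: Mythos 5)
Your proof is correct and follows essentially the same route as the paper: both reduce the universal-obstruction claim to showing $\cvw \sim_{\mathsf{P}} \Delta + \p_{\mathscr{W}}$ via the sandwich $\tw \preceq \cvw \preceq \Delta + \tw$, the observation that excluding the subcubic planar wall $\mathscr{W}_k$ as an immersion is the same as excluding it as a minor (so the polynomial-gap grid theorem applies), and the easy inequalities $\Delta \leq \cvw$ and $\p_{\mathscr{W}} \preceq \cvw$. The derivation of $\cobs_{\leqslant_{\mathsf{i}}}(\cvw)$ and $\pobs_{\leqslant_{\mathsf{i}}}(\cvw)$ from the universal obstruction and the previously identified closures and obstruction sets likewise matches the paper's intent.
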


\subsection{Cutwidth}
\label{cutwidth_sec}

The \emph{cutwidth} of a graph $G$, denoted by $\cw$, is defined so that
\begin{align}
\nonumber\cw(G) \ \coloneqq \ \min\{ k \in \Nbbb \mid \ &\text{there exists an ordering }v_{1}, \ldots, v_{n}\text{ of }V(G) \text{~such that}\\
&\text{for every $i\in[2,n],$ there are at most~} k\text{ edges in }G\text{ with }\\
\nonumber&\text{one endpoint in }\{v_{1} \ldots, v_{i-1}\}\text{ and the other in }\{v_{i} \ldots v_{n}\}\}.
\end{align}

Recalling the vertex ordering definition of pathwidth we have in \autoref{pathwidth_def}, it follows that 
\begin{eqnarray}
\pw \ \preceq \ \cw \ \preceq \ \Delta + \pw.\label{equskkskdwe}
\end{eqnarray}

Recall that $\mathscr{T}=\langle \mathscr{T}_k\rangle_{k\in\mathbb{N}_{\geq 1}}$ is the minor-parametric graph of ternary trees.
As $\Delta(\mathscr{T}_{k}) \leq 3$, we may follow a similar reasoning to the case of $\cvw$ before.
If a graph excludes $\mathscr{T}_{k}$ as an immersion, then it also excludes it as a topological minor and therefore also as a minor.
This implies that $\pw \preceq \p_{\mathscr{T}}$ and this, combined with  \eqref{equskkskdwe}, implies that $\cw \preceq \Delta + \p_{\mathscr{T}}$.
Moreover, it easily follows that $\p_{\mathscr{T}} \preceq \pw$ (see e.g.,~\cite[Proposition 3.2]{KirousisP86Searching} or \cite{ChungS89Graphswith}) and $\pw \preceq \cw$ (because of the left part of \eqref{equskkskdwe}).Therefore $\p_{\mathscr{T}} \preceq \cw$.
It is also easy to see that $\Delta \preceq_{\mathsf{P}} \cw$.
We conclude that $\Delta + \p_{\mathscr{T}} \sim \cw$.
We define $\gtforest$ to be the class of all forests of maximum degree at most three and notice that $\closure{\leqslant_{\mathsf{i}}}{\mathscr{T}} = \gtforest$ and that $\obs_{\leqslant_{\mathsf{i}}}(\gtforest) = \{\Theta_2, K_{1,4} \}$.
We sum up the above observations with the following proposition.

\begin{theorem}\label{ishmssdwsaal}
The set $\{\Theta, \mathscr{K}^{\mathsf{s}}, \mathscr{T}\}$ is an immersion-universal obstruction for $\cw$.
Moreover, $\cobs_{\leqslant_{\mathsf{i}}}(\cw) = \big\{ \Ccal^{\Theta }, \Ccal^{S}_{\mathsf{i}}, \gtforest \big\}$ and $\pobs_{\leqslant_{\mathsf{i}}}(\cw) = \big\{ \{3\cdot K_{1}\}, \{ \Theta_{2}, 2\cdot P_{3}, P_{4}\}, \{\Theta_2, K_{1,4}\} \big\}$.
\end{theorem}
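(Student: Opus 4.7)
The plan is to reduce the statement to two ingredients that are essentially already assembled in the prose preceding the theorem: the equivalence $\cw \sim \Delta + \p_{\mathscr{T}}$ and the universal obstruction for $\Delta$ obtained in \autoref{lesstrcas}. Combining them through the identity $\p_{\mathfrak{H}} \sim \sum_{\mathscr{H}\in \mathfrak{H}} \p_{\mathscr{H}}$ (recorded after the definition of $\p_{\mathfrak{H}}$ in \autoref{diidnidialdedfs}) produces the desired universal obstruction, and the class and parametric obstructions then drop out via \autoref{cobs_uobs}.

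First I would formalize the sandwich $\pw \preceq \cw \preceq \Delta + \pw$. The left inequality follows by taking the same vertex ordering that witnesses $\cw(G) \leq k$: every vertex $v_i$ is incident to at most $k$ edges going to $\{v_1, \dots, v_{i-1}\}$, so the vertex-ordering description of $\pw$ recalled in \autoref{pathwidth_def} yields $\pw(G) \leq k$. The right inequality follows because any ordering witnessing $\pw(G) \leq k$ yields cuts whose edges all have at least one endpoint among at most $k$ ``active'' vertices, each of which is incident to at most $\Delta(G)$ edges. Next, I would invoke the pathwidth subsection to get $\pw \sim_{\leqslant_{\mathsf{m}}} \p_{\mathscr{T}}$. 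The crucial point is that $\Delta(\mathscr{T}_k) \leq 3$, so $\mathscr{T}_k \leqslant_{\mathsf{m}} G$ is equivalent to $\mathscr{T}_k \leqslant_{\mathsf{i}} G$; hence $\pw \sim \p_{\mathscr{T}}$ also as an immersion statement. Plugging this into the sandwich and using $\Delta \preceq_{\mathsf{P}} \cw$ (a vertex of degree $d$ forces any cut separating it from its neighbours to have at least $\lceil d/2 \rceil$ edges) yields $\Delta + \p_{\mathscr{T}} \sim \cw$.

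Second, \autoref{lesstrcas} establishes that $\{\Theta, \mathscr{K}^{\mathsf{s}}\}$ is an immersion-universal obstruction for $\Delta$, that is, $\Delta \sim \p_{\Theta} + \p_{\mathscr{K}^{\mathsf{s}}}$. Adding $\p_{\mathscr{T}}$ to both sides and using the sum-as-supremum equivalence from \autoref{sliubeabaueike} gives
\begin{align*}
\p_{\{\Theta, \mathscr{K}^{\mathsf{s}}, \mathscr{T}\}} \ \sim \ \p_{\Theta} + \p_{\mathscr{K}^{\mathsf{s}}} + \p_{\mathscr{T}} \ \sim \ \Delta + \p_{\mathscr{T}} \ \sim \ \cw,
\end{align*}
which is exactly the statement that $\{\Theta, \mathscr{K}^{\mathsf{s}}, \mathscr{T}\}$ is an immersion-universal obstruction for $\cw$. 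By \autoref{cobs_uobs} the class obstruction is then the set of $\leqslant_{\mathsf{i}}$-closures $\{\closure{\leqslant_{\mathsf{i}}}{\Theta}, \closure{\leqslant_{\mathsf{i}}}{\mathscr{K}^{\mathsf{s}}}, \closure{\leqslant_{\mathsf{i}}}{\mathscr{T}}\} = \{\mathcal{C}^{\Theta}, \mathcal{C}^{S}_{\mathsf{i}}, \gtforest\}$; the last identification uses once more that subcubic graphs cannot distinguish minor from immersion containment, so $\mathscr{T}$ remains an omnivore of $\gtforest$ in the immersion setting.

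Finally I would compute the three immersion-obstruction sets to get $\pobs_{\leqslant_{\mathsf{i}}}(\cw)$. For $\mathcal{C}^{\Theta}$ the obstruction is $\{3 \cdot K_1\}$, since any graph outside $\mathcal{C}^{\Theta}$ has either at least three distinct vertex-components (giving $3 \cdot K_1$ as an immersion) or is already one of the pumpkins. For $\mathcal{C}^{S}_{\mathsf{i}}$, a short case analysis, already implicit in \autoref{lesstrcas}, gives $\{\Theta_2, 2 \cdot P_3, P_4\}$: $\Theta_2$ forbids multiple edges, $2 \cdot P_3$ forbids two nontrivial components, and $P_4$ forbids a non-star component. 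The only step requiring a little care is $\obs_{\leqslant_{\mathsf{i}}}(\gtforest) = \{\Theta_2, K_{1,4}\}$: $\Theta_2$ is forced by the presence of any cycle (any cycle immerses $\Theta_2$) and $K_{1,4}$ is forced by any vertex of degree at least four; conversely, excluding both as immersions leaves precisely simple forests of maximum degree at most three. I expect the main (mild) obstacle to be this last set computation, together with double-checking that the immersion-versus-minor collapse is applied correctly each time we transport a minor-monotone fact into the immersion world.
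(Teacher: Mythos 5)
Your proof is correct and follows essentially the same route as the paper: establish $\cw \sim \Delta + \p_{\mathscr{T}}$ via the sandwich $\pw \preceq \cw \preceq \Delta + \pw$ and the subcubicity of $\mathscr{T}_k$ (to transport the minor-level fact $\pw \sim \p_{\mathscr{T}}$ to immersions), then expand $\Delta$ via its immersion-universal obstruction $\{\Theta, \mathscr{K}^{\mathsf{s}}\}$, and finally read off the class and parametric obstructions through \autoref{cobs_uobs} and the identification $\closure{\leqslant_{\mathsf{i}}}{\mathscr{T}} = \gtforest$ with $\obs_{\leqslant_{\mathsf{i}}}(\gtforest) = \{\Theta_2, K_{1,4}\}$. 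The extra verifications you carry out (the obstruction-set computations for $\mathcal{C}^{\Theta}$ and $\mathcal{C}^{S}_{\mathsf{i}}$, the argument that $\Delta \preceq_{\mathsf{P}} \cw$) match what the paper either states directly or delegates to the preceding subsections.
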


\subsection{Excluding a graph as an immersion}

In the case of graph minors, cliques serve as the graphs that contain every other graph as a minor, given that we choose a clique of sufficiently large order.
It is natural to wonder what is the analogous concept for immersions.
Of course, we may consider cliques where both the number of vertices and multiplicities of edges are increasing.
Instead, we define a somehow more normalized immersion-omnivore of the set of all graphs with multiple edges as follows.
Let $\mathscr{I} \coloneqq \langle \mathscr{I}_{k} \rangle_{k \in \mathbb{N}_{\geq 1}}$ be the immersion-parametric graph where  $\mathscr{I}_{k}$ is the graph obtained from $K_{1, k}$ if we set the multiplicity of its edges to be $k,$ for every $k \geq 1.$
See \autoref{Theta_immersions_seq} for an illustration.

\begin{figure}[htbp]
\centering
\includegraphics[width=0.53\linewidth]{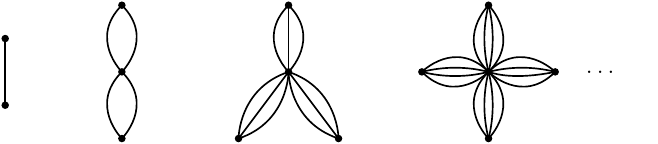}
\caption{\label{Theta_immersions_seq}The immersion-parametric graph $\mathscr{I} = \langle \mathscr{I}_1, \mathscr{I}_2, \mathscr{I}_3, \mathscr{I}_4, \ldots \rangle$.}
\end{figure}

Note that the definition of $\mathscr{I}_{k}$ has two ``degrees of freedom'', contrary to the definitions of $\mathscr{K}^{\mathsf{s}}$ and $\mathscr{K}^{\mathsf{ds}}$.
Moreover, note that every clique on $k$ vertices is an immersion of $\mathscr{I}_{k}$.
A graph decomposition analogue of $\p_{\mathscr{I}}$ has been proposed by Wollan \cite{Wollan15Thestructure}.
In order to describe it, we use tree-cut decompositions and we need to additionally define the notion of a torso of such a decomposition.

Let $\mathcal{T} = (T, {\mathcal{X}})$ be a tree-cut decomposition of a graph $G$. 
The torso at a node $t$ of $T$ is the graph $H_{t}$ defined as follows.
We first assume that $G$ is connected, otherwise we apply the definition to each of the connected components of $G$.
Let $T_{1}, T_{2}, \ldots, T_{p}$ be the connected components of $T - t$.
We set $X_{T_i} = \bigcup_{t' \in V(T_{i})} X_{t'}$ and observe that $\{X_{T_1}, \dots, X_{T_p} \}$ is a near-partition of $V(G) \setminus X_{t}$.
The \emph{torso} $H_t$ is the graph obtained from $G$ by identifying the vertices of $X_{T_{i}}$ into a single vertex $z_{i}$, for each $i \in [p]$, removing all the resulting loops, and keeping multiple edges that may appear.
Let $\wn$ be the graph parameter so that
\begin{align}
\nonumber\wn(G) \ \coloneqq \ \min\{ k \in \Nbbb \mid \ &G\text{ has a tree-cut decomposition where}\\
&\text{every edge has adhesion at most }k\text{ and}\\
\nonumber&\text{every torso has}\leq k\text{ vertices of degree}> k.\}
\end{align}

According to \cite{Wollan15Thestructure}, $\wn \sim \p_{\mathscr{I}}$.
This implies the following.

\begin{theorem}\label{ismsdwsaal}
The set $\{ \mathscr{I} \}$ is an immersion-universal obstruction for $\wn$. Moreover, $\cobs_{\leqslant_{\mathsf{i}}}(\wn) = \big\{ \gall^{\mathsf{e}} \big\}$ and $\pobs_{\leqslant_{\mathsf{i}}}(\wn) = \big\{ \emptyset \big\}$.
\end{theorem}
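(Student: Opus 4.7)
The plan is to combine the main equivalence result of Wollan cited just before the statement, namely $\wn \sim \p_{\mathscr{I}}$, with Proposition~\ref{cobs_uobs}. By the definition of a universal obstruction, the relation $\p_{\mathfrak{H}} \sim \p$ for $\mathfrak{H} = \{\mathscr{I}\}$ is \emph{exactly} the assertion that $\{\mathscr{I}\}$ is an immersion-universal obstruction for $\wn$, so the first sentence of the theorem is immediate from Wollan's theorem. The remaining task is to identify $\cobs_{\leqslant_{\mathsf{i}}}(\wn)$ and $\pobs_{\leqslant_{\mathsf{i}}}(\wn)$.

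Since $\{\mathscr{I}\}$ is a universal obstruction for $\wn$, Proposition~\ref{cobs_uobs} guarantees that $\cobs_{\leqslant_{\mathsf{i}}}(\wn)$ exists and consists precisely of a single class, the one of which $\mathscr{I}$ is an immersion-omnivore, namely $\closure{\leqslant_{\mathsf{i}}}{\mathscr{I}}$. The key step is therefore to verify that
\begin{align*}
\closure{\leqslant_{\mathsf{i}}}{\mathscr{I}} \ = \ \gall^{\mathsf{e}},
\end{align*}
i.e., every finite multigraph $G$ is an immersion of $\mathscr{I}_{k}$ for some $k$. To see this, let $n=|V(G)|$ and let $\Delta$ be the maximum degree of $G$ (taking edge multiplicities into account), and set $k=\max\{n,\Delta\}$. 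Denote the center of $\mathscr{I}_{k}$ by $c$ and its leaves by $\ell_{1},\ldots,\ell_{k}$, each joined to $c$ by $k$ parallel edges. I would pick an injection $V(G)\hookrightarrow\{\ell_{1},\ldots,\ell_{k}\}$, say $v_{i}\mapsto\ell_{i}$, and for each edge of $G$ between $v_{i}$ and $v_{j}$ route a two-edge path $\ell_{i}c\ell_{j}$ through $c$, assigning distinct parallel copies of the edges $\ell_{i}c$ and $c\ell_{j}$ to distinct edges of $G$. Since at most $\deg_{G}(v_{i})\leq k$ edges of $G$ are incident to $v_{i}$, the $k$ parallel copies of $\ell_{i}c$ suffice; at $c$ itself no constraint arises because every chosen edge of $G$ consumes exactly two edges of $\mathscr{I}_{k}$, one from each of two ``bundles'' $\ell_{i}c$ and $\ell_{j}c$, and bundles for different $i$'s are disjoint. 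This yields an edge-disjoint system of paths realizing $G$ as an immersion in $\mathscr{I}_{k}$. Conversely, every $\mathscr{I}_{k}$ lies in $\gall^{\mathsf{e}}$, so the two classes coincide.

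Finally, since $\gall^{\mathsf{e}}$ is the entire universe of multigraphs under consideration, there is no multigraph outside it, hence $\obs_{\leqslant_{\mathsf{i}}}(\gall^{\mathsf{e}})=\emptyset$. By the definition of the parametric obstruction, this gives $\pobs_{\leqslant_{\mathsf{i}}}(\wn)=\big\{\obs_{\leqslant_{\mathsf{i}}}(\gall^{\mathsf{e}})\big\}=\{\emptyset\}$, closing the proof.

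No part of this argument is genuinely hard: the substance is imported from Wollan's theorem, and the omnivore verification is essentially a routing exercise made trivial by the enormous edge multiplicity of $\mathscr{I}_{k}$. The only subtlety worth stressing is the distinction between the empty set of graph classes and the singleton whose unique element is the empty set of graphs; the latter, not the former, is the correct description of $\pobs_{\leqslant_{\mathsf{i}}}(\wn)$, reflecting the fact that there is exactly one class in $\cobs_{\leqslant_{\mathsf{i}}}(\wn)$ and that this class happens to have an empty obstruction set.
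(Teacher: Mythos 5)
Your proof is correct and follows the same route as the paper, which simply states the theorem as an immediate consequence of Wollan's equivalence $\wn \sim \p_{\mathscr{I}}$ together with \autoref{cobs_uobs} and the observation (made just before in the text, via the remark that every clique $K_k$ is an immersion of $\mathscr{I}_k$) that $\mathscr{I}$ is an immersion-omnivore of $\gall^{\mathsf{e}}$. You spell out the omnivore verification with an explicit routing argument that the paper leaves implicit, but there is no difference in approach.
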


\subsection{Hierarchy of immersion-monotone parameters}

We conclude this section  with the small fraction of the quasi-ordering of immersion-monotone parameters consisting of the immersion-monotone parameters e have considered in this section, depicted in \autoref{fig_hierarchy_immersions}.

\begin{figure}[htbp]
\centering
\includegraphics[width=0.9\linewidth]{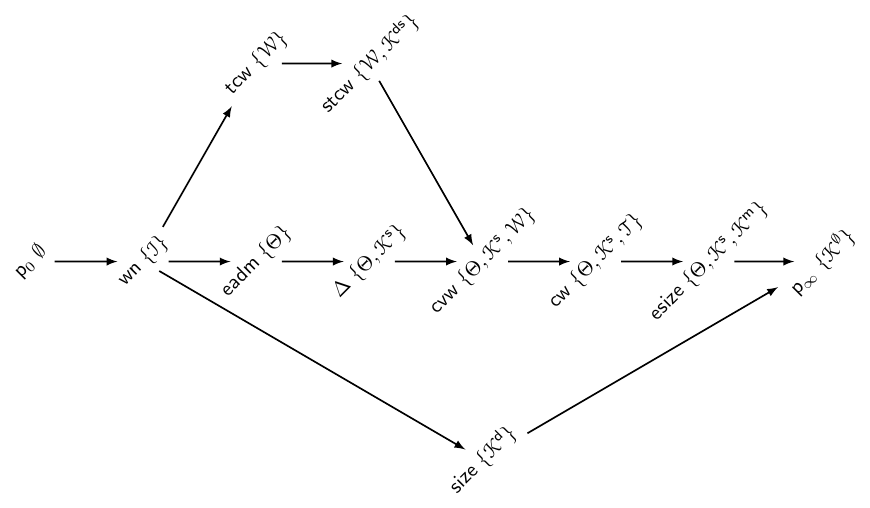}
\caption{\label{fig_hierarchy_immersions} The hierarchy of immersion-monotone parameters.}
\end{figure}

\section{Obstructions of vertex-minor-monotone parameters}
\label{vesjrs_misos}

In this section we turn our focus to the \textsl{vertex-minor} relation and discuss a few key vertex-minor-monotone parameters and their universal obstructions.

\medskip
Let $G$ be a graph and let $v \in V(G)$.
The result of a \emph{local complementation} at a vertex $v \in V(G)$ in $G$ is the graph obtained if we remove from $G$ all edges in $G[N_{G}(v)]$ and then add all edges that are not present in $G[N_{G}(v)]$, i.e., we replace the induced subgraph $G[N_{G}(v)]$ in $G$ by its complement.
Two graphs $H$ and $G$ are \emph{locally equivalent} if one is obtained from the other by a sequence of local complementations.
A graph $H$ is a \emph{vertex-minor} of $G$, denoted by $H \leqslant_{\mathsf{v}} G,$ if $H$ is an induced subgraph of a graph that is locally equivalent to $G$.

With the vertex-minor relation we have to be careful when defining the universe of graphs we work in.
Indeed, locally equivalent graphs may not be isomorphic.
Therefore, we define $\gall^{\mathsf{v}}$ to contain one graph from {each}  equivalence class of locally equivalent graphs.
This forces $(\gall^{\mathsf{v}}, \leqslant_{\mathsf{v}})$ to be a partial-ordering relation.

Another issue we should discuss involves the well-quasi-ordering status of $\leqslant_{\mathsf{v}}$ on  $\gall^{\mathsf{v}}$ which is an open problem.
Therefore, there is no apriori guarantee that vertex-minor-monotone parameters admit class obstructions, parametric obstructions, and/or universal obstructions.

\subsection{The two extreme cases}

We start by studying the two extremal points in the hierarchy of vertex-minor-monotone parameters.
The first is the parameter $\size$.
Notice that, by Ramsey's theorem \cite{Ramsey30Ona}, every large enough graph contains either a independent set or a complete graph on $k+1$ vertices.
By applying a local complementation on a vertex of a complete graph on $k+1$ vertices we obtain $K_{1,k}$ that contains an independent set of $k$ vertices.
Therefore, as in the case of minors and immersions we obtain the following proposition (see \cite[Theorem 3.1]{KwonO14Unavoidable}).

\begin{proposition}
The set $\{ \mathscr{K}^{\mathsf{d}} \}$ is a vertex-minor-universal obstruction for $\size.$
Moreover, $\cobs_{\leqslant_{\mathsf{v}}}(\size) = \big\{ \excl_{\leqslant_{\mathsf{v}}}(K_{2}) \}$ and $\pobs_{\leqslant_{\mathsf{v}}}(\size) = \big\{ \{K_{2}\} \big\}$.
\end{proposition}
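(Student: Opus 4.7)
The plan is to reduce all three claims to a single equivalence $\size \sim \p_{\mathscr{K}^{\mathsf{d}}}$ and then invoke \autoref{cobs_uobs}.

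First, I would dispatch the easy direction $\p_{\mathscr{K}^{\mathsf{d}}} \preceq \size$: since every vertex-minor of $G$ arises by a sequence of local complementations followed by taking an induced subgraph, its number of vertices is at most $|V(G)|$. Hence $k \cdot K_1 \not\leqslant_{\mathsf{v}} G$ as soon as $k > |V(G)|$, giving $\p_{\mathscr{K}^{\mathsf{d}}}(G) \leq |V(G)|+1$.

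Next, for the direction $\size \preceq \p_{\mathscr{K}^{\mathsf{d}}}$, my approach is exactly the Ramsey argument hinted at in the preamble. Given any graph $G$ with $|V(G)| \geq R(k+1,k+1)$, by Ramsey's theorem $G$ contains either $(k+1)\cdot K_1$ or $K_{k+1}$ as an induced subgraph. In the first case, we directly have $(k+1)\cdot K_1 \leqslant_{\mathsf{v}} G$. In the second case, I would pick any vertex $v$ of the induced $K_{k+1}$ and perform a local complementation at $v$ in $G$; this toggles precisely the edges of $G[N_G(v)]$, and in particular turns the clique induced on the remaining $k$ clique-vertices into an independent set. Taking the induced subgraph on those $k$ vertices produces $k \cdot K_1$ as a vertex-minor of $G$. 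In both cases $\p_{\mathscr{K}^{\mathsf{d}}}(G) > k$, so $\size(G) \leq R(\p_{\mathscr{K}^{\mathsf{d}}}(G)+1,\p_{\mathscr{K}^{\mathsf{d}}}(G)+1)$, yielding $\size \preceq \p_{\mathscr{K}^{\mathsf{d}}}$.

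To conclude, I would verify that $\excl_{\leqslant_{\mathsf{v}}}(K_2)$ is precisely the class of edgeless graphs: if $E(G) \neq \emptyset$ then the endpoints of any edge already induce $K_2$, while if $E(G) = \emptyset$ then $N_G(v) = \emptyset$ for every $v \in V(G)$, so every local complementation fixes $G$ and every induced subgraph of $G$ remains edgeless. Thus $\closure{\leqslant_{\mathsf{v}}}{\mathscr{K}^{\mathsf{d}}} = \excl_{\leqslant_{\mathsf{v}}}(K_2)$, i.e., $\mathscr{K}^{\mathsf{d}}$ is a vertex-minor omnivore of $\excl_{\leqslant_{\mathsf{v}}}(K_2)$. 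Combining this with $\size \sim \p_{\mathscr{K}^{\mathsf{d}}}$ and \autoref{cobs_uobs} yields $\cobs_{\leqslant_{\mathsf{v}}}(\size) = \{\excl_{\leqslant_{\mathsf{v}}}(K_2)\}$, and taking the vertex-minor-obstruction set of this class gives $\pobs_{\leqslant_{\mathsf{v}}}(\size) = \big\{\{K_2\}\big\}$. The only ``obstacle'' worth noting is that the gap of the equivalence is Ramsey-type rather than polynomial; this is harmless for $\sim$-equivalence but means $\size$ and $\p_{\mathscr{K}^{\mathsf{d}}}$ are not polynomially equivalent.
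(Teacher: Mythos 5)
Your proof is correct and follows essentially the same route the paper uses: the Ramsey argument to find a large independent set or a clique, followed by one local complementation to flatten a clique into a star $K_{1,k}$ whose non-center vertices give the desired $k\cdot K_1$. You have simply made explicit the easy direction $\p_{\mathscr{K}^{\mathsf{d}}}\preceq\size$, the identification $\closure{\leqslant_{\mathsf{v}}}{\mathscr{K}^{\mathsf{d}}}=\excl_{\leqslant_{\mathsf{v}}}(K_2)$, and the appeal to \autoref{cobs_uobs} (whose first implication, appropriately, does not need the vertex-minor relation to be a well-quasi-ordering), details that the paper leaves implicit while citing \cite[Theorem~3.1]{KwonO14Unavoidable}.
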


On the other end  of the hierarchy, we need to conjure a parameter bounded for every vertex-minor-closed graph class.
For this consider the graph sequence $\mathscr{K}^{s} = \langle K_{k}^{s} \rangle_{k\in\mathbb{N}_{\geq 4}}$ where $K_{k}^{s}$ is obtained from the complete graph $K_{k}$ after subdividing all of its edges once.
Notice that we can obtain every graph $H$ as a vertex-minor of $K_{|H|}^{s}$.
Therefore $\mathscr{K}^{s}$ is a vertex-minor-omnivore of $\gall^{\mathsf{v}}$.
We conclude that $\cobs_{\leqslant_{\mathsf{v}}}(\p_{\mathscr{K}^{s}}) = \{ \gall^{\mathsf{v}} \}$ and $\pobs_{\leqslant_{\mathsf{v}}}(\p_{\mathscr{K}^{s}}) = \{ \emptyset \}$.
Clearly, $p_{\mathscr{K}^{s}}$ is a max-min parameter, defined by the exclusion of subdivided cliques.
To find a decomposition based min-max analogue of $\p_{\mathscr{K}^{s}}$ is a central open problem in the study of vertex-minors.

\subsection{Rankwidth}

The study of parameters that are monotone under vertex-minors is more restricted than the immersion-monotone or the minor-monotone ones.
However it has attracted considerable attention, mostly due to the parameter of rankwidth.

\medskip
The rankwidth parameter was introduced by Oum and Seymour~\cite{OumS06appro} and it is defined using carving decompositions which we already introduced in \autoref{def_carv_all}).
In the context of rankwidth, we shall refer to carving decompositions as \emph{rank decompositions}.
Let $(T, \sigma)$ be a rank decomposition of a graph $G$ and let $e \in E(T)$.
Recall that $L_{1}^{e}, L_{2}^{e}$ are the leaves of the two connected components of $T - e$ which defines the partition $\{ \sigma(L_{1}^{e}), \sigma(L_{2}^{e})\}$ of $V(G)$.
Consider a numbering $\{x_{1}^1, \ldots, x^1_{r_{1}}\}$ of the vertices of $\sigma(L_{1}^{e})$ and numbering $\{x_{1}^2, \ldots, x^2_{r_{2}}\}$ of the vertices of $\sigma(L_{2}^{e})$. We define $M_G(e)$ as the $0$-$1$ matrix over the binary field such that the entry at the $i$-th row and the $j$-th column has value one if $\{x_{i}^1, x_{j}^{2}\} \in E(G)$, otherwise the value is zero.
The \emph{width} of $(T, \sigma)$ is the maximum rank of the matrix $M_G(e)$ over all edges $e$ of $T.$
The \emph{rankwidth} of a graph $G$, denoted by $\rw(G),$ is now defined as follows (see \autoref{fig_rw}) for an example).
\begin{align}
\rw(G) \ &\coloneqq \ \min\{ k \in \Nbbb \mid \text{there is a rank decomposition of }G\text{ of width }k \}.
\end{align}

\begin{figure}[htbp]
\centering
\includegraphics[width=0.7\linewidth]{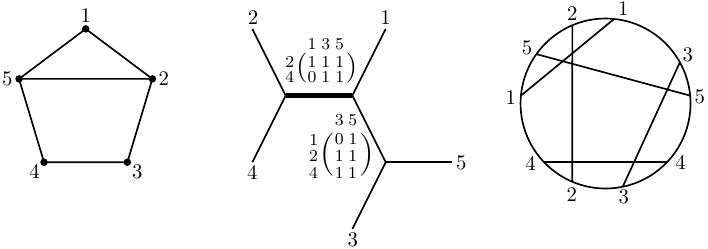}
\caption{\label{fig_rw} A graph $G$ and a rank decomposition $(T, \sigma)$ of $G$.
The partition $\{ \sigma (L_{1}^{e}), \sigma (L_{2}^{e})\}$ corresponding to the thick horizontal tree-edge $e$ is $\{\{2,4\},\{1,5,3\}\}$ and the corresponding matrix $M_G(e)$ has rank two.
The graph $G$ is a circle graph, as witnessed by its chord diagram on the right.}
\end{figure}

The importance of rankwidth emerged from the fact that it is equivalent to the parameter cliquewidth.
Here we avoid giving the definition of cliquewidth.
It has been introduced by Courcelle,  Engelfriet, and  Rozenberg \cite{CourcelleER93}. 
Courcelle,  Makowsky, and Rotics \cite{CourcelleMR00Linear} proved that problems on graphs that are expressible in {MSOL$_{1}$}\footnote{MSOL$_{1}$ is the restriction of MSOL where we are not allowed to quantify on sets of edges.} can be decided in linear time for graphs with small cliquewidth, {given that a rank decomposition of small width is given} {(the construction of a rank decomposition of width $k$ can be done by the algorithm of Hliněný and Oum in \cite{HlinenyO08FindingBranch} 
in  $2^{O(k^2)} n^3$ time or by the algorithm of   Korhonen and  Sokołowski in \cite{KorhonenSokolowski2024Rankwidth}  running in $f(k)\cdot (n^{1+o(1)}) + O(m)$ time for some function $f:\Nbbb\to\Nbbb$).}
Clearly, this algorithmic meta-theorem applies also to rankwidth since its equivalent to cliquewidth. 
Interestingly, cliquewidth is not monotone under any well-studied quasi-ordering relation on graphs except from the induced subgraph relation, a fact which makes its study more complicated, from a ``structural'' point of view.
Rankwidth offers a way out of this, as it is monotone under vertex-minors which, in turn, made it possible to find obstruction characterizations, whose presentation follows.

A graph $G$ is a \emph{circle graph} if and only if it is the intersection graph of a set of chords in a circle: the chords represent the vertices and two vertices are adjacent if and only if their corresponding chords intersect (see \autoref{fig_rw}). 
We use $\gcircle$ for the class of circle graphs.
For a positive integer $k$, the \emph{$(k \times k)$-comparability grid} is the graph $\mathscr{G}^{\mathsf{c}}_k$ where $V(\mathscr{G}^{\mathsf{c}}_k) = [k]^2$ and $E(\mathscr{G}^{\mathsf{c}}_k) \ = \ \{ \{ (i, j), (i', j') \} \mid i \leq i' \wedge j \leq j' \}$.
Let $\mathscr{G}^{\mathsf{c}} = \langle \mathscr{G}^{\mathsf{c}}_k \rangle_{k \geq 2}$ denote the sequence of graphs such that for every $k \geq 2$, $\mathscr{G}^{\mathsf{c}}_k$ is the $(k \times k)$-comparability grid (see \autoref{fig_comp_grid}).

\begin{figure}[htbp]
\centering
\includegraphics[width=0.7\linewidth]{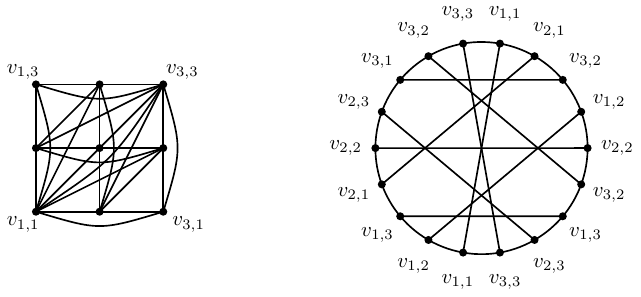}
\caption{\label{fig_comp_grid}The $(3,3)$-comparability grid and its circle graph representation.}
\end{figure}

Geelen, Kwon, McCarty, and Wollan showed \cite{GeelenKMW23Thegrid} that there exists a function $f \colon \mathbb{N} \rightarrow \mathbb{N}$ so that for every $k \in \mathbb{N}$, every graph of rankwidth at least $f(k)$ contains as a vertex-minor the $(k\times k)$-comparability grid $\mathscr{G}^{\mathsf{c}}_k$.
In the same paper, it was observed that every circle graph is a vertex-minor of a comparability grid.
As every comparability grid is also a circle graph, we have $\gcircle = \closure{\leqslant_{\mathsf{v}}}{\mathscr{G}^{\mathsf{c}}}$, i.e., the vertex-minor-parametric graph of comparability grids is a vertex-minor-omnivore of circle graphs.
Moreover, it was proved by Bouchet in~\cite{Bouchet94circl} that $\obs_{\leqslant_{\mathsf{v}}}(\gcircle) = \{W_{5}, W_{7}, W_3^{3s} \}$ where $W_{r}$ is the wheel on $r+1$ vertices and $W_3^{3s}$ is obtained after subdividing the edges of a triangle of $K_{4}$ once (see \autoref{fig_circle_obs}).

\begin{figure}[htbp]
\centering
\includegraphics[width=0.45\linewidth]{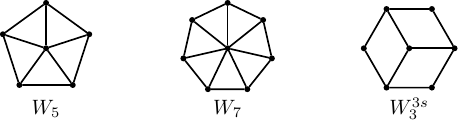}
\caption{\label{fig_circle_obs} The vertex-minor obstruction set of circle graphs \cite{Bouchet94circl}.}
\end{figure}

We conclude with the following proposition.

\begin{proposition}\label{ismsdwsaal}
The set $\{ \mathscr{G}^{\mathsf{c}} \}$ is a vertex-minor-universal obstruction for $\rw$.
Moreover, $\cobs_{\leqslant_{\mathsf{v}}}(\rw) = \big\{ \gcircle \big\}$ and $\pobs_{\leqslant_{\mathsf{v}}}(\rw) = \big\{ \{W_{5}, W_{7}, W_3^{3s}\} \big\}$.
\end{proposition}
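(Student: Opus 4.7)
The plan is to assemble the proof from three ingredients already discussed in the excerpt: the rankwidth grid theorem of Geelen, Kwon, McCarty, and Wollan~\cite{GeelenKMW23Thegrid}, Bouchet's characterization of circle graphs by forbidden vertex-minors~\cite{Bouchet94circl}, and the meta-correspondence between universal, class, and parametric obstructions given in \autoref{cobs_uobs}. The work splits naturally into first establishing the equivalence $\rw \sim \p_{\mathscr{G}^{\mathsf{c}}}$ (which identifies $\{\mathscr{G}^{\mathsf{c}}\}$ as a vertex-minor-universal obstruction), and then reading off the class and parametric obstructions from this equivalence together with known structural facts about $\gcircle$.

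First I would establish the direction $\p_{\mathscr{G}^{\mathsf{c}}} \preceq \rw$. This amounts to observing that $\rw$ is vertex-minor-monotone (an immediate consequence of the definition of rank decompositions, since local complementation and taking induced subgraphs do not increase the ranks of the cut matrices over $\mathbb{F}_{2}$), together with the fact that $\rw(\mathscr{G}^{\mathsf{c}}_{k})$ tends to infinity with $k$. The latter follows because a large comparability grid trivially contains cuts of large $\mathbb{F}_{2}$-rank and is known to have unbounded rankwidth; therefore if $\mathscr{G}^{\mathsf{c}}_{k} \leqslant_{\mathsf{v}} G$ then $\rw(G) \geq \rw(\mathscr{G}^{\mathsf{c}}_{k})$, yielding a function bounding $\p_{\mathscr{G}^{\mathsf{c}}}$ in terms of $\rw$. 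The converse direction $\rw \preceq \p_{\mathscr{G}^{\mathsf{c}}}$ is precisely the content of the Geelen--Kwon--McCarty--Wollan grid theorem~\cite{GeelenKMW23Thegrid}: there exists $f\colon \mathbb{N}\to\mathbb{N}$ such that any graph of rankwidth at least $f(k)$ admits $\mathscr{G}^{\mathsf{c}}_{k}$ as a vertex-minor. Combined, we obtain $\rw \sim \p_{\mathscr{G}^{\mathsf{c}}}$, so by definition $\{\mathscr{G}^{\mathsf{c}}\}$ is a vertex-minor-universal obstruction for $\rw$.

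Next I would verify that $\mathscr{G}^{\mathsf{c}}$ is a vertex-minor-omnivore of $\gcircle$, i.e.\ $\gcircle = \closure{\leqslant_{\mathsf{v}}}{\mathscr{G}^{\mathsf{c}}}$. The inclusion $\closure{\leqslant_{\mathsf{v}}}{\mathscr{G}^{\mathsf{c}}} \subseteq \gcircle$ follows from exhibiting an explicit chord diagram for each $\mathscr{G}^{\mathsf{c}}_{k}$ (as indicated on the right of \autoref{fig_comp_grid}) together with the well-known fact that the class of circle graphs is closed under vertex-minors. The reverse inclusion $\gcircle \subseteq \closure{\leqslant_{\mathsf{v}}}{\mathscr{G}^{\mathsf{c}}}$ is the second component of~\cite{GeelenKMW23Thegrid}, asserting that every circle graph embeds as a vertex-minor of some comparability grid.

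Having established both a universal obstruction and the omnivore status, \autoref{cobs_uobs} applies and yields that $\cobs_{\leqslant_{\mathsf{v}}}(\rw)$ exists and equals $\{ \closure{\leqslant_{\mathsf{v}}}{\mathscr{G}^{\mathsf{c}}} \} = \{\gcircle\}$. To conclude $\pobs_{\leqslant_{\mathsf{v}}}(\rw) = \{\{W_{5}, W_{7}, W_{3}^{3s}\}\}$ it suffices, by definition of $\pobs_{\leqslant_{\mathsf{v}}}$, to invoke Bouchet's theorem~\cite{Bouchet94circl} which identifies exactly these three graphs as the vertex-minor obstruction set of $\gcircle$. The main obstacle is conceptually the grid theorem of~\cite{GeelenKMW23Thegrid}; in our proof it is used as a black box, and the remaining work is the routine packaging via \autoref{cobs_uobs}, with some care in verifying the monotonicity of $\rw$ under local complementation and the construction of chord representations of comparability grids.
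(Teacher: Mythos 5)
Your proposal follows the same route the paper takes in the discussion preceding the proposition: invoke the Geelen--Kwon--McCarty--Wollan grid theorem for the bound $\rw \preceq \p_{\mathscr{G}^{\mathsf{c}}}$, use the observation that $\gcircle = \closure{\leqslant_{\mathsf{v}}}{\mathscr{G}^{\mathsf{c}}}$ to identify the class obstruction, and read off the parametric obstruction from Bouchet's theorem via \autoref{cobs_uobs}. You correctly make explicit the direction $\p_{\mathscr{G}^{\mathsf{c}}} \preceq \rw$ which the paper leaves implicit, but one clause there is misleading: exhibiting \emph{some} cut of large $\mathbb{F}_{2}$-rank does not force large rankwidth, since rankwidth is a minimum over decompositions (e.g.\ complete graphs have cut-rank at most one for every cut, and cliques have rankwidth one). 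What is needed, and what you also (correctly) invoke, is the fact that $\rw(\mathscr{G}^{\mathsf{c}}_{k})$ is unbounded in $k$, which one can obtain because comparability grids are vertex-minor-omnivorous for $\gcircle$ and circle graphs have unbounded rankwidth.
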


Note that, according to \autoref{ismsdwsaal}, the parameter $\p_{\mathscr{G}^{\mathsf{c}}}$ can be seen as the max-min analogue of cliquewidth (in terms of equivalence).
This means that even though cliquewidth is not monotone under vertex-minors, graphs excluding some comparability grid as a vertex-minor have bounded cliquewidth.

\subsection{Linear rankwidth}

A natural step to make is to define the ``path-analogue'' of rankwidth.
The \emph{linear rankwidth} of a graph $G$, denoted 
$\mathsf{lrw}(G),$
is obtained by restricting the tree $T$ of a rank decomposition $(T, \rho)$ of $G$ to be a ternary caterpillar (defined in \autoref{apeplicx}, see also \autoref{catrexlem}).
Kanté and Kwon \cite{KanteK18Linear} conjectured that there exists a function $f \colon \mathbb{N} \rightarrow \mathbb{N}$ so that for every $k\in\mathbb{N}$, every graph of linear rankwidth at least $f(k)$ contains a tree $T$ on $k$ vertices as a vertex-minor.
Resolving this conjecture would be an important step towards identifying a vertex-minor-universal obstruction for linear rankwidth.
{It is worth to mention that this conjecture is further supported by the recent result of \cite{BojanczykO2025Graphs}.}

\subsection{Rankdepth}

Our last step is to consider the analogue of treedepth in the world of vertex-minors. 
This was done by DeVos, Kwon, and Oum \cite{DevosKO20Branchdepth} (in a more abstract setting).
The corresponding parameter is the \emph{rankdepth} of a graph.

The \emph{radius} of a tree $T$ is $r$ if it contains a vertex $v$ such that every vertex of $T$ is at distance at most $r$ from $v$.
Let $(T, \sigma)$ be a pair where $T$ is a tree and $\sigma$ is a bijection from $V(G)$ to the set of leaves of $T$.
Let us call such a pair $(T, \sigma)$ a \emph{decomposition} of $G$.
Let $G$ be a graph and let $(T, \sigma)$ be a decomposition of $G$.
Suppose that $t$ is an internal vertex of $V(T)$.
Let $e_{1}, \ldots, e_{r}$ be the edges of $T$ that are incident to $t$.
Let $A^{t}_{i}$ be the set of leaves in the component of $T - e_{i}$ not containing $t$.
For a subset $I \subseteq [r],$ let $M^{t}_{G}(I)$ be the $0$-$1$ matrix over the binary field whose rows correspond to $\bigcup_{i \in I} A_{i}^{t}$ and columns correspond to $\bigcup_{j \in [r] \setminus I} A^{t}_{j}$ such that an entry is one if and only if the vertices corresponding to the row and the column are adjacent.
Then the \emph{width} of $t$ is the maximum rank of $M^{t}_{G}(I)$ over all $I \subseteq [r].$
 We say that  $(T, \sigma)$  is a \emph{$(k,r)$-decomposition} if all its internal vertices have width at most $k$ and the radius of $T$ is at most $r$.
The \emph{rankdepth} of $G$, denoted by $\rkd(G)$, is now defined as follows.
\begin{align}
\rkd(G) \ \coloneqq \  \min\{ k \in \Nbbb \mid \text{there is a }(k, k)\text{-decomposition of }G \}.
\end{align}

Note that in the special case that $|V(G)| \leq 2,$ $\rkd(G) = 0$.
As observed in \cite{DevosKO20Branchdepth}, rankdepth is vertex-minor-monotone.
Similarly to the case of treedepth, presented in \autoref{treedepth_par}, we consider the vertex-minor-parametric graph $\mathscr{P} = \langle \mathscr{P}_k \rangle_{k\in\mathbb{N}_{\geq 1}}$.
We denote by $\vpaths$ the set of all vertex-minors of paths.
It can be seen that $\rkd$ is unbounded in $\vpaths$ and that $\mathscr{P}$ is a vertex-minor-omnivore of $\vpaths$.
Moreover, Kwon,  McCarty,  Oum, and Wollan \cite{KwonMOW21Obstructions}  proved that there exists a function $f \colon \mathbb{N} \to \mathbb{N}$ so that for every $k \in \mathbb{N}$, every graph of rankdepth at least $f(k)$ contains $\mathscr{P}_k$ as a vertex-minor. 
 Also, according to \cite{KwonO14Graphs} and \cite{AdlerFP14Obstructions}, $\obs_{\leqslant_{\mathsf{v}}}(\vpaths) = \{C_{5}, K_{3}^{3p}, C_{4}^{2p} \}$ where $K_{3}^{3p}$ is obtained if we make adjacent the three vertices of $K_{3}$ with three new vertices and $C_{4}^{2p}$ is obtained if we make adjacent two non-adjacent vertices of $C_{4}$ with two new vertices (see \autoref{fig_vm_path_obstructions}).
 
 \begin{figure}[htbp]
\centering
\includegraphics[width=0.45\linewidth]{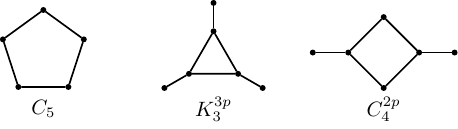}
\caption{\label{fig_vm_path_obstructions} The vertex-minor obstruction set of paths.}
\end{figure}

 We summarize our discussion with the following proposition.

\begin{proposition}\label{llalthereneite}
The set $\{ \mathscr{P} \}$ is a vertex-minor-universal obstruction for $\rkd$.
Moreover, $\cobs_{\leqslant_{\mathsf{v}}}(\rkd) = \big\{ \vpaths \big\}$ and $\pobs_{\leqslant_{\mathsf{v}}}(\rkd) = \big\{ \{ C_{5}, K_{3}^{3p}, C_{4}^{2p} \}\big\}$.
\end{proposition}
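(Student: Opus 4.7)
The plan is to derive all three conclusions by assembling the ingredients already cited immediately before the statement, together with \autoref{cobs_uobs}. The real work reduces to establishing the equivalence $\p_{\{\mathscr{P}\}} \sim \rkd$; once this is in place, the descriptions of $\cobs_{\leqslant_{\mathsf{v}}}(\rkd)$ and $\pobs_{\leqslant_{\mathsf{v}}}(\rkd)$ follow mechanically.

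For the direction $\p_{\{\mathscr{P}\}} \preceq \rkd$, I would combine the vertex-minor monotonicity of $\rkd$ (proved by DeVos, Kwon, and Oum~\cite{DevosKO20Branchdepth}) with the observation, noted just before the statement, that $\rkd$ is unbounded on $\vpaths$, i.e.\ $\rkd(\mathscr{P}_{k}) \to \infty$ as $k \to \infty$. Indeed, if $\p_{\{\mathscr{P}\}}(G) > k$ then $\mathscr{P}_{k} \leqslant_{\mathsf{v}} G$, and monotonicity yields $\rkd(G) \geq \rkd(\mathscr{P}_{k})$; defining $g(c)$ as the least $k$ with $\rkd(\mathscr{P}_{k}) > c$ gives $\p_{\{\mathscr{P}\}}(G) \leqslant g(\rkd(G))$. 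For the converse $\rkd \preceq \p_{\{\mathscr{P}\}}$, I would invoke directly the theorem of Kwon, McCarthy, Oum, and Wollan~\cite{KwonMOW21Obstructions}: there is $f \colon \Nbbb \to \Nbbb$ such that $\rkd(G) \geq f(k)$ forces $\mathscr{P}_{k} \leqslant_{\mathsf{v}} G$. Taking the contrapositive, $\p_{\{\mathscr{P}\}}(G) \leq k$ implies $\rkd(G) < f(k+1)$, hence $\rkd(G) \leqslant f(\p_{\{\mathscr{P}\}}(G)+1)$. Together these two inequalities establish that $\{\mathscr{P}\}$ is a vertex-minor-universal obstruction for $\rkd$.

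With the universal obstruction in hand, I would apply \autoref{cobs_uobs}: since $\rkd$ admits the vertex-minor-universal obstruction $\{\mathscr{P}\}$, the class obstruction $\cobs_{\leqslant_{\mathsf{v}}}(\rkd)$ exists, is finite, and there is a bijection $\sigma \colon \{\mathscr{P}\} \to \cobs_{\leqslant_{\mathsf{v}}}(\rkd)$ for which $\mathscr{P}$ is a vertex-minor-omnivore of $\sigma(\mathscr{P})$. As already noted, $\vpaths = \closure{\leqslant_{\mathsf{v}}}{\mathscr{P}}$, so $\sigma(\mathscr{P}) = \vpaths$ and $\cobs_{\leqslant_{\mathsf{v}}}(\rkd) = \{\vpaths\}$. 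Finally, by the definition of $\pobs_{\leqslant_{\mathsf{v}}}$, one has $\pobs_{\leqslant_{\mathsf{v}}}(\rkd) = \{\obs_{\leqslant_{\mathsf{v}}}(\vpaths)\}$, and the identification $\obs_{\leqslant_{\mathsf{v}}}(\vpaths) = \{C_{5}, K_{3}^{3p}, C_{4}^{2p}\}$ due to Kwon--Oum~\cite{KwonO14Graphs} and Adler--Farley--Paul~\cite{AdlerFP14Obstructions} finishes the proof.

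There is no genuine obstacle in the proof itself: all the structural content has been absorbed into the cited results, the deepest of which is the grid-like excluded-vertex-minor theorem of \cite{KwonMOW21Obstructions}. The only care required is bookkeeping around the definition of $\p_{\mathscr{H}}$ (an infimum over $k$ with $\mathscr{H}_{k} \not\leqslant G$) so that the contrapositive arguments produce honest gap functions, and checking that the omnivore produced by \autoref{cobs_uobs} matches $\vpaths$ rather than some strictly smaller $\leqslant_{\mathsf{v}}$-closed subclass — which it must, since $\vpaths$ is by definition the $\leqslant_{\mathsf{v}}$-closure of $\mathscr{P}$.
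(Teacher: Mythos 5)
Your argument is correct and follows exactly the route the paper intends: the paper states this proposition with ``We summarize our discussion with the following proposition'' and no explicit proof, and the preceding paragraph cites precisely the ingredients you assemble — vertex-minor monotonicity of $\rkd$ and its unboundedness on $\vpaths$ for $\p_{\{\mathscr{P}\}} \preceq \rkd$, the Kwon--McCarty--Oum--Wollan excluded-path theorem for the converse, then \autoref{cobs_uobs} together with the omnivore fact $\vpaths = \closure{\leqslant_{\mathsf{v}}}{\mathscr{P}}$ and the known obstruction set $\obs_{\leqslant_{\mathsf{v}}}(\vpaths) = \{C_{5}, K_{3}^{3p}, C_{4}^{2p}\}$. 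One small attribution slip: the authors of \cite{AdlerFP14Obstructions} are Adler, Farley, and Proskurowski, not Adler--Farley--Paul.
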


Shrubdepth is also defined as analogue of treedepth in the world of dense graphs introduced in \cite{GanianHNOM2019ShrubDepth}. 
  Inspired by the clique-width definition, it is based on so-called tree-models : a labeled rooted tree of bounded depth whose leaves are mapped to the vertices of the graph such that the adjacency between two vertices $u$ and $v$ only depends on their labels and the label of their least common ancestor. It was proved parametrically equivalent to rank-depth by \cite{DevosKO20Branchdepth}. Very recently, \cite{Mahlmann25Forbidden} proved that every class of bounded shrubdepth is also characterized by a finite list of forbidden induced subgraphs.


\subsection{Hierarchy of vertex-minor-monotone parameters}

We conclude this section with the hierarchy of all vertex-minor-monotone parameters that we have considered, depicted in \autoref{fig_hierarchy_vminors}.

\begin{figure}[htbp]
\centering
\includegraphics[width=0.83\linewidth]{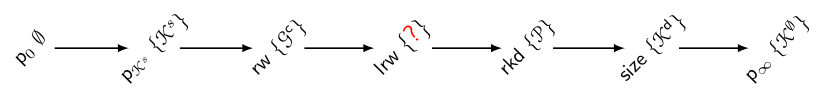}
\caption{\label{fig_hierarchy_vminors}The hierarchy of vertex-minor monotone parameters.}
\end{figure}

We wish to stress that our presentation of vertex-minor-monotone parameters is not exhaustive.
There are more vertex-minor-monotone parameters in the literature, for which universal obstructions have been studied.
For this we refer the interested reader to \cite{KwonO14Unavoidable, KwonO20Scattered, KwonO21Graphs, NguyenO20Theaverage}.
Also, see the survey of Oum \cite{Oum2017RankWidth} for a deeper overview of results regarding rankwidth and related parameters.

\section{Obstructions of other parameters}\label{obsjjdjdother_a}

So far, we reviewed parameters that are monotone for the minor relation (see \autoref{rfisondlclsidop}), for the immersion relation (see \autoref{ommsers}) and for the vertex-minor relation (see \autoref{vesjrs_misos}).
The former two are known to be well-quasi-orderings while the latter one is conjectured to be.
As we discussed in \autoref{sliubeabaueike} under the assumption of a well-quasi-ordering (see \autoref{prop_wqo_cobs_pobs}),  class obstructions and parametric obstructions are guaranteed to exist.
It is however natural to examine parameters that are $\leqslant$-monotone under some quasi-ordering relation $\leqslant$ on $\gall$ that is known not to be a well-quasi-ordering, such as for example the induced subgraph relation.
We first review some results on induced subgraph-monotone parameters and then present two more graph parameters that are not monotone under any of the quasi-ordering relations that we have considered so far.

\subsection{Induced subgraph-monotone parameters}\label{induced_parameters_subsec}

Recall that an induced subgraph of a graph is obtained by deleting vertices.
Hereafter, we denote the induced subgraph relation by $\leqslant_{\mathsf{is}}$.
A graph class is \emph{hereditary} if it is closed under taking induced subgraphs, i.e. if it is $\leqslant_{\mathsf{is}}$-closed.

\paragraph{The $\size$ parameter.}

Clearly $\size$ is induced subgraph-monotone.
By Ramsey's theorem~\cite{Ramsey30Ona}, for every $p \in \mathbb{N}$, there exists a positive integer $R(p)$ such that every graph with at least $R(p)$ vertices either contains a clique of size $p$ or an independent set of size $p$.
This implies that if a hereditary class $\Gcal$ does not contain the complete graph $K_p$ nor its complement $\overline{K}_p$, then $\size$ is bounded in $\Gcal$.
In other words the class of complete graphs and the class of edgeless graphs form two minimal hereditary graph classes for which the parameter $\size$ is unbounded.
One can see that these are the only two such minimal hereditary classes.

Recall that $\mathscr{K}^{\mathsf{d}} = \langle k\cdot K_{1} \rangle_{k\in\mathbb{N}}$ denotes the sequence of edgeless graphs.
Observe that $\mathscr{K}^{\mathsf{d}}$ is an induced subgraph-parametric graph that is trivially an induced subgraph-omnivore for the class of edgeless graphs.
Likewise, we define $\mathscr{K}^{\mathsf{c}} = \langle K_{k} \rangle_{k\in\mathbb{N}}$ to be the sequence of complete graphs.
We may quickly observe that $\mathscr{K}^{\mathsf{c}}$ is an induced subgraph-parametric graph and that it is an induced subgraph-omnivore of the class of complete graphs.
In our terminology, the above discussion yields the following proposition.

\begin{proposition}
The set $\{\mathscr{K}^{\mathsf{d}},\mathscr{K}\}$ is the induced subgraph-universal obstruction for $\size$.
Moreover, $\cobs_{\leqslant_{\mathsf{is}}}(\size) = \big\{ \closure{\leqslant_{\mathsf{is}}}{\mathscr{K}^{\mathsf{d}}}, \closure{\leqslant_{\mathsf{is}}}{\mathscr{K}^{\mathsf{c}}} \big\}$ and $\pobs_{\leqslant_{\mathsf{is}}}(\size) = \big\{ \{2K_1\},\{K_2\} \big\}$.
\end{proposition}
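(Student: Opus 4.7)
The plan is to derive everything from two ingredients: Ramsey's theorem, which asserts the existence of a function $R\colon\Nbbb\to\Nbbb$ such that every graph on at least $R(k)$ vertices contains either $K_{k}$ or $k\cdot K_{1}$ as an induced subgraph, and \autoref{cobs_uobs}, which transforms a universal obstruction into a class obstruction without requiring a well-quasi-ordering assumption. The strategy is first to show that $\{\mathscr{K}^{\mathsf{d}}, \mathscr{K}^{\mathsf{c}}\}$ is an induced subgraph-universal obstruction for $\size$, and then to read off $\cobs_{\leqslant_{\mathsf{is}}}(\size)$ and $\pobs_{\leqslant_{\mathsf{is}}}(\size)$ by inspecting the two parametric graphs.

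First I would verify that $\{\mathscr{K}^{\mathsf{d}},\mathscr{K}^{\mathsf{c}}\}$ is a legitimate $\leqslant_{\mathsf{is}}$-parametric family: both sequences are $\leqslant_{\mathsf{is}}$-increasing, and they are $\lesssim$-incomparable since $K_{2}\not\leqslant_{\mathsf{is}} k\cdot K_{1}$ and $2\cdot K_{1}\not\leqslant_{\mathsf{is}} K_{k}$ for every $k$. Then I would establish $\size\sim\p_{\{\mathscr{K}^{\mathsf{d}},\mathscr{K}^{\mathsf{c}}\}}$. The inequality $\p_{\{\mathscr{K}^{\mathsf{d}},\mathscr{K}^{\mathsf{c}}\}}\preceq\size$ is immediate since $\p_{\mathscr{K}^{\mathsf{d}}}(G)=\alpha(G)+1$ and $\p_{\mathscr{K}^{\mathsf{c}}}(G)=\omega(G)+1$, both trivially bounded by $|G|+1$. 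The converse $\size\preceq\p_{\{\mathscr{K}^{\mathsf{d}},\mathscr{K}^{\mathsf{c}}\}}$ is exactly the content of Ramsey's theorem: if $\p_{\{\mathscr{K}^{\mathsf{d}},\mathscr{K}^{\mathsf{c}}\}}(G)\leq k$, then neither $(k+1)\cdot K_{1}$ nor $K_{k+1}$ is an induced subgraph of $G$, whence $|G|<R(k+1)$.

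With the universal obstruction in hand, \autoref{cobs_uobs} immediately yields that $\cobs_{\leqslant_{\mathsf{is}}}(\size)$ exists, is finite, and is in bijection with the elements of the universal obstruction via the omnivore relation. The two closures are respectively the class of edgeless graphs and the class of complete graphs, since every edgeless (resp.\ complete) graph on $k$ vertices is an induced subgraph of $k\cdot K_{1}$ (resp.\ $K_{k}$). This confirms $\cobs_{\leqslant_{\mathsf{is}}}(\size)=\{\closure{\leqslant_{\mathsf{is}}}{\mathscr{K}^{\mathsf{d}}},\closure{\leqslant_{\mathsf{is}}}{\mathscr{K}^{\mathsf{c}}}\}$. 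The parametric obstruction $\pobs_{\leqslant_{\mathsf{is}}}(\size)$ then follows by identifying the $\leqslant_{\mathsf{is}}$-minimal graphs outside each class: $K_{2}$ is the only minimal non-edgeless graph and $2\cdot K_{1}$ is the only minimal non-complete graph.

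The main subtlety is that $\leqslant_{\mathsf{is}}$ is not a well-quasi-ordering, so \autoref{prop_wqo_cobs_pobs} is unavailable and the existence of $\cobs_{\leqslant_{\mathsf{is}}}(\size)$ cannot be taken for granted a priori. This is why the universal obstruction must be produced first and the existence of the class obstruction then extracted from \autoref{cobs_uobs}; Ramsey's theorem is precisely the structural input that makes this possible in the induced subgraph setting, and no deeper argument is needed.
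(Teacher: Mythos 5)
Your proof is correct and follows essentially the same route as the paper: Ramsey's theorem supplies the equivalence $\size\sim\p_{\{\mathscr{K}^{\mathsf{d}},\mathscr{K}^{\mathsf{c}}\}}$, and \autoref{cobs_uobs} then delivers the class and parametric obstructions without needing $\leqslant_{\mathsf{is}}$ to be a well-quasi-ordering. The only difference is that you spell out the computation $\p_{\mathscr{K}^{\mathsf{d}}}(G)=\alpha(G)+1$, $\p_{\mathscr{K}^{\mathsf{c}}}(G)=\omega(G)+1$ explicitly, which the paper leaves implicit.
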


A significant body of work exists on identifying hereditary classes of graphs that are inclusion-minimal with respect to induced subgraph-monotone graph parameters being unbounded.
In our terminology, this translates to (partially) identifying the induced subgraph-class obstruction for various induced subgraph-monotone parameters.

For a more in-depth review of such results, we refer the interested reader to the work of Lozin \cite{Lozin23Hereditary}, which identifies the induced subgraph-class obstruction for a range of parameters, including $\size$, and others, such as \textsl{complex number}, \textsl{matching number}, \textsl{neighbourhood diversity}, \textsl{VC dimension}, and more.

\paragraph{Minimal hereditary classes of unbounded cliquewidth.}

Another noteworthy line of research concerns minimal hereditary classes of unbounded cliquewidth.
Several studies in this direction identify hereditary classes that form part of the induced subgraph-class obstruction set for cliquewidth.
For readers interested in this topic, we refer to key works such as \cite{AtminasBLS21Minimal, AlecuKLZ20Between, CollinsFKLZ18Infinitely, Lozin11Minimal, BrignallC23Aframework, BrignallC22Uncountably}.

 This research has led to notable discoveries, including the identification of classes in the induced subgraph-class obstruction set for cliquewidth which, however, possess a finite induced subgraph-obstruction set.
This result, which is the subject of \cite{AtminasBLS21Minimal}, disproves a conjecture by Daligault, Rao, and Thomassé.
Moreover, it has been shown that there are infinitely many minimal hereditary classes of unbounded cliquewidth \cite{CollinsFKLZ18Infinitely}.
In fact, it appears that there are \textit{uncountably} many such classes, as shown in \cite{BrignallC22Uncountably}.
These findings imply that it is impossible to characterize cliquewidth in terms of a finite class-obstruction set with respect to the induced subgraph relation.

Another interesting direction regards well-quasi-ordering hereditary classes of labeled graphs.
In this direction, there are two prominent conjectures by Pouzet \cite{Pouzet1972BelOrdre} on necessary and sufficient conditions relating to the number of labels needed  and whether they are chosen from a well-quasi-ordered set.
In this context, Lopez \cite{Lopez2024Labelled} shows that for any hereditary class of graphs of bounded linear clique-width, one can decide whether the class is well-quasi-ordered by the induced subgraph relation when vertices are {labeled} from a finite set, and furthermore establishes that for such classes, being well-quasi-ordered with labels from a well-quasi-ordered set is equivalent to being well-quasi-ordered with a bounded finite number of labels.
This result addresses a restricted form of the two conjectures of Pouzet.
Also, very recently, Dumas and Lopez  extend in \cite{DumasLopez2026WellQuasiOrdered}  the result of \cite{Lopez2024Labelled} to clique-width.

\newcommand{\degen}{\mathsf{degeneracy}}

\paragraph{Parameters not characterized by minimal hereditary classes.}

Unfortunately not every induced subgraph-monotone parameter $\p$ can be characterized by means of minimal hereditary classes where $\p$ is unbounded.
The treewidth parameter is one of them~\cite{LozinR22Treewidth}.
But since treewidth is minor-monotone, as we have seen, we have a finite characterization in terms of minor-universal obstructions and minor-parametric obstructions (see \autoref{prop_tw}).

Let us rather consider the \emph{degeneracy} of a graph, a parameter sandwiched between the treewidth and the chromatic number with respect to $\preceq$.
The \emph{degeneracy} of a graph $G$ is defined as follows.
\begin{eqnarray*}
\degen(G) &  \coloneqq &  \min\{ k \in \Nbbb \mid \mbox{every induced subgraph $H$ of $G$ with at least one}\\
&   & \mbox{~~~~~~~~~~~~~~~~~\!vertex of degree at most $k$}\}.
\end{eqnarray*}

Clearly, by its definition the degeneracy is an induced subgraph-monotone parameter.
In \cite{AtminasL22Adichotomy}, it is proved that for any hereditary class $\Gcal$ for which $\obs_{\leqslant_{\mathsf{is}}}$ is finite, $\degen$ is bounded in $\Gcal$ if and only if $\obs_{\leqslant_{\mathsf{is}}}(\Gcal)$ contains a complete graph, a complete bipartite graph, and a forest.

Let us first observe that the class of complete graphs and the class of complete bipartite graphs form minimal hereditary classes of unbounded degeneracy.
However this is not the case for forests and it is not possible to characterize the degeneracy parameter only by means of unbounded minimal hereditary graph classes.
The reason is that the induced subgraph relation is not a well-quasi-ordering on $\gall$, which implies the existence of \emph{infinite} strictly decreasing chains of hereditary graph classes with respect to inclusion.
Moreover, there exists such a chain composed of classes of unbounded degeneracy.
For $k\leq 3$, let $C_k$ denote the cycle of length $k$ and define $\Ccal_k = \excl_{\leqslant_{\mathsf{is}}}(\{C_i \mid i \leq k \})$.
It is easy to see that $\langle \Ccal_k \rangle_{k \geq 3}$ is an infinite strictly decreasing chain of hereditary graph classes with respect to inclusion.
It is known that for every $k \geq 3$, the chromatic number is unbounded in the class $\Ccal_k$ \cite{Erdos59Graph}.
In turn, this implies that the degeneracy parameter is unbounded in the class $\Ccal_k$.
But observe that $\bigcap_{k \geq 3} \Ccal_k$ is precisely the class $\mathcal{F}$ of forests, in which the degeneracy is bounded.
The class of forests is called a \emph{boundary class} for the degeneracy parameter.
In fact, any hereditary class defined as the intersection of an infinite strictly decreasing chain of hereditary graph classes with respect to inclusion, where all classes in the chain are unbounded for some parameter, gives rise to the concept of a \emph{limit class}.
Then, a boundary class is defined as any inclusion-minimal limit class.
This concept was introduced in \cite{Alekseev03Oneasy} and revealed to be very useful to provide obstruction characterizations of hereditary graph classes whose induced-subgraph obstruction set is finite (see \cite{Lozin23Hereditary}).

To conclude the discussion, one may quickly observe that the contrapositive of \autoref{cobs_uobs}, implies that, if $\p$ is an induced subgraph-monotone parameter for which there exists an infinite strictly decreasing chain of hereditary graph classes with respect to inclusion in which $\p$ is unbounded, $\p$ does not admit an induced subgraph-universal obstruction (in fact this is true for any quasi-ordering relation $\leqslant$ which is not a well-quasi-ordering on $\gall$).
However, we believe that this concept deserves a more systematic attention to provide characterizations of graph parameters which are monotone under some quasi-ordering relation which is not a well-quasi-ordering.
In our knowledge, nothing is known in this direction beyond the scope of the (induced) subgraph relation.

\subsection{Tree-partition width}\label{tpw_sec_survpar}

We say that a graph $H$ is a \emph{topological minor} of a graph $G$ if $G$ contains as a subgraph some subdivision of $H$ and we denote this by $\leqslant _{\mathsf{tp}}$.
Notice that if $H$ is a topological minor of $G$ then $H$ is also a minor of $G$.
However, the inverse is false.
We also stress that, the quasi-ordering $\leqslant_{\mathsf{tp}}$ is not a well-quasi-ordering in $\gall$ (see \cite{Ding96Excluding}).

A \emph{tree-partition decomposition} of a graph $G$ is a tree-cut decomposition $\mathcal{T} = (T, {\mathcal{X}})$ of $G$  where $\Xcal = \{X_{t} \subseteq V(G) \mid t \in V(T)\}$ and for every edge $e = xy$ of $T$, if $x \in X_{t}$ and $y \in X_{t'}$, then either $tt' \in E(T)$ or $t=t'$.
The \emph{width} of a tree-partition decomposition $(T, \Xcal)$ is $\max \{ |X_{t}| \mid t \in V(T) \}$.
The  of a graph $G$, denoted by $\tpw(G)$ is now defined as follows.
\begin{align}
\tpw(G) \ \coloneqq \ \min\{ k \in \Nbbb \mid \text{there is a tree-partition decomposition of }G\text{ of width }k\}.
\end{align}

Tree-partition width was introduced independently by Seese \cite{Seese85Tree} and Halin \cite{Halin91Tree} and was later studied in \cite{DingO95somer, DingO96ontre, Wood09ontre, DujmovicMW05Layout} (see also \cite{GiannopoulouKRT17packi} for an extension of $\tpw$ on multigraphs).
It is easy to see that $\tpw$ is subgraph-monotone.
On the negative side, we do not know of any other quasi-ordering relation for which $\tpw$ has some closedness property.
For instance, Ding and Oporowski \cite{DingO96ontre}, gave examples of graphs of maximum degree three where subdividing edges may result in a graph with larger tree-partition width and this implies that $\tpw$ is not monotone under minors, topological minors, or immersions.
According to \cite{DingO96ontre}, $\tpw \preceq \Delta + \tw,$ witnessed by a polynomial gap function, and moreover, $\tw \preceq \tpw,$ witnessed by a polynomial gap function as well (see also \cite{Wood09ontre}).

Interestingly, it appears that $\tpw$ has a universal obstruction with respect to the topological minor relation.
Let $\mathscr{F} = \langle\mathscr{F}_{k} \rangle_{k\in\mathbb{N}_{\geq 4}}$ where $\mathscr{F}_{k}$ is the \emph{$k$-fan}, that is the graph obtained by taking a path on $k$ vertices and making all of its vertices adjacent with a new vertex.
Let also $\mathscr{J}^{\mathsf{s}} = \langle\mathscr{J}^{\mathsf{s}}_{k} \rangle_{k\in\mathbb{N}_{\geq 1}}$ where $\mathscr{J}^{\mathsf{s}}_{k}$ is obtained from $\mathscr{J}_{k}$ (depicted in \autoref{Theta_immersions_seq}) after subdividing all its edges once.
Finally, let $\mathscr{P}^{\mathsf{ms}} = \langle \mathscr{P}^{\mathsf{ms}}_{k} \rangle_{k\in\mathbb{N}_{\geq 1}}$ be the graph obtained from a path on $k+1$ vertices if we increase the multiplicity of each edge to $k$ and then subdivide once each resulting edge (see \autoref{fig_dingraph} for examples of the defined graphs).
Also, recall that $\mathscr{W} = \langle \mathscr{W}_{k} \rangle_{k\in\mathbb{N}}$ where $\mathscr{W}_{k}$ is the wall of order $k$ depicted in \autoref{wsalls}.

\begin{figure}[ht]
\begin{center}
\bigskip
\begin{tikzpicture}[thick,scale=0.47]
\tikzstyle{sommet}=[circle, draw, fill=black, inner sep=0pt, minimum width=3.6pt]
\tikzstyle{newsommet}=[circle, draw, fill=green!50!black, inner sep=0pt, minimum width=4pt]

\begin{scope}[xshift=-5cm,yshift=0cm]

\node[] (c) at (0:0) {} ;
\draw[] (c) node[sommet]{};
\foreach \i in {1,...,7}{
	\node[] (\i) at (4-\i,-2) {} ;
	\draw[] (\i) node[sommet]{};
	\draw (c.center) -- (\i.center) ;
}

\draw (1.center) -- (2.center) ;
\draw (2.center) -- (3.center) ;
\draw (3.center) -- (4.center) ;
\draw (4.center) -- (5.center) ;
\draw (5.center) -- (6.center) ;
\draw (6.center) -- (7.center) ;

\node (F7) at (0,-3){$\mathscr{F}_{7}$};
\end{scope}

\begin{scope}[xshift=2cm,yshift=-1cm]
\node[] (c) at (0:0) {} ;
\draw[] (c) node[sommet]{};

\node[] (N) at (0,2.5) {} ;
\draw[] (N) node[sommet]{};

\node[] (W) at (-2.5,0) {} ;
\draw[] (W) node[sommet]{};

\node[] (S) at (0,-2.5) {} ;
\draw[] (S) node[sommet]{};

\node[] (E) at (2.5,0) {} ;
\draw[] (E) node[sommet]{};

\foreach \i in {0,...,3}{
 	\node[] (\i) at (-1.25,-0.75+\i*0.5) {} ;
	\draw[] (\i) node[sommet]{};
    \draw (c.center) -- (\i.center) ;
    \draw (W.center) -- (\i.center) ;
    }

\foreach \i in {0,...,3}{
 	\node[] (\i) at (-0.75+\i*0.5,1.25) {} ;
	\draw[] (\i) node[sommet]{};
    \draw (c.center) -- (\i.center) ;
    \draw (N.center) -- (\i.center) ;
    }

\foreach \i in {0,...,3}{
 	\node[] (\i) at (1.25,-0.75+\i*0.5) {} ;
	\draw[] (\i) node[sommet]{};
    \draw (c.center) -- (\i.center) ;
    \draw (E.center) -- (\i.center) ;
    }

\foreach \i in {0,...,3}{
 	\node[] (\i) at (-0.75+\i*0.5,-1.25) {} ;
	\draw[] (\i) node[sommet]{};
    \draw (c.center) -- (\i.center) ;
    \draw (S.center) -- (\i.center) ;
    }

    \node (Js4) at (0,-3.5){$\mathscr{J}_{4}^{\sf s}$};

\end{scope}

\begin{scope}[xshift=6cm,yshift=-1cm]

\foreach \i in {0,...,4}{
 	\node[] (n\i) at (2.5*\i,0) {} ;
	\draw[] (n\i) node[sommet]{};
    }

\foreach \j in {1,3,5,7}{
    \foreach \i in {0,...,3}{
        \node[] (\j\i) at (\j*1.25,-0.75+\i*0.5) {} ;
	    \draw[] (\j\i) node[sommet]{};
        }
    }

\foreach \j in {0,1,2,3}{
    \draw (1\j.center) -- (n0.center) ;
    \draw (1\j.center) -- (n1.center) ;
    }

\foreach \j in {0,1,2,3}{
    \draw (3\j.center) -- (n1.center) ;
    \draw (3\j.center) -- (n2.center) ;
    }

\foreach \j in {0,1,2,3}{
    \draw (5\j.center) -- (n2.center) ;
    \draw (5\j.center) -- (n3.center) ;
    }

\foreach \j in {0,1,2,3}{
    \draw (7\j.center) -- (n3.center) ;
    \draw (7\j.center) -- (n4.center) ;
    }
    \node (Pms4) at (5,-1.5){$\mathscr{P}_{4}^{\sf ms}$};

\end{scope}
\end{tikzpicture}
\end{center}
\vspace{-0.5cm}
\caption{\label{fig_dingraph}Examples of the topological minor-universal obstructions for tree-partition width.
}
\end{figure}


Ding and  Oporowski \cite{DingO96ontre} proved that there exists a function $f \colon \nton$ such that every graph $G$ where $\tpw(G) \geq f(k)$ contains as a topological minor one of the graphs $\mathscr{F}_{k}$, $\mathscr{J}^{\mathsf{s}}_{k}$, $\mathscr{P}^{\mathsf{ms}}_{k}$, or $\mathscr{W}_{k}$.
This means that the result of \cite{DingO96ontre} can be restated as follows.

\begin{theorem}\label{ismsdwssaal} 
The set   $\{ \mathscr{F}, \mathscr{J}^{\mathsf{s}}, \mathscr{P}^{\mathsf{ms}}, \mathscr{W} \}$ is a topological-minor-universal obstruction for $\tpw$.
\end{theorem}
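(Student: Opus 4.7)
The plan is to establish $\p_{\mathfrak{H}} \sim \tpw$ for $\mathfrak{H} = \{\mathscr{F}, \mathscr{J}^{\mathsf{s}}, \mathscr{P}^{\mathsf{ms}}, \mathscr{W}\}$, which by the definition of a universal obstruction given in \autoref{diidnidialdedfs} is the content of the statement. One of the two inequalities, namely $\tpw \preceq \p_{\mathfrak{H}}$, is an immediate packaging of the Ding--Oporowski theorem \cite{DingO96ontre} quoted just before the statement: setting $M = \p_{\mathfrak{H}}(G)$ forces $\mathscr{H}_M \not\leqslant_{\mathsf{tp}} G$ for every $\mathscr{H} \in \mathfrak{H}$, and the Ding--Oporowski theorem, read contrapositively, then yields $\tpw(G) < f(M)$, i.e.\ $\tpw(G) \leq f(\p_{\mathfrak{H}}(G))$.

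The reverse inequality $\p_{\mathfrak{H}} \preceq \tpw$ is the direction that requires actual work. The goal is to show that for each $\mathscr{H} \in \mathfrak{H}$ there is an unbounded function $g_{\mathscr{H}}$ such that if $G$ contains $\mathscr{H}_k$ as a topological minor then $\tpw(G) \geq g_{\mathscr{H}}(k)$. Since $\tpw$ is subgraph-monotone, it suffices to prove that every subdivision $S$ of each $\mathscr{H}_k$ already satisfies $\tpw(S) \geq g_{\mathscr{H}}(k)$. For $\mathscr{W}_k$ this is straightforward: treewidth is topological-minor-monotone and $\tw(\mathscr{W}_k) = \Omega(k)$, hence $\tw(S) = \Omega(k)$ for every subdivision $S$ of $\mathscr{W}_k$, and combined with the relation $\tw \preceq \tpw$ recalled in \autoref{tpw_sec_survpar} we obtain $\tpw(S) = \Omega(k^{1/c})$ for a suitable constant $c$.

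For the remaining three families, treewidth is bounded by a small constant, so an argument directly on tree-partition decompositions is required. The common combinatorial mechanism is that each of $\mathscr{F}_k, \mathscr{J}^{\mathsf{s}}_k, \mathscr{P}^{\mathsf{ms}}_k$ exhibits either a high-degree concentration (the apex of $\mathscr{F}_k$ attached to the spine path) or many internally disjoint short paths between two specified vertices (the $k$ length-two paths between the centre and each leaf in $\mathscr{J}^{\mathsf{s}}_k$, the $k$ parallel length-two paths between consecutive spine vertices of $\mathscr{P}^{\mathsf{ms}}_k$). Given a tree-partition $(T,\Xcal)$ of a subdivision $S$, the two endpoints of such a bundle sit in bags $X_t, X_{t'}$ of $T$; because $T$ is a tree the $k$ internally disjoint paths in $S$ between these endpoints must all traverse the unique $t$-$t'$ path of $T$, so each bag on this $t$-$t'$ path has to contain at least one internal vertex from each such path (or absorb the endpoints), forcing width $\Omega(k)$. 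The fan $\mathscr{F}$ is handled by an analogous but slightly more delicate walk-tracing argument: after placing the apex in some bag $X_{t_0}$, the $k$ (subdivided) spokes trace walks from $t_0$ into $T$, while the (possibly subdivided) spine path traces a single walk that must visit all spoke-endpoints, and this forces either $|X_{t_0}|$ or some other bag to be $\Omega(k)$.

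The main obstacle is precisely the non-monotonicity of $\tpw$ under topological minors, explicitly flagged in \autoref{tpw_sec_survpar}: one cannot apply any blackbox transfer of $\tpw$-lower bounds from $\mathscr{H}_k$ to its subdivisions and must argue uniformly over all subdivisions in the three treewidth-bounded cases above. All the necessary structural ingredients are already present in the Ding--Oporowski analysis, so the proof reduces to carefully repackaging them in the universal-obstruction language of \autoref{diidnidialdedfs} and checking that the subdivision-robust lower bound on $\tpw$ works out for each of the four families.
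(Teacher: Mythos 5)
Your proposal follows the same route as the paper, which simply cites the Ding--Oporowski unavoidability result and asserts that the theorem is a restatement of it; but you make explicit the direction $\p_{\mathfrak{H}} \preceq \tpw$ that the paper leaves tacit, correctly pinpointing that the failure of topological-minor-monotonicity of $\tpw$ is exactly why this direction is not automatic, and hence why one needs a subdivision-robust lower bound for each of the four families. Two minor imprecisions in your sketch of that direction: for the fan family $\mathscr{F}$, the claim that a bag of size $\Omega(k)$ is forced is an over-claim --- the true order is $\Theta(\sqrt{k})$, since the $k$ spoke-endpoints adjacent to the apex can be distributed into $\Theta(\sqrt{k})$ bags of size $\Theta(\sqrt{k})$ hanging off the apex's bag, with every $\sqrt{k}$-th spine vertex moved back into the apex's bag so that the spine still traces a valid walk in $T$; this weaker bound is still unbounded in $k$, so your conclusion stands. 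For $\mathscr{J}^{\mathsf{s}}$ and $\mathscr{P}^{\mathsf{ms}}$, the cases where the two endpoints of a $k$-bundle of internally disjoint paths land in the same node of $T$ or in adjacent nodes deserve to be argued rather than absorbed into a parenthetical: when the nodes are adjacent, each of the $k$ paths must contain an edge crossing that $T$-edge and hence contributes a distinct internal vertex to the union of the two endpoint bags, forcing one of them to have size at least $k/2$; when the nodes coincide, iterating the argument over the $k$ leaves of $\mathscr{J}^{\mathsf{s}}_k$ (resp.\ the $k$ consecutive spine pairs of $\mathscr{P}^{\mathsf{ms}}_k$) again yields a bag of size $\Omega(k)$. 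With these patches your argument is complete and more informative than the paper's one-line treatment.
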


Note that \autoref{ismsdwssaal} gives a max-min analogue of tree-partition width for the topological minor relation under which tree-partition width is not monotone.
Motivated by \autoref{ismsdwssaal}, we think it is an interesting problem to find some quasi-ordering for which tree-partition is monotone and moreover admits a universal obstruction characterization.

\subsection{Edge-treewidth}\label{multedge}

As we already mentioned, it is a running challenge, given a graph parameter $\p$, to properly identify a quasi-ordering relation on graphs for which $\p$ is monotone.
Edge-treewidth is a typical example of this approach that we believe is worth to present here.
Let $G$ be a graph that may contain parallel edges but no loops.
The \emph{edge-treewidth} of $G$, denoted by $\etw(G)$, is defined as follows.
\begin{eqnarray}
\begin{minipage}{12cm}
$\etw(G)$ is the minimum $k \in \Nbbb$ for which there exists an ordering $v_{1}, \ldots, v_{n}$ of $V(G)$ such that for every $i \in [n]$ there are at most $k$ edges with one endpoint in $\{ v_{1}, \ldots, v_{i - 1}\}$ and the other in the connected component of $G[\{ v_{i}, \ldots, v_{n} \}]$ that contains $v_{i}$.
\end{minipage}
\end{eqnarray}

Edge-treewidth was defined in \cite{MagnePaulSharmaThilikos2023EdgeTreewidth} as a tree-like analogue of cutwidth.
We define the \emph{block-degree} of a graph as the maximum degree of its blocks and we denote it by $\Delta^{\mathsf{b}}$.
According to \cite{MagnePaulSharmaThilikos2023EdgeTreewidth}, $\etw \sim_{\mathsf{P}} \tw + \Delta^{\mathsf{b}}$.
Also there are counterexamples, described in \cite{MagnePaulSharmaThilikos2023EdgeTreewidth}, showing that $\etw$ is not monotone under topological minors, immersions, or minors.
However it is subgraph-monotone.
For the study of $\etw$, a new quasi-ordering relation was defined in \cite{MagnePaulSharmaThilikos2023EdgeTreewidth}, as follows.
 A graph $H$ is a \emph{weak-topological-minor} of a graph $G$, denoted by $H \leqslant_{\mathsf{wtp}} G$, if $H$ is obtained from a subgraph of $G$ by contracting edges whose endpoints are both incident with two vertices and moreover have degree two (see \autoref{weaktopminor}).
 We observe that the cycle on two vertices is a weak-topological-minor of the cycle on three vertices (and henceforth of every chordless cycle).
 
\begin{figure}[htbp]
\centering
\includegraphics[width=0.53\linewidth]{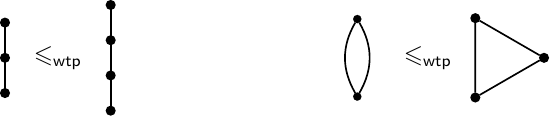}
\caption{\label{weaktopminor}The path $P_3$ is a weak-topological-minor of the path $P_4$.
The double edge graph $ \Theta_{2}$ is a weak-topological-minor of the cycle $C_3$.}
\end{figure}

It was proved in \cite{MagnePaulSharmaThilikos2023EdgeTreewidth} that $\etw$ is monotone under weak-topological-minors.
Moreover, \cite{MagnePaulSharmaThilikos2023EdgeTreewidth} gave a weak-topological-minor-universal obstruction for $\etw$ with respect to weak topological minors as follows.
In \autoref{lesstrcas} we defined $ \Theta = \langle \Theta _{k} \rangle_{k\in\mathbb{N}}$, so that $\Theta _{k}$ is the graph on two vertices and $k$ parallel edges.
We also define $\Theta^{\mathsf{s}} = \langle \Theta _{k}^{\mathsf{s}} \rangle_{k\in\mathbb{N}}$, so that $\Theta _{k}^{\mathsf{s}} = K_{2,k}$.
Next, we define $\mathscr{F}^{\mathsf{s}} = \langle\mathscr{F}_{k}^{\mathsf{s}}\rangle_{k \in \mathbb{N}_{\geq 4}}$ where $\mathscr{F}_{k}^{\mathsf{s}}$ is the graph obtained from a $k$-fan $\mathscr{F}_{k}$ if we subdivide all edges with both endpoints of degree three once.
Also consider $\mathscr{F}^{\mathsf{ss}} = \langle\mathscr{F}_{k}^{\mathsf{ss}} \rangle_{k\in\mathbb{N}_{\geq 4}}$ where $\mathscr{F}_{k}^{\mathsf{ss}}$ is the graph obtained from a $\mathscr{F}_{k}^{\mathsf{s}}$ if we subdivide all edges with both endpoints of degree at least three once.
Finally, let $\mathscr{W}^{\mathsf{s}} = \langle \mathscr{W}_{k}^{\mathsf{s}} \rangle_{k\in\mathbb{N}}$ where $\mathscr{W}_{k}^{
\mathsf{s}}$ is obtained from the wall of order $k$, after subdividing all edges whose both endpoints have degree three once.
See \autoref{fig_universal_obstructions} for an illustration of the previous definitions.

\begin{figure}[htbp]
\centering
\includegraphics[width=0.9\linewidth]{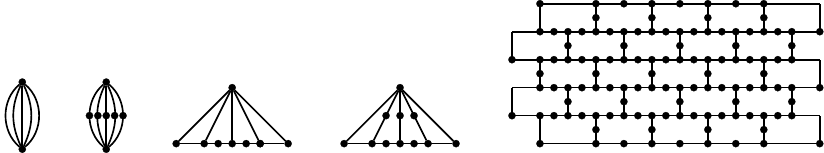}
\caption{\label{fig_universal_obstructions}From left to right, the graphs $ \Theta _{5}$, $ \Theta _{5}^{\mathsf{s}}$, $\mathscr{F}_{5}^{\mathsf{s}}$, $\mathscr{F}_{5}^{\mathsf{ss}}$, and $\mathscr{W}_{6}^{
\mathsf{s}}$.}
\end{figure}

We are now in position to restate the results of \cite{MagnePaulSharmaThilikos2023EdgeTreewidth} as follows.

\begin{theorem}\label{ismssdwshsaal} 
The set $\{\Theta , \Theta^{\mathsf{s}}, \mathscr{F}^{\mathsf{s}}, \mathscr{F}^{\mathsf{ss}}, \mathscr{W}^{
\mathsf{s}}\}$ is a weak-topological-minor-universal obstruction for $\etw$.
\end{theorem}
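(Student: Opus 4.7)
The plan is to prove $\etw \sim \p_{\mathfrak{H}}$, where $\mathfrak{H} = \{\Theta, \Theta^{\mathsf{s}}, \mathscr{F}^{\mathsf{s}}, \mathscr{F}^{\mathsf{ss}}, \mathscr{W}^{\mathsf{s}}\}$, by leveraging the polynomial equivalence $\etw \sim_{\mathsf{P}} \tw + \Delta^{\mathsf{b}}$ recorded in the preceding paragraph, together with the $\leqslant_{\mathsf{wtp}}$-monotonicity of $\etw$. Before tackling the two directions, I would briefly verify that each of the five members of $\mathfrak{H}$ is genuinely $\leqslant_{\mathsf{wtp}}$-increasing and that the five form a $\lesssim$-antichain, so that $\mathfrak{H}$ qualifies as a weak-topological-minor-parametric family.

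For the direction $\p_{\mathfrak{H}} \preceq \etw$, I would verify that each parametric graph witnesses unbounded $\etw$. The subdivided wall $\mathscr{W}_k^{\mathsf{s}}$ is a subdivision of $\mathscr{W}_k$, so $\tw(\mathscr{W}_k^{\mathsf{s}}) = \Omega(k)$ and hence $\etw(\mathscr{W}_k^{\mathsf{s}}) = \Omega(k)$ by the equivalence. Each of $\Theta_k$, $\Theta_k^{\mathsf{s}}$, $\mathscr{F}_k^{\mathsf{s}}$, $\mathscr{F}_k^{\mathsf{ss}}$ is $2$-connected and hence is a single block, and each contains a vertex of degree $\Omega(k)$, so $\Delta^{\mathsf{b}} = \Omega(k)$ and therefore $\etw = \Omega(k)$. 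Combined with the $\leqslant_{\mathsf{wtp}}$-monotonicity of $\etw$, this gives $\p_{\mathfrak{H}} \preceq \etw$.

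For the harder direction $\etw \preceq \p_{\mathfrak{H}}$, I would use the equivalence $\etw \sim_{\mathsf{P}} \tw + \Delta^{\mathsf{b}}$ to split the argument: if $\etw(G)$ is large, then either $\tw(G)$ or $\Delta^{\mathsf{b}}(G)$ is large. In the first case, the excluded-grid theorem (cf.~\autoref{prop_tw} and \cite{ChuzhoyT21Towards}) supplies $\mathscr{W}_t$ as a minor of $G$ for large $t$; since $\mathscr{W}_t$ has maximum degree three, it appears as a topological minor, i.e.\ as some subdivision $S \subseteq G$, and then contracting edges between consecutive degree-two vertices along each subdivided path of $S$ down to the length prescribed by the definition of $\mathscr{W}_k^{\mathsf{s}}$ (namely one or two) exhibits $\mathscr{W}_k^{\mathsf{s}}$ as a weak topological minor of $G$ for large $k$. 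In the second case, some block $B$ of $G$ contains a vertex $v$ of large degree, and a case analysis in the spirit of Ding--Oporowski \cite{DingO96ontre} extracts one of the four degree-type obstructions: if many of the edges at $v$ are mutually parallel, $\Theta_k$ arises directly as a subgraph; otherwise $v$ has many distinct neighbors $u_1, \ldots, u_D$ in $B$, and by $2$-connectedness of $B$ one obtains internally disjoint paths in $B - v$ joining them, producing either $\Theta^{\mathsf{s}}_k = K_{2, k}$ (when many $u_i$ share a common neighbor besides $v$), $\mathscr{F}_k^{\mathsf{s}}$, or $\mathscr{F}_k^{\mathsf{ss}}$ (the latter two depending on the length/multiplicity pattern of the paths joining consecutive $u_i$'s) as a weak topological minor.

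The main obstacle will be this last case analysis: one must argue that the structural pigeonhole over the lengths and multiplicities of the paths joining consecutive neighbors of $v$ always yields \emph{one of exactly four} patterns matching the members of $\mathfrak{H}$ and not some intermediate structure. In particular, the reason both $\mathscr{F}^{\mathsf{s}}$ and $\mathscr{F}^{\mathsf{ss}}$ appear, rather than a single fan parametric graph, is that the weak-topological-minor relation permits contracting only an edge whose both endpoints have degree two with two distinct neighbors, so one cannot freely suppress an isolated degree-two vertex; this rigidity forces the two subdivision variants of the fan to appear separately, and the case analysis must be careful to match the parities of the available path lengths to exactly one of them.
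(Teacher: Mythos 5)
The paper itself gives no proof of this statement: \autoref{ismssdwshsaal} is a verbatim restatement of the main result of \cite{MagnePST21Edgetrewidth}, and the surrounding text only recalls that $\etw$ is $\leqslant_{\mathsf{wtp}}$-monotone and that $\etw \sim_{\mathsf{P}} \tw + \Delta^{\mathsf{b}}$ before stating the theorem by citation. So there is no argument here to compare against. Your overall plan is the natural one and almost certainly mirrors the cited reference: lower bound via $\etw \sim_{\mathsf{P}} \tw + \Delta^{\mathsf{b}}$ applied to each member of $\mathfrak{H}$ (your computation is sound; each of $\Theta_k$, $\Theta^{\mathsf{s}}_k$, $\mathscr{F}^{\mathsf{s}}_k$, $\mathscr{F}^{\mathsf{ss}}_k$ is a single $2$-block with maximum degree $\Omega(k)$, and the subdivided wall has treewidth $\Omega(k)$), and upper bound by branching on whether $\tw$ or $\Delta^{\mathsf{b}}$ is large.

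The missing content is exactly where you flag it, and it is more than a routine pigeonhole. In the high-block-degree case you assert that some internally disjoint paths in $B-v$ between the neighbours $u_1,\ldots,u_D$ of $v$ produce one of $\Theta$, $\Theta^{\mathsf{s}}$, $\mathscr{F}^{\mathsf{s}}$, $\mathscr{F}^{\mathsf{ss}}$, with $\Theta^{\mathsf{s}}_k$ arising ``when many $u_i$ share a common neighbour besides $v$''. That last characterisation is off: take $B-v$ a spider with centre $c$ and legs of length at least two, and attach $v$ to the tip of every leg; then no two $u_i$ share a neighbour, yet after the permitted contractions (only between two consecutive degree-two vertices, never into $v$ or $c$) what survives is exactly $K_{2,k}=\Theta^{\mathsf{s}}_k$. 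Conversely $\Theta_k$ itself can never be reached from such a configuration because $\leqslant_{\mathsf{wtp}}$ cannot contract into a vertex of degree at least three, so the enumeration of which of the five patterns appears requires a real structural argument on how $u_1,\ldots,u_D$ sit on a spanning tree (or path) of $B-v$; this is the substance of the cited proof and cannot be waved through. There is also a smaller point in the treewidth branch: a subdivision $S\subseteq G$ of $\mathscr{W}_t$ may contain wall-edges that were not subdivided at all, and these cannot be lengthened under $\leqslant_{\mathsf{wtp}}$, so one cannot directly ``contract down'' $S$ to $\mathscr{W}_k^{\mathsf{s}}$; you must first pass to a sub-wall of $S$ that is deep enough that each of its bridging paths in $S$ already has length at least two, and only then contract.
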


\section{Conclusion and open directions}

Clearly the list of parameters that we examined in this work is far from being complete.
We expect that more results are known or will appear that can be stated in terms of the unified framework of universal obstructions.
We conclude with some open problems and directions of research on universal obstructions.

\medskip
According to \cite{PaulPT2023GraphParameters}, given that $\leqslant$ is a well-quasi-ordering on graphs, we know that $\pobs_{\leqslant}(\p)$ exists for every $\leqslant $-monotone parameter $\p$.
However, well-quasi-ordering does not imply that $\pobs_{\leqslant }(\p)$ is finite.
As observed in \cite{PaulPT2023GraphParameters}, this is linked to a more powerful order-theoretic condition, namely that $\leqslant $ is an $\omega^2$-well-quasi-ordering on graphs.
While this has been conjectured for the minor and the immersion relation, a proof appears to be still far from reach (see \cite{Thomas1989wellquasi} for the best result in this direction).
It is an interesting question whether the finiteness of $\pobs_{\leqslant}(\p)$ can be proved for particular families of graph parameters using constructive arguments.
This was already possible using the results on Erdős-Pósa dualities \cite{CamesHJR19Atight, RobertsonS86GMV, PaulPTW24Obstructions, PaulPTW24Delineating} for vertex removal parameters in \autoref{appe4jnfbvjpsjdpers} (\autoref{prop_erdos_posa_planar}) and for elimination distance parameters in \autoref{apeplicx} (\autoref{more_erdos_posa_planar}).
Is it possible to prove similar results about the existence of finite universal obstructions for other families of parameters?
This kind of general condition have been investigated for bounding the size of obstructions of minor-closed graph classes (see e.g., \cite{SauST23apices,LagergrenA91mini,Lagergren98,AdlerGK08comp} and it is a challenge up to which point they may be used to derive  bounds on parametric obstructions.

Note that all definitions of this work are based on the general equivalence relation between parameters.
What about more refined notions of equivalence, such as polynomial equivalence or linear equivalence (defined in \autoref{sliubeabaueike})?
The natural question is the following: suppose that $\p$ is a $\leqslant$-monotone parameter.
Is there a universal obstruction $\mathfrak{H}$ such that $\p_{\mathfrak{H}} \sim_{\mathsf{P}} \p$?
As we already observed in \autoref{treewidth_sec}, grids or annuli give a positive answer to this question for the case of treewidth, however if we ask for a linear dependence, i.e., $\p_{\mathfrak{H}} \sim_{\mathsf{L}} \p$, we know that the answer is negative.
Additionally, as we have seen in \autoref{treedepth_par}, the answer to this question is negative for $\td,$ even when we ask for a polynomial dependence.
Nonetheless, \autoref{mystery_theorem} shows that such a polynomial dependence for $\td$ is possible when we do not demand universal obstructions to be a $\lesssim$-antichain.
A running challenge is to build a theory revealing when universal obstructions of polynomial / linear dependence may exist (and when not).
What would be the algorithmic implications of such a theory?

All universal obstructions that we presented (grids, complete ternary trees, paths, etc.) enjoy some sort of regularity in their definition.
Is there a way to further formalize this so as  to obtain characterizations of parameters in terms of \textsl{finite descriptions} of their universal obstructions, instead of parametric obstructions?

Universal / parametric / class obstructions provide a convenient way to compare (in the approximate sense) graph parameters, via the relations $\subseteq^*, \leqslant ^{**}, \lesssim^*$, displayed in \autoref{conbncpsjdsusion}.
This also permits us to define hierarchies of families of graph parameters by fixing first their obstructions.
For instance, we may ask what is the parameter whose class obstruction consists of the set of all caterpillars.
As we have seen in \autoref{threqe}, this corresponds to the elimination distance to a forest of paths.
Such a guess was possible by observing that caterpillars is a path-like structure, which indicates that the corresponding graph parameter should have an elimination ordering definition.
Similar guesses can be done for other (non depicted) parameters in the partial-ordering of \autoref{fig_minor_closed}.
An interesting question would be to find general schemes of elimination distance or tree/path decompositions (or combinations of them) that automatically can be derived by the corresponding universal / parametric / class obstructions.
This research direction becomes much more challenging when dealing with parameters that, in \autoref{fig_minor_closed} are ``below $\tw$''. In these cases the derived decomposition theorems are essentially refinements of the Graph Minors Structure Theorem, by Robertson and Seymour in \cite{RobertsonS03GMXVI}.

We conclude with the remark that all the parameters we presented are parameters on graphs.
Certainly the same concepts of obstructions can be defined also on parameters on other combinatorial structures such as matroids, directed graphs, graphs with matchings (see \cite{HatzelRabinovichWiederrecht2019Cyclewidth, KawarabayashiK15TheDirected, GeelenGW09Tangles} for examples of such results concerning extensions of treewidth).
However their exposition escapes the objectives of the present work.

\paragraph{Acknowledgements.}

We are thankful to Sebastian Wiederrecht for fruitful discussions on the topics of this work.
We also thank Hugo Jacob for helpful comments on an earlier version of this work.
We would also wish to thank the anonymous reviewers for their valuable comments and suggestions on an earlier version of this paper, which have significantly improved it.

\newpage

\bibliographystyle{plainurl}


\end{document}